\let\endtitlepage\relax
\newenvironment{mytitlepage}%
  {\begin{titlepage}\def\@thanks{}}%
  {\end{titlepage}}
\xpatchcmd\titlepage{\setcounter{page}\@ne}{}{}{}
\xpatchcmd\endtitlepage{\setcounter{page}\@ne}{}{}{}
\definecolor{darkblue}{rgb}{0,0.08,0.45}
\definecolor{bred}{rgb}{0.8, 0.0, 0.0}
\definecolor{darkred}{rgb}{0.53, 0.08, 0.0}
\definecolor{darkgreen}{rgb}{0,128,0}
\renewcommand*{\@fnsymbol}[1]{\ensuremath{\ifcase#1\or \dagger \or \ddagger  
\or \mathsection \or * \or \mathparagraph\or \|\or **\or \dagger\dagger
\or \ddagger\ddagger \else\@ctrerr\fi}}
\newtheorem{theo}{Proposition} 
\newtheorem{Proof}{Proof} 
\newtheorem{lem}{Lemma} 
\theoremstyle{remark}
\newtheorem{remark}{Remark}%[section]
\newcolumntype{K}[1]{>{\centering\arraybackslash}p{#1}}
\newtheorem{assumption}{} 
\newtheoremstyle{stylemodel}
{}
{}
{}
{}
{\bfseries}
{}
{\newline}
{}
\theoremstyle{stylemodel}
\DeclareMathOperator*{\argmin}{arg\,min}    %argmin
\newcommand{\1}[1]{\mathbf 1_{\{ {#1} \}}} %indicator function
\def\bx{\mathbf x}
\def\by{\mathbf y}
\def\bX{\mathbf X}
\def\bM{\mathbf M}
\def\bmu{\boldsymbol{\mu}}
\def\bu{\boldsymbol{u}}
\def\bepsilon{\boldsymbol{\epsilon}}
\def\argmax{\text{argmax}}
\def\bbeta{\boldsymbol{\beta}}
\def\btau{\boldsymbol{\tau}}
\def\btheta{\boldsymbol{\theta}}
\def\bSigma{\boldsymbol{\Sigma}}
\def\bomega{\boldsymbol{\omega}}
\def\bB{\boldsymbol{B}}
\def\bpsi{\boldsymbol{\psi}}
\def\ba{\boldsymbol{a}}
\def\NORM{\mathcal{N}}
\def\REAL{\mathbb{R}}
\def\ra{\rightarrow}
\newcommand{\pen}{\mathcal{P}_{\mathbf{SELO}}} %indicator function
\def\btau{\boldsymbol{\tau}}
\def\bz{\mathbf z}
\patchcmd\maketitle{\@makefntext}{\@@@ddt}{}{}
\patchcmd\maketitle{\rlap}{\mbox}{}{}
\def\blfootnote{\xdef\@thefnmark{}\@footnotetext}
\title{
% {\bf Multiple Change-point detections in parameters of time series models}}
% {\bf Relevant parameter change in structural break time series models}}
 {\bf Selective linear segmentation for detecting relevant parameter changes}
}
\author[a,b,c]{Arnaud Dufays\footnote[1]{Corresponding author.}}
\author[b]{Elysee Aristide Houndetoungan}
\author[d]{Alain Co\"{e}n}
\affil[a]%{\small\emph{D\'epartement de math\'ematiques et de statistique, Universit\'e de Montr\'eal, PO Box 6128, Station Centre-ville, Montreal, Quebec, Canada H3C~3J7}\vspace{3pt}}
{\small\emph{D\'{e}partement des sciences de gestion, Universit\'{e} Namur, Belgium}\vspace{-1pt}}
\affil[b]%{\small\emph{D\'{e}partement d'\'{e}conomique, Universit\'{e} Laval, 1025 avenue des Sciences-Humaines, Quebec City, Quebec,\ Canada, G1V~0A6}\vspace{3pt}}
{\small\emph{D\'{e}partement d'\'{e}conomique, Universit\'{e} Laval, Canada}\vspace{-1pt}}
\affil[c]{\small\emph{CRREP and CeReFiM associate researcher}\vspace{-1pt}}
\affil[d]%{\small\emph{D\'epartement de math\'ematiques et de statistique, Universit\'e de Montr\'eal, PO Box 6128, Station Centre-ville, Montreal, Quebec, Canada H3C~3J7}\vspace{3pt}}
{\small\emph{Department of Finance, UQAM, Montr\'{e}al, QC, Canada}\vspace{-4pt}}
\date{June 2020}
\begin{document}
\onehalfspacing
\justifying
\begin{mytitlepage}
\maketitle
%-------------
\blfootnote{
%---------------
%Email addresses
%---------------
\emph{Email addresses}:
\texttt{arnaud.dufays@unamur.be} (Arnaud Dufays), \texttt{elysee-aristide.houndetoungan.1@ulaval.ca} (Aristide Houndetoungan), \texttt{coen.alain@uqam.ca} (Alain Co\"{e}n).
}

\begin{abstract}
\begin{spacing}{1.15} %At 1.5 spacing
\noindent Change-point processes are one flexible approach to model long time series. We propose a method to uncover which model parameter truly vary when a change-point is detected. Given a set of breakpoints, we use a penalized likelihood approach to select the best set of parameters that changes over time and we prove that the penalty function leads to a consistent selection of the true model. Estimation is carried out via the deterministic annealing expectation-maximization algorithm. Our method accounts for model selection uncertainty and associates a probability to all the possible time-varying parameter specifications. Monte Carlo simulations highlight that the method works well for many time series models including heteroskedastic processes. For a sample of 14 Hedge funds (HF) strategies, using an asset based style pricing model, we shed light on the promising ability of our method to detect the time-varying dynamics of risk exposures as well as to forecast HF returns.
\end{spacing}

\vspace{1cm}

\noindent
{\bf Keywords:} change-point, structural change, time-varying parameter, model selection, Hedge funds.\\

\noindent {\bf JEL Classification:} C11, C12, C22, C32, C52, C53.
\thispagestyle{empty}	
	
\end{abstract}

\end{mytitlepage}

\newpage

\clearpage
\pagenumbering{arabic}

\section{Introduction}

Long time series are standard in this period of large publicly available datasets. Care is required when modeling such a time series since many of them span over critical events that may change the series dynamic. At least two statistical solutions exist to take into account these changes. On the one hand, a process with fixed parameters can be used but it needs to exhibit a rich and complex dynamic. This complexity often makes the model difficult to estimate and to interpret (see, for instance, long memory processes such as \cite{GPH83}). On the other hand,  one can rely on time-varying parameter (TVP) models and in particular Markov-switching and change-point (CP) processes since they allow for abrupt changes in the model parameters when a critical event affects the series dynamic \citep[see][]{Hamilton89,Bauwens2013}. This paper deals with CP linear regression models where we allow the mean parameters to change over time.   \\
The CP literature dates back to \cite{chernoff1964estimating} and is nowadays vast. Just focusing on linear regressions, \cite{andrews1993tests}, \cite{BaiPerron1998}, \cite{killick2012optimal}, \cite{fryzlewicz2014wild} and \cite{yau2016inference} develop prominent procedures to detect breakpoints. On the Bayesian side, there also exist many ways to estimate structural breaks and important contributions can be found in \cite{Stephens1994}, \cite{Chib98}, \cite{Fearnhead2007}, \cite{rigaill2012exact} and \cite{Maheu13}. While all these methods differ in the criterion or in the algorithm used to detect the changes, most of them rely on the assumption that, when a break is detected (that may be triggered by the change in only one model parameter), a new segment is created and a new set of parameters needs to be estimated. Although the assumption seems harmless, it creates two important drawbacks:
\begin{enumerate}
	\item From an interpretation perspective, if all the parameters have to change when a break is detected, it is difficult to assess which parameters have indeed abruptly varied and so it complicates the economic interpretation of the structural break. 
	\item Forecasting wise, when a parameter does not vary from one regime to another, its estimation is more accurate than if two parameters were considered over these two regimes. This feature could improve the predictions of the model.
\end{enumerate}

In this paper, we propose a method to relax the assumption that a break triggers a change in all the model parameters. To do so, we first estimate the potential break dates exhibited by the series and then we use a penalized likelihood approach to detect which parameters change. Because some segments in the CP regression can be small, we opt for a (nearly) unbiased penalty function, called the seamless-L0 (SELO) penalty function, recently proposed by \cite{dicker2013variable}. We prove the consistency of the SELO estimator in detecting which parameters indeed vary over time and we suggest using a deterministic annealing expectation-maximisation (DAEM) algorithm to deal with the multimodality of the objective function \citep[see][]{ueda1998deterministic}. Since the SELO penalty function depends on two tuning parameters, we use a criterion (new in this literature) to choose the best tuning parameters and as a result the best model. This new criterion exhibits a Bayesian interpretation which makes possible to assess the parameters' uncertainty as well as the model's uncertainty. This last feature is determinant when predicting a time series since the Bayesian model averaging technique, that typically improves forecast accuracy, is readily applicable \citep[see, e.g.,][]{raftery2010online,koop2012forecasting}. 

We are aware of five other papers that also relax the assumption on the number of parameters that changes when a break is detected. In the frequentist literature, the influential paper of \cite{BaiPerron1998} proposes a method that also operates when only a subset of parameters can break. However, the number of possibilities grows exponentially with the number of breaks as well as with the number of parameters that can break. From a Bayesian perspective, \cite{GiordaniKohn2008}, \cite{Eo2012}, \cite{huber2019should} and \cite{dufays2019relevant} propose flexible state-space models to capture which parameters vary over time. However, all these estimation procedures break down when the number of parameters is large (see Supplementary Appendix \ref{App:Bayesian} for more details).

We believe that our method exhibits several advantages over the existing alternatives. Firstly, it operates for small and large dimensions. Secondly, the estimation is fast compared to the Bayesian alternatives. %and we provide an R package to disseminate our approach. 
As a final advantage, we relax the assumption on breakpoints \textit{once} the structural breaks have been detected which makes our approach operating in combination with any existing CP methods. In this paper, we illustrate our approach with the CP procedure of \cite{yau2016inference} but any other CP method could have been used. %Also, instead of choosing one breakpoint detection algorithm, we could apply several of these methods and discriminate the detected structural changes using our marginal likelihood criterion once the parameters that truly evolve have been identified.

A final reference close to our framework is \cite{chan2014group} who propose a penalized regression for segmenting time series in piecewise linear models. The paper uses a group Lasso penalty function \citep[see][]{yuan2006model} to get an overestimated number of segments and in a second phase, an information criterion is used to improve the estimation. %By doing so, their procedure stands for another CP detection method based on a penalized regression approach.  While the purpose of their paper is different from ours, their framework could be used for detecting the relevant changes in parameters. 
Nevertheless, we differ from their methods in many aspects. First, we use an almost unbiased penalty function and from a theoretical perspective, as we use the penalized regression on a potential break date set, our assumptions for a consistent estimator are different and in line with the standard penalized regression literature. We also use a Bayesian criterion to select among the promising models uncovered by the penalty function which allows for model uncertainty and for Bayesian model averaging. Also, our estimation procedure is fast compared to \cite{chan2014group} since we iterate on closed-form expressions and because our model exhibits fewer parameters. As a final difference, we provide break uncertainty.

Eventually, we apply our method on Hedge funds (HF) returns. As highlighted  by  \cite{Fung2008hedgefunds}, by \cite{meligkotsidou2008detecting}, by \cite{Bollen2009hedgefundriskdynamics}, and more recently by \cite{Patton2015change}, the dynamics of HF risk exposures and the nonlinear generating process of HF returns should be associated with market events and structural breaks. For a sample of 14 monthly Credit Suisse HF indices spanning from March 1994 to March 2016, and using the asset based style pricing model introduced by \cite{fung2001risk}, we show that our modeling is particularly appealing to detect time-varying exposures in HF tradings. In particular, our results report the relative role played by static and dynamic parameters and factors in the decomposition of HF returns. We also investigate the prediction performance of our approach and it turns out that the selective segmentation approach compares favorably in terms of root mean squared forecast errors and cumulative log-predictive densities with respect to other CP processes. In particular, it almost systematically dominates the CP model which assumes that all the parameters vary when a break is detected.

The paper is organized as follows. Section \ref{sec:spec} documents the model specification and the SELO penalty function. Section \ref{sec:estimation} explains how the DAEM algorithm is applied to our framework. In Section \ref{sec:modelSelection}, we detail the criterion used to select the SELO tuning parameters and we relate it to the Bayesian paradigm. Section \ref{sec:break} documents the CP method of \cite{yau2016inference} and discusses how it can be slightly improved. An extensive Monte Carlo study is proposed in Section \ref{sec:MonteCarlo}. We end the paper by applying the method on HF returns in Section \ref{s:empirics}. All the proofs are given in the Supplementary Appendix (SA).

\section{Model specification \label{sec:spec}}
We consider a standard linear regression specified as
\begin{align} \label{eq:linreg}
\begin{split}
y_t  & = \beta_1 + \beta_2 x_{t,2} + \ldots + \beta_{K} x_{t,K} + \epsilon_t,\\
 & =  \bx_t'\bbeta_1 + \epsilon_t \,,
\end{split}
\end{align}
where $  \epsilon_t \sim MDS(0,\sigma^2)$ (in which $MDS$ stands for the martingale difference sequence), $\mathbf x_t = (1, x_{t,2}, \ldots,  x_{t,K})'$ and $\bbeta_1 = (\beta_1, \beta_2, \ldots, \beta_{K})'$. Typically, if a linear model is estimated over a long period, the parameters are subject to abrupt changes over time. To take this time-varying dynamic into account, we allow for $m-1$ structural breaks in the model parameters as follows,
\begin{align} \label{eq:linreg_break}
\begin{split}
y_t  & =  \bx_t'\bbeta_i^* + \epsilon_t \,, \text{ for } \tau_{i-1}< t \leq \tau_{i}, 
\end{split}
\end{align}
in which $\bbeta_i^*$, is the true parameter of the explanatory variables over the regime $i$,  $\btau_0 = \{\tau_0,\ldots,\tau_m\} \in \mathbb{N}^{m+1}$ where $\tau_0=0$, $\tau_m = T$ and $\tau_i < \tau_{i+1}$ $\forall i \in [0,m-1]$. In this paper, we are interested in capturing which parameters are subject to breaks and which do not vary over time. To do so, we reframe the model \eqref{eq:linreg_break} as follows,
%%Note that the variance is fixed across regimes. We relax this assumption in Section ???.
\begin{align} \label{eq:linreg_break2}
\begin{split}
y_t  & = \bx_t'\bbeta_1^* +  \bx_t'( \sum_{j=2}^{m} \Delta \bbeta_j^* \1{t> \tau_{j-1}}) + \epsilon_t \,,\\
\by & =  \bX_{\btau} \bbeta^* + \bepsilon,
\end{split}
\end{align}
where $\1{x>a}=1$ if $x>a$ and zero otherwise, $\Delta \bbeta_j^* = \bbeta_{j}^* - \bbeta_{j-1}^*$, for $j\in [2,m]$, stands for the model parameters in first-difference, $\by = (y_1,\ldots,y_T)'$, $\bX_{\btau} = (\tilde{\bX}_{\tau_0}, \tilde{\bX}_{\tau_1},\ldots,\tilde{\bX}_{\tau_{m-1}})$ with $\tilde{\bX}_{\tau_i} = (\mathbf 0,\mathbf 0,\ldots,\mathbf 0,\bx_{\tau_i+1},\ldots,\bx_{T})'$,  $\bepsilon = (\epsilon_1,\epsilon_2,\ldots,\epsilon_T)'$ and $\bbeta^*=(\bbeta_1^{*\prime},\Delta \bbeta_2^{*\prime},\ldots,\Delta \bbeta^{*\prime}_m)^{\prime} \in \Re^{Km\times 1}$. Note that the matrix $\tilde{\bX}_{\tau_0}$ stands for the standard regressors since $\tau_0=0$. Regarding the notations, the first-difference parameter in regime $j$ is a K-dimensional vector $\Delta \bbeta_{j}^*$ such that $\Delta \bbeta_{j}^* =(\Delta \beta_{j1}^*,\dots, \Delta \beta_{jK}^*)^{\prime}$. Let us also denote $A = \{(j,k); \Delta \beta_{jk}^*\neq 0, \text{ for } j\in[2,m] \text{ and for } k\in[1,K]\}$, the set of indices defining the true model. %We set $\theta = \{\Delta \bbeta_1, \Delta \bbeta_2,\ldots,\Delta \bbeta_m,\sigma\}$.\\

Our strategy to uncover which parameters truly vary over time consists in first finding where are the potential break dates $\btau$, then, in a second phase, in detecting which parameters evolve. Note that even when we know the true break dates $\btau$, the problem of finding which parameters vary when a break occurs is not straightforward as the number of models to consider amounts to $2^{(m-1)K}$. Consequently, it is infeasible to carry out an exhaustive model selection when $K$ or $m$ is large. We propose a penalized likelihood approach to explore this large model space and to select which parameters experience breaks. To focus on our selective segmentation approach, we shall first assume that we have obtained a set of potential break dates $\btau$. We discuss how we estimate this set in Section \ref{sec:break}.
\begin{remark}\label{rem:exact}
In the situation where all the models can be considered (i.e., $(m-1)K \leq 10$), we do not need to rely on the penalized likelihood approach explained in Section \ref{sec:penalty}. In particular, we could directly estimate all the model combinations and select the best one according to the marginal likelihood criterion given in Section \ref{sec:modelSelection}. 
\end{remark}

\subsection{Penalized likelihood and choice of the penalty function \label{sec:penalty}}
 As emphasized by Equation \eqref{eq:linreg_break2}, given a set of break dates $\btau$, the problem of finding which parameters abruptly change when a break occurs boils down to a penalized linear regression problem. Specifically, one can solve the following optimization problem
\begin{align} \label{eq:optim}
\begin{split}
\hat{\bbeta} & = \argmin_{\bbeta} ||\by- \bX_{\btau} \bbeta||_2^2 + T \sum_{j=2}^{m} \sum_{k=1}^{K}\text{pen}(\Delta \beta_{jk}),
\end{split}
\end{align}
%%
%\aristide{For the proof we need a penalty definef for each component of $\Delta \bbeta_j$. As in the proof I use double sum $\sum_{j=2}^m\sum_{k=0}^{K-1}\text{pen}(\Delta \beta_{jk})$.}\\
%\aristide{Also, as in the paper Dicker et al., perhaps it will be better to distinguish between the true parameter $\bbeta$ and the parameter $\bbeta$ as variable in minimum searching. We need that is the proof, so I use $\bbeta^*$ for the true parameter}

\noindent where $||~ . ~||_p$ denotes the $L_p$ norm and $\text{pen}(\Delta \beta_{jk})$ stands for a penalty function. Popular choices of $\text{pen}(\Delta \beta_{jk})$ are the Lasso penalty function (i.e., $\text{pen}(\Delta \beta_{jk}) = \lambda ||\Delta \bbeta_{jk}||_1$, see \cite{Tibshirani94}) or the rigde function (i.e., $\text{pen}(\Delta \beta_{jk}) = \lambda \Delta ||\beta_{jk}||_2^2$, see, for instance, \cite{Ishwaran2005}).% We discuss our choice of penalty function in Section \ref{sec:penalty}.

Following \cite{Fan2001}, standard desirable properties induced by a penalty function are i) unbiasdness, ii) sparsity and iii) continuity. For instance, the ridge function is only continuous while the Lasso penalty function achieves sparsity and continuity (beside at zero). However one standard issue with these popular penalty functions is that they provide biased (but typically consistent) estimators. In our framework, this drawback is problematic since a segment can sometimes contain a small amount of observations that makes consistency results not sufficient. Recently, \cite{dicker2013variable} propose a penalty function, called seamless-$L_0$ (SELO), that exhibits all the desirable properties. For a model parameter denoted $\omega$, the penalty function reads as
\begin{eqnarray*}
\pen(\omega|\zeta,\lambda) & = & \frac{\lambda}{\ln 2}  \ln (\frac{2|\omega| + \zeta}{|\omega| + \zeta}),
\end{eqnarray*}
where the parameter $\zeta$ controls for the concavity of the function and $\lambda$ stands for the penalty imposed when $\omega \neq 0$. We slightly modify their function to end up with parameters that are directly interpretable. In fact, we use the following penalty function,
\begin{eqnarray}\label{eq:SELO}
\pen(\omega|a,\lambda) & = & \frac{\lambda}{\ln 2} \ln (\frac{2(\frac{|\omega|}{a}) + \zeta}{(\frac{|\omega|}{a}) + \zeta}),
\end{eqnarray}
where $\zeta = \frac{2^y-2}{1-2^y}$ with $y \in (0,1)$ and we set $y=0.99$. In most cases, the parameter $a$ can be interpreted as an interval $\omega \in [-a,a]$ in which $\omega$ will be biased with respect to the OLS estimate since $\pen(a) = \lambda y$. Intuitively, when $|\omega|>a$, we have $\pen(\omega) \approx \lambda$ and $\frac{d\pen(\omega)}{d\omega}|_{|\omega|\geq a} \approx 0$ for large values of $a$. Figure \ref{fig:SELO} shows the SELO penalty function with $\{a,\lambda\}=\{1,0.9\}$ and illustrates that the function is almost flat for absolute values greater than $a$. To be more precise about how large $a$ must be, when $|\bar{\beta}|\geq a$ with $a\geq \frac{\zeta}{\ln 2[\zeta^2 + 3\zeta + 2]} = 0.0099$, we have that $\frac{d\pen(\omega)}{d\omega}|_{|\omega|\geq a} \leq \lambda$  which implies that the bias imposed by the SELO penalty function is smaller than the one of the Lasso function (i.e. $\lambda |\omega|)$.

\renewcommand{\baselinestretch}{1}
\begin{figure}[h!]
%\captionsetup[subfigure]
\centering
\includegraphics[scale=.3]{./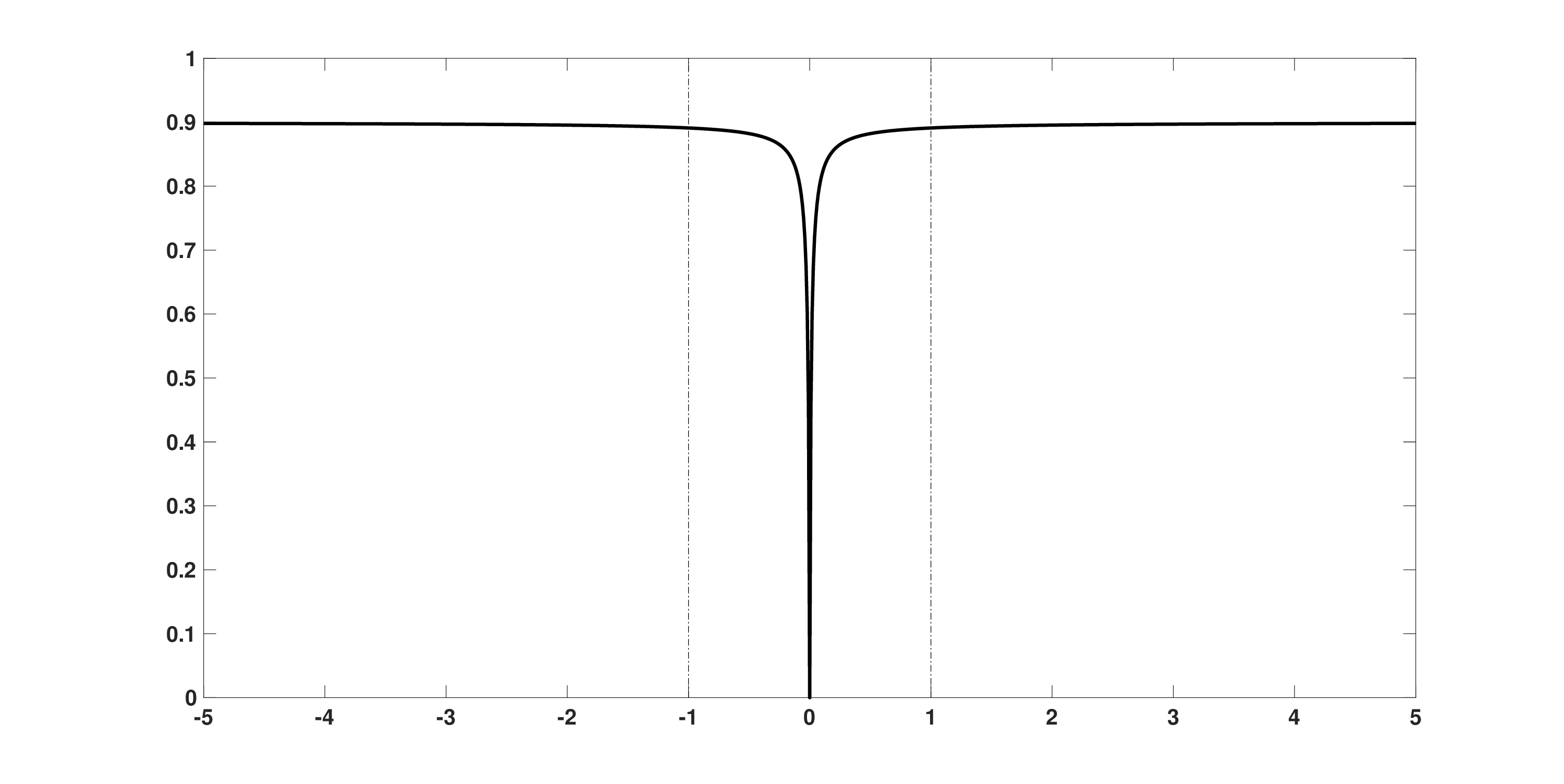}
\caption{\justifying \textbf{SELO penalty function.} Penalty function is shown in solid black lines while vertical dotted lines highlight the interval [-a,a]. The SELO parameters are set to $\lambda = 0.9$ and $a=1$.  \label{fig:SELO}}
\end{figure}
\renewcommand{\baselinestretch}{1.5}

\subsection{Consistency of the SELO estimator}
 The interval $[-a,a]$ in which a parameter is biased is likely to change with the variable to which it refers. Furthermore, if we assume that this interval is fixed over time, we should set a new parameter $a$ for each variable on the $m$ segments. Unlike \cite{dicker2013variable} who define a single parameter for all the variables, we use $K$ parameters $a_1,\dots,a_K$, that is, one per explanatory variable. Thus, the objective function to minimize is given by
\begin{align} 
\begin{split}
f(\bbeta) & = ||\by- \bX_{\btau} \bbeta||_2^2 + T \frac{\lambda}{\ln(2)}  \sum_{j=2}^{m} \sum_{k=1}^{K} \ln \left(\frac{2(\frac{|\Delta \beta_{jk}|}{a_k}) + \zeta}{(\frac{|\Delta \beta_{jk}|}{a_k}) + \zeta}\right).
\end{split}
\end{align} 

\noindent Before discussing how to maximize the objective function, we present the main results about the modified SELO estimator. As highlighted in \cite{dicker2013variable}, the SELO estimator is consistent under reasonable conditions. Proposition \ref{theoconsislimitdist} shows that this consistency result also applies in our framework. To do so, we consider the following assumptions (in which a sequence $\omega_T \rightarrow \omega $ is understood as $\text{lim}_{T \rightarrow  \infty} \omega_T = \omega$):
\begin{assumption}
	\label{tauistruetau}
  $\btau = \btau_0$ and $\forall j\in [1,m]$, we have $\tau_j - \tau_{j-1} = T\delta_{\tau_j} \rightarrow \infty$, with $\displaystyle \sum_{j=1}^m \delta_{\tau_j} =1$.
\end{assumption} 
\begin{assumption}
	\label{rhopositive}
	$\displaystyle\rho\sqrt{T} \rightarrow \infty$, where $\rho = \min_{r,k \in A}\left(|\Delta\beta^*_{r,k}|\right)$. %\arnaud{Again, if $m=\mathcal{O}(T^{q})$, the assumption still holds.}
\end{assumption} 
\begin{assumption}
	\label{Xfullrank}
	There exist $r_0$, $R_0 > 0$ such that $r_0 \leq \lambda_{T,\min} < \lambda_{T,\max} \leq R_0$, where $\lambda_{T,\min}$ and $\lambda_{T,\max}$ are the smallest and  largest eigenvalues of $\left(T^{-1}\bX_{\btau}'\bX_{\btau}\right)$ respectively. %\arnaud{This assumption implies much more than we initially think. In fact, if $\tau_{i+1}-\tau_i = \mathcal{O}(1)$, then $\lambda_{T,\min} \rightarrow 0$. In particular, as long as the sample size grows, the two regimes ($i+1$ and $i$) are more and more correlated and asymptotically, the correlation between explanatory variables of the two regimes will be exactly one implying a non inversion of $\bX_{\btau}'\bX_{\btau}$ (so $\lambda_{T,\min}=0$). Consequently, this assumption implies that $\tau_{i+1}-\tau_i = \mathcal{O}(f(T))$ in which $f(T)$ is a function of $T$. For instance, with the method of Yau and Zhao, $\tau_{i+1}-\tau_i = \mathcal{O}(h)$ which is very nice.}
\end{assumption} 
\begin{assumption}
	\label{exogeneity}
	The process $\{\epsilon_t,\bx_t\}_{t \in (\tau_{j-1},\tau_j]}$ is ergodic and stationary for any $j=1,\dots,m$. Moreover, $\forall ~ t \in [1,T]$, $\mathbb{E}(\epsilon_t|\mathbf x_t) = 0$, $\mathbb{E}(\epsilon_t^2) = \sigma^2$ and the process $\{g_t\} = \{\bx_t \epsilon_t\}$ is a martingale difference sequence with finite second moments.% $\forall ~ t \in [1,T]$, $\mathbb{E}(\epsilon_t|\mathbf x_t) = 0$ and $\mathbb{E}(\epsilon_t^2) = \sigma^2$.
\end{assumption}
\begin{assumption}
	\label{tauisO}
	$\lambda = \mathcal{O}_p(1)$ and $a_{k}=\mathcal{O}_p\left(T^{-\frac{3}{2}}\right)$, $\forall k \in [1,K]$. 
\end{assumption}

We first discuss the assumptions before detailing our consistency result. Assumptions \ref{tauistruetau} to \ref{tauisO} are similar to those found in the variable selection literature \cite[see][]{fan2004nonconcave,dicker2013variable} and in the CP literature \cite[see][]{BaiPerron1998,yau2016inference}. Condition \ref{tauistruetau} assumes that the estimated CPs are the true locations. However, the SELO estimator maintains the same asymptotic properties with a set of potential breakpoints as long as it contains the true break dates (see the adapted assumption \ref{tauistruetau2} below). In such case, Proposition \ref{theoconsislimitdist} also ensures that the number of breakpoints is consistently estimated. Note that condition \ref{tauistruetau} implies that the length of each segment increases linearly with $T$. Although unattractive, this condition is generally made in the CP literature \citep[see, e.g.,][]{perron2006dealing,yau2016inference}. For interested readers, \cite{perron2006dealing} motivate this assumption in details. Assumption \ref{rhopositive} allows the minimum break size to decrease with the sample size but at a slower rate than $T^{-\frac{1}{2}}$. Conditions \ref{Xfullrank} are related to the eigenvalues and are standard in the variable selection literature \citep[see, e.g.,][]{zhang2010regularization}. However, we show in SA \ref{App:eig} that this condition is not innocuous and that it implies a fixed number of regimes as well as $\min_j\{\delta_{\tau_j}\}>0$; that is $\delta_{\tau_j}$ does not drift to 0 as $T \rightarrow \infty$. Avoiding this assumption would imply stronger conditions on the process $\{y_t,\bx_t\}$ \citep[see, e.g.,][]{chan2014group}. The assumption \ref{exogeneity} refers to ergodicity and stationarity of each segment and imposes the standard exogeneity hypothesis. This assumption ensures that sampled counterparts of the first two moments of $\{\bx_t\epsilon_t\}$  are converging to finite values. Importantly, it does not rule out conditional heteroskedasticity. Eventually, condition \ref{tauisO}  defines restrictions on the tuning parameters rate. The same condition applies in \cite{dicker2013variable}. The consistency of SELO estimator is given by the following Proposition.
 
\begin{theo}
	\label{theoconsislimitdist}
	Assume that \ref{tauistruetau}-\ref{tauisO} hold and let, 
	\begin{align}
\label{ojecfun}
f_{T}(\bbeta) = \dfrac{1}{T}||\by- \bX_{\btau} \bbeta||_2^2 + \sum_{j=2}^m\sum_{k=1}^{K}\pen(\Delta\beta_{jk}|a_k,\lambda). 
\end{align}
There exists a sequence of $\sqrt{T}$-consistent local minima  $\hat{\bbeta}$ of $f_{T}(\bbeta)$ as defined by Equation \eqref{ojecfun} such that:
	\begin{enumerate}[(i)]
		\item $\displaystyle \lim_{T\to\infty}  \mathbf{P}\left(\left\{ (j,k); \hat{\beta}_{jk} \ne 0 \right\}= A \right) = 1$
		\label{theoconsistency}
		\item $\forall~ \delta > 0$,  $\displaystyle \lim_{T\to\infty} \mathbf{P}\left(||\hat{\bbeta}_{A} - \bbeta_{A}^*|| > \delta \right) = 0$ 
		\label{theolimitdist}
	\end{enumerate}
\end{theo}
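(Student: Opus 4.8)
The plan is to follow the standard route for establishing an oracle property of a nonconcave penalized estimator \citep[as in][]{Fan2001,fan2004nonconcave,dicker2013variable}, adapted to the SELO penalty and to the change-point design matrix $\bX_{\btau}$. I would split the argument into two parts: first the existence of a $\sqrt{T}$-consistent local minimizer (which immediately yields part (ii)), and then the sparsity/selection statement, part (i), the latter being where the specific shape of the SELO penalty really matters.

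For existence, I would reparametrize $\bbeta = \bbeta^* + \bu/\sqrt{T}$ and show that for every $\epsilon>0$ there is a constant $C$ with $\mathbf{P}\big(\inf_{\|\bu\|=C} f_T(\bbeta^*+\bu/\sqrt{T}) > f_T(\bbeta^*)\big)\to 1$, which confines a local minimizer to a ball of radius $C/\sqrt{T}$. Expanding the least-squares part and using $\by-\bX_{\btau}\bbeta^*=\bepsilon$, the increment is $-2T^{-3/2}\bepsilon'\bX_{\btau}\bu + T^{-2}\bu'\bX_{\btau}'\bX_{\btau}\bu$. Assumption \ref{exogeneity} (the martingale-difference structure of $\{\bx_t\epsilon_t\}$ with finite second moments) gives $T^{-1/2}\bX_{\btau}'\bepsilon=O_p(1)$, so the linear term is $O_p(T^{-1})\|\bu\|$, while Assumption \ref{Xfullrank} bounds the quadratic term below by $(r_0/T)\|\bu\|^2$. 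It remains to control the penalty increment. For $(j,k)\notin A$ the true value is $0$ and $\pen\ge 0$, so those terms only help. For $(j,k)\in A$ I would use that $|\Delta\beta^*_{jk}|\ge\rho$ together with $\rho\sqrt{T}\to\infty$ (Assumption \ref{rhopositive}) and $a_k=O_p(T^{-3/2})$ (Assumption \ref{tauisO}): a direct computation gives $\pen'(\omega)=\tfrac{\lambda\zeta}{\ln 2\,a_k}\,[(2v+\zeta)(v+\zeta)]^{-1}\mathrm{sgn}(\omega)$ with $v=|\omega|/a_k$, so at $\omega=\Delta\beta^*_{jk}$ this derivative is $O_p(a_k/\rho^2)=o_p(T^{-1/2})$, whence the penalty increment over these coordinates is $o_p(T^{-1})\|\bu\|$. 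Collecting the terms, the quadratic $(r_0/T)\|\bu\|^2$ dominates for $\|\bu\|=C$ large, proving existence and hence part (ii).

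For sparsity, $\sqrt{T}$-consistency immediately settles the coordinates in $A$: since $|\hat\beta_{jk}|\ge|\Delta\beta^*_{jk}|-O_p(T^{-1/2})\ge\rho-O_p(T^{-1/2})$ and $\rho\sqrt T\to\infty$, these are nonzero with probability tending to one. The crux is showing $\hat\beta_{jk}=0$ for $(j,k)\notin A$, and here I would argue that no $\sqrt{T}$-consistent local minimizer can keep such a coordinate nonzero, splitting into two regimes according to the size of $|\hat\beta_{jk}|$ relative to $a_k$. In the moderate regime $|\hat\beta_{jk}|\ge M a_k$ (so $v$ is bounded below by a large constant $M$), I would compare $\hat\bbeta$ with the vector $\tilde\bbeta$ obtained by zeroing that single coordinate: writing $\mathbf{c}$ for the corresponding column of $\bX_{\btau}$ and $\Delta=\hat\beta_{jk}$, the least-squares change is $-2T^{-1}\Delta\,\mathbf{c}'(\by-\bX_{\btau}\tilde\bbeta)+T^{-1}\Delta^2\|\mathbf{c}\|^2=O_p(T^{-1})$ (using $T^{-1/2}\mathbf{c}'\bepsilon=O_p(1)$, $T^{-1}\mathbf{c}'\bX_{\btau}=O_p(1)$ and $|\Delta|=O_p(T^{-1/2})$), whereas the penalty drops by $\pen(\hat\beta_{jk})\ge \tfrac{\lambda}{\ln 2}\ln\frac{2M+\zeta}{M+\zeta}$, a strictly positive constant; hence zeroing strictly lowers $f_T$ and $\hat\bbeta$ could not have been a minimizer. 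In the tiny regime $0<|\hat\beta_{jk}|<M a_k$ I would instead use the sign of the coordinate derivative: the data part of $\partial f_T/\partial\Delta\beta_{jk}$ is $O_p(T^{-1/2})$, while $\pen'$ is bounded below in magnitude by $c_M\,\lambda/a_k\asymp \lambda T^{3/2}$ and carries the sign of $\Delta\beta_{jk}$, so $\mathrm{sgn}(\partial f_T/\partial\Delta\beta_{jk})=\mathrm{sgn}(\Delta\beta_{jk})$ and $f_T$ strictly decreases toward $0$. Either way the minimizer must set the coordinate to $0$, which yields part (i).

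I expect the sparsity step to be the main obstacle. Unlike SCAD, whose derivative is constant and equal to $\lambda$ on a whole neighborhood of $0$, the SELO derivative is enormous ($\asymp \lambda/a_k$) only on the $O(a_k)=O(T^{-3/2})$ scale and then flattens out, so a single sign-of-derivative argument over the full $O(T^{-1/2})$ consistency window does not go through; the two-regime decomposition above---and the fact that the penalty \emph{value} has already accumulated a constant of order $\lambda$ before the derivative becomes small---is what makes it work. The remaining care is to make the $O_p$ bounds uniform in the relevant coordinate and over the radius-$C$ sphere, which follows from Assumptions \ref{Xfullrank}--\ref{exogeneity} but should be stated explicitly.
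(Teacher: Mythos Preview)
Your existence argument is essentially the paper's Lemma~1: reparametrize on the $T^{-1/2}$ scale, use \ref{Xfullrank}--\ref{exogeneity} to control the linear and quadratic least-squares terms, and bound the penalty increment at $A$-coordinates via $|\pen'(\Delta\beta^*_{jk})|=O_p(a_k/\rho^2)=o_p(T^{-1/2})$ under \ref{rhopositive} and \ref{tauisO}. That part matches.

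Where you genuinely diverge is in the sparsity step. The paper (Lemma~2) zeros out \emph{all} $A^c$-coordinates simultaneously and relies on a single concavity/secant inequality: for $|x|\le C\alpha_T$ with $\alpha_T=\sqrt{Km\sigma^2/T}$, concavity of $\pen$ on $[0,\infty)$ gives $\pen(x)\ge [\pen(C\alpha_T)/(C\alpha_T)]\,|x|$; since $a_k/\alpha_T\to 0$, the secant slope $\pen(C\alpha_T)/(C\alpha_T)$ is of order $\lambda/\alpha_T$, which dominates the $O_p(\alpha_T)\|\bbeta-\tilde\bbeta\|$ contribution from the data part in one stroke, with no case split. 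Your two-regime argument---function-value comparison when $|\hat\beta_{jk}|\ge Ma_k$ (the penalty has already saturated to a positive constant while the least-squares change is $O_p(T^{-1})$), derivative sign when $0<|\hat\beta_{jk}|<Ma_k$---is also correct and makes more explicit \emph{why} the SELO shape works, namely that the penalty \emph{value} reaches order $\lambda$ long before the $T^{-1/2}$ consistency scale even though its derivative has died; but it is longer and handles one coordinate at a time. One wording point worth tightening: in the moderate regime your conclusion ``$\hat\bbeta$ could not have been a minimizer'' only rules out $\hat\bbeta$ as the \emph{argmin over the consistency ball} (the comparison point $\tilde\bbeta$ lies in that ball but need not lie in an arbitrary local neighborhood of $\hat\bbeta$). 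That is exactly what the theorem needs---the local minimizer you exhibit is precisely that argmin, which your existence step places in the open ball---so the argument is fine, but you should say so explicitly rather than phrase it as a statement about every $\sqrt T$-consistent local minimizer.
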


\begin{proof}
The proof is given in SA \ref{App:theo1}
\end{proof}

\begin{remark} \label{rem1}
Proposition \ref{theoconsislimitdist} also applies when the set of breakpoints contains additional spurious break dates. In particular, Proposition \ref{theoconsislimitdist} holds if we relax assumption \ref{tauistruetau} by the less restrictive assumption:
\begin{assumption}
	\label{tauistruetau2}
	$\btau = \{\tau_1,\ldots,\tau_{\hat{m}}\}$ with $\hat{m}\geq m$, $\btau_0\subseteq \btau$ and $\forall j\in [1,\hat{m}]$, we have $\tau_j - \tau_{j-1} = T\delta_{\tau_j} \rightarrow \infty$, with $\displaystyle \sum_{j=1}^{\hat{m}} \delta_{\tau_j} =1$. 
\end{assumption}
\end{remark}

\section{Estimation \label{sec:estimation}}
The objective function to minimize is given by
\begin{align} \label{eq:optimSELO}
\begin{split}
f(\bbeta) & = ||\by- \bX_{\btau} \bbeta||_2^2 + T \frac{\lambda}{\ln(2)}  \sum_{j=2}^{m} \sum_{k=1}^{K} \ln \left(\frac{2(\frac{|\Delta \beta_{jk}|}{a_k}) + \zeta}{(\frac{|\Delta \beta_{jk}|}{a_k}) + \zeta}\right),\\
& = ||\by- \bX_{\btau} \bbeta||_2^2 + \sum_{j=2}^{m} \sum_{k=1}^{K} \ln q_{k}(\Delta \beta_{jk}),
\end{split}
\end{align}
in which $q_{k}(\Delta \beta_{jk}) = \left(\frac{2(\frac{|\Delta \beta_{jk}|}{a_k}) + \zeta}{(\frac{|\Delta \beta_{jk}|}{a_k}) + \zeta}\right)^{\left(\frac{T \lambda}{\ln(2)}\right)}$. Due to the penalty function, we cannot find any analytical expression of the minimizer. In addition to that, the function likely exhibits many local modes which complicates the optimization. We address the problem of finding the global mode by using a deterministic annealing expectation-minimization (DAEM) algorithm \cite[see][]{ueda1998deterministic}. To do so, we first approximate the penalty function by a mixture of two Normal components (to take into account the large tail of the SELO penalty function), the details of it are given in SA \ref{App:mixture}. Secondly, since minimizing the sum of squared residuals is identical to maximizing a likelihood function when the error term is normally distributed, we work with the following model, $\by =  \bX_{\btau} \bbeta + \eta$, where $\eta \sim \NORM(\mathbf 0, \sigma^2 I_T)$. The modified model implied the following objective function to maximize with respect to $(\bbeta,\sigma^2)$:
\begin{align} \label{eq:optimEM}
\begin{split}
f(\by|\bbeta,\sigma^2) & = -\frac{T}{2}\ln \sigma^2 - \frac{1}{2\sigma^2}||\by- \bX_{\btau} \bbeta||_2^2 - \sum_{j=2}^{m} \sum_{k=1}^{K} \ln g_k(\Delta \beta_{jk}),
\end{split}
\end{align}
where $g_k(\Delta \beta_{jk}) = \sum_{i=1}^{2}\omega_{i}^{(k)} f_N(\Delta \beta_{jk}|\mu_{i}^{(k)},s_{i}^{(k)})$, $f_N(x|\mu,s)$ stands for the normal density function evaluated at $x$ with expectation and variance given by $\mu$ and $s$ respectively and $\omega_{i}^{(k)} \in (0,1)$ with $\sum_{i=1}^{2}\omega_{i}^{(k)} = 1$. Note that the function $f(\by|\bbeta,\sigma^2)$ in Equation \eqref{eq:optimEM} is proportional to the posterior log-density of the parameter distribution $\bbeta,\sigma^2|\by$ from a Bayesian perspective with prior distributions given by $f(\sigma^2,\bbeta_1) \propto 1$ and $f(\Delta \beta_{jk}) = g_k(\Delta \beta_{jk})$ for $j\in[2,m]$ and $k\in [1,K]$. In particular, the distribution $f(\Delta \beta_{jk})$ can be understood as a spike and slab prior \citep[e.g.][]{George_1993} and our optimization procedure fits into the framework of \cite{rovckova2014emvs} which proposes tackling the linear variable selection problem with the EM algorithm and its DAEM variant. The optimization is therefore equivalent to finding the mode of $\bbeta,\sigma^2|\by$. Using a data augmentation approach, we add latent variables $\bz = (z_{21},z_{22},\ldots,z_{mK})'$ such that $f(z_{jk}=i) = \omega_i^{(k)}$, $\forall j \in [2,m]$, $\forall k \in [1,K]$ and  $\forall i \in [1,2]$. With these latent variables, we can write the prior distribution of $\Delta \beta_{jk}$ in a convenient hierarchical way as follows,
\begin{eqnarray*}
f(\Delta \beta_{jk}|z_{jk}=i) & = & f_N(\Delta \beta_{jk}|\mu_i^{(k)},s_i^{(k)}), \text{ and }  f(z_{jk}=i) = \omega_i^{(k)}.
\end{eqnarray*}
By fixing $\btheta = \{\bbeta,\sigma^2\}$, the EM algorithm (and its DAEM variant) solves the following optimization at iteration $n$,
\begin{eqnarray*}
\text{argmax}_{\btheta_n } Q(\btheta_{n}|\btheta_{n-1}) & = & \text{argmax}_{\btheta_n} \mathbb{E}_{\bz|\by,\btheta_{n-1}}(\ln f(\btheta_n,\bz|\by) |\by,\btheta_{n-1}).
\end{eqnarray*}
One can easily show that maximizing $Q(\btheta_{n}|\btheta_{n-1})$ implies that $f(\btheta_n|\by) \geq f(\btheta_{n-1}|\by)$.% (see Appendix \ref{App:proofDAEM} for a proof).

\subsection{Derivation of the DAEM algorithm \label{sec:DAEM}}
To apply the DAEM algorithm, we need to find an expression of $Q(\btheta|\btheta_{n-1})$. Given a set of parameter $\btheta_{n-1}$, we have that 
\begin{eqnarray*}
Q(\btheta|\btheta_{n-1}) & = & \mathbb{E}_{\bz|\by,\btheta_{n-1}}(\ln f(\btheta|\by,\bz)f(\bz|\by) |\by,\btheta_{n-1})\\
 & \propto & \ln f(\by|\bbeta,\sigma^2) + \ln f(\bbeta_1,\sigma^2) + \sum_{j=2}^{m} \sum_{k=1}^{K} \sum_{i=1}^{2} -\frac{(\Delta \beta_{kj}-\mu_{i}^{(k)})^2}{2s_{i}^{(k)}} f(z_{kj}=i|\by,\btheta_{n-1}),\\
 & \propto & \ln f(\by|\bbeta,\sigma^2)  - \frac{1}{2}\sum_{i=1}^{2} (\bbeta-\bmu_i)'\bSigma_i(\bbeta-\bmu_i),
\end{eqnarray*}
where 
\begin{eqnarray*}
\bmu_i &=& (\underbrace{0,0,\ldots,0}_{\text{K-dimensional}},\mu_i^{(1)},\mu_i^{(2)},\ldots,\mu_i^{(K)},\mu_i^{(1)},\ldots)' \in \Re^{mK \times 1},\\
\bSigma_i &= &\text{diag}(\underbrace{0,0,\ldots,0}_{\text{K-dimensional}},\frac{p_{21}^{(i)}}{s_i^{(1)}},\frac{p_{22}^{(i)}}{s_i^{(2)}},\ldots,\frac{p_{2K}^{(i)}}{s_i^{(K)}},\frac{p_{31}^{(i)}}{s_i^{(1)}},\ldots,\frac{p_{mK}^{(i)}}{s_i^{(K)}}),
\end{eqnarray*}
with $p_{jk}^{(i)} = f(z_{jk}=i|\by,\btheta_{n-1})$ $\forall i \in [1,2]$,$\forall j \in [2,m]$ and $\forall k \in [1,K]$  . Importantly, the difference between the EM algorithm and its DA version only appears in the quantities $p_{jk}^{(i)}$. In fact, the DAEM algorithm introduces an increasing function $\phi(r):[1,N]\rightarrow (0,1]$ such that $0<\phi(1) \leq 1$ and $\phi(N) = 1$. For each value $r=1, \ldots, N$, it applies recursively the EM algorithm (that starts with the final estimate of the previous EM algorithm) where the posterior probabilities $p_{jk}^{(i)}$ are denoted $p_{jk}^{(i,\phi(r))}$ and are modified as follows,
\begin{eqnarray} \label{eq:postprob_p}
p_{jk}^{(i,\phi(r))} & \propto & (f_N(\Delta \beta_{jk}|\mu_i^{(k)},s_i^{(k)})\omega_i^{(k)})^{\phi(r)}.
\end{eqnarray}
When $r=N$, the increasing function $\phi(r)=1$ and the standard EM algorithm is run (but with a promising starting point). 
%To maximize $Q(\theta|\theta_{n-1})$, one can make the prior distribution of $\Delta \beta_{jk}$ depending on the variance $\sigma^2$ such that $\Delta \beta_{jk}|\delta_{jk}=i,\sigma^2 \sim \mathcal N(\mu_i,\sigma^2 s_{i})$ and so $\sigma^2 \sim \mathcal{IG}(e,f)$. In such a case, we can find analytically the solution for $\theta$. In particular, we have 
%\begin{eqnarray*}
%\theta_n=(\bbeta_n,\sigma_n^2) & = & \argmax_{\theta} Q(\theta|\theta_{n-1}),\\
%\frac{dQ(\theta|\theta_{n-1})}{d\bbeta} & = & \frac{1}{2\sigma^2}\frac{d}{d\bbeta}[(\bbeta-\hat{\bbeta})'(X_{\btau}'X_{\btau})(\bbeta-\hat{\bbeta}) - \sum_{i=1}^{2} (\bbeta-\bmu_i)'\Sigma_i(\bbeta-\bmu_i)],\\
%\bbeta_n & = & [X_{\btau}'X_{\btau}-\sum_{i=1}^{2} \Sigma_i]^{-1}[X_{\btau}'\by - \sum_{i=1}^{2} \Sigma_i \bmu_i],\\
%\sigma^2 & = & \frac{[(\by-X_{\tau}\bbeta_n)'(\by-X_{\tau}\bbeta_n)  + f + \sum_{i=1}^{2} (\bbeta_n-\bmu_i)'\Sigma_i(\bbeta_n-\bmu_i)]}{T+e+2}.
%\end{eqnarray*} 
To find the maximum of $Q(\btheta|\btheta_{n-1})$, we sequentially maximize $\bbeta$ given $\sigma^2$ and then $\sigma^2$ with respect to $\bbeta$. This approach, called coordinate iterative ascent, operates in two steps:
\begin{enumerate}
	\item Compute $\bbeta_{n} = \argmax_{\bbeta} Q(\bbeta,\sigma_{n-1}^2|\btheta_{n-1})$. 
	\item Compute $\sigma_{n}^2 = \argmax_{\sigma^2} Q(\bbeta_{n},\sigma^2|\btheta_{n-1})$. 
\end{enumerate}
At the end of the two steps, we necessarily have $Q(\bbeta_{n-1},\sigma_{n-1}^2|\btheta_{n-1}) \leq Q(\bbeta_{n},\sigma_{n-1}^2|\btheta_{n-1}) \leq Q(\bbeta_{n},\sigma_{n}^2|\btheta_{n-1})$. The maximisation of $\bbeta$ given $\sigma_{n-1}^2$ leads to
	\begin{eqnarray*}
\bbeta_n  & = & [\sigma_{n-1}^{-2} X_{\btau}'X_{\btau} + \sum_{i=1}^{2} \bSigma_i]^{-1}[\sigma_{n-1}^{-2}X_{\btau}'\by  + \sum_{i=1}^{2}\bSigma_i \bmu_i].
\end{eqnarray*}

	%\begin{eqnarray*}
%Q(\Delta \bbeta,\sigma_{n-1}^2|\btheta_{n-1}) & \propto & - \frac{1}{2\sigma_{n-1}^2}\sum_{t=1}^{T}(y_t-x_t'\bbeta)^2 - \frac{1}{2}\sum_{i=1}^{2} (\bbeta-\bmu_i)'\Sigma_i(\bbeta-\bmu_i),\\
 %& \propto & -\bbeta' [\sum_{t=1}^{T}\frac{x_tx_t'}{2\sigma_{n-1}^2} + \sum_{i=1}^{2} \frac{\Sigma_i}{2}]\bbeta + 2 [\sum_{t=1}^{T} \frac{y_t x_t'}{2\sigma_{n-1}^2} + \sum_{i=1}^{2}\frac{\bmu_i'\Sigma_i}{2}]\bbeta,\\
%\bbeta_n & = & [\sum_{t=1}^{T}\frac{x_tx_t'}{\sigma_{n-1}^2} + \sum_{i=1}^{2} \Sigma_i]^{-1}[ \sum_{t=1}^{T}\frac{x_t y_t}{\sigma_{n-1}^2} + \sum_{i=1}^{2}\Sigma_i \bmu_i], \\
 %& = & [\sigma_{n-1}^{-2} X_{\btau}'X_{\btau} + \sum_{i=1}^{2} \Sigma_i]^{-1}[\sigma_{n-1}^{-2}X_{\btau}'\by  + \sum_{i=1}^{2}\Sigma_i \bmu_i].
%\end{eqnarray*}
The update of $\sigma^2$ conditional to $\bbeta_n$ is given by
	\begin{eqnarray*}
\sigma_{n}^2  & = & \frac{[(\by-X_{\tau}\bbeta_n)'(\by-X_{\tau}\bbeta_n)}{T}.
\end{eqnarray*}
	%\begin{eqnarray*}
%Q(\Delta \bbeta_n,\sigma^2|\btheta_{n-1}) & \propto & - \frac{T}{2} \ln \sigma^2 - \frac{1}{2\sigma^2}[\sum_{t=1}^{T}(y_t-x_t'\bbeta_n)^2],  \\
%\sigma_{n}^2 & = & \frac{[\sum_{t=1}^{T}(y_t-x_t'\bbeta_n)^2]}{T}, \\
 %& = & \frac{[(\by-X_{\tau}\bbeta_n)'(\by-X_{\tau}\bbeta_n)}{T}.
%\end{eqnarray*}

We summarize the DAEM procedure in Algorithm \ref{algo:DAEM}. In practice, the minimum distance $e$ indicating a convergence of the algorithm is set to $10^{-5}$ and the number of DAEM iteration $N$ is fixed to 10.

\begin{algorithm}[ht]
	\caption{\justifying DAEM algorithm}\label{algo:DAEM}
	\begin{algorithmic}
		\State \text{Initialize $\bbeta_0$ using Algorithm \ref{algo:init}}
		\State \text{Set $\sigma_0^2=\frac{[(\by-\bX_{\tau}\bbeta_0)'(\by-\bX_{\tau}\bbeta_0)}{T}$, $\phi(1)=(\frac{1}{N})^2$, $r=1$ and $\text{dist}=\infty$.}
		\While{$r<= N$} 
		\State \text{Set $n=0$ and $\btheta_n = (\bbeta_0',\sigma_0^2)'$.}
		\While{$\text{dist} > e$} 
		\State \text{Increment $n= n+1$.}
		\State \text{Compute the posterior probabilities $p_{jk}^{(i,\phi(r))}$ given in Equation \eqref{eq:postprob_p} for i=1,2}
		\State \text{Compute the mean parameters }
		$$\bbeta_n=[\sigma_{n-1}^{-2} \bX_{\btau}'\bX_{\btau} + \sum_{i=1}^{2} \bSigma_i]^{-1}[\sigma_{n-1}^{-2}\bX_{\btau}'\by  + \sum_{i=1}^{2}\bSigma_i \bmu_i].$$
		\State \text{Compute the variance parameter }
		$$\sigma_{n}^2 = \frac{[(\by-\bX_{\tau}\bbeta_n)'(\by-\bX_{\tau}\bbeta_n)}{T}.$$
		\State \text{Set $\btheta_n = (\bbeta_n',\sigma_n^2)'$ and compute the distance value $\text{dist} = || \btheta_n-\btheta_{n-1}||_2$.}
		\EndWhile 
		\State \text{Increment $r = r+1$ and set $\phi(r) = (\frac{r}{N})^2$. Set $\bbeta_0 = \bbeta_n$ and $\sigma_0^2 = \sigma_n^2$.}
		\EndWhile 
	\end{algorithmic}
\end{algorithm}

The EM and the DAEM algorithms are sensitive to starting values. Inspired by \cite{zhao2012random}, we mitigate this issue by randomly exploring the model space using a swapping approach before applying the DAEM algorithm. To be specific, we generate $N_{\text{init}}$ values as explained in Algorithm \ref{algo:init} and we initialize the DAEM algorithm with the parameter estimates that minimize the penalized function given in Equation \eqref{eq:optimSELO}. In practice, we set $N_{\text{init}} = \text{min}(2^{(m-1)K-1},3000)$.
\begin{algorithm}[ht]
	\caption{\justifying Initialization of the DAEM algorithm}\label{algo:init}
	\begin{algorithmic}
	\For{$n=1$ to $N_{\text{init}}$} 
  \State \text{Set $\hat{A} = \emptyset$ and sample $p\sim U[0,1]$.}
	\State \text{For $j=2,...,m$ and for $k=1,...,K$, do $\hat{A} = \hat{A} \cup (j,k)$ with probability $p$
	.}
	\State \text{$(f_n,\bbeta_n) =$ Swap($\hat{A}$) (see Algorithm \ref{algo:swap})}
	\EndFor
	\State \text{Return the OLS estimates $\bbeta_{\hat{n}}$ such that $\hat{n} = \argmin_{n\in [1,N_{\text{init}}]} f_n$.}
	\end{algorithmic}
\end{algorithm}

\begin{algorithm}[ht]
	\caption{\justifying Swap the set of indices - Swap($\hat{A}$)}\label{algo:swap}
	\footnotesize
	\begin{algorithmic}
	\State \text{Given a set of indices $\hat{A}$ defining the parameters $\Delta \bbeta \neq 0$, for $j=2,...,m$ and for $k=1,...,K$ \textbf{do}}
	\State \text{Build the sets $\tilde{A}_{jk} = \hat{A}\cup (j,k)$ if $\hat{A}\cap (j,k) = \emptyset$ or the set $\tilde{A}_{jk} = \hat{A}\backslash (j,k)$ otherwise.}
	\State \text{For each set $\tilde{A}_{jk}$, compute the OLS estimates ($\hat{\bbeta}_{jk}$) and the penalized function $f_{jk} = f(\hat{\bbeta}_{jk})$ (see \eqref{eq:optimSELO}).}
	\State \text{For the set $\hat{A}$, compute the OLS estimates ($\hat{\bbeta}_{\hat{A}}$) and the penalized function $f_{\hat{A}} = f(\hat{\bbeta}_{\hat{A}})$ (see \eqref{eq:optimSELO}).}
	\State \text{Find $(\hat{j},\hat{k}) = \argmin_{j,k}f_{jk}$}
	\If{$f_{\hat{j}\hat{k}}<f_{\hat{A}}$}
	\State \text{Return $\hat{\bbeta}_{\hat{j}\hat{k}}$ and $f_{\hat{j}\hat{k}}$}
	\Else
	\State \text{Return $\hat{\bbeta}_{\hat{A}}$ and $f_{\hat{A}}$}
	\EndIf
	\end{algorithmic}
\end{algorithm}

\section{Selection of the penalty parameters and parameter uncertainties \label{sec:modelSelection}}
The SELO penalty function exhibits two tuning parameters $\boldsymbol{a}$ and $\lambda$. The standard approach to fix them consists in considering a grid of values of these parameters and in selecting the parameters that maximize a (generally consistent) information criterion \citep[e.g.,][]{zhang2010regularization}. Instead of relying on a standard information criterion and select the tuning parameters $\boldsymbol{a}$ and $\lambda$ that maximize it, we consider each pair $(\boldsymbol{a},\lambda)$ as a model to take into account the model uncertainty. For a given value of $(\boldsymbol{a},\lambda)$, the DAEM algorithm exposed in Section \ref{sec:DAEM} provides an estimate $\hat{\Delta \bbeta}$ of $\Delta \bbeta$ which delivers an estimate of $\hat{A}$, i.e., the set of indices with $\hat{\Delta \beta_{jk}} \neq 0$ for $ j\in [2,m]$ and for $k \in [1,K]$. This set tells us which covariates should be included in the linear regression and which should not. Let us denote by $\tilde{\bX}_{\btau}^{\hat{A}}$ the covariates related to the first-difference estimates that are different from zero. We use the following criterion for selecting $\boldsymbol{a}$ and $\lambda$:
\begin{equation} \label{eq:ML}
\begin{split}
f(\by|\ba,\lambda,\btau) & = (\frac{g_{\hat{A}}}{1+g_{\hat{A}}})^{k_{\hat{A}}/2} [\frac{g_{\hat{A}}}{1+g_{\hat{A}}}s_{\tilde{\bX}_{\tau_0}}+\frac{1}{(1+g_{\hat{A}})}s_{\tilde{\bX}_{\tau_0},\tilde{\bX}_{\btau}^{\hat{A}}}]^{-\frac{T-K}{2}},
\end{split}
\end{equation}
where $s_{\tilde{\bX}_{\tau_0}}$ stands for the residual sum of squares (RSS) from the ordinary least squares (OLS) with $\bX=\tilde{\bX}_{\tau_0}$ (i.e., a regression without break), $s_{\tilde{\bX}_{\tau_0},\tilde{\bX}_{\btau}^{\hat{A}}}$ is the RSS from the OLS with $\bX=(\tilde{\bX}_{\tau_0},\tilde{\bX}_{\btau}^{\hat{A}})$, the value $k_{\hat{A}} = |\hat{A}|$ denotes the number of first-difference parameters different from zero in the model and $g_{\hat{A}}$ is a user parameter. We properly derive the criterion in SA \ref{app:linreg}. \cite{fernandez2001benchmark} show that the criterion \eqref{eq:ML} is consistent in the sense that it selects asymptotically the true subset of regressors when $g_{\hat{A}} = w(T)^{-1}$ as stated in proposition \ref{prop:consistebyt}. \\

\begin{theo} \label{prop:consistebyt}
\citep[Adaption of][]{fernandez2001benchmark}. Conditional on the true break dates, the criterion \eqref{eq:ML} is asymptotically maximized for the true subset of covariate $A$ if the following conditions on the parameter $g_{\hat{A}}=w(T)^{-1}$ holds i)  $\lim_{T\ra \infty} w(T) = \infty$, ii) $\lim_{T \ra \infty} \frac{w'(T)}{w(T)} = 0$, iii) $\lim_{T \ra \infty} \frac{T}{w(T)} \in [0,\infty)$.
\end{theo}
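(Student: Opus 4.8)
The plan is to follow the Bayes-factor consistency argument of \cite{fernandez2001benchmark}, adapted to the block-regressor structure of \eqref{eq:linreg_break2}. Since $m$ and $K$ are held fixed, the model space contains the finite number $2^{(m-1)K}$ of candidate index sets $\hat A$, so it suffices to show that for each fixed $\hat A\ne A$ one has $\log f(\by\mid\ba,\lambda,\btau;A)-\log f(\by\mid\ba,\lambda,\btau;\hat A)\to+\infty$ in probability, and then take a union bound over the finitely many competitors. Writing $g=w(T)^{-1}$, $s_0=s_{\tilde{\bX}_{\tau_0}}$ and $s_{\hat A}=s_{\tilde{\bX}_{\tau_0},\tilde{\bX}_{\btau}^{\hat A}}$, and noting that $s_0$ and the factor $(1+g)$ are common to every model, the pairwise log-difference collapses to
\[
\log f(\hat A)-\log f(A)=\tfrac{k_{\hat A}-k_A}{2}\log\tfrac{g}{1+g}-\tfrac{T-K}{2}\log\tfrac{g s_0+s_{\hat A}}{g s_0+s_A},
\]
so the whole proof reduces to controlling the two residual sums of squares $s_{\hat A},s_A$ and the penalty $\log\frac{g}{1+g}$ under the three conditions on $w(T)$.

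The probabilistic inputs come from Assumptions \ref{tauistruetau}--\ref{exogeneity}. Conditioning on the true break dates ($\btau=\btau_0$) makes $A$ correctly specified, and under \ref{exogeneity} ergodicity, stationarity and the martingale-difference property of $\{\bx_t\epsilon_t\}$ give $s_A/T\to\sigma^2$ and $s_0=\mathcal O_p(T)$, so $s_A=T\sigma^2(1+o_p(1))$. Condition iii) is what renders the shrinkage term harmless: $g s_0=(s_0/T)(T/w(T))$ converges to a finite constant, hence $g s_0=\mathcal O_p(1)$ and $g s_0+s_A=T\sigma^2(1+o_p(1))$. For the two ways a competitor can differ from $A$ I would then establish: (a) if $\hat A$ omits at least one genuine break, $A\not\subseteq\hat A$, the eigenvalue bound \ref{Xfullrank} keeps the omitted block out of the span of the retained regressors, so the unexplained signal accumulates linearly and $s_{\hat A}=s_A+cT+o_p(T)$ for some constant $c>0$; (b) if $\hat A$ merely adds spurious regressors, $A\subsetneq\hat A$, the martingale CLT implied by \ref{exogeneity} makes the extra reduction bounded in probability, $s_A-s_{\hat A}=\mathcal O_p(1)$.

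These two facts close the two cases. In the overfitting case (b), $\log\frac{g s_0+s_{\hat A}}{g s_0+s_A}=\log\bigl(1-\mathcal O_p(1)/(T\sigma^2)\bigr)=\mathcal O_p(1/T)$, so the second term of the display is $\mathcal O_p(1)$, while the first equals $\tfrac{k_{\hat A}-k_A}{2}\log\frac{g}{1+g}\sim-\tfrac12(k_{\hat A}-k_A)\log w(T)\to-\infty$ by condition i) since $k_{\hat A}-k_A\ge 1$. In the underfitting case (a), $\log\frac{g s_0+s_{\hat A}}{g s_0+s_A}\to\log(1+c/\sigma^2)>0$, so the second term behaves like $-\tfrac{T}{2}\log(1+c/\sigma^2)$ and diverges to $-\infty$ linearly in $T$; the worst case for the first term is $k_{\hat A}<k_A$, contributing at most $+\tfrac12|k_{\hat A}-k_A|\log w(T)$, and condition ii) ($w'/w\to0$, equivalently $\log w(T)=o(T)$) guarantees this logarithmic growth cannot overturn the linear decay. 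Hence $\log f(\hat A)-\log f(A)\to-\infty$ in every case, and the union bound over the finite model space yields $\mathbf P\bigl(A=\argmax_{\hat A}f(\by\mid\ba,\lambda,\btau)\bigr)\to1$.

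The main obstacle, and the place where the argument genuinely departs from \cite{fernandez2001benchmark}, is establishing the residual asymptotics (a) and (b) without the i.i.d.\ Gaussian assumptions of the original paper: here the segments are only ergodic, stationary and possibly conditionally heteroskedastic, so I would replace their exact distributional computations by laws of large numbers and a martingale central limit theorem for $\{\bx_t\epsilon_t\}$ (Assumption \ref{exogeneity}), leaning on the eigenvalue bound \ref{Xfullrank} to guarantee that an omitted relevant block contributes a strictly positive limiting quadratic form (the constant $c>0$). The two delicate points are verifying that $c$ is bounded away from zero uniformly over the finitely many underfitting competitors, and that the overfitting reduction is truly $\mathcal O_p(1)$ rather than slowly diverging.
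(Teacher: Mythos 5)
Your proposal is correct and follows essentially the same route as the paper's proof: a pairwise comparison of the criterion over the finite model space, factored into a prior-penalty term and an RSS-ratio term, with the underfitting case killed by the linear divergence of the residual sum of squares (condition ii) ensuring $\log w(T)=o(T)$ cannot rescue it) and the overfitting case killed by condition i) since the likelihood-ratio gain is $\mathcal{O}_p(1)$, with condition iii) neutralizing the $g\,s_{\tilde{\bX}_{\tau_0}}$ shrinkage term. The only cosmetic difference is that you merge the paper's two non-nesting sub-cases ($k_{\hat A}\gtrless k_A$) into one underfitting argument, which is if anything cleaner than the paper's repeated application of l'H\^{o}pital's rule.
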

\begin{proof}
See SA \ref{app:consistency}.
\end{proof}

\begin{remark} \label{rem2}
Proposition \ref{prop:consistebyt} can be readily adapted when the conditioning set is a potential break date set complying with Assumption \ref{tauistruetau2}.
\end{remark}

In \cite{fernandez2001benchmark}, they advocate for setting $g_{\hat{A}} = \text{min}(T^{-1},(k_{\hat{A}}+K)^{-2})$ as this prior empirically delivers good results for selecting the true covariates in standard linear regressions. However, we deviate from this benchmark prior by fixing $g_{\hat{A}} = \frac{1}{T^{\alpha}-1}$ with $\alpha=1$ when $k_{\hat{A}}=0$ and $\alpha = \frac{k_{\hat{A}} + \hat{m}_{\hat{A}}-1}{k_{\hat{A}}}>1$ when $k_{\hat{A}}>0$ in which $\hat{m}_{\hat{A}}$ denotes the number of active segments. When $\alpha>1$, we show in SA \ref{App:BIC} that the criterion in Equation \eqref{eq:ML} asymptotically converges in probability to
\begin{align} 
\begin{split}
\ln f(\by|\ba,\lambda,\boldsymbol{\tau}) -  \left(-\frac{T}{2} \ln s_{\tilde{\mathbf{X}}_{\tau_0},\tilde{\mathbf{X}}_{\boldsymbol{\tau}}^{\hat{A}}}	-\frac{\alpha k_{\hat{A}}}{2} \ln T\right) & \overset{p}{\rightarrow} 0.
\end{split}
\end{align}
The asymptotic value is equivalent to the Bayesian information criterion (BIC) of a linear regression model exhibiting a number of parameters of $\alpha k_{\hat{A}}$.\footnote{The BIC of a linear regression model with K parameters is given by $-\frac{T}{2} \ln(\frac{s_{\tilde{\bX}_{\tau_0},\tilde{\bX}_{\btau}^{\hat{A}}}}{T})	-\frac{K}{2} \ln T$. So the marginal likelihood criterion of Equation \eqref{eq:ML} converges to the BIC up to an additive constant (that is $\frac{T}{2} \ln T$).} Consequently, the model penalty takes additionally into account the number of active breakpoints when $\alpha = \frac{k_{\hat{A}} + \hat{m}_{\hat{A}}-1}{k_{\hat{A}}}$. This stronger penalty works empirically well and is motivated by several CP papers advocating for stronger penalties than the BIC as it tends to overfit the number of regimes in finite sample \citep[see, e.g., ][]{LiuAl1997,zhang2007modified,kim2016consistent}.

 Interestingly, criterion \eqref{eq:ML} stands for a marginal likelihood in the Bayesian paradigm under $\epsilon \sim \NORM(0,\sigma^2 I_T)$ and the following prior,
\begin{equation} \label{eq:prior}
\begin{split}
f(\bbeta_1,\sigma^2) & \propto \sigma^{-2},\\
f(\Delta \bbeta_{\hat{A}}|\sigma^2,\btau) & \sim \NORM(\mathbf 0, \sigma^2 (g_{\hat{A}} (\tilde{\bX}_{\btau}^{\hat{A}})' \bM_{\tilde{\bX}_{\tau_0}} \tilde{\bX}_{\btau}^{\hat{A}})^{-1}), \text{ and } f(\Delta \bbeta_{\hat{A}^c}) \sim \text{Dirac}{(\mathbf 0)},
\end{split}
\end{equation}
where $\bM_{\tilde{\bX}_{\tau_0}} = I_T - \tilde{\bX}_{\tau_0}((\tilde{\bX}_{\tau_0})'\tilde{\bX}_{\tau_0})^{-1}(\tilde{\bX}_{\tau_0})'$. The prior distributions given by Equations \eqref{eq:prior} lead to simple posterior inference. The posterior distribution of the model parameters are given by, see SA \ref{App:post} for derivations,
\begin{eqnarray*}
\sigma^2|\by,\btau & \sim & \mathcal{IG}(\frac{T-K}{2},\frac{\frac{g_{\hat{A}}}{1+g_{\hat{A}}}s_{\tilde{\bX}_{\tau_0}}+\frac{1}{(1+g_{\hat{A}})}s_{\tilde{\bX}_{\tau_0},\tilde{\bX}_{\btau}^{\hat{A}}}}{2}),\\
\bbeta_1|\by,\sigma^2,\Delta \bbeta,\btau & \sim & \NORM((\tilde{\bX}_{\tau_0}'\tilde{\bX}_{\tau_0})^{-1}\tilde{\bX}_{\tau_0}' (\by- \tilde{\bX}_{\btau}^{\hat{A}}\Delta \bbeta), \sigma^2(\tilde{\bX}_{\tau_0}'\tilde{\bX}_{\tau_0})^{-1}), \\
\Delta \bbeta_{\hat{A}}|\by,\sigma^2,\btau & \sim & \NORM((1+g_{\hat{A}})^{-1}[(\tilde{\bX}_{\btau}^{\hat{A}})'\bM_{\tilde{\bX}_{\tau_0}}\tilde{\bX}_{\btau}^{\hat{A}}]^{-1}(\tilde{\bX}_{\btau}^{\hat{A}})'\bM_{\tilde{\bX}_{\tau_0}}\by, \frac{\sigma^2}{(1+g_{\hat{A}})}[(\tilde{\bX}_{\btau}^{\hat{A}})'\bM_{\tilde{\bX}_{\tau_0}}\tilde{\bX}_{\btau}^{\hat{A}}]^{-1}), \\
\Delta \bbeta_{\hat{A}^c}|\by,\btau & = & \mathbf 0,
\end{eqnarray*}
in which $\mathcal{IG}(-,-)$ denotes the Inverse-Gamma distribution.
Consequently, we can go beyond selecting the best pair $(\boldsymbol{a}_p,\lambda_p)$ (i.e., the pair that maximizes the criterion \eqref{eq:ML}) and can take the uncertainty of this selection into account. Given a set of models $M_z = (\boldsymbol{a}_z,\lambda_z)$, with $z=1,...,Z$, we can directly assess the posterior probability of a specific model as follows
\begin{equation} \label{eq:postprob}
\begin{split}
f(M_p|\by,\btau) & = \frac{f(\by|\boldsymbol{a}_p,\lambda_p,\btau)f(M_p|\btau)}{\sum_{z=1}^{Z}f(\by|\boldsymbol{a}_z,\lambda_z,\btau)f(M_z|\btau)}, \forall p \in [1,Z],
\end{split}
\end{equation}
where $f(M_z|\btau)$ denotes the prior probability of model $M_z$. In this paper, we assume uninformative prior, so $f(M_z|\btau) = Z^{-1}$. The posterior probability can be used to account for uncertainty on the selected regressors. In fact, we have
\begin{equation} \label{eq:postprob1}
\begin{split}
f(\bbeta_1,\Delta \bbeta,\sigma^2,M|\by,\btau) & =  f(\bbeta_1|\by,\btau,\sigma^2,\Delta \bbeta,M) f(\Delta \bbeta|\by,\btau,\sigma^2,M) \\
 & \quad f(\sigma^2|\by,\btau,M) f(M|\by,\btau)
\end{split}
\end{equation}
It is worth emphasizing that the consistent property of the criterion \eqref{eq:ML} does not depend on the normality assumption. Only, the posterior distribution of the model parameters does. We do not see this as a limitation since one can easily extend the model with another distributional assumption and compute the posterior distribution by numerical integrations.%\footnote{For some interesting cases (student, laplace, \ldots), a deterministic integration can be used since the posterior distribution is available up to a one dimensional integration.} 

\subsection{Prediction using Bayesian model averaging}
Equation \eqref{eq:postprob} shows how to take into account the uncertainty of the model parameters with respect to the selection of the SELO parameters. The Bayesian paradigm also provides a simple tool to forecast the series taking this uncertainty into account. In particular, the predictive density $f(y_{T+1:T+h}|\by)$, for $h\geq 1$, is related to the posterior density as follows
\begin{eqnarray}
f(y_{T+1:T+h}|\by,\btau) & = & \sum_{z=1}^{Z} \int f(y_{T+1:T+h}|\by,\btau,\bbeta_1,\Delta \bbeta,\sigma^2,M_z)f(\bbeta_1,\Delta \bbeta,\sigma^2,M_z|\by,\btau) d\bbeta_1 d\Delta \bbeta d\sigma^2,\nonumber \\
 & \approx & \frac{1}{N} \sum_{i=1}^{N} f(y_{T+1:T+h}|\by,\btau,\bbeta_1^{(i)},\Delta \bbeta^{(i)},(\sigma^2)^{(i)},M^{(i)}),\label{eq:predBMA}
\end{eqnarray}
where $\{\bbeta_1^{(i)},\Delta \bbeta^{(i)},(\sigma^2)^{(i)},M^{(i)}\}_{i=1}^{N}$ are independent draws from the posterior distribution (i.e., $\bbeta_1,\Delta \bbeta,\sigma^2,M|\by,\btau$). From \eqref{eq:predBMA}, it is apparent that the predictive density takes the model uncertainty into account.\footnote{Using the full marginal likelihood for weighting the models' predictions could raise concerns as only the last segment matters in CP processes. However, as marginal likelihood is frequently used for selecting the number of regimes in the literature and because it is also informative about the fit of the last regime, this average should give large weights to the models exhibiting a good fit at the end of the sample. Nevertheless, we could also weight the models' predictions using the predictive marginal likelihood $f(y_{t_1+1:T}|y_{1:t_1},\btau)$ in which $t_1$ is a user-defined value.} This feature should be contrasted with the standard penalized regression literature in which forecasting is performed using one unique set of parameter estimates; i.e., the estimates given by one penalty parameter selected, for instance, by cross-validation or by an information criterion. \\
%In practice, assuming that the future covariates $\bx_{T+1:T+h}$ are observed at time $T$ or that the model stands for an AR process, two steps are needed to sample one draw from the predictive distribution:\footnote{As shown in Appendix \ref{App:predictive}, the predictive distribution is a student distribution and  If the future covariates are not observed at time T, one needs first to predict the explanatory variables using for instance a multivariate framework.}
%\begin{enumerate}
	%\item Sample $\bbeta_1,\Delta \bbeta,\sigma^2 \sim \bbeta_1,\Delta \bbeta,\sigma^2|\by$.
	%\item For j=1,...,h, sample $y_{T+j}|\by,y_{T+1},\ldots,y_{T+j-1},\bbeta_1,\Delta \bbeta,\sigma^2 \sim \NORM(\bx_{T+j}'\bbeta_{m+1},\sigma^2)$.
%\end{enumerate}
In practice, simulations from the posterior distribution are not required for evaluating the predictive density. Assuming that the future covariates $\bx_{T+1:T+h}$ are observed at time $T$, the predictive distribution of $y_{T+1:T+h}$ given a model $M_z$ turns out to be a multivariate student distribution. Supplementary Appendix \ref{App:predictive} documents the analytical expression of $f(y_{T+1:T+h}|\by,M_z)$. Therefore, we can efficiently take into account model uncertainty in the predictive density since Equation \eqref{eq:predBMA} simplifies into 
\begin{eqnarray}
f(y_{T+1:T+h}|\by,\btau) & = & \sum_{z=1}^{Z} f(y_{T+1:T+h}|\by,\btau,M_z)f(M_z|\by,\btau).
\end{eqnarray}

%\arnaud{To evaluate the predictive density at the true value, the steps are as follows: given the true values of the series $y_{T+1:T+h}$, for i=1,...,M compute 
%\begin{enumerate}
	%\item Sample $\Theta^{(i)} \equiv \{\bbeta_1,\Delta \bbeta,(\sigma^2)\}^{(i)} \sim \bbeta_1,\Delta \bbeta,\sigma^2|\by$.
	%\item For j=1,...,h, compute $f_j^{(i)} \equiv f_N(y_{T+j}|\bx_{T+j}'\bbeta_{m+1}^{(i)},(\sigma^2)^{(i)})$.
%\end{enumerate}
%Then for $j=1,...,h$, the predictive density is approximately given by $f(y_{T+j}|\by) \approx \frac{1}{M} \sum_{i=1}^{M} f_j^{(i)}$. In practice, you need to use the logarithm transformation to have a reliable estimate. It reads as follows:
%\begin{itemize}
	%\item $\ln f(y_{T+j}|\by) \approx A + \ln[\frac{1}{M} \sum_{i=1}^{M} \exp{\{ \ln f_j^{(i)} - A\}}]$ in which $A = \text{max}_{i\in [1,M]} \ln f_j^{(i)}$.
%\end{itemize}}

\subsection{How to choose the values of \texorpdfstring{$\lambda$}{TEXT} and\texorpdfstring{ $\boldsymbol{a}$}{TEXT}}
When the number of models to consider is too large to directly explore the model space using the criterion \eqref{eq:ML} (i.e., when $(m-1)K>10$, see remark \ref{rem:exact}), we rely on the SELO penalty function to uncover the promising explanatory variables. While the asymptotic result of Proposition \ref{theoconsislimitdist} is reassuring, it only applies if the parameters $\lambda$ and $\boldsymbol{a}$ are adequately chosen. Similar to what is generally done in the penalized regression literature, we propose to explore many values of $\lambda$ and $\boldsymbol{a}$ and consider each couple as a model that would be ultimately discriminated via criterion \eqref{eq:ML}. For the parameter $\boldsymbol{a}$, we use a value of $a_i = \kappa \times \text{std}(\hat{\beta}_{j1})$ for each $j$ of the $K$ parameters per regime where $\text{std}(\hat{\beta}_{j1})$ stands for the standard deviation of the OLS estimate $\hat{\beta}_{j1}$ when we assume no break in the linear regression (i.e., $\bX = \bX_{\tau_0}$). We test several values for the parameter $\kappa$, namely $\kappa \in \{0.1,1\}$. Regarding the penalty parameter $\lambda$, we test 50 different values uniformly spaced in the interval $(0,\bar{\lambda}]$ in which $\bar{\lambda} = 2\ln T$. The penalty imposes by the upper bound $\bar{\lambda}$ is conservative enough as it is stronger than standard information criteria such as the BIC (that corresponds to a penalty of $\frac{1}{2}\ln T$) and the modified BIC.
%we easily find an upper bound by comparing the maximum likelihood of the linear regression of the full model with the maximum likelihood of the linear regression without any break. If the penalty value is bigger than the difference of the two maximum likelihoods, we will always end up with no break in the model parameters. Consequently, an upper bound for $\lambda$ is given by $\bar{\lambda} = \frac{T}{2}\ln \frac{s_{\tilde{\bX}_{\tau_0}}}{s_{\tilde{\bX}_{\tau_0},\bX_{\btau}}}$. In practice, we test 200 different values of $\lambda$ chosen as $\lambda_i = \iota_i \bar{\lambda}$ with $\iota_i = (\frac{i}{200})^4$ and $i\in [1,200]$.%\footnote{We also test with $\lambda$ uniformly chosen in the interval $\lambda \in (0,\bar{\lambda}]$}

\section{Break date detection \label{sec:break}}
In this Section, we present one approach to obtain a set of potential break dates. Before going into details, it is worth emphasizing that our method for detecting which parameters vary when a break occurs is independent of the segmentation detection procedure used in the first phase. To build the break date set, we could, for instance, adapt the dynamic programming method of \cite{bai2003computation} for the marginal likelihood given by Equation \eqref{eq:ML} and therefore propose our own CP detection method. We could also detect the locations of the segments using one of the standard segmentation approaches such as \cite{BaiPerron1998}, \cite{killick2012optimal} or \cite{korkas2017multiple}. Even better, we could apply several CP detection algorithms and discriminate between the sets of breakpoints by comparing their marginal likelihoods \textit{once} the SELO optimization has been carried out on each set. However, as the emphasis of the paper is not on the break detection, we prefer relying on one break detection procedure, the one documented in \cite{yau2016inference}, because i) it delivers a set of potential break dates with a computational complexity of $\mathcal{O}(T \left(\log(T)\right)^2)$ (which is faster than $\mathcal{O}(T^2)$, i.e., the complexity of the dynamic programming method of \cite{bai2003computation}) and because ii) we slightly improve their CP detection procedure. In particular, their estimated breakpoints depend on one tuning parameter, the radius $h$. Instead of fixing it, we use multiple values of $h$ and we also adapt their approach to end up with a potential breakpoint set. \\
It is worth noting that, as the paper combines model selection and CP detection methods, our approach only requires a set of potential break dates that includes the correct break dates. By penalizing the parameter variation between two consecutive regimes, the spurious break dates are consistently deleted (see remarks \ref{rem1} and \ref{rem2}).

\subsection{Segmentation procedure \label{sec:breakYau}}
 \cite{yau2016inference} propose a likelihood ratio scan method in three steps for estimating multiple break dates in piecewise stationnary processes. They also establish the consistency of the estimated number and location fractions of the CPs. We apply their three steps to detect the break dates but we modify them to reduce the computational burden and to keep at the end of the procedure a potential break date set (that could overestimate the true number of regimes). We now detail the three steps that we use to segment the data.\\
\noindent \textbf{First step.} Fix a window radius $h\in [K+1,T-K]$. For $t=h$ to $T-h$, compute the likelihood ratio scan statistic given by,
\begin{equation}
\label{LRSS}
S_h(t) = \frac{1}{h}L_{t-h+1:t}(\hat{\bbeta},\hat{\sigma}) + \frac{1}{h}L_{t+1:t+h}(\hat{\bbeta},\hat{\sigma}) - \frac{1}{h}L_{t-h+1:t+h}(\hat{\bbeta},\hat{\sigma}),
\end{equation} %The radius $h$ is assumed to depend on $T$ for establishing asymptotic theory \citep[see][for more details]{yau2016inference}. \\
\noindent where $\hat{L}_{t_1:t_2}(\hat{\bbeta},\hat{\sigma})$ denotes the maximum value of the log-likelihood of model  (\ref{eq:linreg}) over the segment $t\in [t_1,t_2]$, assuming that $\epsilon_t \sim \NORM(0,\sigma^2)$. Then, the set $\Gamma(h)$ of potential break dates is given by, %\footnote{If the equation (\ref{eq:linreg}) is an AutoRegressive (AR) (without exogenous inputs), one can choose a  large AR order to compute  the likelihood ratio scan statistic or determine an optimal order using information criterion such as the Akaike informationn criterion (AIC) or Bayesian Information Criterion (BIC).} 
\begin{equation}
\label{potentbreak}
\Gamma(h) = \left\{j  \in \{h+h+1,\dots,T-h\}; S_h(j) = \max_{t \in [j-h,j+h]} S_h(t)\right\},
\end{equation}
\noindent where $S_h(t) = 0$ for $t<h$ and $t>T-h$. As the window radius $h$ is crucial, we differ from \cite{yau2016inference} by using a grid of M values uniformly-spaced in the interval $[\frac{h_{\text{YZ}}}{2},2h_{\text{YZ}}]$ in which $h_{\text{YZ}}$ denotes their advocated value that is $h_{\text{YZ}} =\max\left\{25, \left(\log(T)\right)^2\right\}$ when $T < 800$ and  $h_{\text{YZ}} = \max\left\{50, 2\left(\log(T)\right)^2\right\}$ otherwise. So, at the end of the first step, we end up with M potential break date sets, i.e., $\Gamma(h_1),\ldots,\Gamma(h_M)$. \\
\noindent \textbf{Second step.} For every $z\in [1,M]$ and $i\in [1,m_{h_z}-1]$ where $m_{h_z} = |\Gamma(h_z)|+1$, we re-estimate each break date location $\tau_i^{(z)} \in \Gamma(h_z)$ as follows
\begin{eqnarray*}
\hat{\tau}_i^{(z)} & = & \text{argmax}_{t\in [\tau_i^{(z)}-h_z,\tau_i^{(z)}+h_z]} L_{\tau_i^{(z)}-\left\lfloor1.5h_z\right\rceil:t}(\hat{\bbeta},\hat{\sigma}) + L_{t+1:\tau_i^{(z)}+\left\lfloor 1.5 h_z\right\rceil}(\hat{\bbeta},\hat{\sigma}),
\end{eqnarray*}
in which $\left\lfloor x \right\rceil$ stands for the nearest integer to $x$. Gathering all the new locations in the set $\hat{\Gamma}(h_z) = \{\hat{\tau}_1^{(z)},\ldots,\hat{\tau}_{m_{h_z}}^{(z)}\}$, it is clear from Theorems 1 to 3 in \cite{yau2016inference} that for any $j \in \{1,\dots,m-1\}$, there exist $\hat{\tau}_i^{(z)} \in \hat{\Gamma}(h_z)$ with $i\in [1,m_{h_z}-1]$ such that $\hat{\tau}_i^{(z)} - \tau_j = \mathcal{O}_p(1)$. 

\noindent \textbf{Third step.} We select the best breakpoints among the M potential break date sets by minimizing the Minimum Description Length (MDL) defined by, for $z\in [1,M]$, 
\begin{align}
\begin{split}
\label{MDL}
\text{MDL}(h_z) &= \ln^{\mathcal{+}} \left(m_{h_z}-1\right) + m_{h_z}\ln \left(T\right) \, \\
 & \quad + \sum_{j=1}^{m_{h_z}}\left(\frac{K + 1}{2}\log(\hat{\tau}_j^{(z)} - \hat{\tau}_{j-1}^{(z)}) - L_{\hat{\tau}_{j-1}^{(z)}+1:\hat{\tau}_{j}^{(z)}}(\hat{\bbeta},\hat{\sigma}) \right),
\end{split}
\end{align}
where $\hat{\tau}_{0}=0$, $\hat{\tau}_{m_{h_z}}=T$ and $\{\hat{\tau}_j\}_{j=2,\dots,m_{h_z}-1}=\hat{\Gamma}(h_z)$. In practice, we fix $M=30$.\\ 

 %Interestingly, the log-likelihood function $L_{\hat{\tau}_{j-1}^{(z)}+1:\hat{\tau}_{j}^{(z)}}(\hat{\bbeta})$ and the number of parameters $K$ appearing in the MDL criterion depends on $p$ when we study $\text{AR}(p)$ or $\text{ARX}(p)$ models. So, we can define $\text{MDL}(h,p)$ and optimize the criterion with respect to $h$ and $p$ for a range of lags orders. From a theoretical point of view, this is justified as the MDL criterion stands for a consistent criterion for selecting the lag order \citep[see, e.g., ][]{davis2013consistency,yau2016inference}. Besides, simulation studies in Section \ref{sec:MonteCarlo} validate this approach empirically. %, $\Gamma(h)$ may be improved  by extracting some superfluous break dates. However, as discussed above, improving $\Gamma(h)$ is not too important, since the parameters variations between two consecutive regimes are penalized.\\

\subsection{Break uncertainty}
\label{sec:breakuncertainty}
Given a set of break dates obtained either from the procedure described in Section \ref{sec:breakYau} or from any other existing break detection method such as the one of  \cite{BaiPerron1998}, our method to uncover the partial structural changes can be undertaken. Let us denote by $M_*=(\boldsymbol{a}_*,\lambda_*)$ the SELO parameters maximizing the marginal likelihood criterion \eqref{eq:ML} and their corresponding break dates $J = \{\bar{\tau}_0 =0, \bar{\tau}_1, \ldots, \bar{\tau}_{\hat{m}-1}, \bar{\tau}_{\hat{m}} = T\}$. To provide break uncertainty, we shall infer the posterior distribution of the structural breaks; i.e., $\btau \equiv \tau_1,\ldots,\tau_{\hat{m}-1}|\by,M_*$. To do so, we first assume uninformative priors for the break dates using the set $J$. For $i=1,...,\hat{m}-1$, the break parameter $\tau_i$ is driven by a Uniform distribution as follows 
\begin{eqnarray*}
\tau_i &\sim & \mathcal{U}[\left\lfloor \frac{\bar{\tau}_{i-1}+\bar{\tau}_{i}}{2}\right\rfloor + \gamma, \left\lfloor \frac{\bar{\tau}_{i-1}+\bar{\tau}_{i}}{2}\right\rfloor - \gamma], 
\end{eqnarray*}
in which $\left\lfloor x \right\rfloor$ stands for the nearest integer less than or equal to $x$ and $\gamma=(K+1)$ is a minimum duration parameter ensuring that the marginal likelihood criterion \eqref{eq:ML} can be computed for any break parameters complying with the prior distributions given by Equation \eqref{eq:prior}. The posterior density is proportional to 
\begin{align}
\begin{split}
f(\btau|\by,M_*) & \propto  f(\by|M_*,\btau) f(\by|M_*,\btau) \left(\prod_{i=1}^{\hat{m}-1} \1{\btau_i \in [\left\lfloor \frac{\bar{\tau}_{i-1}+\bar{\tau}_{i}}{2}\right\rfloor - \gamma, \left\lfloor \frac{\bar{\tau}_{i-1}+\bar{\tau}_{i}}{2}\right\rfloor + \gamma]}\right).
\end{split}
\end{align}
As shown in SA \ref{app:linreg}, the marginal likelihood $f(\by|M_*,\btau)=f(\by|\boldsymbol{a}_*,\lambda_*,\btau)$ exhibits a closed form expression. Several solutions exist to sample the break parameters \citep[see, e.g.,][]{Stephens1994,Liao08}. In this paper, we use the D-DREAM algorithm developed in \cite{DREAM11}. It builds a symmetric proposal distribution inspired by the Differential Evolution optimization literature and draws from this proposal distribution are accepted or rejected through a Metropolis step in a Markov-chain Monte Carlo (MCMC) algorithm. As shown in \cite{DREAM11}, the D-DREAM algorithm complexity is $\mathcal{O}(T)$ and leads to a rapidly mixing MCMC algorithm since the break parameters are jointly sampled from the proposal distribution. To infer the break parameters, we apply the following steps:
\begin{itemize}
	\item Sample $R=2m$ initial structural break vectors $\{\btau_i\}_{i=1}^{R}$ from the prior distribution.
	\item At each MCMC iteration, for each $j=1,...,R$, apply the D-DREAM Metropolis move:
	
	\begin{enumerate}
		\item Propose a new draw of the break parameter as follows
		\begin{equation} \label{propDREAM}
\hat{\btau}_j = \btau_j + \left\lfloor\gamma(\delta, m) (\sum_{g=1}^{\delta} \btau_{r_1(g)}- \sum_{h=1}^{\delta} \btau_{r_2(h)}) + \xi \right\rceil,
\end{equation}
with  $\xi \sim \NORM(0,(0.0001) I)$ and $\forall g, h = 1, 2, ..., \delta$, $j \neq r_1(g)$, $r_2(h)$; $r_1(.)$ and $r_2(.)$ stand for random integers uniformly distributed on the support $[1,R]$. We set $\gamma(\delta,m) = \frac{2.38}{\sqrt{2\delta m}}$ and $\delta \sim \mathcal{U}[1,3]$.\footnote{When the posterior distribution is a multivariate normal one, \cite{terBraak2006} proves that choosing $\gamma = \frac{2.38}{\sqrt{\delta m}}$ leads to the optimal acceptance rate of the Metropolis ratio. As shown in \cite{terBraak2006}, the proposal distribution works when the number of chains, i.e. $\delta$, is equal to one. However, the mixing of the MCMC algorithm can be improved by increasing $\delta$ as illustrated with simulation exercises in \cite{Vrugt2009}.}
 \item Accept the proposal $\hat{\btau}_j$ according to the probability $\alpha(\btau_j , \hat{\btau}_j) = \min \{\frac{f(\by|M_*,\hat{\btau}_j)}{f(\by|M_*,\btau_j)}, 1\}.$
	\end{enumerate}
\end{itemize}
In practice, we set the number of MCMC iterations to $4000$ and start collecting the draws after $\text{round}[\frac{M}{2}]$ MCMC iterations. In addition, we assess the convergence of the MCMC algorithm using the multivariate Potential Scale Reduction Factor test proposed in \cite{brooks1998general}. For the two in-sample applications below in which credible intervals of the breakpoints are computed, the convergence statistics amount to 1.014 and 1.052, respectively. These values meet the threshold of 1.1 commonly used to validate the convergence of MCMCs.

\section{Monte Carlo study} \label{sec:MonteCarlo}
In this Section, we document a Monte Carlo study to assess the accuracy of the SELO approach. We first rely on nine different data generating processes (DGPs) that are documented in Table \ref{DGP::MC}. For each DGP, we simulate 1000 series with a sample size equal to $T=1024$ and we investigate i) the performance of detecting the break dates using the approach in Section \ref{sec:breakYau} and ii) the performance of the SELO method for detecting which parameter truly varies when a break occurs. The nine DGPs differ in their mean parameter specifications. For each of them, we study the SELO performance when the innovation is either homoskedastic or driven by a GARCH process. \\
Regarding the DGPs, the first six DGPs are piecewise stationary AR models directly taken from \cite{yau2016inference} while the others cover situations with exogenous explanatory variables. DGP A and E do not exhibit any breakpoint. They aim at showing the performance of the SELO approach when only spurious break dates are detected. DGPs B and C are weakly persistent piecewise stationary AR models exhibiting three regimes. Simulated series from DGP D experience a break after 50 observations. This DGP should highlight the performance in a short regime context. DGPs E and F are highly persistent piecewise stationary AR models but DGP F differs by exhibiting breaks in the mean parameters. Eventually, DGPs G, H and I include exogenous variables. While DGPs G only exhibits exogeneous regressors, DGPs H and I stand for ARX processes by mixing the parameters of the DGPs B and G.

\begin{table}[h!]
\centering
\singlespacing
\caption{\justifying \textbf{Data Generating Processes of sample size amounting to $T=1024$.}\\
This Table summarizes the DGPs from which 1000 series are simulated for the Monte Carlo study. The variables $\text{V}$ and $\text{W}$ stand for exogenous variables such that, $V_t \sim \NORM(0,3^2)$ and $W_t \sim \NORM(0,4^2)$. For instance, DGP B is an AR(2) model that exhibits two breakpoints at $t=512$ and $t=768$. The true values of the first AR term for the first two regimes are equal to 0.9 and 1.69, respectively. The dynamic of the variance is either homoskedastic ('Constant') or heteroskedastic ('GARCH'). \label{DGP::MC}}
\scalebox{0.95}{
	\begin{tabular}{lccc}
		\toprule
		 & \textbf{DGP A} & \textbf{DGP B} & \textbf{DGP C}  \\
		Breaks& -     & [512, 768] & [400, 612]   \\
		\midrule
		Intercept & [0] & [0, 0, 0] & [0, 0, 0]  \\
		$\text{AR}_1$ & [- 0.7] & [0.9, 1.69, 1.32] & [0.4, - 0.6, 0.5]   \\
		$\text{AR}_2$ & -     & [0, - 0.81, - 0.81] & -      \\[0.2cm]
		\hline
		\toprule
		 & \textbf{DGP D} & \textbf{DGP E} & \textbf{DGP F}   \\
		Breaks & [50] & - & [400, 750] \\
		\midrule
		Intercept & [0, 0] & [0] & [0, 0, 0]  \\
		$\text{AR}_1$ & [0.75, - 0.5] & [0.999] & [1.399, 0.999, 0.699]  \\
		$\text{AR}_2$ & -     & -     & [- 0.4, 0, 0.3]  \\[0.2cm]
		\hline
		\toprule
		 & \textbf{DGP G} & \textbf{DGP H} & \textbf{DGP I} \\
		Breaks & [400, 750] & [400,750] & [512, 768] \\
		\midrule
		Intercept & [1, 0, 0] & [0, 0, 0] & [0,0,0] \\
		$\text{AR}_1$ & -     & [0.9, 1.69, 1.32]  & [0.9, 1.69, 1.32] \\
		$\text{AR}_2$ & -     & [0, -0.81, -0.81] & [0, -0.81, -0.81] \\
		$\text{V}$ & [1.5, 0.9, 2.2] & [1.5, 0.9, 2.2] & [1.5, 0.9, 2.2] \\
		$\text{W}$ & [- 0.6, - 0.6, - 1] & [- 0.6, - 0.6, - 1] & [- 0.6, - 0.6, - 1] \\
			\hline
		\toprule
		& \multicolumn{3}{c}{Dynamic of the variance of $\epsilon_t \sim \mathcal{N}(0,\sigma_t^2)$}\\
		\midrule
		\textbf{Constant} & \multicolumn{3}{c}{$\sigma_t^2 = 1$, $\forall t\in [1,T]$} \\
		\textbf{GARCH}    & \multicolumn{3}{c}{$\sigma_t^2 = 0.05 + 0.05 \epsilon_{t-1}^2 + 0.9 \sigma_{t-1}^2$, $\forall t\in [1,T]$ and $\sigma_{0}^2 = \frac{0.05}{1-0.95} = 1$} \\
		\bottomrule
	\end{tabular}}%%
\end{table}%

 % begin box to localize effect of arraystretch change
\renewcommand{\arraystretch}{1.2}

\begin{table}[h!]
\centering
\singlespacing
\caption{\justifying \textbf{Break estimates : SELO approach.}\\
Based on 1000 replications, this Table presents several metrics for assessing the performance of the SELO method on DGPs detailed in Table \ref{DGP::MC}. \textbf{Number of regimes} is the rate of detecting a specific number of regimes per model parameter. Bold values correspond to the true number of regimes. \textbf{Break} documents the rate of having at least one breakpoint in the potential CP set located in the neighborhood of 50 observations of the true breakpoints. We use '---' when the DGP exhibits no breakpoint.  \textbf{Exact} denotes the rate of detecting the true number of breakpoints for all the model parameters with a posterior probability of at least 10\%.\label{MC::bdatesselo}}
\scalebox{0.65}{
	\begin{tabular}{ll cccccccccccccccc}
		\toprule
		&  & \multicolumn{8}{c}{\textbf{Constant Variance}} & \multicolumn{8}{c}{\textbf{GARCH Variance}} \\
		\cmidrule(lr){3-10}\cmidrule(lr){11-18}
	 	&  &  \multicolumn{6}{c}{\textbf{Number of regimes}} & Break & Exact &  \multicolumn{6}{c}{\textbf{Number of regimes}} &  Break & Exact\\
		\cmidrule(lr){3-8}\cmidrule(lr){11-16}	
		\textbf{DGP} &  &   1 & 2 & 3 & 4 & 5 & 6 & &  &   1 & 2 & 3 & 4 & 5 & 6 & & \\ 
		\midrule
		\hline
A & 	Intercept & 	\textbf{99.4} & 	0.6 & 	0& 	0& 	0& 	0& 	\multirow{2}{*}{ --- }  & 	\multirow{2}{*}{99.9}  & 	\textbf{99.2} & 	0.8 & 	0& 	0& 	0& 	0& 	\multirow{2}{*}{ --- }  & 	\multirow{2}{*}{99.2}  \\ 
 & 	AR1 & 	\textbf{99.5} & 	0.5 & 	0& 	0& 	0& 	0& 	 & 	 & 	\textbf{99.4} & 	0.6 & 	0& 	0& 	0& 	0& 	 & 	 \\[0.3cm] 
B & 	Intercept & 	\textbf{98.6} & 	1.4 & 	0& 	0& 	0& 	0& 	\multirow{3}{*}{100 }   & 	\multirow{3}{*}{99.7}  & 	\textbf{97.3} & 	2.7 & 	0& 	0& 	0& 	0& 	\multirow{3}{*}{99.3}  & 	\multirow{3}{*}{99.5}  \\ 
 & 	AR1 & 	0& 	0& 	\textbf{100 }  & 	0& 	0& 	0& 	 & 	 & 	0& 	0.2 & 	\textbf{99.4} & 	0.4 & 	0& 	0& 	 & 	 \\ 
 & 	AR2 & 	0& 	\textbf{98.8} & 	1.2 & 	0& 	0& 	0& 	 & 	 & 	0& 	\textbf{98.3} & 	1.7 & 	0& 	0& 	0& 	 & 	 \\[0.3cm] 
C & 	Intercept & 	\textbf{97.9} & 	2& 	0.1 & 	0& 	0& 	0& 	\multirow{2}{*}{99.8}  & 	\multirow{2}{*}{99.7}  & 	\textbf{97.6} & 	2.4 & 	0& 	0& 	0& 	0& 	\multirow{2}{*}{99.8}  & 	\multirow{2}{*}{99.1}  \\ 
 & 	AR1 & 	0& 	0& 	\textbf{100 }  & 	0& 	0& 	0& 	 & 	 & 	0& 	0& 	\textbf{99.7} & 	0.3 & 	0& 	0& 	 & 	 \\[0.3cm] 
D & 	Intercept & 	\textbf{97.4} & 	2.6 & 	0& 	0& 	0& 	0& 	\multirow{2}{*}{99.8}  & 	\multirow{2}{*}{99.5}  & 	\textbf{97.6} & 	2.2 & 	0.2 & 	0& 	0& 	0& 	\multirow{2}{*}{99.7}  & 	\multirow{2}{*}{99.1}  \\ 
 & 	AR1 & 	0.1 & 	\textbf{99.4} & 	0.5 & 	0& 	0& 	0& 	 & 	 & 	0.2 & 	\textbf{99.3} & 	0.4 & 	0.1 & 	0& 	0& 	 & 	 \\[0.3cm] 
E & 	Intercept & 	\textbf{86.4} & 	12.4 & 	1.2 & 	0& 	0& 	0& 	\multirow{2}{*}{ --- }  & 	\multirow{2}{*}{94.6}  & 	\textbf{84.8} & 	12.5 & 	2.5 & 	0.2 & 	0& 	0& 	\multirow{2}{*}{ --- }  & 	\multirow{2}{*}{91.5}  \\ 
 & 	AR1 & 	\textbf{93.6} & 	6.1 & 	0.3 & 	0& 	0& 	0& 	 & 	 & 	\textbf{91 }  & 	8.2 & 	0.5 & 	0.3 & 	0& 	0& 	 & 	 \\[0.3cm] 
F & 	Intercept & 	\textbf{69.7} & 	23.9 & 	6.2 & 	0.1 & 	0.1 & 	0& 	\multirow{3}{*}{25.5}  & 	\multirow{3}{*}{23.2}  & 	\textbf{65.3} & 	27.5 & 	6.7 & 	0.5 & 	0& 	0& 	\multirow{3}{*}{22.4}  & 	\multirow{3}{*}{22.1}  \\ 
 & 	AR1 & 	0& 	68.7 & 	\textbf{31 }  & 	0.1 & 	0.2 & 	0& 	 & 	 & 	0& 	69.5 & 	\textbf{29.6} & 	0.8 & 	0.1 & 	0& 	 & 	 \\ 
 & 	AR2 & 	0& 	71.5 & 	\textbf{28.3} & 	0.2 & 	0& 	0& 	 & 	 & 	0& 	73.2 & 	\textbf{26.4} & 	0.4 & 	0& 	0& 	 & 	 \\[0.3cm] 
G & 	Intercept & 	0& 	\textbf{99.3} & 	0.7 & 	0& 	0& 	0& 	\multirow{3}{*}{100 }   & 	\multirow{3}{*}{99.8}  & 	0& 	\textbf{99.2} & 	0.8 & 	0& 	0& 	0& 	\multirow{3}{*}{100 }   & 	\multirow{3}{*}{99.8}  \\ 
 & 	V & 	0& 	0& 	\textbf{99.8} & 	0.2 & 	0& 	0& 	 & 	 & 	0& 	0& 	\textbf{99.7} & 	0.3 & 	0& 	0& 	 & 	 \\ 
 & 	W & 	0& 	\textbf{99.2} & 	0.8 & 	0& 	0& 	0& 	 & 	 & 	0& 	\textbf{99 }  & 	0.9 & 	0.1 & 	0& 	0& 	 & 	 \\[0.3cm] 
H & 	Intercept & 	\textbf{88.9} & 	11& 	0.1 & 	0& 	0& 	0& 	\multirow{5}{*}{100 }   & 	\multirow{5}{*}{83.1}  & 	\textbf{92.9} & 	6.9 & 	0.2 & 	0& 	0& 	0& 	\multirow{5}{*}{100 }   & 	\multirow{5}{*}{86.8}  \\ 
 & 	AR1 & 	0& 	0& 	\textbf{92.7} & 	7.3 & 	0& 	0& 	 & 	 & 	0& 	0& 	\textbf{94.7} & 	5.3 & 	0& 	0& 	 & 	 \\ 
 & 	AR2 & 	0& 	\textbf{92.6} & 	7.4 & 	0& 	0& 	0& 	 & 	 & 	0& 	\textbf{94.1} & 	5.9 & 	0& 	0& 	0& 	 & 	 \\ 
 & 	V & 	0& 	0& 	\textbf{87.7} & 	12.3 & 	0& 	0& 	 & 	 & 	0& 	0& 	\textbf{89.6} & 	10.4 & 	0& 	0& 	 & 	 \\ 
 & 	W & 	0& 	\textbf{88 }  & 	12& 	0& 	0& 	0& 	 & 	 & 	0& 	\textbf{90.4} & 	9.6 & 	0& 	0& 	0& 	 & 	 \\[0.3cm] 
I & 	Intercept & 	\textbf{91.6} & 	8.4 & 	0& 	0& 	0& 	0& 	\multirow{5}{*}{100 }   & 	\multirow{5}{*}{85.7}  & 	\textbf{91 }  & 	8.9 & 	0.1 & 	0& 	0& 	0& 	\multirow{5}{*}{100 }   & 	\multirow{5}{*}{85.1}  \\ 
 & 	AR1 & 	0& 	0& 	\textbf{94.3} & 	5.7 & 	0& 	0& 	 & 	 & 	0& 	0& 	\textbf{95 }  & 	4.9 & 	0.1 & 	0& 	 & 	 \\ 
 & 	AR2 & 	0& 	\textbf{94.6} & 	5.4 & 	0& 	0& 	0& 	 & 	 & 	0& 	\textbf{94.9} & 	5& 	0.1 & 	0& 	0& 	 & 	 \\ 
 & 	V & 	0& 	0& 	\textbf{89.8} & 	10.2 & 	0& 	0& 	 & 	 & 	0& 	0& 	\textbf{89.4} & 	10.6 & 	0& 	0& 	 & 	 \\ 
 & 	W & 	0& 	\textbf{88.7} & 	11.1 & 	0.2 & 	0& 	0& 	 & 	 & 	0& 	\textbf{90 }  & 	10& 	0& 	0& 	0& 	 & 	 \\[0.3cm] 
		\hline
		 \hline
		\bottomrule 
	\end{tabular}}
\end{table}

\noindent Table \ref{MC::bdatesselo} documents the percentage of detecting a number of regimes per model parameter over the 1000 simulated series per DGP for the SELO method. Overall, the detection rates of identifying the true number of regimes per parameter are excellent and besides DGP F, they are at least equal to 86.4\%. Interestingly, this detection rate does not deteriorate when the innovation is driven by a GARCH process. The worst detection rates arise for the DGP F. Even though this DGP is highly persistent with an autocorrelation structure that barely varies over time, the SELO method correctly identifies that the intercept does not experience abrupt switches 69.7\% of the times. Note that the potential breakpoint sets for this DGP poorly identify the true breakpoints since only 25.5\% of the sets exhibit at least one potential CP close to every true breakpoints. Therefore, the SELO detection rate could hardly exceed this bound. As exemplified by DGPs G, H and I, the detection rates of the SELO method remain excellent when exogenous variables kick in even in the presence of heteroscedasticity. The Table also documents the rate of detecting the true model (i.e. jointly the correct number of regimes) with a posterior probability of at least 10\%.\footnote{For this simulation study, the number of explanatory variables ranges from 2 to 5 and the maximum number of potential regimes observed for each DGP is as follows: DGP A (5), DGP B (5), DGP C (6), DGP D (6), DGP E (6), DGP F (6), DGP G (5), DGP H (9), DGP I (9). It can thus lead to a number of models amounting to $2^{40}$. In such a case, the set of the models exhibiting a probability equal or greater than 10\% has a prior probability of containing the true model that is approximately equal to $\frac{1000}{2^{40}}\%$.} For all the DGPs but DGP F, the correct detection amounts to at least 83.1\% and 85.1\% for the constant and the GARCH innovation dynamics, respectively. These excellent results highlight that model uncertainty should be taken into account since several models often exhibit high posterior probabilities.

\noindent DGPs from Table \ref{DGP::MC} are frequently used in the CP literature to assess the performance of a new segmentation method (see, e.g., \cite{cho2015multiple}, \cite{yau2016inference} and \cite{korkas2017multiple}). Nevertheless, our empirical exercise implies more explanatory variables and a smaller sample size. To assess the SELO performance in such environment, we also consider fourteen variants of an 'empirical DGP' given by
\begin{eqnarray}
y_t & = & \begin{cases}\bx_t'\bbeta_{1} + \sigma_t \epsilon_t, & \quad\text{if } 1\leq t \leq 132,\\ 	
  \bx_t'\bbeta_{2}  + \sigma_t \epsilon_t, & \quad\text{if }133 \leq T,\end{cases} \label{DGP:Emp}
\end{eqnarray}
where $T=256$ as in the application, $\epsilon_t \sim \NORM(0,1)$ and $\bx_t=(1,x_{t,1},\ldots,x_{t,12})'$. The explanatory variables are close to the risk factors used in our empirical exercise. In particular, they are generated from AR models whose coefficients and AR orders are estimated using the risk factors of the application. The parameter values of $\bbeta_1$ are equal to the OLS estimates of the Hedge fund Index (HFI) regression without breakpoints (see Table \ref{tab:OLS_SELO1} in the empirical application). We consider 14 variants of the DGP given by Equation \eqref{DGP:Emp} that differ by the number of parameters experiencing a breakpoint. Defining $\bbeta_{2}=(\beta_{2,1},\ldots,\beta_{2,13})'$ and considering the $ith$ DGP, with $i=1,\ldots,14$, we have $\beta_{2,j} = \beta_{1,j} + 3\omega_j \text{sign}(\beta_{1,j})$ for $j< i$ and $\beta_{2,j} = \beta_{1,j}$ for $j\geq i$. The size of the break given by $\omega_j$ is equal to the standard deviation of the $jth$ OLS estimate of the Hedge fund Index (HFI) regression without breakpoints (see Table \ref{tab:OLS_SELO1}). To summarize, the first DGP does not exhibit a breakpoint while the 14th one exhibits a structural change in all its parameters. As before, we consider homoskedastic errors with $\sigma_t^2\equiv \bar{\omega}^2 = 1.7$ $\forall t$ and heteroskedastic ones with $\sigma_t^2= 0.05\bar{\omega}^2 + 0.05 \epsilon_{t-1}^2 + 0.9\sigma_{t-1}^2$ for $t>1$ in which $\bar{\omega}^2$ stands for the OLS variance estimate of the Hedge fund Index (HFI) regression without breakpoints (see Table 7 of the paper). Figure \ref{fig:DGPs_emp1} displays one simulated series drawn from some of the fourteen variants with heteroskedastic errors.

\begin{figure}[h!]
	%\captionsetup[subfigure]
	\begin{center}
		\subfloat[\# CPs = 0]{\includegraphics[width=7.5cm,height=5cm]{./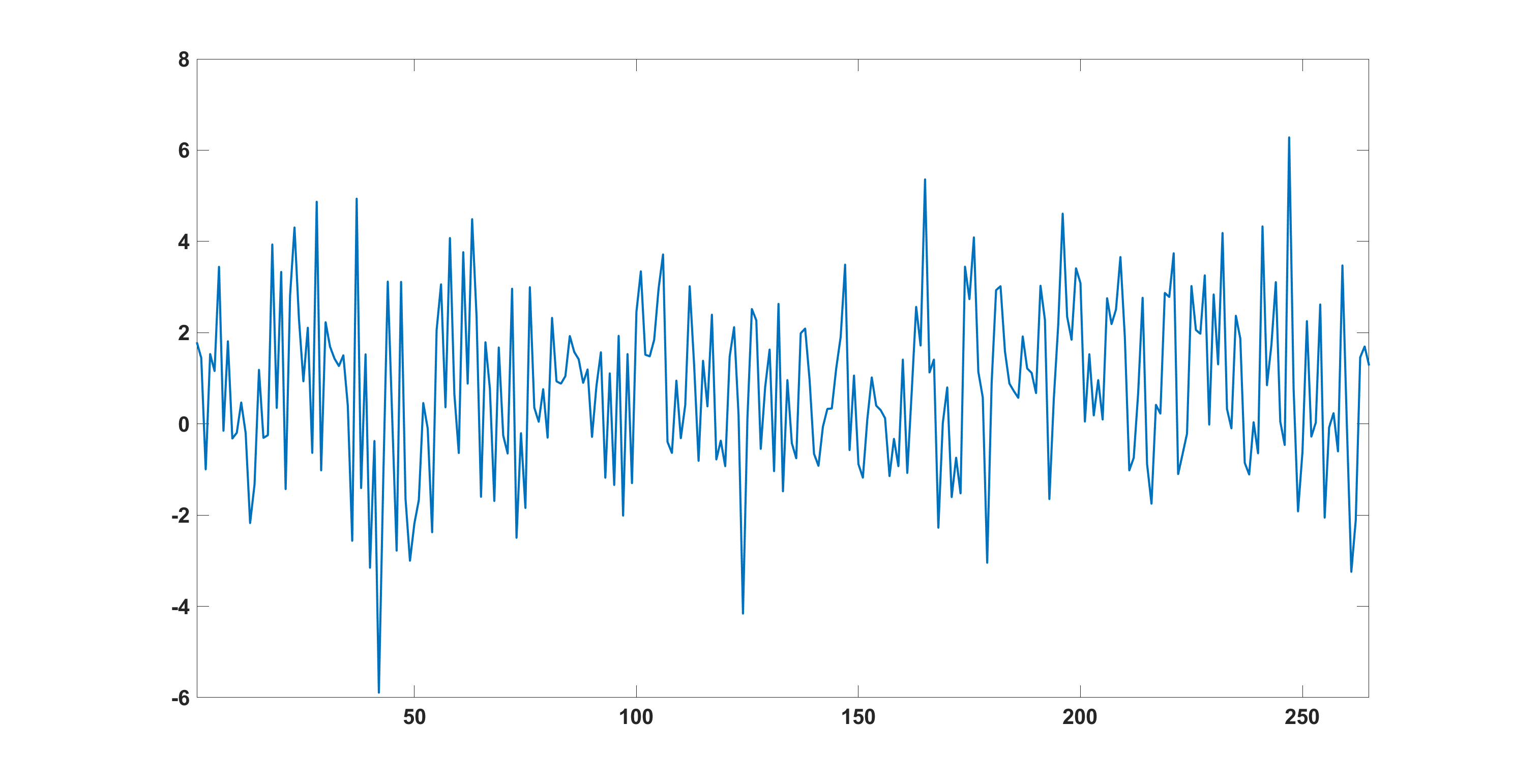}}
		\subfloat[\# CPs = 5]{\includegraphics[width=7.5cm,height=5cm]{./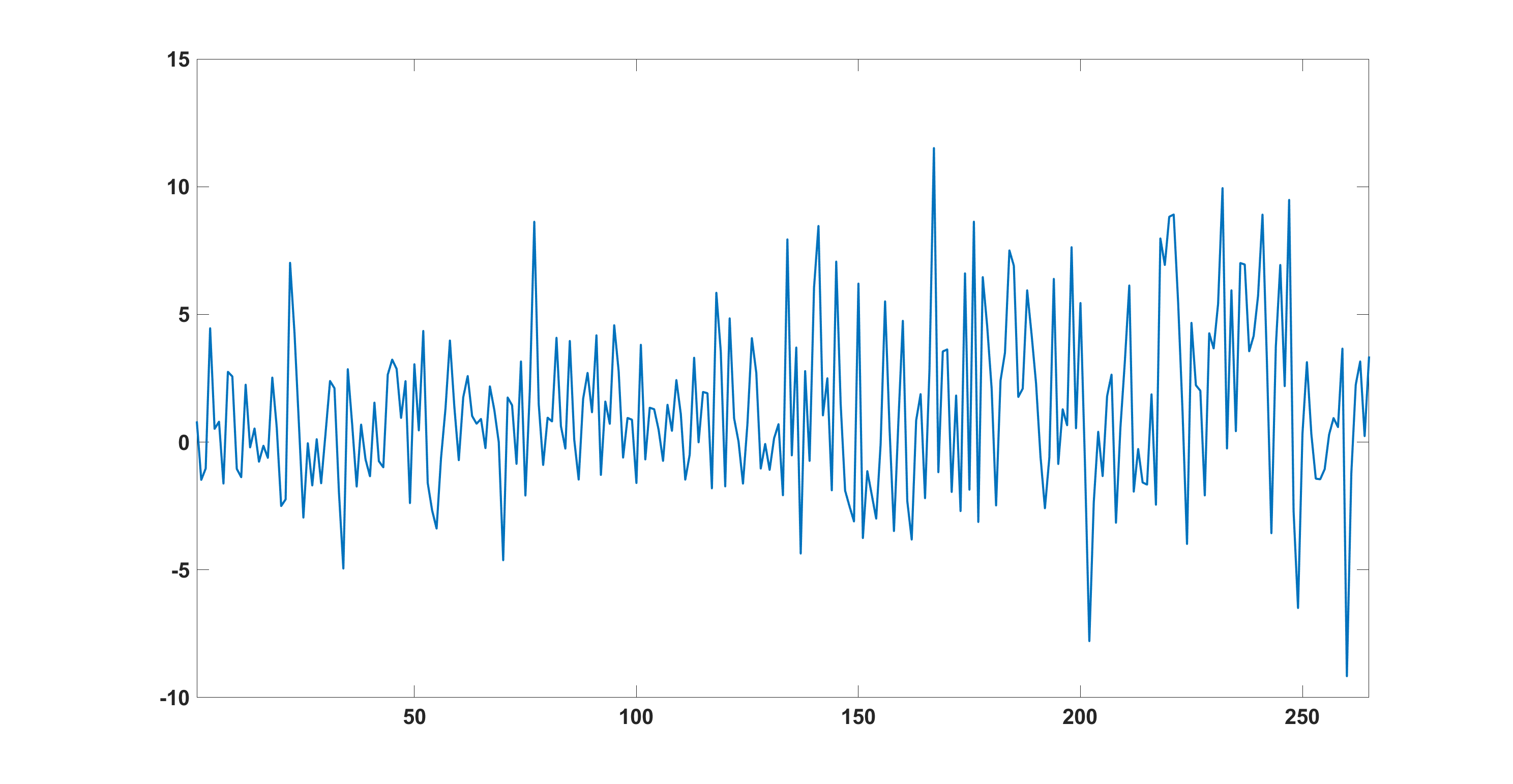}}\\
		\subfloat[\# CPs = 10]{\includegraphics[width=7.5cm,height=5cm]{./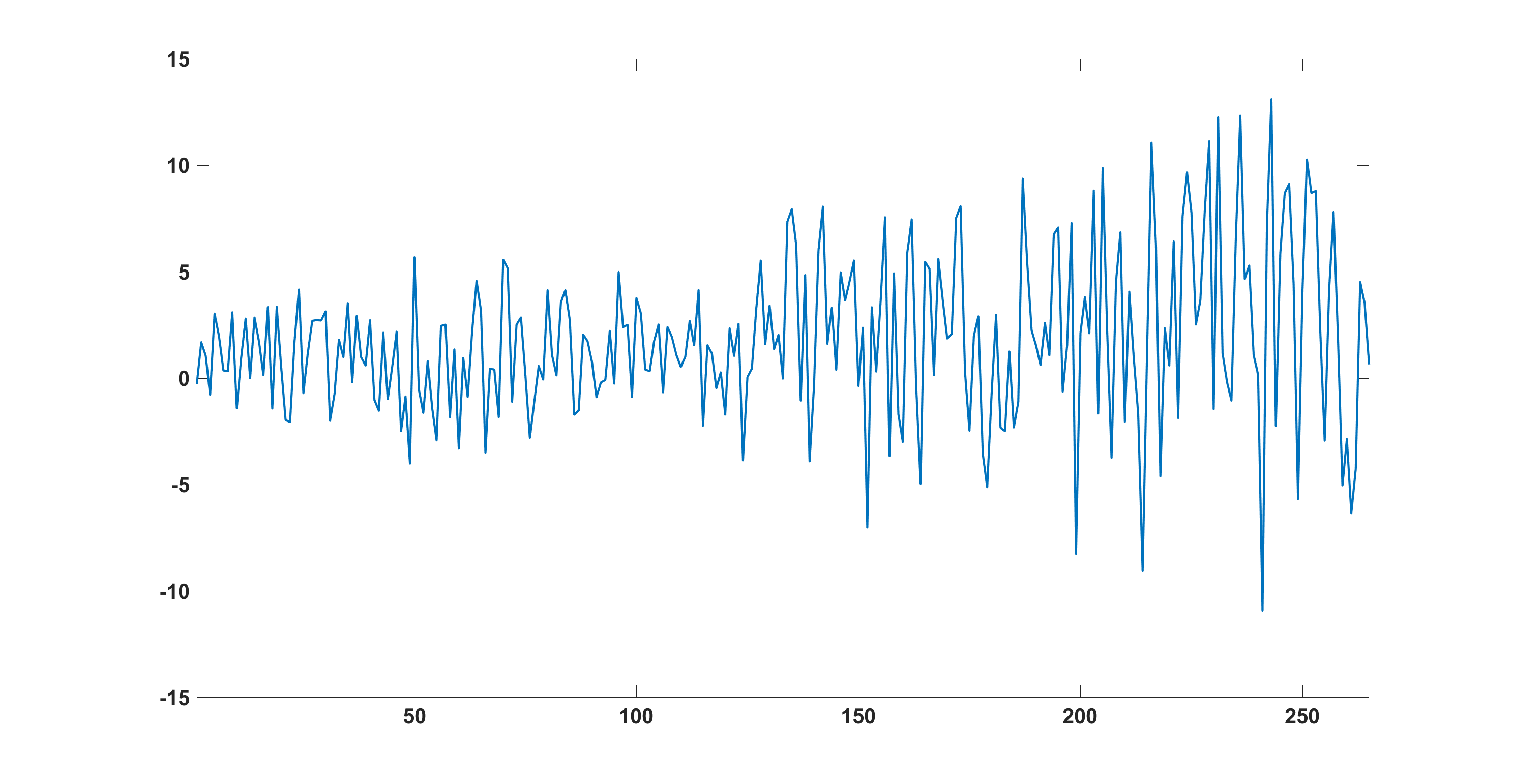}}
		\subfloat[\# CPs = 13]{\includegraphics[width=7.5cm,height=5cm]{./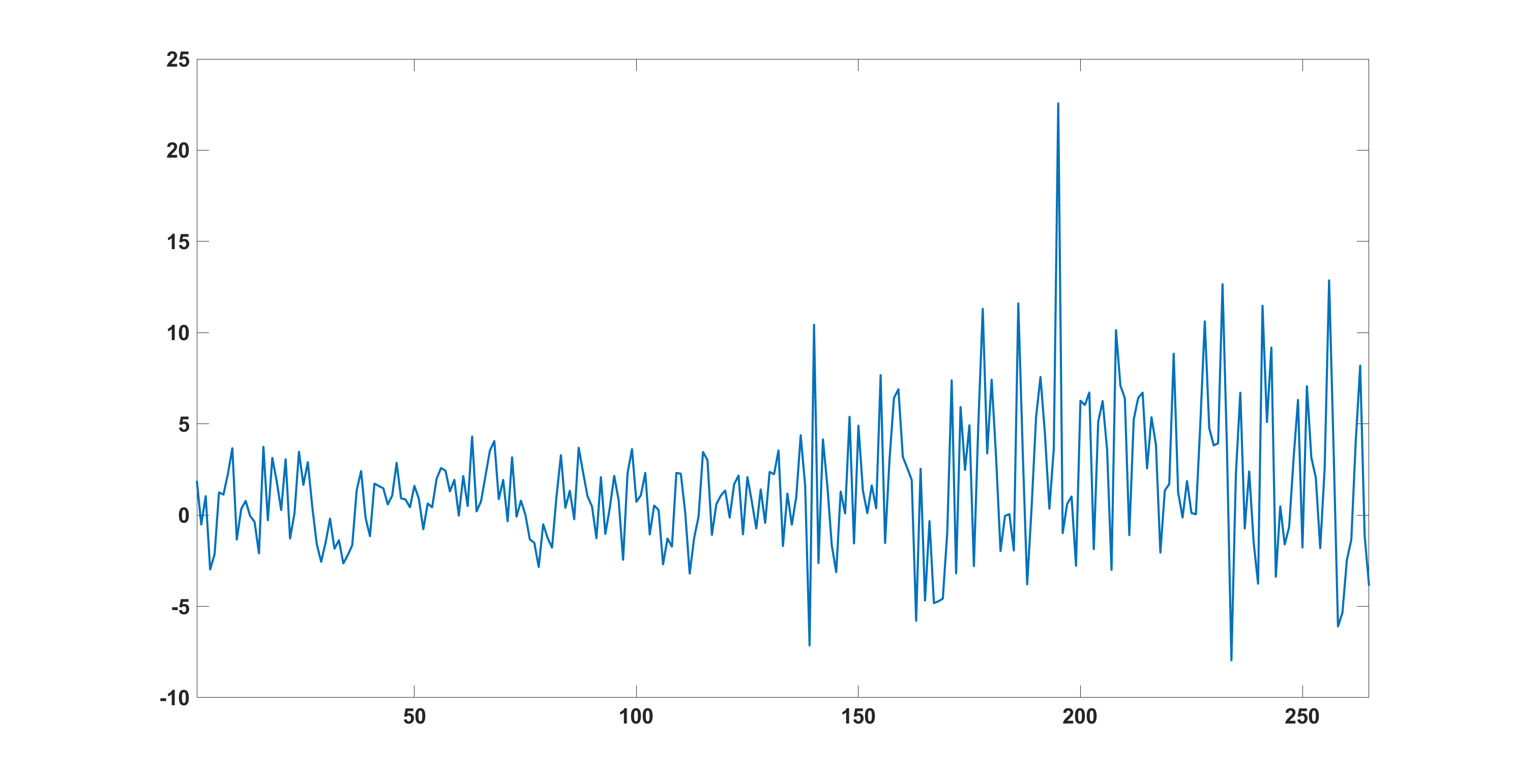}}\\
		\caption{\justifying One simulated series with a GARCH dynamic from the DGP based on the empirical data. '\# CPs' stands for the number of parameters that are experiencing a breakpoint at time $t=133$. \label{fig:DGPs_emp1}}
	\end{center}
\end{figure}

To carry out the Monte Carlo study, we have drawn 100 series from the 14 variants of the empirical DGP. For each simulated series, we have also generated the explanatory variables using the AR models. Table \ref{DGPEmpirical::MC} provides the percentage of the number of regimes detected by the selective segmentation and by the Lasso methods. The Lasso approach consists in using a Lasso penalty function instead of the SELO penalty function and in choosing the best Lasso penalty value and the corresponding model using the marginal likelihood proposed in Section \ref{sec:modelSelection}.\footnote{We replace the SELO with the Lasso penalty function. In particular, conditional on the potential breakpoints $\btau$, we minimize the following objective function:
\begin{align} \label{eq:optim1}
\begin{split}
\hat{\bbeta} & = \argmin_{\bbeta} ||\by- \bX_{\btau} \bbeta||_2^2 + \lambda \sum_{j=2}^{m} \sum_{k=1}^{K}|\Delta \beta_{jk}|.
\end{split}
\end{align}
Using the Matlab Lasso toolbox of \cite{mcilhagga2016penalized}. We also tested the matlab glmnet toolbox of \cite{LassoMatlab} which leads to similar results. In particular, we also observe that the Lasso estimates over-estimate the number of regimes.} First, we observe that the selective segmentation approach delivers high detection rates whatever the dynamic of the variance. In addition, the detection rate does not  deteriorate when the parameter experiences a break. Note also that all the average detections are above 89\%. In contrast, the Lasso method does not provide good results when only a subset of the parameters exhibits a CP. In fact, the average detection rates follow a 'U-shape' function meaning that the Lasso method is good at detecting no breakpoint or when all the parameters are experiencing a break. When partial breaks occur, the Lasso approach typically over-estimates the number of regimes.

\begin{table}[h!]
\centering
\singlespacing
\caption{\justifying \textbf{Break detection rates - Selective segmentation and Lasso approaches.}\\ 
Based on 100 replications, this Table assesses the break detection performance of the selective segmentation and the Lasso methods on the 14 variants of the empirical DGP detailed in Equation \eqref{DGP:Emp}. \textbf{Correct detection rate} is the rate of detecting the true number of regimes per model parameter. Underlined values correspond to the detection rates when the parameter experiences a breakpoint at $t=133$. \textbf{Avg.} documents the average rate of detecting the true number of regimes for each variant.\label{DGPEmpirical::MC}}
\scalebox{0.55}{
	\begin{tabular}{l cccc cccc cccc c|c}
		\toprule
\multicolumn{1}{c}{\underline{\textbf{DGP}}} & \multicolumn{13}{c}{\textbf{Correct detection rate}}  & \textbf{Avg.}\\	
\midrule
  & 	Inter & 	PMKT & 	SMB & 	TERM & 	DEF & 	SBD & 	SFX & 	SCOM & 	UMD & 	SIR & 	STK & 	CPI & 	NAREIT & 	Avg. detection \\
& \multicolumn{14}{c}{ \textbf{Selective segmentation - Constant variance}} \\														
\# CPs = 0 & 	100 & 	99 & 	99 & 	100 & 	98 & 	100 & 	100 & 	100 & 	100 & 	100 & 	99 & 	99 & 	100 & 	99.5 \\ 
\# CPs = 1 & 	\underline{47} & 	100 & 	98 & 	99 & 	100 & 	99 & 	98 & 	98 & 	98 & 	98 & 	97 & 	99 & 	95 & 	94.3 \\ 
\# CPs = 2 & 	\underline{98} & 	\underline{100} & 	98 & 	100 & 	96 & 	100 & 	100 & 	96 & 	99 & 	98 & 	97 & 	97 & 	96 & 	98.1 \\ 
\# CPs = 3 & 	\underline{98} & 	\underline{100} & 	\underline{88} & 	100 & 	96 & 	98 & 	100 & 	97 & 	98 & 	99 & 	94 & 	98 & 	98 & 	97.2 \\ 
\# CPs = 4 & 	\underline{99} & 	\underline{100} & 	\underline{97} & 	\underline{98} & 	98 & 	98 & 	97 & 	99 & 	100 & 	98 & 	96 & 	97 & 	98 & 	98.1 \\ 
\# CPs = 5 & 	\underline{98} & 	\underline{100} & 	\underline{93} & 	\underline{94} & 	\underline{98} & 	97 & 	100 & 	98 & 	98 & 	96 & 	98 & 	94 & 	95 & 	96.8 \\ 
\# CPs = 6 & 	\underline{99} & 	\underline{100} & 	\underline{95} & 	\underline{90} & 	\underline{100} & 	\underline{91} & 	97 & 	100 & 	98 & 	97 & 	93 & 	99 & 	95 & 	96.5 \\ 
\# CPs = 7 & 	\underline{97} & 	\underline{100} & 	\underline{89} & 	\underline{94} & 	\underline{99} & 	\underline{91} & 	\underline{97} & 	97 & 	100 & 	97 & 	97 & 	97 & 	99 & 	96.5 \\ 
\# CPs = 8 & 	\underline{95} & 	\underline{100} & 	\underline{91} & 	\underline{96} & 	\underline{100} & 	\underline{88} & 	\underline{97} & 	\underline{66} & 	98 & 	98 & 	98 & 	95 & 	96 & 	93.7 \\ 
\# CPs = 9 & 	\underline{99} & 	\underline{98} & 	\underline{95} & 	\underline{93} & 	\underline{99} & 	\underline{95} & 	\underline{97} & 	\underline{68} & 	\underline{100} & 	98 & 	99 & 	96 & 	100 & 	95.2 \\ 
\# CPs = 10 & 	\underline{97} & 	\underline{99} & 	\underline{94} & 	\underline{96} & 	\underline{100} & 	\underline{91} & 	\underline{97} & 	\underline{71} & 	\underline{100} & 	\underline{76} & 	96 & 	98 & 	93 & 	92.9 \\ 
\# CPs = 11 & 	\underline{96} & 	\underline{100} & 	\underline{90} & 	\underline{97} & 	\underline{100} & 	\underline{98} & 	\underline{98} & 	\underline{81} & 	\underline{100} & 	\underline{79} & 	\underline{95} & 	99 & 	98 & 	94.7 \\ 
\# CPs = 12 & 	\underline{93} & 	\underline{99} & 	\underline{90} & 	\underline{98} & 	\underline{98} & 	\underline{92} & 	\underline{96} & 	\underline{68} & 	\underline{100} & 	\underline{78} & 	\underline{95} & 	\underline{96} & 	98 & 	92.4 \\ 
\# CPs = 13 & 	\underline{94} & 	\underline{99} & 	\underline{93} & 	\underline{93} & 	\underline{98} & 	\underline{93} & 	\underline{96} & 	\underline{70} & 	\underline{99} & 	\underline{85} & 	\underline{96} & 	\underline{98} & 	\underline{92} & 	92.8 \\[0.2cm]
& \multicolumn{14}{c}{ \textbf{Lasso - Constant variance}} \\														
\# CPs = 0 & 	100 & 	100 & 	100 & 	100 & 	100 & 	100 & 	100 & 	100 & 	100 & 	100 & 	99 & 	100 & 	100 & 	99.9 \\ 
\# CPs = 1 & 	\underline{10} & 	99 & 	99 & 	100 & 	100 & 	93 & 	91 & 	91 & 	95 & 	91 & 	89 & 	100 & 	95 & 	88.7 \\ 
\# CPs = 2 & 	\underline{28} & 	\underline{99} & 	96 & 	99 & 	100 & 	77 & 	70 & 	77 & 	90 & 	59 & 	61 & 	100 & 	83 & 	79.9 \\ 
\# CPs = 3 & 	\underline{51} & 	\underline{100} & 	\underline{71} & 	100 & 	100 & 	52 & 	51 & 	55 & 	76 & 	47 & 	37 & 	100 & 	75 & 	70.4 \\ 
\# CPs = 4 & 	\underline{71} & 	\underline{100} & 	\underline{84} & 	\underline{52} & 	99 & 	37 & 	33 & 	35 & 	56 & 	30 & 	27 & 	97 & 	47 & 	59.1 \\ 
\# CPs = 5 & 	\underline{85} & 	\underline{92} & 	\underline{84} & 	\underline{78} & 	\underline{87} & 	20 & 	17 & 	20 & 	28 & 	20 & 	19 & 	89 & 	27 & 	51.2 \\ 
\# CPs = 6 & 	\underline{88} & 	\underline{91} & 	\underline{81} & 	\underline{75} & 	\underline{85} & 	\underline{86} & 	19 & 	18 & 	25 & 	15 & 	14 & 	92 & 	30 & 	55.3 \\ 
\# CPs = 7 & 	\underline{89} & 	\underline{83} & 	\underline{84} & 	\underline{80} & 	\underline{93} & 	\underline{78} & 	\underline{78} & 	14 & 	19 & 	14 & 	16 & 	90 & 	19 & 	58.2 \\ 
\# CPs = 8 & 	\underline{87} & 	\underline{86} & 	\underline{82} & 	\underline{81} & 	\underline{87} & 	\underline{81} & 	\underline{83} & 	\underline{77} & 	28 & 	18 & 	21 & 	92 & 	23 & 	65.1 \\ 
\# CPs = 9 & 	\underline{87} & 	\underline{89} & 	\underline{84} & 	\underline{84} & 	\underline{90} & 	\underline{84} & 	\underline{84} & 	\underline{83} & 	\underline{86} & 	19 & 	15 & 	92 & 	25 & 	70.9 \\ 
\# CPs = 10 & 	\underline{93} & 	\underline{88} & 	\underline{90} & 	\underline{82} & 	\underline{92} & 	\underline{86} & 	\underline{86} & 	\underline{82} & 	\underline{92} & 	\underline{81} & 	25 & 	92 & 	22 & 	77.8 \\ 
\# CPs = 11 & 	\underline{95} & 	\underline{87} & 	\underline{85} & 	\underline{84} & 	\underline{93} & 	\underline{83} & 	\underline{81} & 	\underline{85} & 	\underline{90} & 	\underline{85} & 	\underline{84} & 	97 & 	27 & 	82.8 \\ 
\# CPs = 12 & 	\underline{95} & 	\underline{86} & 	\underline{88} & 	\underline{93} & 	\underline{99} & 	\underline{86} & 	\underline{83} & 	\underline{84} & 	\underline{87} & 	\underline{83} & 	\underline{83} & 	\underline{94} & 	17 & 	82.9 \\ 
\# CPs = 13 & 	\underline{93} & 	\underline{85} & 	\underline{81} & 	\underline{90} & 	\underline{97} & 	\underline{74} & 	\underline{79} & 	\underline{77} & 	\underline{81} & 	\underline{77} & 	\underline{77} & 	\underline{94} & 	\underline{79} & 	83.4 \\[0.2cm]
& \multicolumn{14}{c}{ \textbf{Selective segmentation - GARCH variance}} \\														
\# CPs = 0 & 	100 & 	99 & 	99 & 	98 & 	98 & 	98 & 	98 & 	100 & 	99 & 	99 & 	98 & 	100 & 	98 & 	98.8 \\ 
\# CPs = 1 & 	\underline{54} & 	95 & 	96 & 	98 & 	98 & 	96 & 	96 & 	97 & 	94 & 	95 & 	96 & 	96 & 	98 & 	93.0 \\ 
\# CPs = 2 & 	\underline{99} & 	\underline{100} & 	95 & 	100 & 	97 & 	99 & 	98 & 	99 & 	99 & 	99 & 	97 & 	98 & 	99 & 	98.4 \\ 
\# CPs = 3 & 	\underline{99} & 	\underline{100} & 	\underline{93} & 	99 & 	99 & 	98 & 	99 & 	96 & 	97 & 	98 & 	92 & 	96 & 	100 & 	97.4 \\ 
\# CPs = 4 & 	\underline{98} & 	\underline{99} & 	\underline{93} & 	\underline{98} & 	99 & 	99 & 	96 & 	95 & 	98 & 	98 & 	97 & 	95 & 	95 & 	96.9 \\ 
\# CPs = 5 & 	\underline{96} & 	\underline{100} & 	\underline{92} & 	\underline{97} & 	\underline{100} & 	97 & 	96 & 	97 & 	97 & 	97 & 	94 & 	98 & 	98 & 	96.8 \\ 
\# CPs = 6 & 	\underline{95} & 	\underline{100} & 	\underline{87} & 	\underline{93} & 	\underline{99} & 	\underline{96} & 	96 & 	99 & 	99 & 	96 & 	99 & 	96 & 	96 & 	96.2 \\ 
\# CPs = 7 & 	\underline{97} & 	\underline{100} & 	\underline{92} & 	\underline{98} & 	\underline{100} & 	\underline{91} & 	\underline{92} & 	97 & 	99 & 	97 & 	98 & 	98 & 	100 & 	96.8 \\ 
\# CPs = 8 & 	\underline{98} & 	\underline{99} & 	\underline{90} & 	\underline{87} & 	\underline{99} & 	\underline{91} & 	\underline{92} & 	\underline{76} & 	98 & 	98 & 	98 & 	97 & 	93 & 	93.5 \\ 
\# CPs = 9 & 	\underline{98} & 	\underline{100} & 	\underline{89} & 	\underline{94} & 	\underline{100} & 	\underline{91} & 	\underline{95} & 	\underline{67} & 	\underline{100} & 	99 & 	98 & 	99 & 	99 & 	94.5 \\ 
\# CPs = 10 & 	\underline{96} & 	\underline{99} & 	\underline{96} & 	\underline{97} & 	\underline{98} & 	\underline{90} & 	\underline{97} & 	\underline{77} & 	\underline{100} & 	\underline{84} & 	96 & 	95 & 	99 & 	94.2 \\ 
\# CPs = 11 & 	\underline{94} & 	\underline{99} & 	\underline{93} & 	\underline{97} & 	\underline{100} & 	\underline{89} & 	\underline{93} & 	\underline{69} & 	\underline{99} & 	\underline{78} & 	\underline{94} & 	97 & 	97 & 	92.2 \\ 
\# CPs = 12 & 	\underline{91} & 	\underline{96} & 	\underline{89} & 	\underline{93} & 	\underline{99} & 	\underline{92} & 	\underline{97} & 	\underline{72} & 	\underline{99} & 	\underline{84} & 	\underline{94} & 	\underline{97} & 	100 & 	92.5 \\ 
\# CPs = 13 & 	\underline{85} & 	\underline{96} & 	\underline{91} & 	\underline{92} & 	\underline{100} & 	\underline{90} & 	\underline{96} & 	\underline{77} & 	\underline{99} & 	\underline{82} & 	\underline{86} & 	\underline{95} & 	\underline{77} & 	89.7 \\[0.2cm]
& \multicolumn{14}{c}{ \textbf{Lasso - GARCH variance}} \\														
\# CPs = 0 & 	100 & 	100 & 	100 & 	100 & 	100 & 	99 & 	99 & 	100 & 	100 & 	99 & 	98 & 	100 & 	100 & 	99.6 \\ 
\# CPs = 1 & 	\underline{20} & 	93 & 	98 & 	100 & 	99 & 	87 & 	84 & 	81 & 	88 & 	85 & 	79 & 	99 & 	94 & 	85.2 \\ 
\# CPs = 2 & 	\underline{26} & 	\underline{99} & 	97 & 	100 & 	100 & 	80 & 	82 & 	80 & 	89 & 	66 & 	63 & 	100 & 	90 & 	82.5 \\ 
\# CPs = 3 & 	\underline{49} & 	\underline{99} & 	\underline{64} & 	100 & 	100 & 	56 & 	52 & 	46 & 	76 & 	51 & 	48 & 	100 & 	77 & 	70.6 \\ 
\# CPs = 4 & 	\underline{68} & 	\underline{96} & 	\underline{75} & 	\underline{50} & 	99 & 	37 & 	38 & 	41 & 	51 & 	26 & 	28 & 	100 & 	51 & 	58.5 \\ 
\# CPs = 5 & 	\underline{85} & 	\underline{90} & 	\underline{82} & 	\underline{75} & 	\underline{84} & 	19 & 	19 & 	24 & 	25 & 	17 & 	15 & 	93 & 	29 & 	50.5 \\ 
\# CPs = 6 & 	\underline{86} & 	\underline{93} & 	\underline{87} & 	\underline{77} & 	\underline{86} & 	\underline{87} & 	19 & 	21 & 	21 & 	16 & 	17 & 	94 & 	23 & 	55.9 \\ 
\# CPs = 7 & 	\underline{89} & 	\underline{92} & 	\underline{85} & 	\underline{81} & 	\underline{88} & 	\underline{91} & 	\underline{89} & 	19 & 	30 & 	10 & 	21 & 	95 & 	25 & 	62.7 \\ 
\# CPs = 8 & 	\underline{87} & 	\underline{87} & 	\underline{83} & 	\underline{70} & 	\underline{87} & 	\underline{83} & 	\underline{86} & 	\underline{77} & 	23 & 	16 & 	21 & 	97 & 	26 & 	64.8 \\ 
\# CPs = 9 & 	\underline{88} & 	\underline{85} & 	\underline{86} & 	\underline{81} & 	\underline{90} & 	\underline{83} & 	\underline{85} & 	\underline{80} & 	\underline{87} & 	16 & 	15 & 	95 & 	29 & 	70.8 \\ 
\# CPs = 10 & 	\underline{97} & 	\underline{88} & 	\underline{88} & 	\underline{91} & 	\underline{96} & 	\underline{85} & 	\underline{85} & 	\underline{86} & 	\underline{91} & 	\underline{84} & 	9 & 	90 & 	28 & 	78.3 \\ 
\# CPs = 11 & 	\underline{92} & 	\underline{89} & 	\underline{91} & 	\underline{88} & 	\underline{93} & 	\underline{79} & 	\underline{81} & 	\underline{75} & 	\underline{88} & 	\underline{77} & 	\underline{78} & 	97 & 	18 & 	80.5 \\ 
\# CPs = 12 & 	\underline{93} & 	\underline{88} & 	\underline{84} & 	\underline{85} & 	\underline{91} & 	\underline{80} & 	\underline{83} & 	\underline{82} & 	\underline{86} & 	\underline{78} & 	\underline{80} & 	\underline{90} & 	26 & 	80.5 \\ 
\# CPs = 13 & 	\underline{92} & 	\underline{82} & 	\underline{85} & 	\underline{86} & 	\underline{92} & 	\underline{80} & 	\underline{82} & 	\underline{75} & 	\underline{85} & 	\underline{77} & 	\underline{80} & 	\underline{88} & 	\underline{81} & 	83.5 \\[0.2cm]
		\bottomrule
	\end{tabular}}%%
\end{table}%

To further illustrate the issue with the Lasso method, Table \ref{DGPEmpirical5::MC} shows the detailed results based on 100 simulated series when the first five parameters exhibit a CP (equivalent to the variant called '\# CPs = 5' in Table \ref{DGPEmpirical::MC}). While the selective segmentation method accurately detects the number of regimes for each parameter, the Lasso approach finds two regimes for most of the parameters that are constant over the sample.

\begin{table}[h!]
\centering
\singlespacing
\caption{\justifying \textbf{Empirical DGP with 5 CPs - Break detection rates of the Selective segmentation and the Lasso approaches.}\\ 
Based on 100 replications, this Table assesses the break detection performance of the selective segmentation and the Lasso methods on the fifth variant of the empirical DGP detailed in Equation \eqref{DGP:Emp}. \textbf{Number of regimes} is the rate of detecting a specific number of regimes per model parameter. Bold values correspond to the true number of regimes.\label{DGPEmpirical5::MC}}
\scalebox{0.6}{
	\begin{tabular}{l cccccc @{\hskip 1cm} cccccc @{\hskip 2cm} cccccc @{\hskip 1cm} cccccc }
		\toprule
		& \multicolumn{12}{c}{ \textbf{Constant variance}} &  \multicolumn{12}{c}{ \textbf{GARCH variance}}\\	
		& \multicolumn{6}{c}{ \textbf{Sel. segmentation}} 	& \multicolumn{6}{c}{ \textbf{Lasso}} & &\multicolumn{4}{c}{ \textbf{Sel. segmentation}}& 	& &\multicolumn{4}{c}{ \textbf{Lasso}}&\\	
		\# of regimes & 1 & 2 & 3 & 4 & 5 & 6  & 1 & 2 & 3 & 4 & 5 & 6 & 1 & 2 & 3 & 4 & 5 & 6  & 1 & 2 & 3 & 4 & 5 & 6 \\
\midrule
\hline																									
Inter. & 	2 & 	\textbf{98} & 	0 & 	0 & 	0 & 	0 & 	12 & 	\textbf{85} & 	3 & 	0 & 	0 & 	0 & 	3 & 	\textbf{96} & 	1 & 	0 & 	0 & 	0 & 	15 & 	\textbf{85} & 	0 & 	0 & 	0 & 	0\\
PMKT & 	0 & 	\textbf{100} & 	0 & 	0 & 	0 & 	0 & 	0 & 	\textbf{92} & 	8 & 	0 & 	0 & 	0 & 	0 & 	\textbf{100} & 	0 & 	0 & 	0 & 	0 & 	0 & 	\textbf{90} & 	10 & 	0 & 	0 & 	0\\
SMB & 	5 & 	\textbf{93} & 	2 & 	0 & 	0 & 	0 & 	10 & 	\textbf{84} & 	6 & 	0 & 	0 & 	0 & 	6 & 	\textbf{92} & 	2 & 	0 & 	0 & 	0 & 	9 & 	\textbf{82} & 	9 & 	0 & 	0 & 	0\\
TERM & 	6 & 	\textbf{94} & 	0 & 	0 & 	0 & 	0 & 	22 & 	\textbf{78} & 	0 & 	0 & 	0 & 	0 & 	2 & 	\textbf{97} & 	1 & 	0 & 	0 & 	0 & 	25 & 	\textbf{75} & 	0 & 	0 & 	0 & 	0\\
DEF & 	0 & 	\textbf{98} & 	1 & 	1 & 	0 & 	0 & 	12 & 	\textbf{87} & 	1 & 	0 & 	0 & 	0 & 	0 & 	\textbf{100} & 	0 & 	0 & 	0 & 	0 & 	16 & 	\textbf{84} & 	0 & 	0 & 	0 & 	0\\
SBD & 	\textbf{97} & 	3 & 	0 & 	0 & 	0 & 	0 & 	\textbf{20} & 	70 & 	10 & 	0 & 	0 & 	0 & 	\textbf{97} & 	2 & 	1 & 	0 & 	0 & 	0 & 	\textbf{19} & 	76 & 	4 & 	1 & 	0 & 	0\\
SFX & 	\textbf{100} & 	0 & 	0 & 	0 & 	0 & 	0 & 	\textbf{17} & 	73 & 	10 & 	0 & 	0 & 	0 & 	\textbf{96} & 	4 & 	0 & 	0 & 	0 & 	0 & 	\textbf{19} & 	70 & 	10 & 	1 & 	0 & 	0\\
SCOM & 	\textbf{98} & 	2 & 	0 & 	0 & 	0 & 	0 & 	\textbf{20} & 	71 & 	9 & 	0 & 	0 & 	0 & 	\textbf{97} & 	3 & 	0 & 	0 & 	0 & 	0 & 	\textbf{24} & 	69 & 	7 & 	0 & 	0 & 	0\\
UMD & 	\textbf{98} & 	2 & 	0 & 	0 & 	0 & 	0 & 	\textbf{28} & 	66 & 	6 & 	0 & 	0 & 	0 & 	\textbf{97} & 	2 & 	1 & 	0 & 	0 & 	0 & 	\textbf{25} & 	68 & 	6 & 	1 & 	0 & 	0\\
SIR & 	\textbf{96} & 	3 & 	1 & 	0 & 	0 & 	0 & 	\textbf{20} & 	69 & 	11 & 	0 & 	0 & 	0 & 	\textbf{97} & 	2 & 	1 & 	0 & 	0 & 	0 & 	\textbf{17} & 	71 & 	11 & 	1 & 	0 & 	0\\
STK & 	\textbf{98} & 	2 & 	0 & 	0 & 	0 & 	0 & 	\textbf{19} & 	71 & 	10 & 	0 & 	0 & 	0 & 	\textbf{94} & 	6 & 	0 & 	0 & 	0 & 	0 & 	\textbf{15} & 	77 & 	7 & 	1 & 	0 & 	0\\
CPI & 	\textbf{94} & 	6 & 	0 & 	0 & 	0 & 	0 & 	\textbf{89} & 	11 & 	0 & 	0 & 	0 & 	0 & 	\textbf{98} & 	2 & 	0 & 	0 & 	0 & 	0 & 	\textbf{93} & 	7 & 	0 & 	0 & 	0 & 	0\\
NAREIT & 	\textbf{95} & 	4 & 	1 & 	0 & 	0 & 	0 & 	\textbf{27} & 	67 & 	6 & 	0 & 	0 & 	0 & 	\textbf{98} & 	1 & 	1 & 	0 & 	0 & 	0 & 	\textbf{29} & 	68 & 	3 & 	0 & 	0 & 	0\\
		\bottomrule
	\end{tabular}}%%
\end{table}%

We end this simulation section with a "big data" example motivated by the fact that when the number of explanatory variables is large, the current Bayesian alternatives do not work \citep[see][]{GiordaniKohn2008,Eo2012,huber2019should,dufays2019relevant} (see SA \ref{App:Bayesian} for more details). To do so, we propose the DGP J that is specified by 100 explanatory variables and one CP as follows:

\vspace{0.2cm}
\noindent \textbf{DGP J: } piecewise linear model with big data\\
	$$ Y_{t} = \begin{cases} \mathbf{x}'_t\boldsymbol{\beta_1} + \varepsilon_t & \quad\text{if } 1\leq t \leq 499, \\ 
	\mathbf{x}'_t\boldsymbol{\beta_2} + \varepsilon_t & \quad\text{if } 500\leq t \leq T,\\
	\end{cases} 
	$$
	
\noindent where $T=1024$, $\forall$ $t\in [1,T]$ and for $i=1,...,100$, $\mathbf{x}_{t,i} \sim \NORM{\left( 0,1\right)}$ and $\varepsilon_t \sim \NORM(0,1)$. The parameter values of $\boldsymbol{\beta_1}$ are uniformly and randomly set to $-1$ or $1$. In the second regime, the parameter values of $\boldsymbol{\beta_2}$ are equal to $\boldsymbol{\beta_1}$ except for 10 of them randomly chosen that are set to the opposite value (i.e. $-\bbeta_1$). Thus, 10 parameters of DGP J does experience a break at observation 500. \\
We simulate 100 series from DGP J to assess the SELO performance in detecting which parameters experience a breakpoint. For every simulation, the selective segmentation approach identifies 10 parameters that experience one breakpoint in the sample while the others remain constant. In addition, the exact model specification was always among the specification exhibiting a posterior probability of at least 10\%.

%%%%%%%%%%%%%%%%%%%%%%%%%%%%%%%%%%%%%%%%%%%%%%%%%%%%%%%%%%%%%
% EMPIRICAL APPLICATION
%%%%%%%%%%%%%%%%%%%%%%%%%%%%%%%%%%%%%%%%%%%%%%%%%%%%%%%%%%%%%

\section{Empirical application} \label{s:empirics}
We illustrate the selective segmentation method with 14 monthly Credit Suisse HF indices spanning from March 1994 to March 2016. These indices are the weighted average of HF returns following specific trading strategies. \cite{fung2004hedge} suggested a risk-based approach to model HF returns and identified seven factors on which HF strategies are generally exposed \cite[see also][]{fung2001risk}. Since this seminal work, many other risk factors have been uncovered. So, we include five other risk factors that are also popular in the literature. We add two Fung and Hsieh trend following risk factors, PTFSIR, returns on PTFS  short term interest rate lookback straddle, and PTFSSTK, returns on PTFS stock index lookback straddle. Following \cite{ Agarwal2004Risks}, among many others, we also use the Up-minus-Down (UMD) factor \citep[see][]{Carhart-JF-1997}. As suggested by \cite{chen1986economic}, we include the expected inflation, the log relative of US Consumer Price Index (CPI). Finally, we also take into account a factor for real estate risk relevant to explain HF and stocks returns \citep[see, e.g.,][]{Ambrose-DLima-JREFE-2016,Carmichael-Coen-REE-2018}. Table \ref{tab:CS_factors} documents the fourteen strategies on which we focus as well as the twelve factors. \\

\begin{table}[h!]
\footnotesize
	\centering
	\caption{\justifying Description of the HF returns and the risk factors. The column 'Paper' highlights a paper in which the factor has already been used. FH, C, CRR and AD refer to \cite{fung2004hedge}, \cite{Carhart-JF-1997}, \cite{chen1986economic} and \cite{Ambrose-DLima-JREFE-2016}, respectively.  \label{tab:CS_factors}}
\begin{tabular}{ll lll}
 \toprule	
\multicolumn{2}{c}{Credit Suisse Hedge fund indices} & \multicolumn{3}{c}{Risk factors }\\
\cmidrule(lr){1-2}\cmidrule(lr){3-5}	
Name & Description & Name & Description & Paper \\[0.2cm]
    HFI & Hedge Fund Index  & PMKT & Market factor (S\&P 500) & FH\\
    CNV & Convertible Arbitrage & SMB & Small firm minus big firm & FH\\
    DSB & Dedicated Short Bias  & TERM & Change in 10-year treasury yields & FH\\
    EME & Emerging Markets  & DEF & Change in the yield spread of   & FH\\
    EMN & Equity Market Neutral & & 10-year treasury and Moody's Baa bonds \\
    EDR & Event Driven  & PTFSBD & Lookback options on Bonds & FH\\
    EDD & Event Driven Distressed &PTFSFX & Lookback options on currencies & FH\\
    EDM & Event Driven Multi-Strategy &  PTFSCOM & Lookback options on commodities & FH\\
    EDRA& Event Driven Risk Arbitrage & UMD & Momentum (Up-minus-Down) & C  \\
    FIA & Fixed Income Arbitrage & PTFSIR & Lookback options on short term interest rate & FH \\
    GMA & Global Macro & PTFSSTK& Lookback options on Stock Index & FH \\
    LES & Long/Short Equity & CPI& Consumer price index & CRR \\
    MFU & Managed Futures & NAREIT & Real estate investment trust index & AD \\
    MUS & Multi-Strategy & & & \\
\bottomrule
	\end{tabular}%
\end{table}

It is well acknowledged in the financial literature that HF strategies (or trading techniques) are time-varying.  Their changing risk exposures are directly related to market events and economic fluctuations (see, e.g., \cite{ Agarwal2004Risks}, \cite{Fung2008hedgefunds} or \cite{Patton2015change} among others). Hedge fund time-varying risk dynamics has important implications for performance appraisal. As pointed out by \cite{Mitchell2001characteristics}, the changes can be in response to arbitrage opportunities. The cycles of mergers and acquisitions in the 1990s and the 2000s and the corresponding level of risk arbitrage led by HF are illustrations of these changing dynamics. In standard linear asset pricing models, the intercept and risk factor loadings are not constant but time-varying. Moreover, HF returns  exhibit significant non-linearities. Therefore, there is a need of dynamic models able to capture  non-linearities and changes in risk exposures.\\
Following \cite{meligkotsidou2008detecting}, we suggest the use of CP risk factor models. This class of models is suited for studying the changes in risk exposures and their time-varying parameters. However, instead of directly focusing on the twelve factors, we take a slightly different approach since we additionally take into account autocorrelations of the returns.\footnote{As reported by \cite{getmansky2004econometric}, the analysis of serial dependence of returns is a reasonable way of assessing the liquidity of hedge fund investments.} To do so, we first look at the best autoregressive model that fits the returns. In particular, for each HF returns, we estimate ARX(q) models with $q$ ranging from 0 to 4 and in which the explanatory variables are the twelve factors (and an intercept) and we select the best AR order using the Bayesian information criterion (BIC). Table \ref{JBF::order} documents the best order for each strategy.
%%% updated  May 18th, 2020
\begin{table}[h!]
\small
\centering
	\caption{\justifying \textbf{Order of the optimal ARX-model for each HF strategy.} \\
	The optimal AR order is chosen by maximizing the Bayesian information criterion over the whole sample. When looking for the best autoregressive lag order, the explanatory variables include the seven factors and an intercept.}
\begin{tabular}{lccccccc}
\toprule
Strat. & 	HFI & 	CNV & 	DSB & 	EME & 	EMN & 	EDR & 	EDD \\
Lag order & 	0 & 	1 & 	0 & 	1 & 	0 & 	1 & 	2 \\[0.1cm]
Strat. & EDM & 	EDRA & 	FIA & 	GMA & 	LES & 	MFU & 	MUS \\
Lag order & 	1 & 	1 & 	1 & 	0 & 	1 & 	0 & 	0 \\
\bottomrule
\end{tabular}
\label{JBF::order}
\end{table}

As reported by \cite{fung2004hedge}, composites obtained from the individual funds may be contaminated with severe survivorship, selection and instant history biases. Therefore, to avoid these problems,  we use the Credit Suisse indices that provide full transparency about their constituents.

Section \ref{sec:in} discusses in-sample results of our selective segmentation method and we compare them to those of standard CP models and time-varying parameter models. We then illustrate the difference of our approach with the CP method of \cite{meligkotsidou2008detecting} in Section \ref{sec:CPJBF}. Section \ref{sec:oos} documents a forecasting exercise in which we assess the predictive performance of the selective segmentation approach with respect to flexible alternatives. Importantly, all the subsequent results include the optimal AR order documented in Table \ref{JBF::order} as additional explanatory variables.

\subsection{Hedge funds strategies evolve over time \label{sec:in}}
\cite{fung2004hedge} focus on linear models. However, as the period covers critical events such as the Long Term Capital Management (LTCM) collapse, the dot-com crisis and the global financial crisis (GFC), one could argue that CP models are more appropriate. In this Section, we focus on two specific indices, namely the Hedge Fund Index (HFI) and the HF returns that are applying a Fixed-Income Arbitrage (FIA) strategy. Results for all the other returns are available upon request. \\
Tables \ref{tab:OLS_SELO1} and \ref{tab:OLS_SELO10} show how the selective segmentation method can improve the interpretation of CP models. The Tables document how the results evolve from a standard linear risk model to a selective segmentation model passing by a standard CP process. As expected, for the two HF returns, ignoring breakpoints can be misleading as the CP results emphasize that they modify the risk exposition of the returns. Also, although one can study in details the results of the standard CP model, the selective segmentation model offers a straightforward picture of the relevant risk factors and how the risk exposition evolves. It also estimates more accurately the parameters that do not change when a break occurs. As the CP model detects three breakpoints for the HFI and six abrupt changes for the FIA strategy, the number of models to consider amounts to $2^{36}$ and $2^{84}$ respectively. Our selective segmentation strategy explores these large model spaces and find the most promising configurations in several minutes on a standard laptop. Let us now discuss in more details the results of the two returns.\\

\noindent \textbf{Hedge Fund Index}\\
As documented in Table \ref{tab:OLS_SELO1}, the CP model with breakpoints determined by the approach in Section \ref{sec:breakYau} finds four regimes (hereafter CP-YZ). The relevant breakpoints occur in April 2000, in December 2001 and in August 2014. Interestingly, these dates coincide with the dot-com crash that spanned from March 2000 to October 2002 and when  stocks suffer steepest drop in 2014 (introducing the first significant stock market scares after the GFC). It is well acknowledged in the financial literature that the end of the dot-com bubble had important consequences for financial markets in the early 2000s. While all the parameters change for the CP model, the selective segmentation mainly identifies that the factors related to the breaks are the market factor (PMKT),  the default risk factor (DEF), the momentum risk factor (UMD) and the real estate risk factor (NAREIT). Moreover, it discards two spurious breaks occurring in December 2001  and in August 2014 making the model even more parsimonious. We can notice that the market factor decreases from 0.34 during the first period to 0.21 during the second period. HFI is indeed more conservative during the 2000s and less correlated with the financial markets. We observe the same trend for the default risk factor increasing from -9.15 to -2.42. The momentum factor, UMD, sharply declines after the nineties known as very volatile as reported by \cite{Shiller-PUP-2015} and documented by \cite{Campbell-JF-2000} who both highlight the bullish market of  this decade. The momentum is still significant since 2000 but its impact has significantly weakened. It decreases from 0.22 to 0.05. The real estate risk factor also exhibits a breakdown in the early 2000s. It is indeed well acknowledged in the real estate economics literature that the 1990s are considered as the new era of real estate investment trusts (REITs) and the period beginning in the early 2000s as the maturity REITs era (\cite{Pagliari-Scherer-Monopoli-REE-2005}, \cite{Ambrose-Lee-Peek-REE-2007} and \cite{Carmichael-Coen-REE-2018} among others). From the new REITs era to the maturity REITs era, the real estate risk factor declines from 0.15 to -0.01. These results are consistent with the important variations of interest rates during these two sub-periods and the important increase of credit risk in the 2000s. As a final note, the credible interval of the breakpoint is narrow which indicates a sharp change in the risk exposition in March 2000 (see also Figure \ref{fig:strat1} below). \\

%%%%%% Updated June 11 2020
\begin{table}[h!]
\centering
\singlespacing
\caption{\justifying \textbf{Hedge Fund Index: linear, CP and selective segmentation regression models.}\\ 
The Table details the parameter estimates of the linear model, of the CP model and of the selective segmentation process with HFI returns as the dependent variable. Parentheses and brackets indicate standard deviations and 90\% credible intervals, respectively. A cell filled with '---' indicates that the parameter does not vary over the related period. The posterior probability of the selective segmentation model amounts to 77\%.}
\scalebox{0.6}{
\begin{tabular}{lccccccccccccc}
\toprule
Period & 	Int. & 	PMKT & 	SMB & 	TERM & 	DEF & 	PTFSBD & 	PTFSFX & 	PTFSCOM & UMD & PTFSIR & PTFSSTK & CPI & NAREIT\\ 
\midrule
 & \multicolumn{13}{c}{\textbf{Standard linear risk model}} \\													
  \cmidrule(lr){2-14}													
03-1994 to 03-2016 & 	0.33 & 	0.27 & 	0.07 & 	-0.86 & 	-3.03 & 	-0.01 & 	0.01 & 	0.00 & 	0.11 & 	-0.00 & 	0.02 & 	1.07 & 	-0.01 \\ 
 & 	(0.10) & 	(0.02) & 	(0.03) & 	(0.45) & 	(0.59) & 	(0.01) & 	(0.00) & 	(0.01) & 	(0.02) & 	(0.00) & 	(0.01) & 	(0.32) & 	(0.02) \\[0.2cm] 
 & \multicolumn{13}{c}{\textbf{CP-YZ risk model}} \\													
  \cmidrule(lr){2-14}													
03-1994 to 04-2000 & 	0.59 & 	0.34 & 	0.03 & 	-1.46 & 	-10.24 & 	-0.02 & 	0.02 & 	0.02 & 	0.22 & 	-0.02 & 	0.04 & 	0.26 & 	0.16 \\ 
 & 	(0.37) & 	(0.05) & 	(0.05) & 	(1.35) & 	(2.52) & 	(0.01) & 	(0.01) & 	(0.02) & 	(0.06) & 	(0.01) & 	(0.02) & 	(1.56) & 	(0.07) \\ 
05-2000 to 12-2001 & 	-0.12 & 	0.17 & 	0.10 & 	-1.19 & 	-0.54 & 	0.00 & 	0.03 & 	-0.04 & 	0.11 & 	0.03 & 	-0.03 & 	0.32 & 	0.00 \\ 
 & 	(0.84) & 	(0.10) & 	(0.15) & 	(3.17) & 	(6.08) & 	(0.03) & 	(0.04) & 	(0.10) & 	(0.11) & 	(0.04) & 	(0.06) & 	(2.48) & 	(0.13) \\ 
01-2002 to 08-2014 & 	0.22 & 	0.22 & 	-0.00 & 	-0.71 & 	-2.17 & 	-0.01 & 	0.01 & 	0.00 & 	0.05 & 	-0.01 & 	0.02 & 	1.21 & 	-0.01 \\ 
 & 	(0.17) & 	(0.05) & 	(0.06) & 	(0.76) & 	(0.83) & 	(0.01) & 	(0.01) & 	(0.01) & 	(0.03) & 	(0.01) & 	(0.01) & 	(0.45) & 	(0.03) \\ 
09-2014 to 03-2016 & 	-0.40 & 	0.39 & 	-0.06 & 	-1.95 & 	-3.35 & 	0.01 & 	0.03 & 	-0.02 & 	0.19 & 	-0.01 & 	-0.00 & 	-0.93 & 	-0.16 \\ 
 & 	(0.68) & 	(0.21) & 	(0.19) & 	(5.53) & 	(5.64) & 	(0.04) & 	(0.03) & 	(0.03) & 	(0.18) & 	(0.03) & 	(0.03) & 	(1.87) & 	(0.20) \\ 
 & \multicolumn{13}{c}{\textbf{Selective segmentation risk model (77\%) }} \\													
  \cmidrule(lr){2-14}													
03-1994 to 04-2000 & 	0.24 & 	0.34 & 	0.04 & 	-1.05 & 	-9.15 & 	-0.01 & 	0.01 & 	0.01 & 	0.22 & 	-0.01 & 	0.02 & 	1.20 & 	0.15 \\ 
~[02-2000~~05-2000] & 	(0.09) & 	(0.04) & 	(0.02) & 	(0.42) & 	(1.35) & 	(0.01) & 	(0.00) & 	(0.01) & 	(0.04) & 	(0.00) & 	(0.01) & 	(0.29) & 	(0.05) \\ 
05-2000 to 03-2016 & 	 ---  & 	0.21 & 	 ---  & 	 ---  & 	-2.42 & 	 ---  & 	 ---  & 	 ---  & 	0.05 & 	 ---  & 	 ---  & 	 ---  & 	-0.01 \\ 
 & 	 ---  & 	(0.03) & 	 ---  & 	 ---  & 	(0.54) & 	 ---  & 	 ---  & 	 ---  & 	(0.02) & 	 ---  & 	 ---  & 	 ---  & 	(0.02) \\ 
\bottomrule
\end{tabular}}
\label{tab:OLS_SELO1}
\end{table}

Figure \ref{fig:strat1} shows the posterior medians over time and their corresponding credible intervals of the parameters related to the MKT, DEF and UMD factors given by our method (see Section \ref{sec:breakuncertainty} for the related Bayesian model and how the breakpoints are integrated out) and the time-varying parameter (TVP) model (see SA \ref{App:TVP} for the model specification). As with the CP model, one can easier interpret the time-varying dynamics of the parameters given by the selective segmentation method than those of the TVP model. For instance, while the exposition to the default factor seems fixed over the sample due to the smooth transition of the parameter, it is clear that the exposition is changing when we look at the selective segmentation results. Regarding the market factor, we also observe with the TVP model that the exposition seems different before and after the dot-com crash but the credible intervals are too wide to confirm the statement.% likely because of the smooth time varying dynamic

%%%% Updated  may 25th, 2020
\renewcommand{\baselinestretch}{1}
\begin{figure}[h!]
	\begin{center}
		\subfloat[SELO - PMKT]{\includegraphics[width=6.5cm,height=4.5cm]{./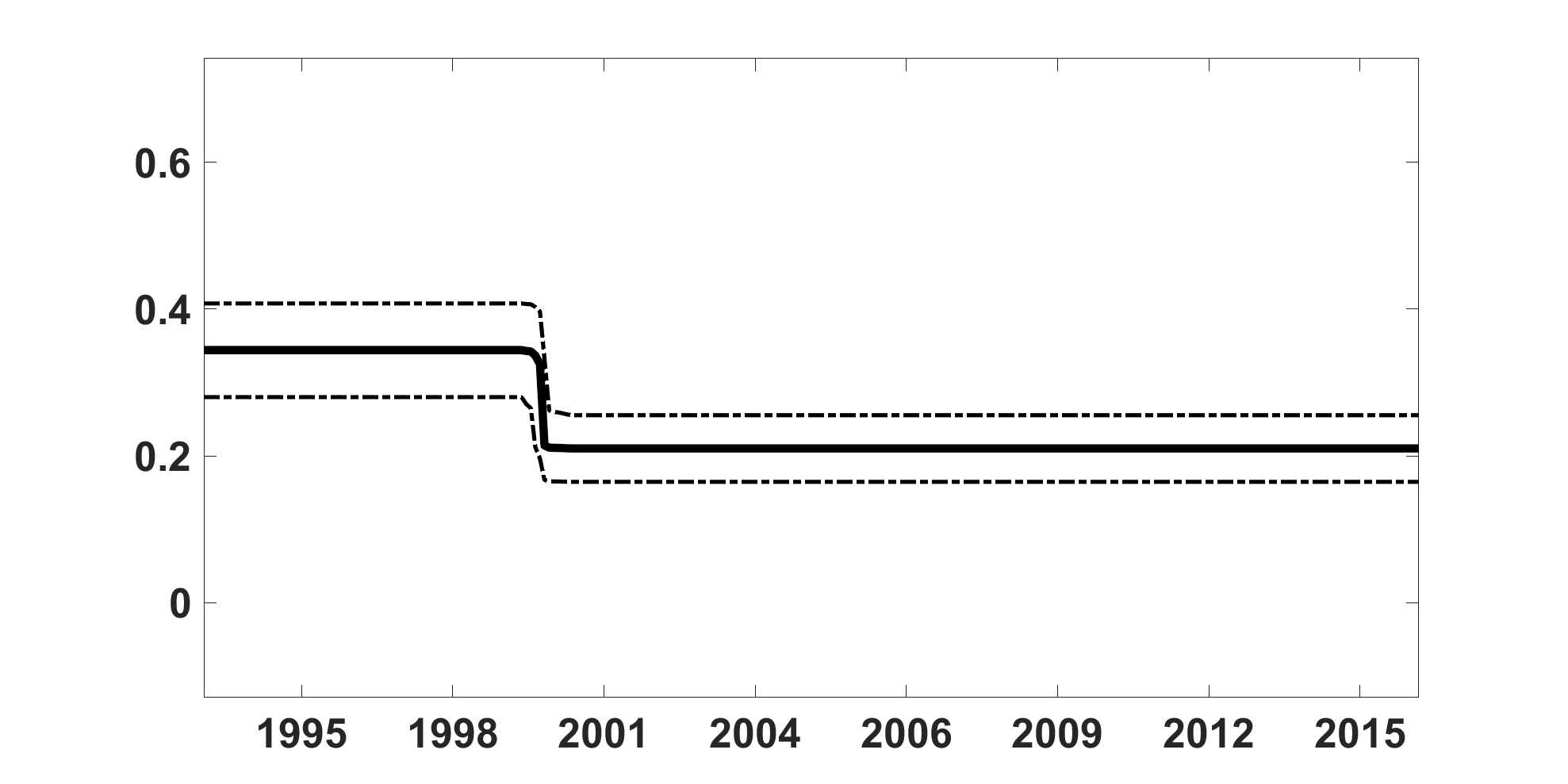}}
		\subfloat[TVP - PMKT]{\includegraphics[width=6.5cm,height=4.5cm]{./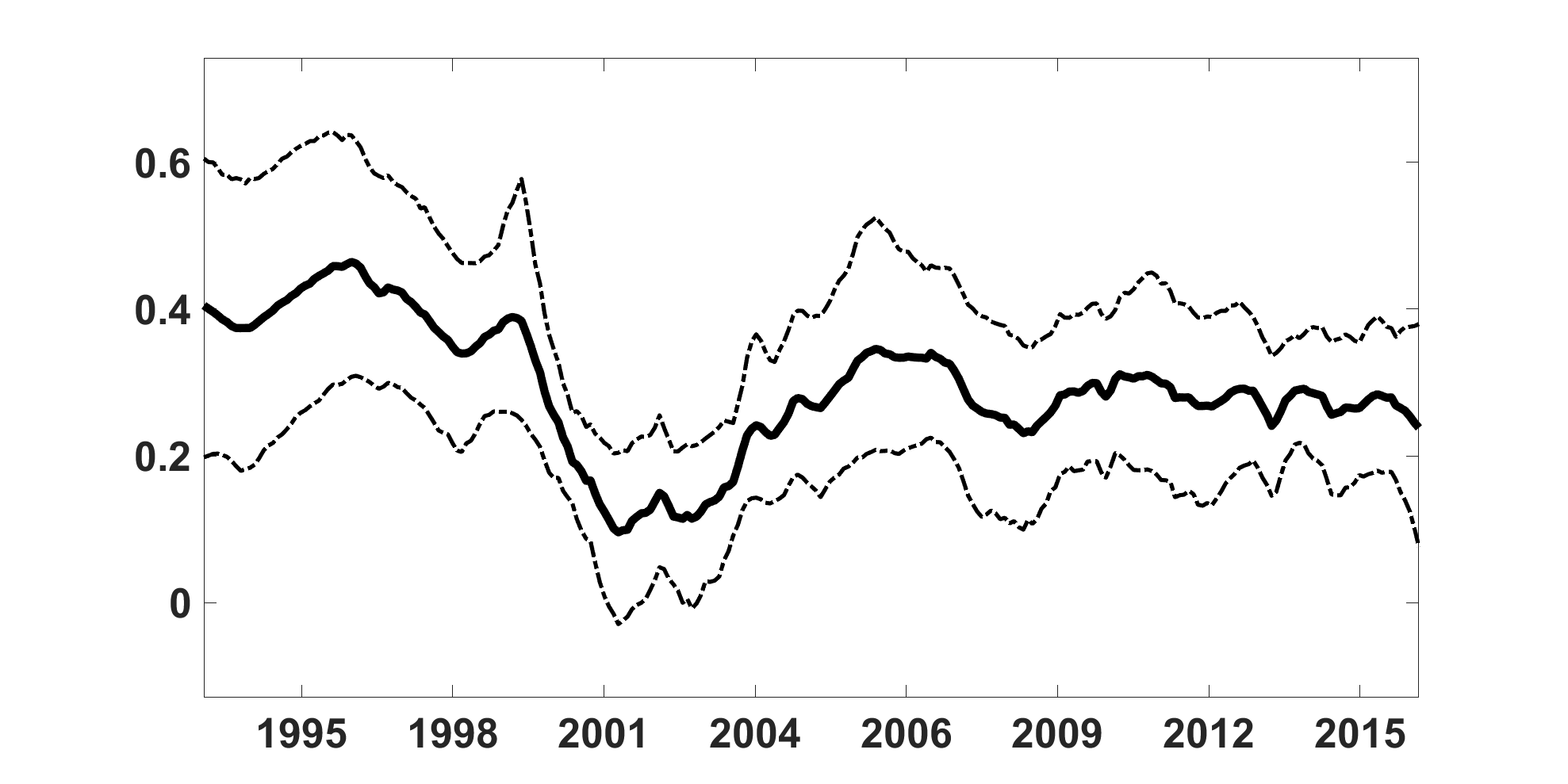}}\\
		\subfloat[SELO - DEF]{\includegraphics[width=6.5cm,height=4.5cm]{./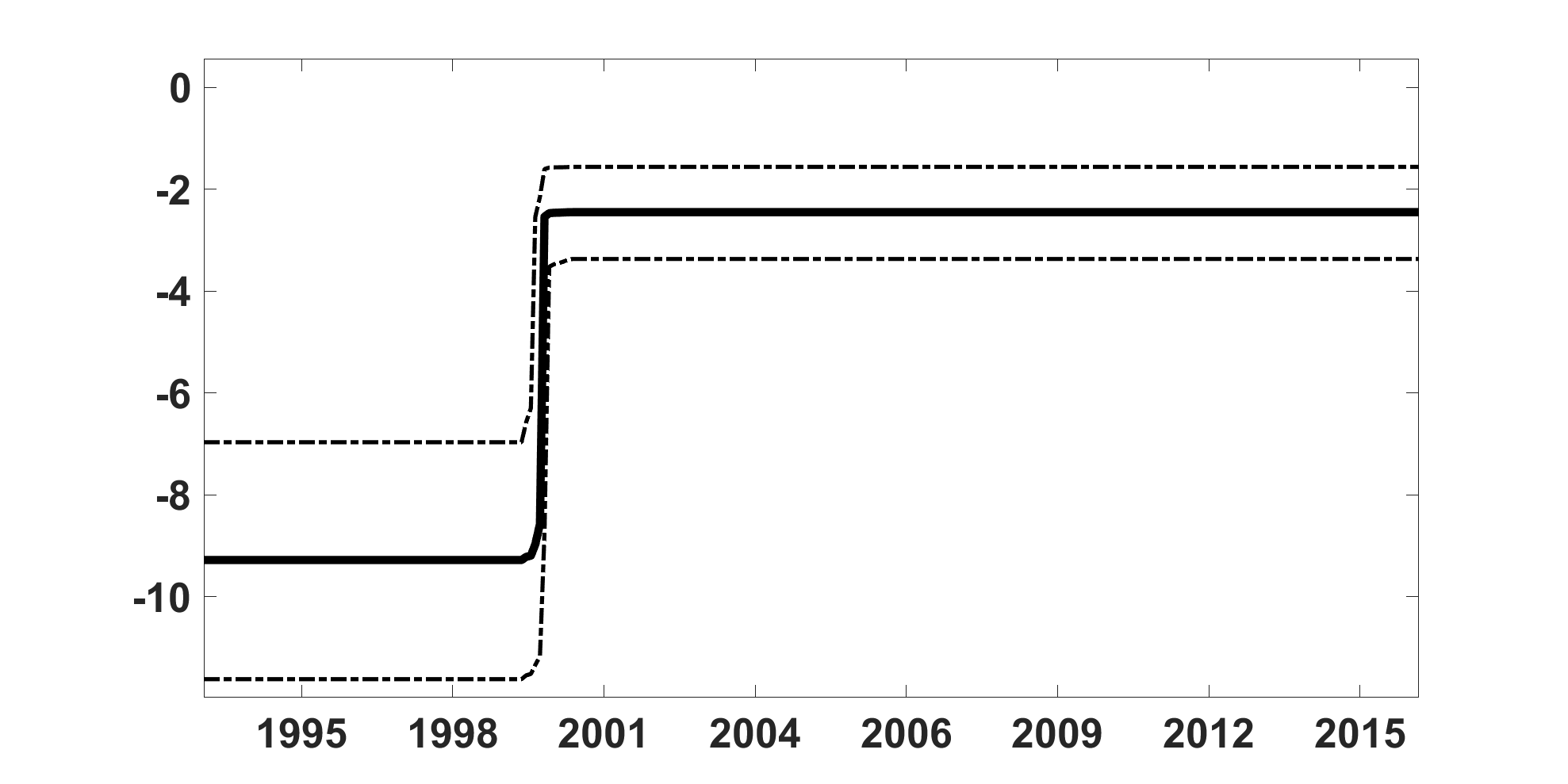}}
		\subfloat[TVP - DEF]{\includegraphics[width=6.5cm,height=4.5cm]{./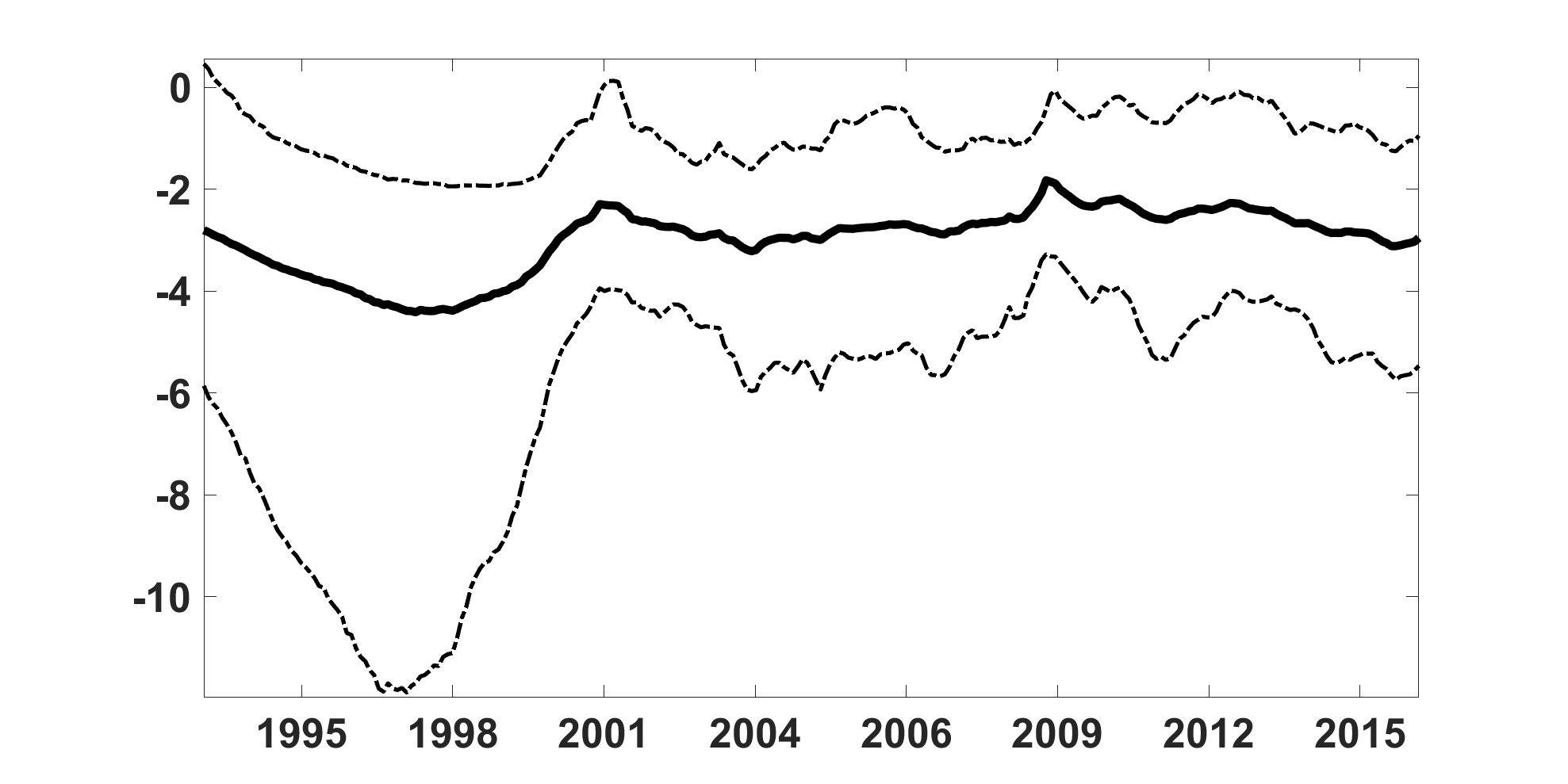}}\\
		\subfloat[SELO - UMD]{\includegraphics[width=6.5cm,height=4.5cm]{./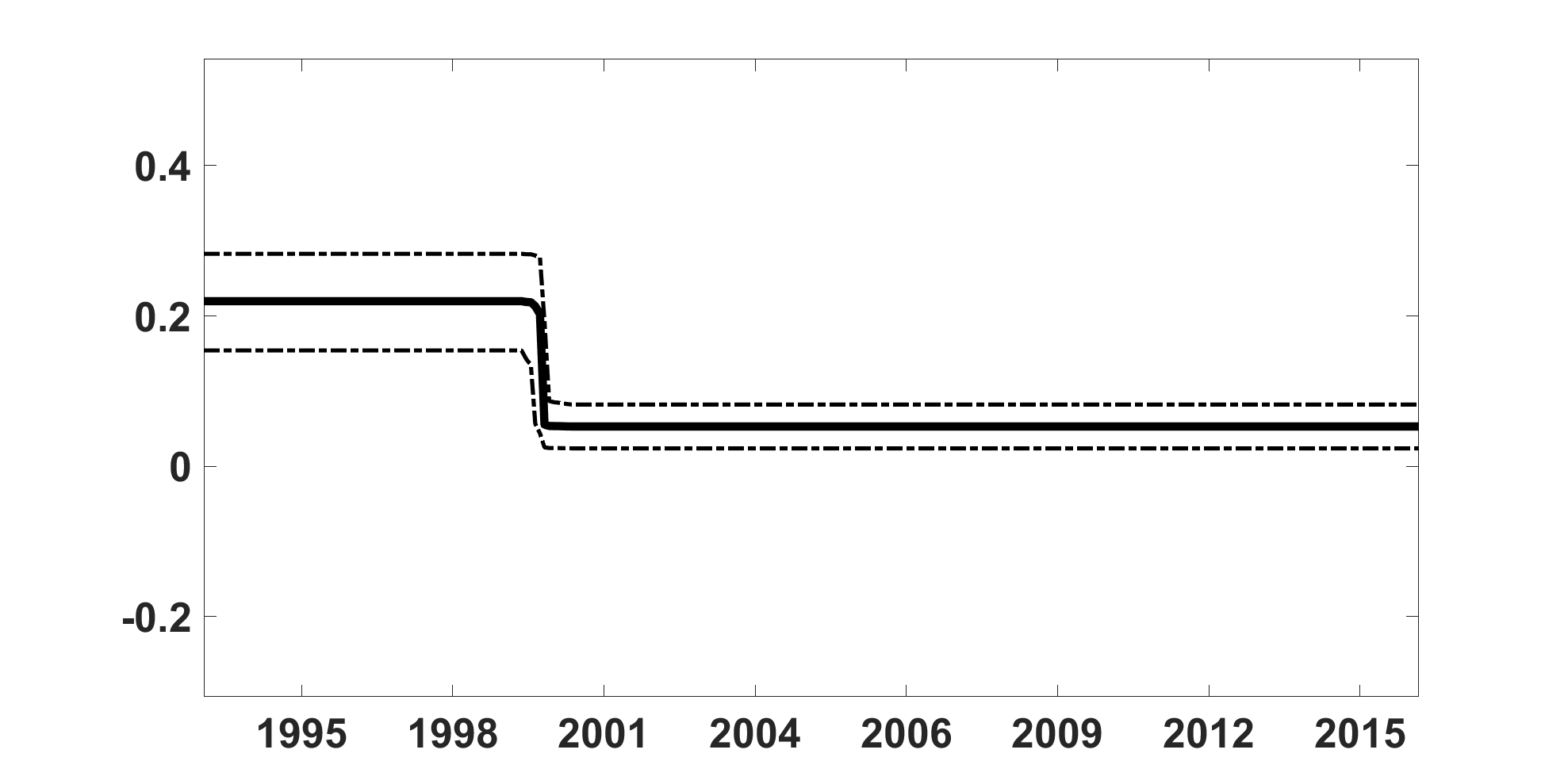}}
		\subfloat[TVP - UMD]{\includegraphics[width=6.5cm,height=4.5cm]{./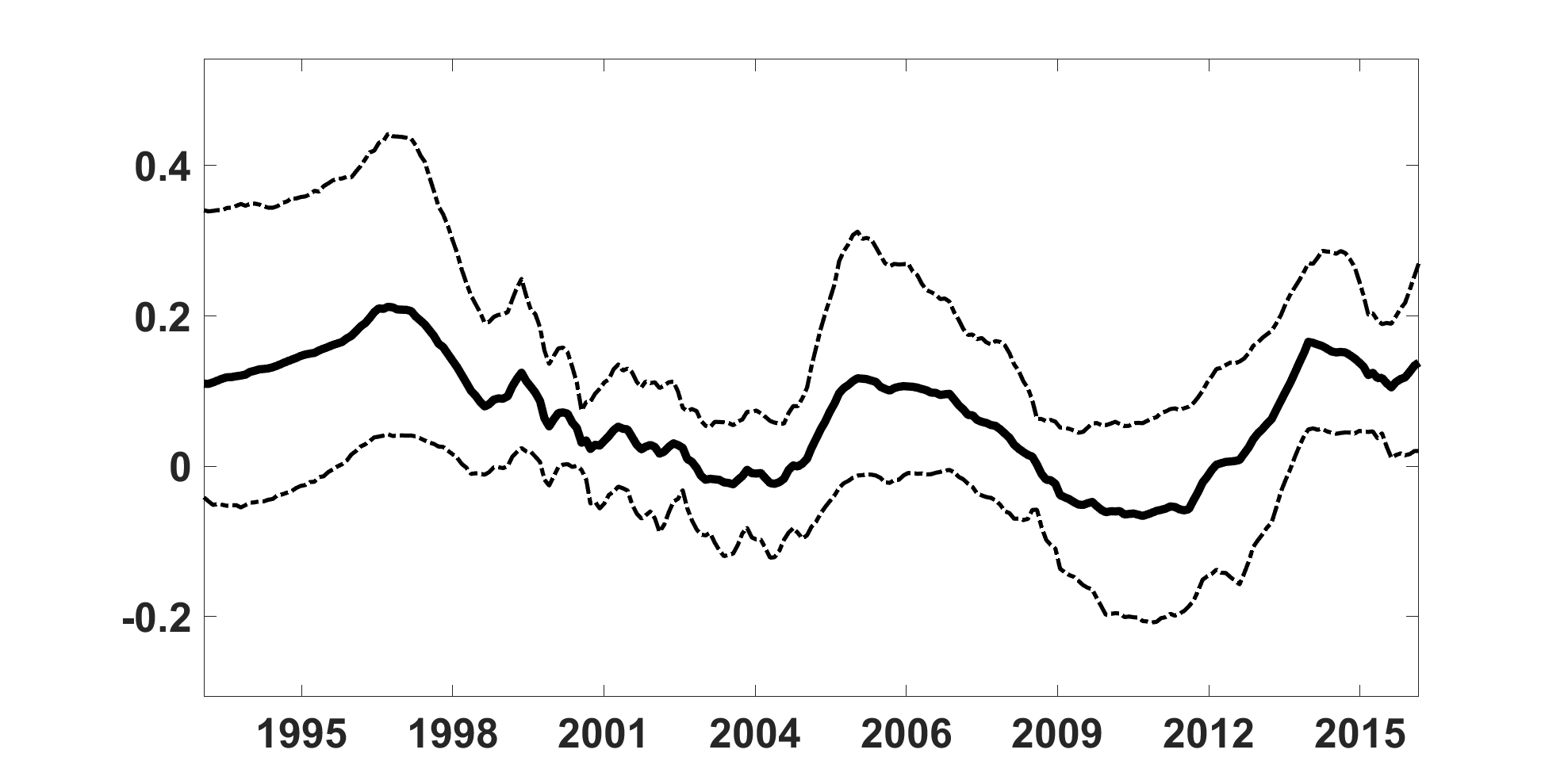}}\\
		\caption{\justifying \textbf{HFI returns - Selective segmentation (SELO) model and Time-varying parameter (TVP) model.} Posterior medians (black) and the 90\% credible intervals (dotted black lines) of the model parameters over time. For the SELO method, we take the break uncertainty into account using the MCMC algorithm presented in Section \ref{sec:breakuncertainty}. \label{fig:strat1}}
	\end{center}
\end{figure}
\renewcommand{\baselinestretch}{1.5}

\clearpage
\noindent \textbf{Fixed Income Arbitrage (FIA) strategy}\\
The FIA strategy is based on the exploitation of inefficiencies in the pricing of bonds and interest rate derivatives (including futures, options, swaps and also mortgage back securities). It was very appreciated among hedge fund managers until the collapse of the LTCM fund in September 1998. After this incident, a change of behavior among managers has been observed for this strategy on financial markets. \\
The results of the FIA returns from a standard linear regression to the selective segmentation model are documented in Table \ref{tab:OLS_SELO10}. Focusing on the latter model, breakpoints are detected in September 1997, in March 1999, August 2000 which are related to the Russia financial crisis and the dot-com crash. In addition, three other breakpoints capture the financial crisis: the beginning of the global financial crisis in late June 2007, the turmoil of September and October 2008 after the collapse of Lehman Brothers and the sovereign debt crisis in the euro zone. The selective segmentation specification highlights the role played by the market factor (i.e., PMKT, before and after the global financial crisis with estimates of 0.02 and 0.08 respectively), the variation of the size effect, the default risk factor, the momentum factor, the inflation factor and four trend following risk factors, especially during the sub-prime crisis from late June 2007 to October 2008.  The size effect changes from 0.02 before the crisis to -0.05 afterwards. The bond trend following factor, PTFSBD, the currency trend following factor, PTFSFX, the commodity trend following factor, PTFSCOM and the stock index trend following factor, PTFSSTK are highly significant during the GFC. The credit spread factor, DEF, is highly significant and constant after the dot com crisis (estimate amounting to -1.18). Significant during the first and the second periods related to the bullish market of the 1990s, the momentum effect, UMD, dramatically changes during the GFC, with a negative estimate of -0.34 (as expected). Our results also report the impact of inflation risk, CPI, during the crisis (with an estimate of 2.69) until the sovereign debt crisis (fifth and sixth periods). The real estate risk factor, NAREIT, is significant and positive during the 1990s (first, second and third periods) and during the GFC, with a negative estimate (as expected) of -0.16. As shown by Figures \ref{fig:strat10} to \ref{fig:strat10_3} in SA \ref{App:TVP1}, the selective segmentation method allows easier detection of the relevant factors as compared to the TVP model. 

%%%% Updated  June 11, 2020
\begin{table}[h!]
\centering
\singlespacing
\caption{\justifying \textbf{Fixed Income Arbitrage: linear, CP and selective segmentation risk models.}\\ 
The Table details the parameter estimates of the linear model, of the CP model and of the selective segmentation process with FIA returns as the dependent variable. Parentheses indicate standard deviations and brackets [-] document the 95\% credible intervals of the breakpoints that are computed from the Bayesian model given in Section \ref{sec:breakuncertainty}.  A cell filled with '---' indicates that the parameter does not vary over the related period. The posterior probability of the selective segmentation model amounts to 99.6\%.}
\scalebox{0.6}{
\begin{tabular}{lcccccccccccccc}
\toprule
Period & 	Int. & AR1 & 	PMKT & 	SMB & 	TERM & 	DEF & 	PTFSBD & 	PTFSFX & 	PTFSCOM & UMD & PTFSIR & PTFSSTK & CPI & NAREIT\\ 
\midrule
 & \multicolumn{14}{c}{\textbf{Standard linear risk model}} \\														
  \cmidrule(lr){2-15}														
03-1994 to 03-2016 & 	0.12 & 	0.24 & 	0.02 & 	-0.03 & 	-0.99 & 	-3.60 & 	-0.00 & 	-0.01 & 	0.01 & 	-0.00 & 	-0.01 & 	0.01 & 	1.04 & 	0.03 \\ 
 & 	(0.08) & 	(0.04) & 	(0.02) & 	(0.02) & 	(0.33) & 	(0.46) & 	(0.00) & 	(0.00) & 	(0.00) & 	(0.01) & 	(0.00) & 	(0.00) & 	(0.24) & 	(0.02) \\[0.2cm] 
 & \multicolumn{14}{c}{\textbf{CP-YZ risk model}} \\														
  \cmidrule(lr){2-15}														
03-1994 to 09-1997 & 	0.57 & 	0.43 & 	0.03 & 	-0.05 & 	0.23 & 	0.63 & 	-0.01 & 	-0.00 & 	0.00 & 	0.00 & 	-0.00 & 	0.00 & 	-1.12 & 	0.09 \\ 
 & 	(0.22) & 	(0.13) & 	(0.03) & 	(0.03) & 	(0.68) & 	(2.13) & 	(0.01) & 	(0.00) & 	(0.01) & 	(0.04) & 	(0.01) & 	(0.01) & 	(0.79) & 	(0.03) \\ 
10-1997 to 03-1999 & 	-0.21 & 	0.24 & 	0.13 & 	-0.18 & 	-2.09 & 	-14.47 & 	0.04 & 	-0.03 & 	-0.04 & 	0.18 & 	-0.03 & 	0.02 & 	-0.24 & 	0.08 \\ 
 & 	(0.30) & 	(0.07) & 	(0.06) & 	(0.10) & 	(2.79) & 	(4.18) & 	(0.01) & 	(0.01) & 	(0.02) & 	(0.06) & 	(0.01) & 	(0.01) & 	(1.13) & 	(0.07) \\ 
04-1999 to 08-2000 & 	1.99 & 	-1.68 & 	0.15 & 	0.10 & 	1.79 & 	-1.31 & 	-0.00 & 	0.03 & 	-0.01 & 	-0.08 & 	0.00 & 	-0.04 & 	-3.13 & 	0.04 \\ 
 & 	(0.90) & 	(0.90) & 	(0.06) & 	(0.05) & 	(1.66) & 	(1.84) & 	(0.02) & 	(0.02) & 	(0.02) & 	(0.06) & 	(0.02) & 	(0.03) & 	(1.95) & 	(0.07) \\ 
09-2000 to 05-2007 & 	0.39 & 	0.29 & 	-0.01 & 	0.02 & 	-0.55 & 	-2.41 & 	-0.00 & 	0.01 & 	0.00 & 	-0.02 & 	0.00 & 	0.01 & 	0.25 & 	-0.03 \\ 
 & 	(0.09) & 	(0.08) & 	(0.02) & 	(0.02) & 	(0.37) & 	(0.72) & 	(0.00) & 	(0.00) & 	(0.00) & 	(0.01) & 	(0.00) & 	(0.01) & 	(0.20) & 	(0.02) \\ 
06-2007 to 10-2008 & 	-1.89 & 	-0.79 & 	0.49 & 	0.86 & 	-17.65 & 	-9.84 & 	-0.00 & 	-0.10 & 	0.16 & 	-0.10 & 	-0.01 & 	0.00 & 	3.14 & 	-0.66 \\ 
 & 	(0.33) & 	(0.20) & 	(0.10) & 	(0.20) & 	(3.48) & 	(1.94) & 	(0.02) & 	(0.01) & 	(0.02) & 	(0.06) & 	(0.00) & 	(0.02) & 	(0.56) & 	(0.10) \\ 
11-2008 to 08-2010 & 	0.99 & 	0.28 & 	0.13 & 	0.16 & 	-0.10 & 	-0.14 & 	0.02 & 	-0.04 & 	0.05 & 	-0.01 & 	-0.03 & 	0.05 & 	3.97 & 	-0.10 \\ 
 & 	(0.21) & 	(0.06) & 	(0.05) & 	(0.07) & 	(0.89) & 	(0.70) & 	(0.01) & 	(0.01) & 	(0.02) & 	(0.02) & 	(0.01) & 	(0.01) & 	(0.73) & 	(0.04) \\ 
09-2010 to 03-2016 & 	0.23 & 	0.36 & 	0.05 & 	-0.05 & 	-0.24 & 	-1.52 & 	-0.01 & 	-0.00 & 	0.00 & 	0.01 & 	0.01 & 	-0.00 & 	0.22 & 	-0.01 \\ 
 & 	(0.11) & 	(0.11) & 	(0.03) & 	(0.03) & 	(0.48) & 	(0.71) & 	(0.01) & 	(0.00) & 	(0.00) & 	(0.02) & 	(0.00) & 	(0.00) & 	(0.31) & 	(0.02) \\ 
 & \multicolumn{14}{c}{\textbf{Selective segmentation risk model (100\%) }} \\														
  \cmidrule(lr){2-15}														
03-1994 to 09-1997 & 	0.23 & 	0.45 & 	0.02 & 	0.02 & 	-0.24 & 	-1.35 & 	-0.00 & 	-0.01 & 	0.00 & 	0.12 & 	-0.00 & 	0.01 & 	0.17 & 	0.07 \\ 
~[10-1996~~09-1997] & 	(0.03) & 	(0.07) & 	(0.01) & 	(0.01) & 	(0.25) & 	(1.62) & 	(0.00) & 	(0.00) & 	(0.00) & 	(0.03) & 	(0.01) & 	(0.00) & 	(0.22) & 	(0.02) \\ 
10-1997 to 03-1999 & 	 ---  & 	 ---  & 	 ---  & 	 ---  & 	 ---  & 	-10.56 & 	 ---  & 	 ---  & 	 ---  & 	 ---  & 	-0.03 & 	 ---  & 	 ---  & 	 ---  \\ 
~[01-1999~~06-1999] & 	 ---  & 	 ---  & 	 ---  & 	 ---  & 	 ---  & 	(0.97) & 	 ---  & 	 ---  & 	 ---  & 	 ---  & 	(0.01) & 	 ---  & 	 ---  & 	 ---  \\ 
04-1999 to 08-2000 & 	 ---  & 	 ---  & 	 ---  & 	 ---  & 	 ---  & 	-1.18 & 	 ---  & 	0.01 & 	 ---  & 	-0.01 & 	0.00 & 	 ---  & 	 ---  & 	 ---  \\ 
~[07-2000~~05-2003] & 	 ---  & 	 ---  & 	 ---  & 	 ---  & 	 ---  & 	(0.34) & 	 ---  & 	(0.00) & 	 ---  & 	(0.01) & 	(0.00) & 	 ---  & 	 ---  & 	 ---  \\ 
09-2000 to 05-2007 & 	 ---  & 	 ---  & 	 ---  & 	 ---  & 	 ---  & 	 ---  & 	 ---  & 	 ---  & 	 ---  & 	 ---  & 	 ---  & 	 ---  & 	 ---  & 	-0.03 \\ 
~[04-2007~~06-2007] & 	 ---  & 	 ---  & 	 ---  & 	 ---  & 	 ---  & 	 ---  & 	 ---  & 	 ---  & 	 ---  & 	 ---  & 	 ---  & 	 ---  & 	 ---  & 	(0.02) \\ 
06-2007 to 10-2008 & 	 ---  & 	-1.35 & 	0.08 & 	-0.05 & 	 ---  & 	 ---  & 	-0.09 & 	-0.06 & 	0.10 & 	-0.34 & 	 ---  & 	-0.08 & 	2.69 & 	-0.16 \\ 
~[10-2008~~10-2008] & 	 ---  & 	(0.24) & 	(0.02) & 	(0.03) & 	 ---  & 	 ---  & 	(0.02) & 	(0.01) & 	(0.01) & 	(0.04) & 	 ---  & 	(0.01) & 	(0.40) & 	(0.03) \\ 
11-2008 to 08-2010 & 	 ---  & 	0.63 & 	 ---  & 	 ---  & 	 ---  & 	 ---  & 	-0.00 & 	-0.01 & 	0.00 & 	-0.00 & 	 ---  & 	0.00 & 	 ---  & 	-0.01 \\ 
~[06-2010~~11-2010] & 	 ---  & 	(0.14) & 	 ---  & 	 ---  & 	 ---  & 	 ---  & 	(0.01) & 	(0.00) & 	(0.00) & 	(0.01) & 	 ---  & 	(0.00) & 	 ---  & 	(0.02) \\ 
09-2010 to 03-2016 & 	 ---  & 	0.19 & 	 ---  & 	 ---  & 	 ---  & 	 ---  & 	 ---  & 	 ---  & 	 ---  & 	 ---  & 	 ---  & 	 ---  & 	0.21 & 	 ---  \\ 
 & 	 ---  & 	(0.09) & 	 ---  & 	 ---  & 	 ---  & 	 ---  & 	 ---  & 	 ---  & 	 ---  & 	 ---  & 	 ---  & 	 ---  & 	(0.35) & 	 ---  \\ 
\bottomrule
\end{tabular}}
\label{tab:OLS_SELO10}
\end{table}

%\cite{meligkotsidou2008detecting} look at CP models to see how the risk exposition evolve over time. They identify many breaks in the HF returns. Interestingly, they investigate which combinations of the seven factors give the best marginal likelihood, taking into account breaks (see Appendix \ref{App:CPJBF} for more information).\\

\clearpage

\subsection{Comparison with advanced CP models \label{sec:CPJBF}}
We now compare our results with those of \cite{meligkotsidou2008detecting}. \cite{meligkotsidou2008detecting} rely on CP models to capture the risk exposition of HF returns over time. In particular, they consider the $4096$ distinct combinations of the twelve risk factors and for each of them, they estimate a CP model exhibiting several numbers of segments $m$ (from one to ten). Eventually, they use the marginal likelihood to select the best model among the set of $m \times 2^K$ estimated processes (i.e., 40960 models since $m=10$ and $K=12$). Their approach consists therefore in first selecting the relevant factors and then, in investigating if the exposition to them is time-varying. \\
There is a striking difference with our approach since, for each breakpoint, our procedure can detect what are the time-varying factors. In fact, our approach discriminates between $2^{m \times K}$ models; a number of models that exponentially increases with the amount of breaks. Note that we could also search for the best regressors to include by considering all the 4096 distinct combinations of the twelve factors. In such a case, the number of models to consider would reach $2^{(m+1) \times K}$.\\ %Even though this is practically feasible, we prefer to stick to our initial framework \\
We reproduce the results of \cite{meligkotsidou2008detecting} on our data by additionally taking the autocorrelation structure into account. Fixing the AR order $q$ to the value given in Table \ref{JBF::order}, for each possible combination of the factors, we estimate CP-ARX(q) models with different numbers of breaks (ranging from 1 to 10) by (globally) minimizing the MDL criterion. Then, we report the combination of factors exhibiting the best MDL value. Hereafter, we denote this model by CP-MV.\footnote{Our approach is slightly different as the one used in \cite{meligkotsidou2008detecting} since we minimize the MDL criterion instead of maximizing the marginal likelihood of a Bayesian CP model for finding the best combination of the factors and the breakpoints. This is motivated by the fact that the MDL criterion consistently selects the true number of regimes while there is no equivalent proof for the marginal likelihood used in \cite{meligkotsidou2008detecting}. In addition, \cite{Dufays2019} show that the MDL criterion is equal to minus the marginal log-likelihood of a CP Bayesian model with particular g-prior distributions. So, our approach can be understood as the method of \cite{meligkotsidou2008detecting} with different hyper-parameters. We globally minimize the MDL criterion using the dynamic programming of \cite{bai2003computation}.} \\
Table \ref{JBF::fact} documents the factors of the CP-MV model for the two HF strategies. It also reports the factors which exhibit significant parameter estimates at least at one period for the TVP model and the selective segmentation approach. For the HFI, the selected factors by the selective segmentation methods seem more complete than those of the TVP model in light of the current literature. While the TVP selects eight risk factors, the selective segmentation reports the same factors, except the SMB factor, and adds two trend following risk factors, PTFSBD and PTFSSTK, and the real estate risk factor, NAREIT. For the FIA, the results are much more contrasted. All risk factors are selected by the selective segmentation methods whereas five factors are omitted by the TVP (SMB, TERM, PTFSFX, PTFSCOM  and UMD). The CP-MV approach does not select SMB, NAREIT and four trend following risk factors (PTFSBD, PTFSCOM, PTFSSIR and PTFSSTK) for the HFI. The analysis of the single strategy, Fixed Income Arbitrage (FIA), also highlights important and significant differences as far as only two factors are selected, DEF and PTFSFX, by the CP-MV process. It may be very surprising since the market premium is (almost) always used in linear asset pricing models as pointed out by \cite{fung2001risk}. We may also note that the look-back straddles on commodities, PTFSCOM, on bond, PTFSBD and on stock index and PTFSSTK designed to capture non-linearities especially during changes in international economic policies, are not selected whereas the phenomenon is observed just after the GFC.

%%% Updated 25 May 2020
\begin{table}[h!]
\centering
\singlespacing
\caption{\justifying \textbf{HFI and FIA strategies: Selected factors given several time-varying parameter models.}\\ 
Selected factors by the TVP, the selective segmentation process and the CP-MV model of \cite{meligkotsidou2008detecting}. The factors of the latter process are chosen by minimizing the MDL criterion while for the TVP and the selective segmentation model, a factor is selected if its related parameter estimate is significant at least at one period over the sample.}
\scalebox{0.85}{
\begin{tabular}{l cccccc}
\toprule
        & \multicolumn{3}{c}{ HFI }  & \multicolumn{3}{c}{ FIA } \\
        & TVP & Sel. Seg. & CP-MV & TVP & Sel. Seg. & CP-MV\\
				\cmidrule(lr){2-4}	\cmidrule(lr){5-7}
PMKT & 	 $\surd$  &  	 $\surd$  &  	 $\surd$  &  	 $\surd$  &  	 $\surd$  &  	          \\ 
SMB & 	 $\surd$  &  	          &  	          &  	          &  	 $\surd$  &  	          \\ 
TERM & 	 $\surd$  &  	 $\surd$  &  	 $\surd$  &  	          &  	 $\surd$  &  	          \\ 
DEF & 	 $\surd$  &  	 $\surd$  &  	 $\surd$  &  	 $\surd$  &  	 $\surd$  &  	 $\surd$  \\ 
PTFSBD & 	          &  	 $\surd$  &  	          &  	 $\surd$  &  	 $\surd$  &  	          \\ 
PTFSFX & 	 $\surd$  &  	 $\surd$  &  	 $\surd$  &  	          &  	 $\surd$  &  	 $\surd$  \\ 
PTFSCOM & 	          &  	          &  	          &  	          &  	 $\surd$  &  	          \\ 
UMD & 	 $\surd$  &  	 $\surd$  &  	 $\surd$  &  	          &  	 $\surd$  &  	          \\ 
PTFSIR & 	 $\surd$  &  	 $\surd$  &  	          &  	 $\surd$  &  	 $\surd$  &  	          \\ 
PTFSSTK & 	          &  	 $\surd$  &  	          &  	 $\surd$  &  	 $\surd$  &  	          \\ 
CPI & 	 $\surd$  &  	 $\surd$  &  	 $\surd$  &  	 $\surd$  &  	 $\surd$  &  	          \\ 
NAREIT & 	          &  	 $\surd$  &  	          &  	 $\surd$  &  	 $\surd$  &  	          \\ 
\bottomrule
\end{tabular}}
\label{JBF::fact} 
\end{table}

%%%%%%%%%%%%%%%%%%%%%%%%%%%%%
Table \ref{JBF::fact} does not inform on the dynamic of the selected factors by the CP-MV process. Although the preferred specification of the CP-MV model does not include all the factors, the risk exposure of the HF strategies is still abruptly changing over time. Regarding the HFI, two breakpoints are detected and occur in April 2000 and in March 2005, respectively. Table \ref{tab:OLS_SELO1:CPMV} shows how the selective segmentation method improves the interpretation of the CP-MV results. It also illustrates the improvement in economic modelling as far as it highlights the relative role played by static and dynamic parameters. First, we observe that the alpha, the currency lookback straddle, PTFSFX, and the consumer prince index, CPI, are quite static during the full period. Interestingly, the selective segmentation also reports the CP in the bullish market in the early 2000s. As mentioned earlier, the financial markets were indeed very volatile during the 1990s. This break is clearly reported by the dynamic risk factors, PMKT, UMD and DEF. PMKT declines from 0.41 to 0.21 during the 2000s and afterwards. The trend is more striking for the momentum factor, UMD, with a decline from 0.18 to 0.05, and for the credit default risk factor, DEF, rising from -11.71 to -2.21. The term structure risk factor, TERM, is not statistically significant after the rise of the housing price index in 2005-2006, announcing the collapse of the financial markets in 2008 (as anticipated by \cite{Shiller-PUP-2015} among others) and the following quantitative easing policies with very low inters rates.

The FIA strategy exhibits seven regimes which makes the CP-MV model heavily parametrized (i.e., $K \times m = 35 $ parameters). This large number of regimes is probably related to the fact that more breakpoints are needed to adequately fit the FIA returns since the CP-MV specification includes only two risk factors, DEF and PTFSFX and/or because in this specific economic modelling the breakpoints also capture the variance dynamic. Using the selected factors and the breakpoints of the best CP-MV specification, we estimate the selective segmentation model to uncover what are the static and the dynamic parameters. First, we must acknowledge that the best specification selected by the CP-MV model is doubtful with regards to the financial literature and the practice. Nevertheless, we compare the CP-MV approach with the selected segmentation to highlight the contribution of the latter. In particular, we observe that the 'alpha' is varying and statistically positive during the 1990s until LTCM collapse. As expected, the default risk factor is negative, time-varying and high during crises (-9.49 during the LTCM collapse and -6.17 during the GFC). After the GFC, the default factor is constant, negative and not statistically significant. This result is consistent with the trend observed on financial markets (especially on fixed incomes markets after the GFC). The currency trend following factor, PTFSFX, is very low, time-varying and statistically significant during the global financial crisis (as expected) and before the impact of the quantitative easing policies starting in the late 2010. After this date (11/2010) the factor is not statistically significant. This is an illustration of the impact of quantitative easing on fixed income arbitrage. 

As illustrated by these empirical results, our method uncovers which parameters truly vary when a CP is detected. This technical improvement induces financial consequences and especially cost reductions.\footnote{We thank an anonymous referee for this relevant comment.} Moreover, our method should imply more accurate and thus less frequent portfolio rebalancing strategies. Investor could indeed change his timing by using our approach and decide to rebalance parsimoniously (and thus efficiently) his investment when a break is detected, with a special focus on the relevant benchmark. 

%%%% Update May 25th 2020
\begin{table}[h!]
\centering
\singlespacing
\caption{\justifying \textbf{Hedge Fund Index: Best CP-MV model and best selective segmentation model.}\\ 
The Table details the parameter estimates of the preferred CP-MV model and of the selective segmentation process given the selected factors and the breakpoints found by the CP-MV model. Parentheses indicate standard deviations.  A cell filled with '---' indicates that the parameter does not vary over the related period. The posterior probability of the selective segmentation model amounts to 77\%.}
\scalebox{0.65}{
\begin{tabular}{lccccccc||ccccccc}
\toprule					
  & \multicolumn{7}{c}{\textbf{Preferred CP-MV model}} &								  \multicolumn{7}{c}{\textbf{Selective segmentation (77\%) }} \\						
  \cmidrule(lr){2-8}								  \cmidrule(lr){9-15}						
Period & 	Int. & 		PMKT & 	TERM & 	DEF & 	PTFSFX & 	UMD & 	CPI & 		 	Int. & 		PMKT & 	TERM & 	DEF & 	PTFSFX & 	UMD & 	CPI \\ 
03-1994 to 04-2000 & 	0.42 & 	0.41 & 	-2.47 & 	-12.13 & 	0.02 & 	0.18 & 	0.52 & 	0.22 & 	0.41 & 	-2.44 & 	-11.71 & 	0.01 & 	0.18 & 	1.17 \\ 
 & 	(0.38) & 	(0.05) & 	(1.28) & 	(2.62) & 	(0.01) & 	(0.06) & 	(1.63) & 	(0.09) & 	(0.03) & 	(0.52) & 	(1.44) & 	(0.00) & 	(0.04) & 	(0.29) \\ 
05-2000 to 03-2005 & 	0.44 & 	0.19 & 	-2.20 & 	-2.57 & 	0.02 & 	0.05 & 	0.64 & 	 ---  & 	0.21 & 	 ---  & 	-2.21 & 	 ---  & 	0.05 & 	 ---  \\ 
 & 	(0.34) & 	(0.06) & 	(1.21) & 	(2.21) & 	(0.01) & 	(0.04) & 	(1.11) & 	 ---  & 	(0.02) & 	 ---  & 	(0.54) & 	 ---  & 	(0.02) & 	 ---  \\ 
04-2005 to 03-2016 & 	0.11 & 	0.23 & 	0.28 & 	-1.82 & 	0.01 & 	0.05 & 	1.25 & 	 ---  & 	 ---  & 	0.29 & 	 ---  & 	 ---  & 	 ---  & 	 ---  \\ 
 & 	(0.18) & 	(0.04) & 	(0.85) & 	(0.91) & 	(0.01) & 	(0.03) & 	(0.51) & 	 ---  & 	 ---  & 	(0.55) & 	 ---  & 	 ---  & 	 ---  & 	 ---  \\ 
\bottomrule
\end{tabular}}
\label{tab:OLS_SELO1:CPMV}
\end{table}

%%%% Update May 25th 2020
\begin{table}[h!]
\centering
\singlespacing
\caption{\justifying \textbf{Fixed Income Arbitrage: Best CP-MV model and best selective segmentation model.}\\ 
The Table details the parameter estimates of the preferred CP-MV model and of the selective segmentation process given the selected factors and the breakpoints found by the CP-MV model. Parentheses indicate standard deviations.  A cell filled with '---' indicates that the parameter does not vary over the related period. The posterior probability of the selective segmentation model amounts to 52\%.}
\scalebox{0.7}{
\begin{tabular}{lcccc||cccc}
\toprule
 & \multicolumn{4}{c}{\textbf{Preferred CP-MV model}} &					  \multicolumn{4}{c}{\textbf{Selective segmentation (52\%) }} \\			
  \cmidrule(lr){2-5}					  \cmidrule(lr){6-9}			
Period & 	Int. & 	AR1 & 	DEF & 	PTFSFX & 			 	Int. & 	AR1 &	DEF & 	PTFSFX \\ 	
03-1994 to 05-1995 & 	0.23 & 	0.40 & 	7.57 & 	-0.03 & 	0.55 & 	0.36 & 	-0.14 & 	-0.01 \\ 
 & 	(0.23) & 	(0.19) & 	(3.14) & 	(0.01) & 	(0.13) & 	(0.06) & 	(2.04) & 	(0.01) \\ 
06-1995 to 08-1997 & 	-0.13 & 	1.28 & 	-0.68 & 	0.00 & 	 ---  & 	 ---  & 	 ---  & 	 ---  \\ 
 & 	(0.44) & 	(0.53) & 	(2.57) & 	(0.01) & 	 ---  & 	 ---  & 	 ---  & 	 ---  \\ 
09-1997 to 11-1998 & 	0.10 & 	-0.23 & 	-7.52 & 	-0.07 & 	0.20 & 	 ---  & 	-9.49 & 	-0.05 \\ 
 & 	(0.08) & 	(0.22) & 	(1.47) & 	(0.02) & 	(0.04) & 	 ---  & 	(1.52) & 	(0.02) \\ 
12-1998 to 02-2008 & 	0.29 & 	0.38 & 	-1.46 & 	0.00 & 	 ---  & 	 ---  & 	-1.52 & 	0.00 \\ 
 & 	(0.09) & 	(0.08) & 	(0.48) & 	(0.00) & 	 ---  & 	 ---  & 	(0.59) & 	(0.00) \\ 
03-2008 to 05-2009 & 	0.04 & 	-0.41 & 	-6.97 & 	-0.03 & 	 ---  & 	 ---  & 	-6.17 & 	-0.04 \\ 
 & 	(0.05) & 	(0.20) & 	(0.51) & 	(0.01) & 	 ---  & 	 ---  & 	(0.54) & 	(0.01) \\ 
06-2009 to 10-2010 & 	0.06 & 	1.23 & 	0.09 & 	-0.05 & 	 ---  & 	 ---  & 	-0.99 & 	 ---  \\ 
 & 	(0.21) & 	(0.35) & 	(0.94) & 	(0.01) & 	 ---  & 	 ---  & 	(0.56) & 	 ---  \\ 
11-2010 to 03-2016 & 	0.28 & 	0.24 & 	-1.91 & 	-0.01 & 	 ---  & 	 ---  & 	 ---  & 	-0.01 \\ 
 & 	(0.16) & 	(0.10) & 	(0.73) & 	(0.00) & 	 ---  & 	 ---  & 	 ---  & 	(0.01) \\ 
\bottomrule
\end{tabular}}
\label{tab:OLS_SELO10:CPMV}
\end{table}

\subsection{Out-of-sample}\label{sec:oos}
Sections \ref{sec:in} and \ref{sec:CPJBF} highlight the in-sample advantages of detecting which parameter truly varies when a break is detected. In addition to that, since the selective segmentation method can more accurately estimate parameters that do not change when a break occurs, we could also expect some prediction gains with respect to the standard CP model. In this Section, we investigate this aspect using the root mean squared forecast errors (RMSFE) and the cumulative log predictive density (CLPD), two standard loss functions specified as,
\begin{eqnarray*}
\text{RMSFE} & = & \sqrt{\frac{1}{T-\underline{t}} \sum_{t=\underline{t}+1}^{T}(y_t-\hat{y}_t)^2}, \text{ and } \text{CLPD}  = \sum_{t=\underline{t}+1}^{T} \log f(y_{t}|y_{1:t-1},\bx_t), 
\end{eqnarray*}
in which $\hat{y}_t$ is the conditional mean of $y_t$ given the information up to period $t$ (i.e., $\mathbb{E}(y_t|y_{1:t-1},\bx_t)$), $f(y_{t}|y_{1:t-1},\bx_t)$ denotes the predictive density of the model and $\underline{t}+1$ denotes the beginning of the out-of-sample forecasting period. In our prediction exercise, the training set is fixed to 20\% of the sample size and the 80\% remaining observations are used to assess the forecast performance (i.e., $\underline{t} = 0.2T$). Since our data comprise 265 monthly returns, the out-of-sample set of observations amounts to 212 months. Each time we move forward by one month, all the considered models are re-estimated and a forecast for the next period is produced.\\ %In total, for each model we carry out 212 estimations per HF strategy.\\
As competitors to our model, we consider three other processes: i) a linear regression, ii) a standard CP model with breakpoints determined by the modified method of \cite{yau2016inference} documented in Section \ref{sec:breakYau} (hereafter CP-YZ), iii) a CP model with the number and the locations of the breakpoints selected by minimizing the MDL criterion (hereafter CP-MDL).\footnote{We do not compare with the CP model of \cite{meligkotsidou2008detecting} since the model is computationally too involved due to the number of explanatory variables. When an AR(2) model is selected, the number of models to consider at each iteration of the prediction exercise amounts to $10 \times 2^{15} = 327680$.} The minimization of the MDL criterion is carried out using the dynamic programming of \cite{bai2003computation}.\footnote{See \cite{eckley2011analysis} for a discussion on the implementation of the algorithm for the MDL criterion. Minimum regime duration is set to $\frac{3}{2}(K+1)$ to avoid capturing outliers. This choice is in favor of the standard CP model as the parameter estimates of the new regimes are based on at least $\frac{3}{2}(K+1)$ observations.} In addition to the factors and an intercept, we also account for the autocorrelation of the HF returns by fixing the AR order to the value given in Table \ref{JBF::order}. Regarding the CLPD metric, we assume a normal distribution for the error term and we also use the prior distributions given in Equation \eqref{eq:prior} for the linear and the full CP models.

Table \ref{tab::RMSFE_CLPD} documents the RMSFE and the CLPD criteria for all the Credit Suisse HF returns. For both metrics, we observe that the linear model dominates at least half of the times. Overall, the selective segmentation method improves the RMSFE and the CLPD for 6 and 5 out of 14 HF returns, respectively. Importantly, Table \ref{tab::RMSFE_CLPD} highlights that the selective segmentation process provides the most robust predictions. In particular, our approach delivers at least the second best predictive performance for all the HF returns. This is evidence that model averaging stabilizes the forecast by reducing its variance as argued in \cite{RapachEtAl2009}. Interestingly, our method compares extremely well with respect to the two CP models since it outperforms them 13 out of 14 HF returns for both metrics. Since the CP models are based on the same breakpoints as the selective segmentation processes, it is remarkable that the latter models almost systematically dominate CP models where all the parameters are time-varying. From this small sample of series, we could argue that the selective segmentation approach should replace the CP process as it would likely improve the forecast performance.

\begin{table}[htbp]
\small
\centering
	\caption{\justifying \textbf{RMSFE and CLPD for the fourteen HF strategies ($\underline{t} = 0.2T$).} \\
	The Table details the RMSFE and the CLPD for five processes. Bold values indicate the model that delivers the best prediction performance. A star points out the second best model.}
\begin{tabular}{lccccccc}
\toprule
       & \multicolumn{7}{c}{\textbf{RMSFE}}  \\[0.2cm]
 Series & HFI & 	CNV & 	DSB & 	EME & 	EMN & 	EDR & 	EDD \\ 
\cmidrule(lr){2-8}
Linear & 	 1.41  & 	\textbf{1.63} & 	\textbf{2.80} & 	 2.99  & 	\textbf{2.95} & 	\textbf{1.40} & 	 1.54* \\ 
CP-MDL & 	 1.41  & 	 1.92  & 	 2.95  & 	\textbf{2.88} & 	 4.12  & 	 1.62  & 	 1.84  \\ 
SELO-MDL & 	 1.30* & 	 1.84* & 	 2.88* & 	 2.91* & 	 3.95* & 	 1.52  & 	\textbf{1.53} \\ 
CP-Yau & 	 1.70  & 	 2.63  & 	 4.03  & 	 3.90  & 	 6.06  & 	 3.24  & 	 2.89  \\ 
SELO-Yau & 	\textbf{1.28} & 	 2.07  & 	 3.08  & 	 2.93  & 	 5.26  & 	 1.50* & 	 1.55  \\ [0.3cm] 
 Series & EDM & 	EDRA & 	FIA & 	GMA & 	LES & 	MFU & 	MUS \\ 
		\cmidrule(lr){2-8}		
Linear & 	\textbf{1.56} & 	 1.06* & 	\textbf{1.22} & 	 2.57  & 	 1.52  & 	 3.31* & 	\textbf{1.21} \\ 
CP-MDL & 	 1.72  & 	 1.12  & 	 1.36  & 	 2.57  & 	 1.64  & 	 3.50  & 	 1.40  \\ 
SELO-MDL & 	 1.61  & 	 1.07  & 	 1.33* & 	\textbf{2.39} & 	 1.49* & 	\textbf{3.31} & 	 1.30* \\ 
CP-Yau & 	 2.29  & 	 2.39  & 	 5.66  & 	 3.23  & 	 2.69  & 	 4.28  & 	 1.83  \\ 
SELO-Yau & 	 1.60* & 	\textbf{1.06} & 	 1.57  & 	 2.45* & 	\textbf{1.47} & 	 3.31  & 	 1.40  \\   [0.3cm] 
\midrule
       & \multicolumn{7}{c}{\textbf{CLPD}}  \\[0.2cm]
 Series & HFI & 	CNV & 	DSB & 	EME & 	EMN & 	EDR & 	EDD \\ 
\cmidrule(lr){2-8}  
Linear & 	 -365.63  & 	\textbf{-401.02} & 	\textbf{-513.39} & 	 -527.76  & 	\textbf{-547.17} & 	\textbf{-356.96} & 	 -360.72  \\ 
CP-MDL & 	 -359.08  & 	 -444.16  & 	 -526.48  & 	\textbf{-515.71} & 	 -713.12  & 	 -433.56  & 	 -389.30  \\ 
SELO-MDL & 	\textbf{-347.80} & 	 -434.44  & 	 -519.29* & 	 -516.35* & 	 -688.23* & 	 -402.20  & 	 -353.99* \\ 
CP-Yau & 	 -366.11  & 	 -483.20  & 	 -570.98  & 	 -565.77  & 	 -916.15  & 	 -482.06  & 	 -439.75  \\ 
SELO-Yau & 	 -347.83* & 	 -434.19* & 	 -529.15  & 	 -519.49  & 	 -932.26  & 	 -395.13* & 	\textbf{-346.94} \\ [0.3cm] 
 Series & EDM & 	EDRA & 	FIA & 	GMA & 	LES & 	MFU & 	MUS \\  
		\cmidrule(lr){2-8}		
Linear & 	\textbf{-375.00} & 	 -301.93* & 	\textbf{-329.73} & 	 -490.10  & 	 -382.79  & 	\textbf{-553.71} & 	\textbf{-330.48} \\ 
CP-MDL & 	 -445.12  & 	 -307.96  & 	 -377.46  & 	 -484.06  & 	 -387.78  & 	 -564.47  & 	 -343.62  \\ 
SELO-MDL & 	 -389.83  & 	 -303.77  & 	 -352.53* & 	\textbf{-467.38} & 	 -369.13* & 	 -555.02  & 	 -339.35  \\ 
CP-Yau & 	 -460.95  & 	 -426.12  & 	 -472.99  & 	 -492.14  & 	 -464.27  & 	 -606.26  & 	 -368.72  \\ 
SELO-Yau & 	 -384.86* & 	\textbf{-301.57} & 	 -380.37  & 	 -472.48* & 	\textbf{-364.02} & 	 -554.74* & 	 -337.58* \\ 
\bottomrule
\end{tabular}
\label{tab::RMSFE_CLPD}
\end{table}

%\begin{table}[htbp]
%\small
%\centering
	%\caption{\justifying RMSFE 80\%: bold values indicate the best model. CP stands for the CP model and Sel. Seg. denote the selective segmentation process.}
%\begin{tabular}{lccccccc}
%\toprule
%Strat. & 	HFI & 	CNV & 	DSB & 	EME & 	EMN & 	EDR & 	EDD \\
 %Linear & 	0.92 & 	0.92 & 	2.00 & 	1.61 & 	1.14 & 	1.04 & 	0.93 \\ 
 %CP & 	0.78 & 	1.14 & 	2.00 & 	1.47 & 	2.05 & 	1.04 & 	0.83 \\
%Sel. seg.  & 	0.83 & 1.09 & 2.11 & 1.49 & 2.01 & 1.07 & 0.84 \\
%Strat. & EDM & 	EDRA & 	FIA & 	GMA & 	LES & 	MFU & 	MUS \\
 %Linear & 	1.19 & 	0.83 & 	0.49 & 	1.26 & 	1.14 & 	2.89 & 	0.68 \\ 
 %CP & 	1.19 & 	0.83 & 	0.49 & 	1.21 & 	0.99 & 	2.89 & 	0.68 \\
%Sel. seg.  & 1.21 & 0.80 & 0.65 & 1.23 &  1.02 & 2.91 & 0.75	 \\
%\bottomrule
%\end{tabular}
%\label{tab::RMSFE2}
%\end{table}

%\insertfloat{Table~\ref{tab:TabExample}}

\section{Conclusion} \label{s:conclusion}
Since the seminal work of \cite{chernoff1964estimating}, many CP detection methods for linear models have been proposed. Most of these CP models have in common to assume, at least in practice, that all the model parameters have to change when a break is detected. In this paper, we propose to go beyond this standard framework by capturing which parameters vary when a structural break occurs. Even when conditioning to the break dates, detecting the parameters that vary from one segment to the next is not straightforward since the number of possibilities grows exponentially with the number of breaks and the number of explanatory variables. To solve this dimensional problem, we propose a penalized regression method to explore the model space and we select the best specification by maximizing a criterion that can be interpreted as a marginal likelihood in the Bayesian paradigm. \\
To carry out the model space exploration, we use an almost unbiased penalty function, a desirable property in CP frameworks that is not exhibited by standard penalty functions (e.g., LASSO and Ridge estimators). Also, we prove the consistency of our estimator and we show how to estimate it using the DAEM algorithm. To apply the DAEM algorithm in our context, we transform the penalty function into a mixture of Normal distributions. This simple transformation greatly speeds the estimation as the DAEM algorithm iterates over closed-form expressions.\\
Once the promising models have been uncovered by the penalized regression approach, selecting the parameters of the penalty function is carried out by maximizing a marginal likelihood. Thanks to the Bayesian interpretation of this consistent criterion, we can take model uncertainty into account and can do Bayesian model averaging, a feature that generally improves forecast performance. A simulation study highlights that our selective segmentation method works well in practice for a range of diversified data generating processes.\\ 
We illustrate our approach with HF returns. The selective segmentation model has two main advantages. First, as the standard CP models, it detects the breakpoints and the corresponding regimes. Second, it highlights the time-varying dynamics of the changing risk factors. When we compare our model with previous advanced CP models, we observe that it is particularly appealing to capture the time-varying dynamics of risk exposures. Then, we test the predictive performance of the selective segmentation approach with respect to the linear regression and standard CP processes. We note that our method produces the most robust forecasts and almost systematically dominates the CP processes based on the same breakpoints. These encouraging results suggest promising developments and applications in financial economics. \\
Importantly, an R package for estimating the model is available on the corresponding author's web page. This package stands for a building block of our future research that will include a dynamic variance and multivariate models.

%--- Acknowledgments
%\section*{Acknowledgments}
%The authors are grateful to seminar participants at Aix-Marseille School of Economics, at Maastricht University, at ShanghaiTech University and at Neuchatel University. They also thank participants for their helpful comments at the 11th CFE conference, at the 8th PhD Student Conference in International Macroeconomics and Financial Econometrics, at the 59e congr\`es de la Soci\'et\'e Canadienne de Science Economique, at R \`a Qu\'ebec and at the 53rd Annual Conference of the Canadian Economique Association.\\
%Arnaud Dufays acknowledges financial support from the F.S.R-FNRS via the contract CR 144. He also acknowledges financial supports from the Fonds de recherche du Qu\'{e}bec -- Soci\'{e}t\'{e} et culture and from the SSHRC-CRSH via the project 430-2017-00215. 

%\newpage
%
%\appendix 
%\numberwithin{equation}{section}
%\setcounter{equation}{0}
%\numberwithin{figure}{section}
%\setcounter{figure}{0}
%\numberwithin{table}{section}
%\setcounter{table}{0}

%--- Reference
%\nocite*
{\setstretch{0.9}
\bibliography{bibliography}
\bibliographystyle{ecta}}

\newpage
\appendix
\setcounter{page}{1}
\setcounter{equation}{0}
\setcounter{footnote}{0} 
\setcounter{section}{0} 
\setcounter{table}{0} 

\onehalfspacing
\begin{mytitlepage}
\title{\bf Supplementary Appendix for \\ 'Selective linear segmentation for detecting relevant parameter changes'}
\maketitle

\blfootnote{
%---------------
%Email addresses
%---------------
\emph{Email addresses}:
\texttt{arnaud.dufays@unamur.be} (Arnaud Dufays), \texttt{elysee-aristide.houndetoungan.1@ulaval.ca} (Aristide Houndetoungan), \texttt{coen.alain@uqam.ca} (Alain Co\"{e}n).
}

\begin{abstract}
\begin{spacing}{1.15} %At 1.5 spacing
%---------
% ABSTRACT
%---------
\noindent 
Section A is devoted to the proof of the consistency of our estimator (see Proposition 1 in the paper). We also discuss the theoretical consequences of imposing bounded eigenvalues and how to calibrate the spike and slab prior to mimic the SELO penalty function. Section B shows that the criterion used to select the best model stands as the marginal likelihood of a linear model with specific prior distributions. It also develops the posterior and the predictive distributions of the Bayesian model. Section C contains the proof of the consistency of the marginal likelihood criterion as well as its link with the Bayesian information criterion. The time-varying parameter (TVP) model used as an alternative process is specified in Section D. The section also provides how the TVP model is estimated and additional graphics of the parameter dynamics. Finally, Section E briefly discusses the Bayesian alternatives to the selective segmentation approach.  
%\vspace{6pt}
\end{spacing}
\vspace{1cm}

\noindent
{\bf Keywords:} change-point, structural change, time-varying parameter, model selection, Hedge funds.\\

\noindent {\bf JEL Classification:} C11, C12, C22, C32, C52, C53.
\thispagestyle{empty}	
	
\end{abstract}

\end{mytitlepage}

\newpage
\doublespacing

%section 1
\section{Proofs of the consistency of the Penalty function \label{App:theo1}}
In this appendix, we proof Proposition \ref{theoconsislimitdist}. To do so, we first state and prove two Lemmas.
\begin{lem}
\label{lemforcons1}
Under the conditions \ref{tauistruetau}-\ref{tauisO} and let,
\begin{align}
f_{T}(\bbeta) = \dfrac{1}{T}||\by- \bX_{\btau} \bbeta||_2^2 + \sum_{j=2}^m\sum_{k=1}^{K}\pen(\Delta\beta_{jk}|a_k,\lambda) 
\end{align}
Then of every $\nu \in (0,1)$, there exists a constant $C_0 > 0$ such that 
$$ \liminf_{T \rightarrow \infty} ~ \mathbf{P} \left[ \argmin_{||\bbeta- \bbeta^*||_2 \leq C\sqrt{\frac{Km\sigma^2}{T}}} f_{T}(\bbeta) \subseteq \left\{\bbeta  \in \Re^{Km\times 1}; ||\bbeta- \bbeta^*||_2 < C\sqrt{\frac{Km\sigma^2}{T}} \right\} \right] > 1-\nu$$
for all $C \geq C_0$.
\end{lem}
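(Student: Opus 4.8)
The plan is to use the perturbation argument of \cite{Fan2001}. Writing $\alpha_T = \sqrt{Km\sigma^2/T}$ and parametrizing boundary points as $\bbeta = \bbeta^* + \alpha_T \bu$ with $\|\bu\|_2 = C$, I would show that for $C$ large enough the increment $D_T(\bu) = f_T(\bbeta^* + \alpha_T\bu) - f_T(\bbeta^*)$ is strictly positive, uniformly over the sphere $\{\|\bu\|_2 = C\}$, with probability at least $1-\nu$. Since $\bbeta^*$ is interior to the ball and $f_T$ is continuous, positivity of $D_T$ on the entire boundary forces the minimizer over the closed ball to lie in its interior, which is precisely the claimed event.

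To bound $D_T$, I would invoke Assumption \ref{tauistruetau} (so that $\by = \bX_\btau\bbeta^* + \bepsilon$) and decompose the change in the least-squares part as $\alpha_T^2\,\bu'(T^{-1}\bX_\btau'\bX_\btau)\bu - 2(\alpha_T/T)\bepsilon'\bX_\btau\bu$. The quadratic term is bounded below by $r_0 C^2\alpha_T^2 = r_0 C^2 (Km\sigma^2/T)$ by Assumption \ref{Xfullrank}. For the cross term, set $\mathbf{W} = T^{-1/2}\bX_\btau'\bepsilon$, so that by Cauchy--Schwarz it is at most $2\alpha_T C\,T^{-1/2}\|\mathbf{W}\|_2$ in absolute value, uniformly in $\bu$. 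Assumption \ref{exogeneity} enters twice here: the martingale-difference property of $\{g_t\}=\{\bx_t\epsilon_t\}$ annihilates all cross terms in $\E\|\mathbf{W}\|_2^2$, while the finite second moments of $g_t$ bound each diagonal term, giving $\E\|\mathbf{W}\|_2^2 = \mathcal{O}(Km)$; Markov's inequality then yields $\|\mathbf{W}\|_2^2 \leq B\,Km$ on an event of probability at least $1-\nu$ (with a little slack to make the final bound strict), so the cross term is at most $2C\sqrt{B}\sigma\,(Km/T)$ there.

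The penalty increment I would split over the active set $A$ and its complement. For $(j,k)\in A^c$ we have $\Delta\beta_{jk}^* = 0$ and $\pen(0)=0$, so $\pen(\Delta\beta_{jk})\geq 0$ contributes nonnegatively and only helps. For $(j,k)\in A$ we have $|\Delta\beta_{jk}^*|\geq\rho$ by Assumption \ref{rhopositive}, and since the coordinate move is $\mathcal{O}(T^{-1/2})$ while $\rho\sqrt T\to\infty$, the perturbed coordinate stays of order $\rho$. The SELO derivative at argument $|\omega|$ equals $\tfrac{\lambda\zeta}{a_k\ln 2}\big[(2|\omega|/a_k+\zeta)(|\omega|/a_k+\zeta)\big]^{-1}$, which for $|\omega|\gtrsim\rho$ and $a_k=\mathcal{O}_p(T^{-3/2})$ (Assumption \ref{tauisO}) is $\mathcal{O}_p(a_k/\rho^2)=\mathcal{O}_p(T^{-3/2}\rho^{-2})$; multiplying by the $\mathcal{O}(T^{-1/2})$ coordinate move and summing over the finite set $A$ gives an active-set penalty change of order $\mathcal{O}_p(T^{-2}\rho^{-2}) = o_p(Km/T)$, once more by Assumption \ref{rhopositive}. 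Combining the three pieces, $D_T(\bu)\geq (Km/T)\big[r_0 C^2\sigma^2 - 2C\sqrt B\sigma - o_p(1)\big]$, which is positive once $C > 2\sqrt B/(r_0\sigma)$, establishing the claim.

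The main obstacle is the active-set penalty term: one must verify that the near-flatness of the SELO function keeps its contribution asymptotically negligible against the quadratic growth, and this is exactly where the rate conditions $\rho\sqrt T\to\infty$ and $a_k=\mathcal{O}_p(T^{-3/2})$ are indispensable, since together they force the derivative-times-displacement product on $A$ to be $o_p$ of the $\mathcal{O}(Km/T)$ quadratic scale. A secondary technical point is securing the second-moment bound on $\mathbf{W}$ uniformly over the sphere from the martingale-difference and finite-moment content of Assumption \ref{exogeneity}, rather than from an i.i.d.\ or homoskedasticity hypothesis.
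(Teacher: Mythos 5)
Your proposal is correct and follows essentially the same route as the paper's proof: the same boundary-perturbation argument on the sphere of radius $C\alpha_T$, the same three-way decomposition into a quadratic term bounded below via the eigenvalue condition, a cross term controlled by Cauchy--Schwarz together with the martingale-difference structure of $\bx_t\epsilon_t$, and a penalty increment whose only possibly negative contributions come from the active set and are killed by the SELO derivative bound of order $a_k/\rho^2$ combined with $a_k=\mathcal{O}_p(T^{-3/2})$ and $\rho\sqrt{T}\to\infty$. The only cosmetic differences are that the paper isolates the negative-increment indices as a set $\mathbf{D}(\bu)\subseteq A$ and invokes the mean value theorem explicitly, whereas you split over $A$ and $A^c$ and write out the derivative-times-displacement bound directly.
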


\begin{proof}
The proof is given in Appendix \ref{app:proof1}.
\end{proof}

\begin{lem}
\label{lemforcons2}
Let $C>0$ and $f_{T}$ as defined by Equation \eqref{ojecfun}. Under the conditions \ref{tauistruetau}-\ref{tauisO},  
$$\liminf_{T \rightarrow \infty} ~ \mathbf{P}  \left[ \argmin_{||\bbeta- \bbeta^*||_2 \leq C\sqrt{\frac{Km\sigma^2}{T}}} f_{T}(\bbeta) \subseteq \left\{\bbeta  \in \Re^{Km\times 1}; \bbeta_{A^c} = 0\right\} \right] = 1$$
where $A^c = \{(j,k), j=1,\dots,m \text{ and } k=0,\dots,K-1\} \backslash A$ is the complement of $A$ in $\{(j,k), j=1,\dots,m \text{ and } k=0,\dots,K-1\}$, $\bbeta_{A^c} \in \Re^{|A^c|\times 1}$ is the $|A^c|$-dimensional sub-vector of $\bbeta$ containing components subscripted by $A^c$.
\end{lem}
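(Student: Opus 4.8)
The plan is to prove Lemma \ref{lemforcons2} by a coordinate-wise contradiction argument. Let $\hat\bbeta$ be a minimizer of $f_T$ over the ball $\{||\bbeta-\bbeta^*||_2\le C\sqrt{Km\sigma^2/T}\}$, and suppose it has some nonzero component indexed by $(j,k)\in A^c$. Since $\Delta\beta^*_{jk}=0$ there, zeroing that single component keeps $\hat\bbeta$ inside the ball, so the resulting point $\bbeta^{(0)}$ is admissible; I will show $f_T(\bbeta^{(0)})<f_T(\hat\bbeta)$, contradicting minimality. Hence $\hat\bbeta_{A^c}=0$ with probability tending to one.

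First I would quantify the two competing effects of zeroing the $(j,k)$-th coordinate, writing $\omega=\hat{\Delta\beta}_{jk}$, where $|\omega|=O_p(T^{-1/2})$ by membership in the ball. Letting $\bX_{(j,k)}$ denote the associated column of $\bX_{\btau}$, the change in the least-squares term is $2\omega T^{-1}\bX_{(j,k)}'(\by-\bX_{\btau}\hat\bbeta)+\omega^2T^{-1}||\bX_{(j,k)}||_2^2$. Using the decomposition $T^{-1}\bX_{(j,k)}'(\by-\bX_{\btau}\hat\bbeta)=T^{-1}\bX_{(j,k)}'\bepsilon+T^{-1}\bX_{(j,k)}'\bX_{\btau}(\bbeta^*-\hat\bbeta)$, Assumption \ref{exogeneity} gives $T^{-1}\bX_{\btau}'\bepsilon=O_p(T^{-1/2})$ by the martingale CLT, while Assumption \ref{Xfullrank} gives $T^{-1}||\bX_{(j,k)}||_2^2=O_p(1)$ and $||T^{-1}\bX_{(j,k)}'\bX_{\btau}(\bbeta^*-\hat\bbeta)||\le R_0\,O_p(T^{-1/2})$. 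Thus the least-squares increase is bounded in absolute value by $B_1\omega^2+B_2|\omega|$ with $B_1=O_p(1)$ and $B_2=O_p(T^{-1/2})$, uniformly over the ball.

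Second I would exploit the shape of the penalty. Its derivative is $\pen'(\omega|a_k,\lambda)=\frac{\lambda\zeta}{a_k\ln2}\,[(2\omega/a_k+\zeta)(\omega/a_k+\zeta)]^{-1}$ for $\omega>0$, so $\pen'(0^+|a_k,\lambda)=\lambda/(a_k\zeta\ln2)=O_p(T^{3/2})$ under Assumption \ref{tauisO}; moreover $\pen(\cdot|a_k,\lambda)$ is concave and increasing on $(0,\infty)$, with $\pen(\omega|a_k,\lambda)\ge\frac{\lambda}{\ln2}\ln\frac{2+\zeta}{1+\zeta}$ once $\omega\ge a_k$. I would then split $|\omega|\in(0,C\sqrt{Km\sigma^2/T}]$ at $a_k$. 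For $|\omega|\le a_k$, concavity through the origin yields $\pen(|\omega|)\ge a_k^{-1}\pen(a_k|a_k,\lambda)\,|\omega|=O_p(T^{3/2})|\omega|$, which dominates the cost $B_1\omega^2+B_2|\omega|=O_p(T^{-1/2})|\omega|$. For $|\omega|>a_k$, the saved penalty is of order $\lambda$ while the cost is at most $B_1\omega^2+B_2|\omega|=O_p(T^{-1})$, again negligible under Assumption \ref{tauisO}. In both regimes the saved penalty strictly exceeds the least-squares increase with probability tending to one, giving $f_T(\bbeta^{(0)})<f_T(\hat\bbeta)$.

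A union bound over the finitely many coordinates of $A^c$ (finite because Assumption \ref{Xfullrank} forces a fixed number of regimes, as noted in the text) then delivers $\hat\bbeta_{A^c}=0$ with probability tending to one. The main obstacle is the third step: securing domination of the penalty over the least-squares sensitivity \emph{uniformly} across the entire range of $|\omega|$ up to the $O(T^{-1/2})$ edge of the ball, in particular across the transition near $|\omega|\approx a_k$. This is precisely where the sharp rates $a_k=O_p(T^{-3/2})$ and $\lambda=O_p(1)$ of Assumption \ref{tauisO}, together with the two-sided eigenvalue control of Assumption \ref{Xfullrank}, must be used jointly rather than separately.
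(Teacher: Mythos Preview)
Your argument is correct and follows essentially the same strategy as the paper's: compare $f_T$ at a point in the ball against the same point with the $A^c$-components zeroed, and show that the SELO penalty savings dominate the $O_p(\alpha_T)$ least-squares cost. The only cosmetic differences are that the paper zeros all $A^c$ coordinates simultaneously (rather than one at a time) and bounds the penalty with a single global secant inequality $\pen(|\omega|)\ge\frac{|\omega|}{C\alpha_T}\,\pen(C\alpha_T)$ valid on the whole range $(0,C\alpha_T]$, which already yields a lower bound of order $(\lambda/\alpha_T)|\omega|$ uniformly and thus sidesteps your case split at $|\omega|=a_k$ and the transition ``obstacle'' you flag there.
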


\begin{proof}
See Appendix \ref{App:proof2} for the proof.
\end{proof}

\subsection{Proof of Lemma 1 \label{app:proof1}}
\begin{Proof}
	We consider the objective function 
	$$
	f_{T}(\bbeta) = \dfrac{1}{T}||\by- \bX_{\btau} \bbeta||_2^2 + \sum_{j=2}^m\sum_{k=1}^{K}\pen(\Delta\beta_{jk}|a_k,\lambda) 
	$$
	
	\noindent Let $\alpha_T = \sqrt{\frac{Km\sigma^2}{T}}$ and $\nu \in (0,1)$. To prove the lemma \ref{lemforcons1}, It suffices to show that
	$$ \mathbf{P} \left( f_{T}(\bbeta^*) < \inf_{||\bu||_2 = 1} f_{T}(\bbeta^* + C\alpha_T\bu) \right) = 1 -\nu $$
	for C>0 sufficiently large and for any $T$ sufficiently large. In other words, we shall show that  $H_{T}(\bu) = f_{T}(\bbeta^* + C\alpha_T \bu)  - f_{T}(\bbeta^*)$ is positive for any $T$ when $C$ is large enough and for all $||\bu||_2 = 1$, where $\bu = (u_{11}, u_{12} ~\dots, u_{1K}, ~\dots, ~ u_{mK}) \in \REAL^{mK\times 1}$.\\
	We can easily show that
	\begin{align*}
	H_{T}(\bu) &= \dfrac{1}{T}\left(C^2\alpha_T^2||\bX_{\btau}\bu||_2^2 - 2C\alpha_T\bepsilon'\bX_{\btau}\bu  \right) + \\
	& \sum_{j=2}^m\sum_{k=1}^{K}\left(\pen(\Delta\beta_{jk}^* + C\alpha_T u_{jk}|a_k,\lambda) - \pen(\Delta\beta_{jk}^*|a_k,\lambda)\right)\\
	H_{T}(\bu) &\geq \dfrac{1}{T}\left(C^2\alpha_T^2||\bX_{\btau}\bu||_2^2 - 2C\alpha_T\bepsilon'\bX_{\btau}\bu\right) +\\
	& \sum_{(j,k) \in \mathbf{D}(\bu)}\left(\pen(\Delta\beta_{jk}^{*+}|a_k,\lambda) - \pen(\Delta\beta_{jk}^*|a_k,\lambda)\right)
	\end{align*} 
	\noindent where $\Delta\beta_{jk}^{*+}= \Delta\beta_{jk}^* + C\alpha_T u_{jk}$ and\\
	$ \mathbf{D}(\bu) = \left\{(j,k); j\geq 2 ~ \text{and} ~ \pen(\Delta\beta_{jk}^{*+}|a_k,\lambda) - \pen(\Delta\beta_{jk}^*|a_k,\lambda) <0 \right\}$.\\
	For any $(j,k) \in \mathbf{D}(\bu)$, clearly $\Delta\beta_{jk}^* \ne 0$, otherwise $\pen(\Delta\beta_{jk}^{*+}|a_k,\lambda) - \pen(\Delta\beta_{jk}^*|a_k,\lambda) \geq 0$. Thus, if $C>0$ is sufficiently large and fixed, as $\lim_{T \to \infty} C\alpha_T = 0$, we can consider that $\Delta\beta_{jk}^{*+}$ and $\Delta\beta_{jk}^{*}$  have the same sign for $T$ sufficiently large; that is $0 \not\in (c_T^- , c_T^+)$, where 
	$c_T^- = \min(\Delta\beta_{jk}^{*+},\Delta\beta_{jk}^*)$ and $c_T^+ = \max(\Delta\beta_{jk}^{*+},\Delta\beta_{jk}^*)$. By the fact that $\pen(x|a_k,\lambda)$ is a concave function on $x\in (-\infty, 0]$ and on $x \in [0, +\infty)$, thus also on $(c_T^- , c_T^+)$, we can establish the following conditions using the mean value theorem.
	
	$$ \dfrac{\pen(\Delta\beta_{jk}^{*+}|a_k,\lambda) - \pen(\Delta\beta_{jk}^*|a_k,\lambda)}{C\alpha_T u_{jk}} \leq \max\left(\pen'(\Delta\beta_{jk}^{*+}|a_k,\lambda),\pen'(\Delta\beta_{jk}^*|a_k,\lambda)\right) $$
	$$ \dfrac{\pen(\Delta\beta_{jk}^{*+}|a_k,\lambda) - \pen(\Delta\beta_{jk}^*|a_k,\lambda)}{C\alpha_T u_{jk}} \geq \min\left(\pen'(\Delta\beta_{jk}^{*+}|a_k,\lambda),\pen'(\Delta\beta_{jk}^*|a_k,\lambda)\right) $$
	\noindent where $\pen'$ stands for the $\pen$ first derivative.\\
	Let us note that $\forall ~(j,k) \in \mathbf{D}(\bu)$, $\Delta\beta_{jk}^* > 0 \implies u_{jk}<0$ and $\Delta\beta_{jk}^* < 0 \implies u_{jk}>0$ so that $\pen(\Delta\beta_{jk}^{*+}|a_k,\lambda) - \pen(\Delta\beta_{jk}^*|a_k,\lambda) <0$ holds. Applying the mean value theorem in both cases, we end up with a common condition given by 
	\begin{align*}
	\pen(\Delta\beta_{jk}^{*+}|a_k,\lambda) - \pen(\Delta\beta_{jk}^*|a_k,\lambda)&\geq -C\alpha_T|u_{jk}||\pen'(\Delta\beta_{jk}^* + C\alpha_T u_{jk}|a_k,\lambda)|\\
	&\geq -\frac{C\lambda \alpha_T a_k\zeta}{\ln(2)(\rho^2 - 2\rho C \alpha_T)}\\ 
	&\geq -\frac{C\lambda \alpha_T a_{max}\zeta}{\ln(2)(\rho^2 - 2\rho C \alpha_T)}
	\end{align*}
	\noindent where the last two inequalities come from the $|\pen'(\Delta\beta_{jk}^* + C\alpha_T u_{jk}|a_k,\lambda)|$ minimization with respect to $\Delta\beta_{jk}^*$. Then
	\begin{align*}
	H_{T}(\bu) &\geq \underbrace{\dfrac{C^2\alpha_T^2||\bX_{\btau}\bu||_2^2}{T}}_{Q_1} - \underbrace{\dfrac{2C\alpha_T\bepsilon'\bX_{\btau}\bu }{T}}_{Q_2}  -\underbrace{\frac{CKm\lambda\alpha_T a_{max}\zeta}{\ln(2)(\rho^2 - 2\rho C \alpha_T)}}_{Q_3}
	\end{align*}
	\noindent Focusing on each term, we can show that
	$$
	Q_1 \equiv  \dfrac{C^2\alpha_T^2||\bX_{\btau}\bu||_2^2}{T} = C^2\alpha_T^2\bu'\dfrac{\bX_{\btau}'\bX_{\btau}}{T}\bu \geq C^2\alpha_T^2\lambda_{T,\min}$$
	\noindent where $\lambda_{T,\min}$ is the smallest eigenvalue of $\dfrac{\bX_{\btau}'\bX_{\btau}}{T}$.\\
	To show this condition, we can decompose  $\dfrac{\bX_{\btau}'\bX_{\btau}}{T}$ into $\boldsymbol{U}\boldsymbol{\Lambda}\boldsymbol{U}'$ (by \ref{Xfullrank}). Moreover, any vector of $Km$ dimension can be decomposed into a linear combination of the eigenvectors (i.e., $\displaystyle \boldsymbol{u} = \boldsymbol{U}\boldsymbol{\omega}$). Note that $\displaystyle \boldsymbol{u}'\boldsymbol{u} = \boldsymbol{\omega}'\boldsymbol{U}'\boldsymbol{U}\boldsymbol{\omega} = \boldsymbol{\omega}'\boldsymbol{\omega} = \sum_{i=1}^{Km}\omega_i^2 = 1$.\\
	Thus $\displaystyle \bu'\dfrac{\bX_{\btau}'\bX_{\btau}}{T}\bu = \boldsymbol{\omega}'\boldsymbol{U}'\boldsymbol{U}\boldsymbol{\Lambda}\boldsymbol{U}'\boldsymbol{U}\boldsymbol{\omega} = \boldsymbol{\omega}'\boldsymbol{\Lambda}\boldsymbol{\omega} = \sum_{i=1}^{Km}\omega_i^2\lambda_i$ $\geq \lambda_{T,\min}$.\\
	The second term is given by
	\begin{align*}
	Q_2 \equiv  \dfrac{2C\alpha_T\bepsilon'\bX_{\btau}\bu }{T} &\leq \dfrac{2C\alpha_T|\bepsilon'\bX_{\btau}\bu |}{T}\\
	&\leq \dfrac{2C\alpha_T||\bepsilon'\bX_{\btau}||_2||\bu||_2}{T} ~ \text{by Cauchy-Schwartz}\\
	&\leq \frac{2C\alpha_T^2}{\sqrt{Km\sigma^2}}\sqrt{\frac{(\bepsilon' (\bX_{\btau}\bX_{\btau}') \bepsilon)}{T}} \\
	&\leq \mathcal{O}_p(C\alpha_T^2) \quad (By~ \ref{Xfullrank} ~ and ~ \ref{exogeneity}).
	\end{align*}
	To show that $\sqrt{\frac{(\bepsilon' (\bX_{\btau}\bX_{\btau}') \bepsilon)}{T}}=\mathcal{O}_p(1)$, we rely on the spectral theorem to decompose $\bX_{\btau}\bX_{\btau}'$ into two orthogonal matrices and a diagonal matrix of eigenvalues. With this decomposition, we can show that   $\epsilon' \frac{\bX_{\btau}\bX_{\btau}'}{T} \epsilon \leq \text{max}_i \lambda_i \frac{\epsilon'\epsilon}{T}$ which is $\mathcal{O}_p(1)$ under Assumption \ref{Xfullrank} and the fact that the variance is bounded.

	\noindent The last term $Q_3$ is defined by
	\begin{align*}
	Q_3 &\equiv \frac{CKm\lambda\alpha_T a_{max}\zeta}{\ln(2)(\rho^2 - 2\rho C \alpha_T)} \\
	&= C\alpha_T^2 \dfrac{\lambda \zeta\dfrac{a_{\max}}{(Km)^{-1}\alpha_T^3}}{\ln(2)\left(\left(\dfrac{\rho}{\alpha_T}\right)^2 - 2C \left(\dfrac{\rho}{\alpha_T}\right)\right)} 
	\end{align*}
	
	\noindent By \ref{rhopositive}, $\displaystyle \left(\frac{\rho}{\alpha_T}\right)^2 - 2C \left(\frac{\rho}{\alpha_T}\right) \to \infty$ and by  \ref{tauisO}, $\displaystyle \lim_{T \to \infty} \lambda \zeta\frac{a_{\max}}{(Km)^{-1}\alpha_T^3} < \infty$.
	Hence
	$Q_3 = o(C\alpha_T^2)$.
	
	\noindent Combining the conditions on $Q_1$, $Q_2$ and $Q_3$ we establish that $$H_{T}(\bu) \geq C^2\alpha_T^2\lambda_{T,\min} + \mathcal{O}_p(C\alpha_T^2) + o(C\alpha_T^2)$$. 
	\noindent It follows that there exists $C_0>0$ is large such that for all $C > C_0$, $\displaystyle \mathbf{P}\left(\inf_{||\bu||_2 =1}  H_{T}(\bu) > 0\right) = 1-\nu$, for $T$ sufficiently large.
\end{Proof}

\subsection{Lemma 2 \label{App:proof2}}
\begin{Proof}
	Let $\bbeta \in \Re^{Km\times 1}$ such that $||\bbeta - \bbeta^*||_2 < C\sqrt{\frac{Km\sigma^2}{T}}$. We consider $\tilde{\bbeta} \in \Re^{Km\times 1}$, where $\tilde{\bbeta}_{A^c} = 0$ and $\tilde{\bbeta}_{A} = \bbeta_{A}$.
	We can notice that 
	\begin{align*}
	||\bbeta - \tilde{\bbeta}||_2 &=  ||\bbeta_{A^c} - \tilde{\bbeta}_{A^c}||_2 = ||\bbeta_{A^c} -  \bbeta^*_{A^c}||_2\\
	||\bbeta - \tilde{\bbeta}||_2 &< C\alpha_T
	\end{align*}
	
	\noindent On the other hand
	\begin{align*}
	||\bbeta^* - \tilde{\bbeta}||_2 &=  ||\bbeta^*_{A} - \tilde{\bbeta}_{A}||_2 =  ||\bbeta^*_{A} -  \bbeta_{A}||_2\\
	||\bbeta^* - \tilde{\bbeta}||_2 &< C\alpha_T
	\end{align*}
	\noindent Let us define $\displaystyle G_{T}(\bbeta) = f_{T}(\bbeta) - f_{T}(\tilde{\bbeta})$. Similarly to the proof of the lemma \ref{lemforcons1}, it suffices to show that $G_{T}(\bbeta,\tilde{\bbeta}) > 0$.
	\begin{align*}
	G_{T}(\bbeta) &=  \dfrac{1}{T}\left(||\by - \bX_{\btau}\bbeta||_2^2 - ||\by - \bX_{\btau}\tilde{\bbeta}||_2^2\right) + \sum_{\substack{
			(j,k) \in A^c \\
			j \geq  2}}\pen(\Delta\beta_{jk}|a_k,\lambda) \\
	&= \dfrac{1}{T}\left(||\by - \bX_{\btau}\tilde{\bbeta} - \bX_{\btau}(\bbeta-\tilde{\bbeta})||_2^2 -||\by - \bX_{\btau}\tilde{\beta}||_2^2\right) + \sum_{\substack{
			(j,k) \in A^c \\
			j \geq  2}}\pen(\Delta\beta_{jk}|a_k,\lambda)  \\
	&= (\bbeta - \tilde{\bbeta})'\dfrac{\bX_{\btau}'\bX_{\btau}}{T}(\bbeta - \tilde{\bbeta}) - 2(\bbeta - \tilde{\bbeta})'\dfrac{\bX_{\btau}'(\by - \bX_{\btau}\tilde{\bbeta})}{T}  + \sum_{\substack{
			(j,k) \in A^c \\
			j \geq  2}}\pen(\Delta\beta_{jk}|a_k,\lambda)  \\
	&= (\bbeta - \tilde{\bbeta})'\dfrac{\bX_{\btau}'\bX_{\btau}}{T}(\bbeta - \tilde{\bbeta}) - 2(\bbeta - \tilde{\bbeta})'\dfrac{\bX_{\btau}'\boldsymbol{\epsilon}}{T} - 2(\bbeta - \tilde{\bbeta})'\dfrac{\bX_{\btau}'\bX_{\btau}}{T}  (\bbeta^* - \tilde{\bbeta} ) \\
	& \quad + \sum_{\substack{
			(j,k) \in A^c \\
			j \geq  2}}\pen(\Delta\beta_{jk}|a_k,\lambda)  \\
	&= (\bbeta - \tilde{\bbeta})'\dfrac{\bX_{\btau}'\bX_{\btau}}{T}(\bbeta - \tilde{\bbeta}) - 2\alpha_T\dfrac{(\bbeta - \tilde{\bbeta})'}{\sqrt{Km\sigma^2}}\dfrac{\bX_{\btau}'\boldsymbol{\epsilon}}{\sqrt{T}} - 2(\bbeta - \tilde{\bbeta})'\dfrac{\bX_{\btau}'\bX_{\btau}}{T}  (\bbeta^* - \tilde{\bbeta} ) \\
	& \quad + \sum_{\substack{
			(j,k) \in A^c \\
			j \geq  2}}\pen(\Delta\beta_{jk}|a_k,\lambda)
	\end{align*}
	
	\noindent By \ref{tauistruetau}, \ref{Xfullrank} and \ref{exogeneity}, $\dfrac{\bX_{\btau}'\bX_{\btau}}{T} = \mathcal{O}_p(1)$ and $\dfrac{\bX_{\btau}'\boldsymbol{\epsilon}}{\sqrt{T}} = \mathcal{O}_p(1)$ (as it is a martingale difference sequence following assumption \ref{exogeneity}). Moreover $||\bbeta - \tilde{\bbeta}||_2 < C\alpha_T$ and
	$||\bbeta^* - \tilde{\bbeta}||_2 \leq C\alpha_T$. Then, for any $T$ sufficiently large 
	$$ G_{T}(\bbeta) =\mathcal{O}_p\left(||\bbeta - \tilde{\bbeta}||_2\alpha_T\right) + \sum_{\substack{
			(j,k) \in A^c \\
			j \geq  2}}\pen(\Delta\beta_{jk}|a_k,\lambda) $$
	\noindent As $\pen(x|a_k,\lambda)$ is a concave function on $x \in ]-\infty, 0]$ and on $x \in [0, +\infty[$, for any $\nu_1 < \nu_2 \leq \nu_3 \leq 0$ (resp. $0 \leq \nu_1 \leq \nu_2 < \nu_3$), $\displaystyle \dfrac{\pen(\nu_1) - \pen(\nu_3) }{\nu_1 - \nu_3} \geq \dfrac{\pen(\nu_2) - \pen(\nu_3) }{\nu_2 - \nu_3}$  (resp. $\displaystyle \dfrac{\pen(\nu_3) - \pen(\nu_1) }{\nu_3 - \nu_1} \leq \dfrac{\pen(\nu_2) - \pen(\nu_1) }{\nu_2 - \nu_1}$).\\
	$\forall~ (j,k) \in A^c$, $\Delta\beta^*_{jk} = 0$ and $\Delta\beta_{jk}$ is strictly  positive or negative. \\Thus, $-C\alpha_T \leq \beta_{jk} < 0$ or $0 < \Delta\beta_{jk} < C\alpha_T$, since $|\Delta\beta_{jk}| \leq ||\bbeta - \bbeta^*||_2 < C\alpha_T$.
	In both cases, we end up with
	\begin{align*}
	\dfrac{\pen\left(C\alpha_T|a_k,\lambda \right)}{C\alpha_T} &\leq \dfrac{\pen\left(\Delta\beta_{jk}|a_k,\lambda\right)}{|\Delta\beta_{jk}|}\\
	\displaystyle \pen(\Delta\beta_{jk}|a_k,\lambda) &\geq \frac{\lambda}{\ln(2)C\alpha_T}\ln\left(\frac{C\alpha_T}{C\alpha_T + a_k\zeta } + 1\right)|\Delta\beta_{jk}|\\
	\displaystyle \pen(\Delta\beta_{jk}|a_k,\lambda) &\geq \frac{\lambda}{\ln(2)C}\ln\left(\frac{C\alpha_T}{C\alpha_T + a_{\max}\zeta } + 1\right)|\Delta\beta_{jk}|
	\end{align*}
	\noindent for any $T$ sufficiently large. Thus 
	\begin{align*}
	\sum_{\substack{
			(j,k) \in A^c \\
			j \geq  2}}\pen(\Delta\beta_{jk}|a_k,\lambda) \geq \frac{\lambda}{\ln(2)C}\ln\left(\frac{C}{C + a_{\max}\zeta \sqrt{\frac{T}{K m \sigma^2}}} + 1\right)||\bbeta - \tilde{\bbeta}||_2.
	\end{align*}
	\noindent Furthermore, by \ref{tauisO} $\displaystyle a_{\max} = \mathcal{O}_p\left(\sqrt{\frac{mK\sigma^2}{T}} \frac{\sigma_2}{T}\right)$. Thereby\\ $$a_{\max}\zeta \sqrt{\frac{T}{K m \sigma^2}} \overset{p}{\to} 0 \text{ and }\displaystyle \liminf_{T \to \infty}  \left(\frac{C}{C + a_{\max}\zeta \sqrt{\frac{T}{K m \sigma^2}}} + 1\right) > 0$$
	\noindent It follows  that, there exists $\tilde{C} > 0$ such that
	$$ \dfrac{G_{T}(\bbeta,\tilde{\bbeta})}{||\bbeta - \tilde{\bbeta}||_2} \geq \tilde{C}\lambda + \mathcal{O}_p\left(\sqrt{\dfrac{Km\sigma^2}{T}}\right) $$
	\noindent Thereby the result follows.
\end{Proof}

\subsection{Proof of the Proposition \ref{theoconsislimitdist}}
\begin{Proof}
The theorem is immediately given by the lemmas (\ref{lemforcons1}) and (\ref{lemforcons2}), in the sense that there exists a sequence of local minima $\hat{\bbeta}$ of $f_{T}(\bbeta)$ such that $||\hat{\bbeta} - \bbeta^*|| = \mathcal{O}_p\left(T^{-\frac{1}{2}}\right)$ (since $m$, $K$ and $\sigma^2$ are constants) and $\hat{\bbeta}_{A^c}  = \boldsymbol{0} \in \REAL^{|A^c|\times 1}$. Thus, as $T^{-\frac{1}{2}} \to 0$, it follows that $||\hat{\bbeta}_{A} - \bbeta_{A}^*||_2 =  o_P(1)$.
\end{Proof}

\subsection{Consequence of bounded eigenvalues \label{App:eig}}
We show that bounded eigenvalues of the matrix $\frac{\bX_{\btau}'\bX_{\btau}}{T}$ implies a fixed number of regimes. Note first that 
\begin{eqnarray}
\bX_{\btau}'\bX_{\btau} & = & \sum_{t=1}^T (\1{t} \otimes \bx_t)(\1{t} \otimes \bx_t)', \\
 & = & \sum_{t=1}^T (\bx_t  \bx_t') \otimes (\1{t} \1{t}'),
\end{eqnarray}
where we define $\1{t} = (\1{t > \tau_0},\1{t > \tau_1},\ldots,\1{t > \tau_{m-1}})'$. Let us define $n_i = \sum_{t=1}^T \1{t > \tau_{i-1}}$, i.e., the number of observations from the beginning of regime $i$ to the end of the sample. Working with $\bx_t \equiv 1$, we have that 
\begin{eqnarray}
\frac{\bX_{\btau}'\bX_{\btau}}{T} & = & \frac{1}{T} \sum_{t=1}^T (\1{t} \1{t}'),\\
& = & \frac{1}{T}  \begin{pmatrix} n_1 & n_2 & n_3 & \ldots & n_m \\ n_2 & n_2 & n_3 & \ldots & n_m \\ n_3 & n_3 & n_3 & \ldots & n_m \\ & & &\ldots & \\n_m & n_m & n_m & \ldots & n_m \end{pmatrix}
\end{eqnarray}
in which $n_1 = T$. It leads to the following determinant, when $m>1$,
\begin{eqnarray}
|\frac{\bX_{\btau}'\bX_{\btau}}{T}| & = &T^{-m} n_m  \prod_{i=1}^{m-1}(n_{i}-n_{i+1}), \\
& = & \frac{n_m}{T}  \prod_{i=1}^{m-1} \frac{(n_{i}-n_{i+1})}{T}.
\end{eqnarray}
Note that $n_i = T - \tau_{i - 1} = \tau_{m} - \tau_{i - 1}$, for $i = 1, \dots, m$. Thus,   
\begin{eqnarray}
|\frac{\bX_{\btau}'\bX_{\btau}}{T}| &= & \prod_{i=1}^{m} \frac{\tau_{i} - \tau_{i - 1}}{T}\\
 &= & \prod_{i=1}^{m} \delta_{\tau_i} > 0
\end{eqnarray}
\noindent where $\delta_{\tau_i} = \dfrac{\tau_i - \tau_{i - 1}}{T}$.\\
It shows that the number of segments cannot increase with $T$ otherwise the determinant tends to zero. Let us assume that $m = \mathcal{O}(T^q)$ with $q>0$. It is clear that when T tends to $\infty$, there exists $r \in \mathbb{N}$ such that $\delta_{\tau_r}= \mathcal{O}(T^x)$ where $x < 0$. If such a $r$ does not exist, this would imply that $\forall i, ~ \delta_{\tau_i}$ does not drift to 0 as $T \rightarrow \infty$ and then $m = \mathcal{O}(1)$ because $\displaystyle\inf_{i}\{\delta_{\tau_i}\}m \leq 1$. But, because $m = \mathcal{O}(T^q)$ with $q>0$, there exists $r \in \mathbb{N}$ such that $\delta_{\tau_r} = \mathcal{O}(T^x)$ where $x < 0$. Therefore, we have that
\begin{eqnarray}
|\frac{\bX_{\btau}'\bX_{\btau}}{T}| &= & \delta_{\tau_r} \prod_{\substack{i = 1 \\ i \ne r}}^{m} \delta_{\tau_i} \to 0,
\end{eqnarray}
which contradicts Assumption \ref{Xfullrank}. Hence $m = \mathcal{O}(1)$. As a result  \ref{Xfullrank} implies that $m < \infty$. In addition, \ref{Xfullrank} also implies that $\displaystyle\min_{i}\{\delta_{\tau_i}\} > 0$; that is $\delta_{\tau_i}$ does not drift to 0 as $T \rightarrow \infty$.

\subsection{Approximation of the penalty function with mixture of normal densities \label{App:mixture}}
To derive the DAEM algorithm, a mixture of two normal densities has been assumed for the mean parameter. We now provide a simple mixture  approximation of the SELO penalty. Note that in practice, one can use the output of the DAEM algorithm as a starting point to optimize the function of Equation \eqref{eq:optimSELO}. Due to the mixture approximation and the continuity of the SELO penalty function, the starting point would be in general very close to the value that globally minimizes the function \eqref{eq:optimSELO}. We now use a mixture of two normal densities that can be understood as a spike and slab prior in the Bayesian paradigm. In particular, we calibrate a spike and slab prior \citep[see, e.g.,][]{George_1993} to the SELO penalty function. Given a mean parameter $\beta$, the spike and slab prior is specified as, 
\begin{align}
\begin{split}\label{eq:standardSS}
\beta & \sim  \mathcal{N}(0,r_{z}), \\
z & \sim  \text{Bernoulli}(1-\omega),
\end{split}
\end{align}
where $r_0<r_1$ such that the spike distribution arises with probability $P[z=0|\omega]=\omega$. By marginalizing out $z$, we get a mixture of two normal densities given by
\begin{align}
\begin{split}
f(\beta|\omega,r_{0},r_{1}) & = \omega  (\frac{1}{ r_0})^{\frac{1}{2}}f_Z(\frac{\beta }{\sqrt{ r_0}}) + (1-\omega)(\frac{1}{ r_1})^{\frac{1}{2}}f_Z(\frac{\beta}{\sqrt{ r_1}}).
\end{split}
\end{align} 
in which $f_Z(x)$ denotes the Normal density function evaluated at $x$ and with expectation and variance equal to 0 and 1, respectively. The calibration is done as follows:
\begin{align}
\begin{split}
 c \equiv \frac{r_1}{r_0}  & = 10000, \\
\omega& = \frac{(\exp(\lambda) - 1)}{(\sqrt{c}+(\exp(\lambda) - 1))}, \\
r_0 & = \frac{a^2}{8}\frac{1-c^{-1}}{|\ln(\exp(\lambda) -1)|}
\end{split}
\end{align}
We now detail how we come up with this simple calibration. Given $\beta$, note that the probability of being in the slab component is equal to
\begin{align}
\begin{split}
Pr(z = 1|\beta,\omega,r) & = \frac{1}{\frac{\omega}{(1-\omega)} \frac{(\frac{1}{r_0})^{\frac{1}{2}}f_Z(\frac{\beta}{\sqrt{ r_0}})}{(\frac{1}{ r_1})^{\frac{1}{2}}f_Z(\frac{\beta }{\sqrt{ r_1}})} + 1}. \\
\end{split}
\end{align}
To mimic the SELO penalty function, we impose the following constraints on the spike and slab hyper-parameters.
\begin{enumerate}
	\item As standards in the spike and slab literature, we fix $c \equiv \frac{r_1}{r_0} = 10000$ \cite[see, e.g.][]{Malsiner_SS2016}.
\item The SELO function imposes a penalty equal to $\pen(a)-\pen(0) = \lambda y$ with $y=0.99$. To fix the same penalty value (neglecting $y$ because $y\approx 1$), we set
\begin{align}
\begin{split}
(-\lambda) & = \ln f(\beta=a|\omega,r_{z}) - \ln f(\beta=0|\omega,r_{z}) , \\
 & = \ln \frac{f(\beta=a|\omega,r_{1},z=1)}{f(\beta=0|\omega,r_{1},z=1)} + \ln \frac{Pr(z = 1|\beta=0,\omega,r)}{Pr(z = 1|\beta=a,\omega,r)}, \\
 & = -\frac{a^2}{2r_1}  + \ln \frac{Pr(z = 1|\beta=0,\omega,r)}{Pr(z = 1|\beta=a,\omega,r)}, \\
 & \approx \ln \frac{Pr(z = 1|\beta=0,\omega,r)}{Pr(z = 1|\beta=a,\omega,r)},  \text{ because } r_1>>a^2, \\
 & \approx \ln Pr(z = 1|\beta=0,\omega,r), \text{ because } Pr(z = 1|\beta=a,\omega,r)\approx 1,\\
 & = -\ln (\frac{\omega}{(1-\omega)} \sqrt{c}+ 1).
\end{split}
\end{align}
\item Finally, we impose that $Pr(z = 1|\beta,\omega,r)=Pr(z = 0|\beta,\omega,r)$ when $\beta = \frac{a}{2}$ (this is called the intersection point in \cite{rovckova2018spike}). This means that the slab component starts to dominate when $|\beta|>\frac{a}{2}$. It leads to the constraints:
	\begin{align}
\begin{split}
0 & = (2\pi)^{-\frac{1}{2}} [\frac{\omega}{\sqrt{r_0}} \exp(-\frac{a^2}{8r_0}) - \frac{1-\omega}{\sqrt{r_1}} \exp(-\frac{a^2}{8r_1})], \\
r_0 & = \frac{a^2}{8}\frac{1-c^{-1}}{|\ln(\exp(\lambda) -1)|}.
	\end{split}
\end{align}
\end{enumerate}

Figure \ref{fig:SS} shows the penalty imposed by the SELO function and by the calibrated spike and slab prior for several values of $\lambda$ and $a$. We observe that the spike and slab prior provides a good approximation of the penalty function.
%%%% Updated  may 25th, 2020
\renewcommand{\baselinestretch}{1}
\begin{figure}[h!]
	\begin{center}
		\subfloat[$\lambda=1$, $a=0.01$]{\includegraphics[width=7.5cm,height=5cm]{./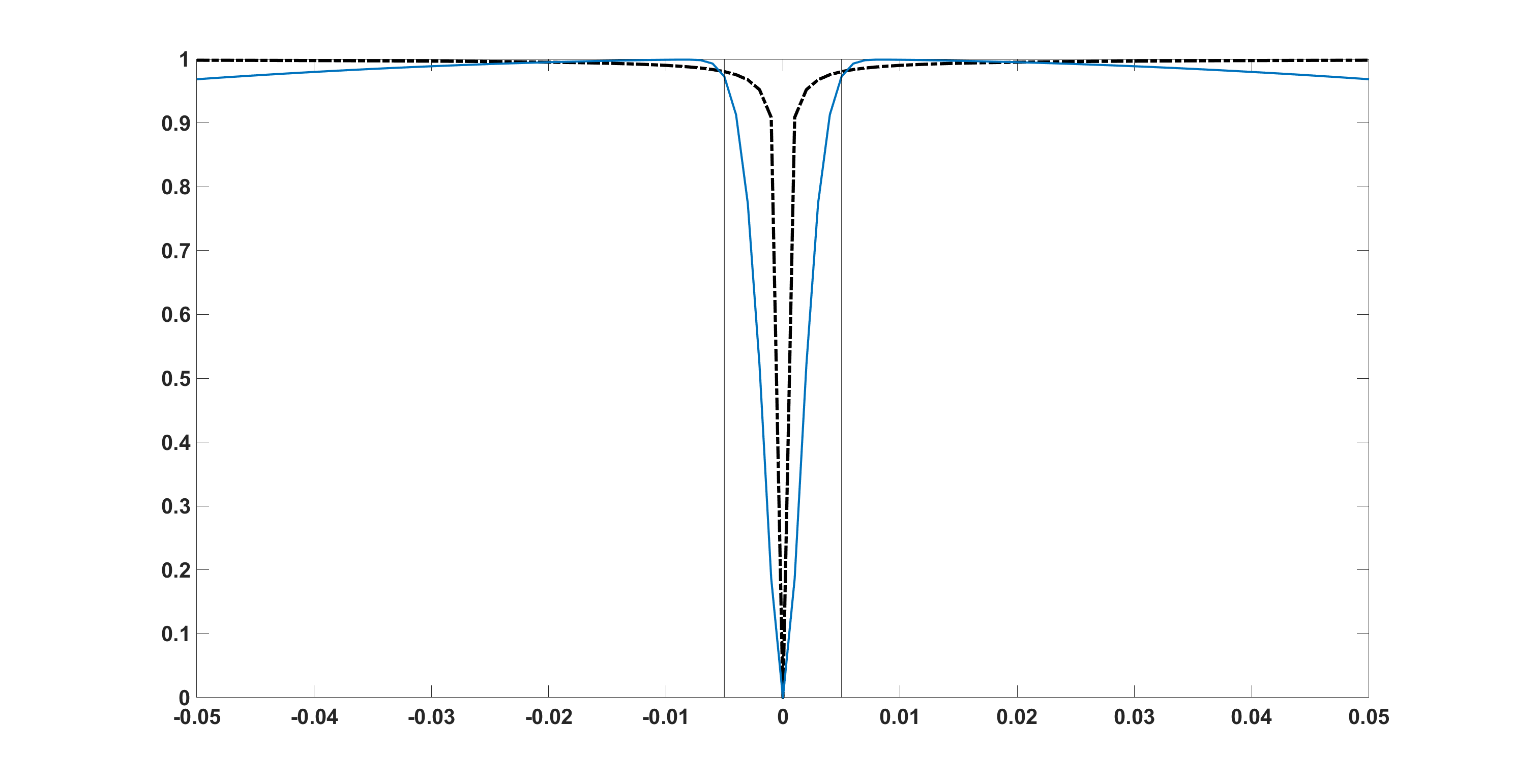}}
		\subfloat[$\lambda=1$, $a=0.1$]{\includegraphics[width=7.5cm,height=5cm]{./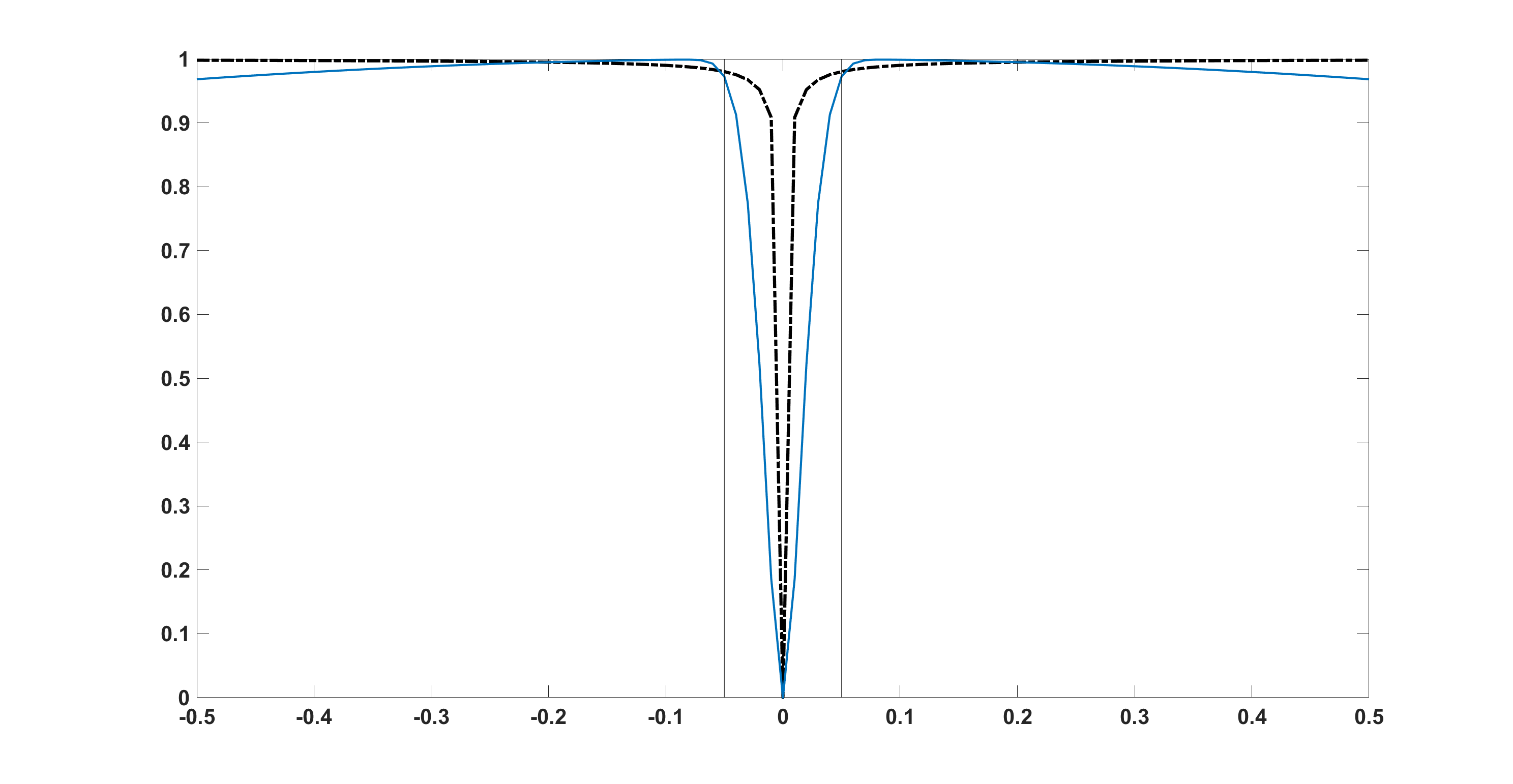}}\\
		\subfloat[$\lambda=5$, $a=0.01$]{\includegraphics[width=7.5cm,height=5cm]{./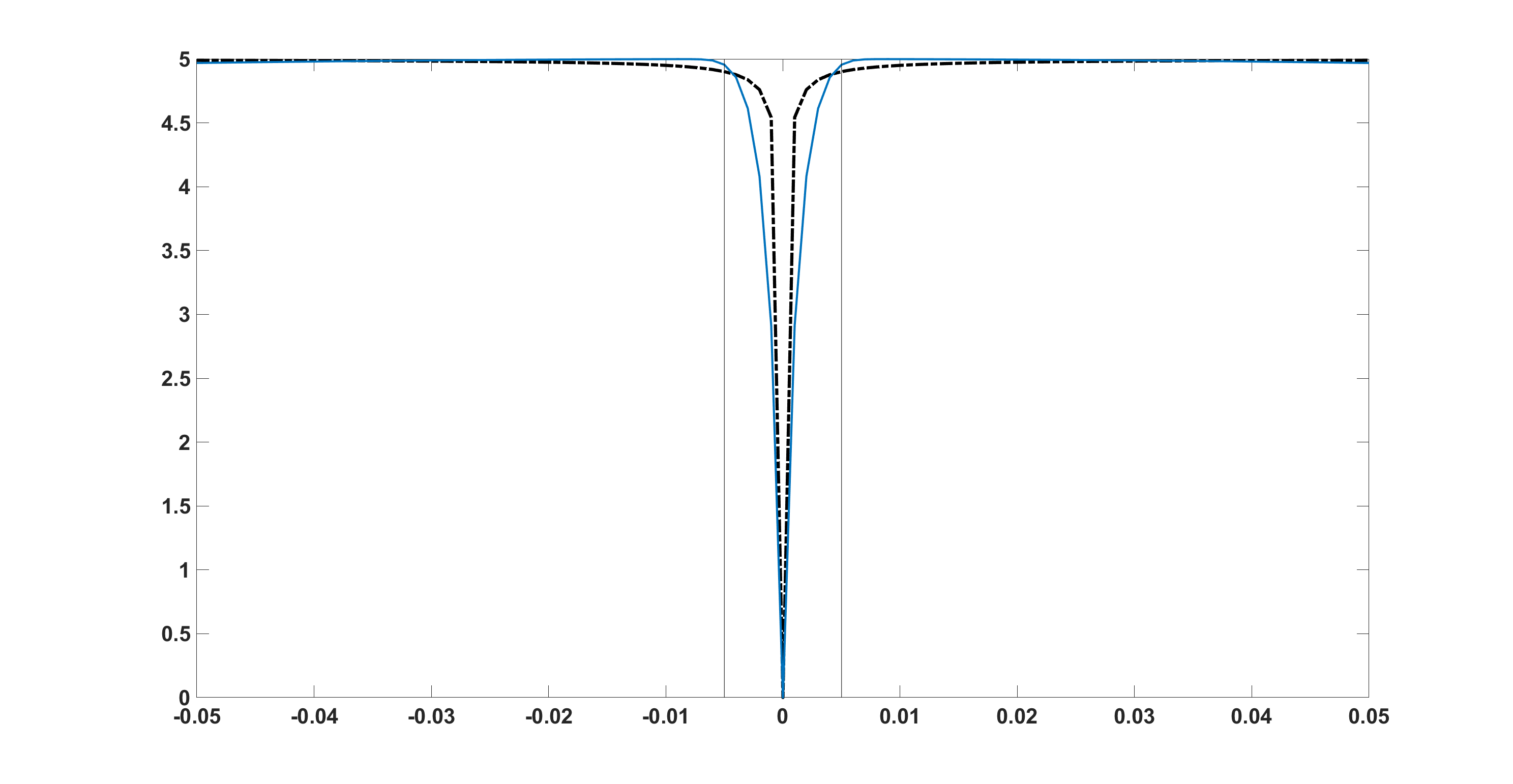}}
		\subfloat[$\lambda=5$, $a=0.1$]{\includegraphics[width=7.5cm,height=5cm]{./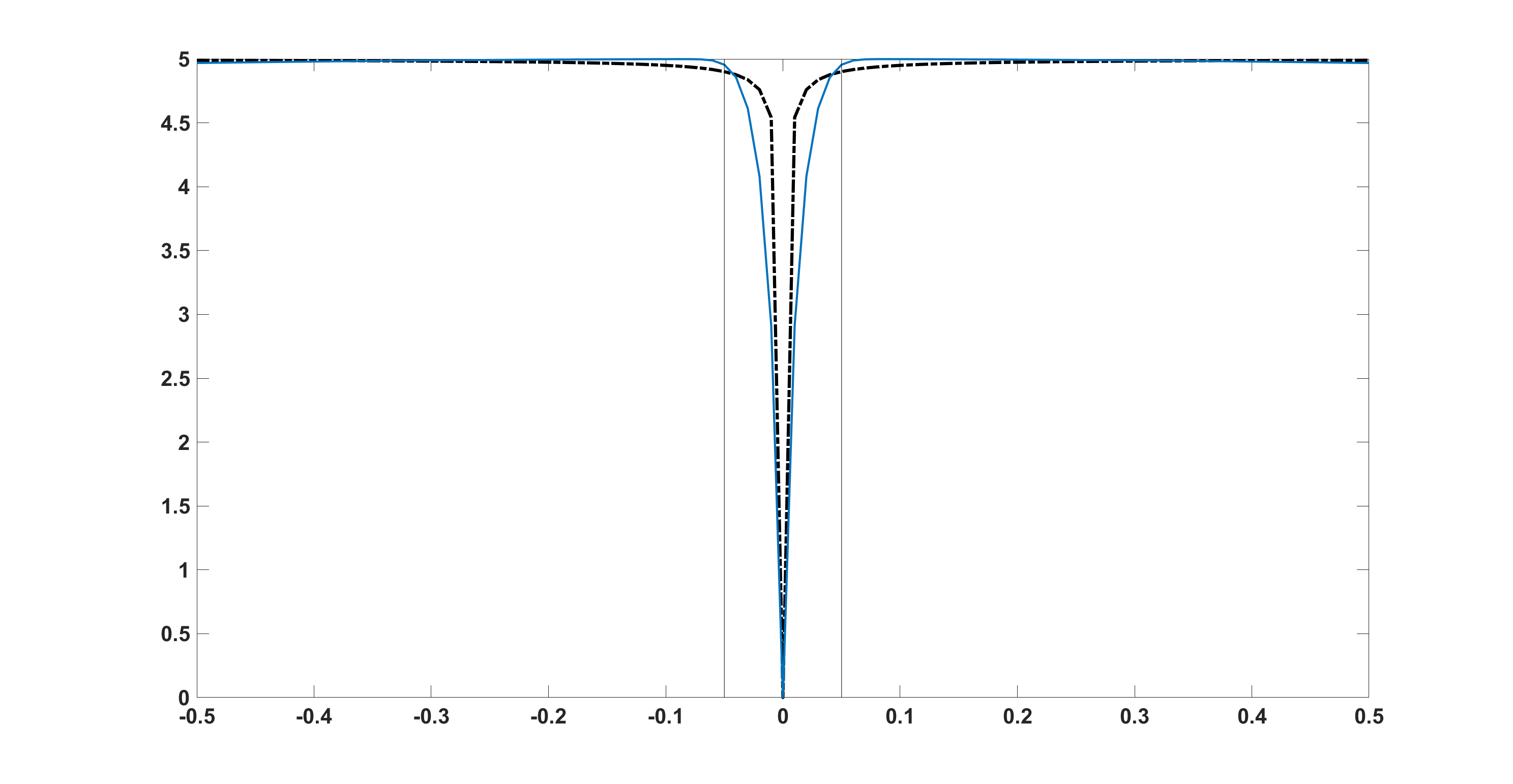}}\\
		\subfloat[$\lambda=10$, $a=0.01$]{\includegraphics[width=7.5cm,height=5cm]{./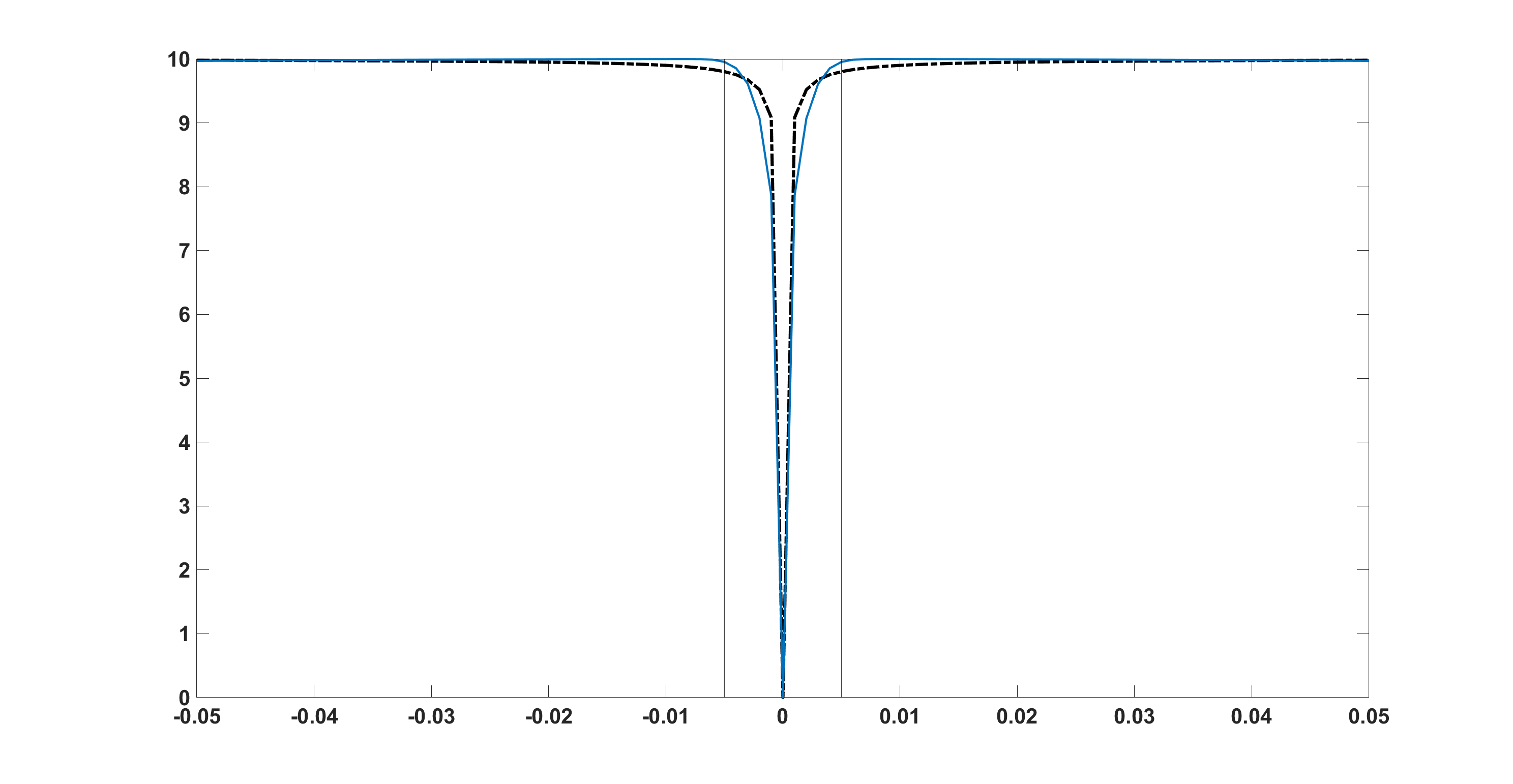}}
		\subfloat[$\lambda=10$, $a=0.1$]{\includegraphics[width=7.5cm,height=5cm]{./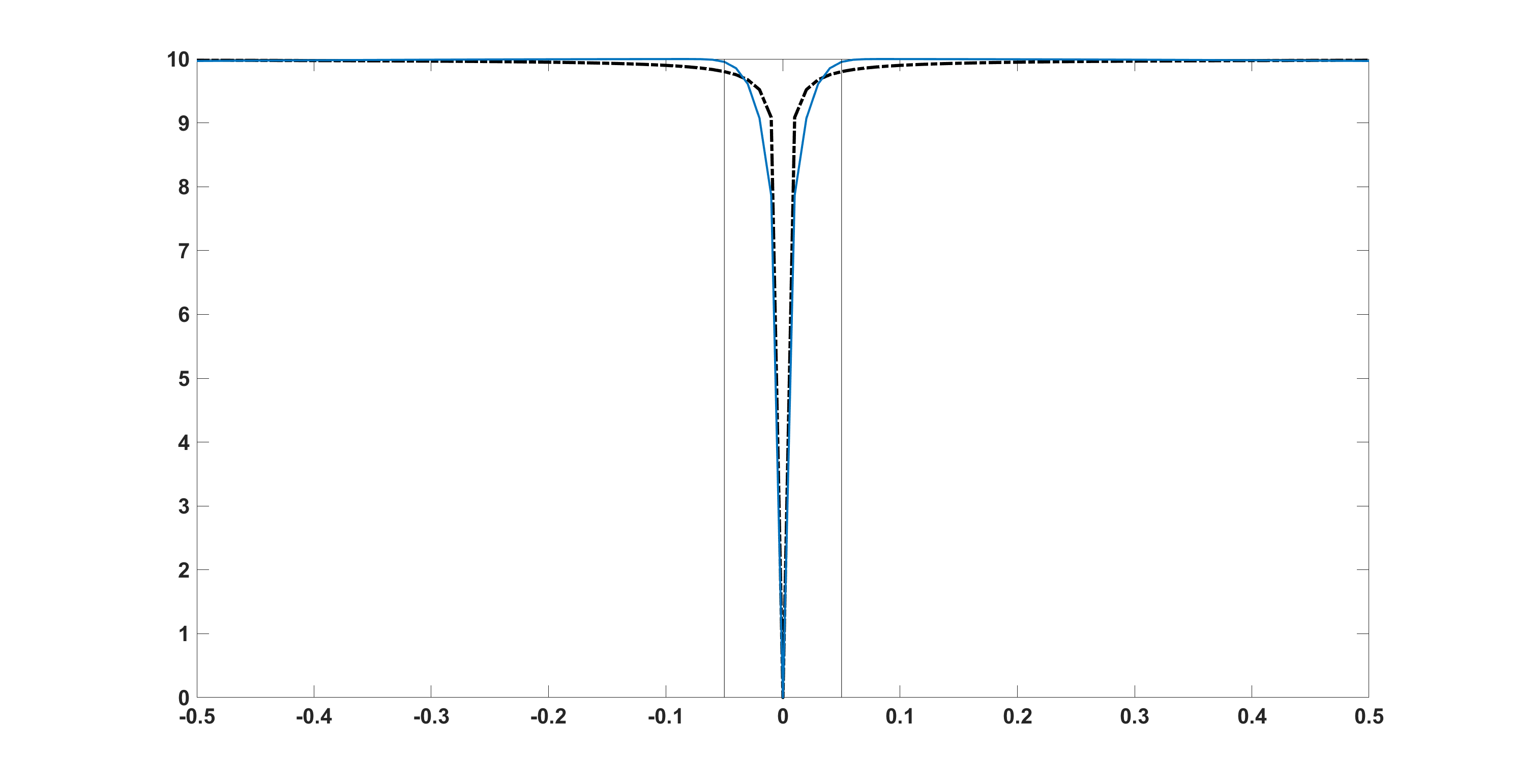}}\\
		\caption{\justifying Penalty imposed by the SELO function in dotted black line and the spike and slab prior in blue for several values of $\lambda$ and $a$. Vertical lines are the intersection points of the spike and slab densities (at $|x|=\frac{a}{2}$). \label{fig:SS}}
	\end{center}
\end{figure}
\renewcommand{\baselinestretch}{1.5}

\section{Marginal likelihood~\eqref{eq:ML} for the linear model} \label{app:linreg}
Let us derive the criterion \eqref{eq:ML}. We first define $X_1 = \tilde{\bX}_{\tau_0}$, $X_2=X_{\btau}^{\hat{A}}$ and $M_{X_1} = \bM_{\tilde{\bX}_{\tau_0}}$. Given the prior distributions in Equation \eqref{eq:prior}, the marginal likelihood is given by,
\begin{eqnarray*}
f(\by|\ba,\lambda,\btau) &= & \int \int (2\pi)^{\frac{-(T+k_{\hat{A}})}{2}}(\sigma^{2})^{-\frac{(T+2+k_{\hat{A}})}{2}}|g_{\hat{A}} (X_2)' M_{X_1} X_2)|^{1/2}\\
& & \hspace{-2cm} \exp \frac{-1}{2\sigma^2}\{\underbrace{(y-X_1 \bbeta_1 - X_2\Delta \bbeta)'(y-X_1 \bbeta_1 - X_2\Delta \bbeta) + \Delta \bbeta'g_{\hat{A}} (X_2)' M_{X_1} X_2) \Delta \bbeta}_{B}\}d(\bbeta_1,\Delta \bbeta) d\sigma^2.
\end{eqnarray*}
Focusing on the expression in the exponential, we have
\begingroup
\allowdisplaybreaks
\begin{align*}
B & =   (y-X_1 \bbeta_1 - X_2\Delta \bbeta)'(y-X_1 \bbeta_1 - X_2\Delta \bbeta) + \Delta \bbeta'g_{\hat{A}} \underbrace{ (X_2' M_{X_1} X_2)}_{\Sigma_X} \Delta \bbeta,\\
 B& =  (y- X_2\Delta \bbeta)'(y- X_2\Delta \bbeta) + \Delta \bbeta'g_{\hat{A}} \underbrace{ (X_2' M_{X_1} X_2)}_{\Sigma_X} \Delta \bbeta + \bbeta_1'X_1'X_1\bbeta_1 -2\bbeta_1'X_1'(y- X_2\Delta \bbeta),\\
B& =  (y- X_2\Delta \bbeta)'(y- X_2\Delta \bbeta) + \Delta \bbeta'g_{\hat{A}}\Sigma_X\Delta \bbeta + (\bbeta_1-\bar{\bbeta_1})'\Omega^{-1}(\bbeta_1-\bar{\bbeta_1}) - \bar{\bbeta_1}'\Omega^{-1}\bar{\bbeta_1},
\end{align*}
\endgroup
where $\Omega^{-1}=X_1'X_1$, $\bar{\bbeta_1}=(X_1'X_1)^{-1}X_1' (y- X_2\Delta \bbeta)$ and $\bar{\bbeta_1}'\Omega^{-1}\bar{\bbeta_1} = (y- X_2\Delta \bbeta)'X_1(X_1'X_1)^{-1}X_1' (y- X_2\Delta \bbeta) = (y- X_2\Delta \bbeta)'P_{X_1}(y- X_2\Delta \bbeta)$. The marginal likelihood can be simplifies as 
\begin{align*}
&f(\by|\ba,\lambda,\btau) =  |X_1'X_1|^{-\frac{1}{2}} \int \int (2\pi)^{\frac{-(T+k_{\hat{A}}-K)}{2}}(\sigma^{2})^{-\frac{(T+2+k_{\hat{A}}-K)}{2}}|g_{\hat{A}}\Sigma_X|^{1/2}\\
& \quad \exp \frac{-1}{2\sigma^2}\{\underbrace{(y- X_2\Delta \bbeta)'(y- X_2\Delta \bbeta) + \Delta \bbeta'g_{\hat{A}}\Sigma_X\Delta \bbeta - (y- X_2\Delta \bbeta)'P_{X_1}(y- X_2\Delta \bbeta)}_{C}\}d(\Delta \bbeta) d\sigma^2.
\end{align*}
Again, focusing on the expression of the exponential, we obtain
\begin{eqnarray*}
C & = & (y- X_2\Delta \bbeta)'(y- X_2\Delta \bbeta) + \Delta \bbeta'g_{\hat{A}}\Sigma_X\Delta \bbeta - (y- X_2\Delta \bbeta)'P_{X_1}(y- X_2\Delta \bbeta),\\
 & = & y'[I_T-P_{X_1}]y + \Delta \bbeta'[g_{\hat{A}} X_2'M_{X_1}X_2 + X_2'X_2 - X_2'P_{X_1}X_2]\Delta \bbeta -2\Delta \bbeta'X_2'[I_T - P_{X_1}]y,\\
 & = & y'M_{X_1}y + \Delta \bbeta'[(1+g_{\hat{A}})X_2'M_{X_1}X_2]\Delta \bbeta -2\Delta \bbeta'X_2'M_{X_1}y,\\
& = & y'M_{X_1}y + (\Delta \bbeta-\bar{\bmu})'\bar{\Sigma}^{-1}(\Delta \bbeta-\bar{\bmu}) - \bar{\bmu}'\bar{\Sigma}^{-1}\bar{\bmu},
\end{eqnarray*}
where $\bar{\Sigma}^{-1} = (1+g_{\hat{A}})X_2'M_{X_1}X_2 = (1+g_{\hat{A}})\Sigma_X$ and $\bar{\bmu} = \bar{\Sigma} X_2'M_{X_1}y$, $\bar{\bmu}'\bar{\Sigma}^{-1}\bar{\bmu} = (1+g_{\hat{A}})^{-1}y'M_{X_1}X_2 [X_2'M_{X_1}X_2]^{-1}X_2'M_{X_1}y$.

Eventually, we find the following marginal likelihood
\begin{eqnarray*}
f(\by|\ba,\lambda,\btau) &= & (2\pi)^{\frac{-(T-K)}{2}} |X_1'X_1|^{-\frac{1}{2}} |g_{\hat{A}} \Sigma_X|^{1/2} |(1+g_{\hat{A}})\Sigma_X|^{\frac{-1}{2}}  \int (\sigma^{2})^{-\frac{(T+2-K)}{2}}\\
& & \exp \frac{-1}{2\sigma^2}\{y'M_{X_1}y-(1+g_{\hat{A}})^{-1}y'M_{X_1}X_2 [X_2'M_{X_1}X_2]^{-1}X_2'M_{X_1}y\} d\sigma^2,\\
& = & (\pi)^{\frac{-(T-K)}{2}} \Gamma(\frac{T-K}{2})|X_1'X_1|^{-\frac{1}{2}} \\
& & (\frac{g_{\hat{A}}}{1+g_{\hat{A}}})^{k_{\hat{A}}/2} [y'M_{X_1}y-(1+g_{\hat{A}})^{-1}y'M_{X_1}X_2 [X_2'M_{X_1}X_2]^{-1}X_2'M_{X_1}y]^{-\frac{T-K}{2}}, \\
& = & (\pi)^{\frac{-(T-K)}{2}} \Gamma(\frac{T-K}{2})|X_1'X_1|^{-\frac{1}{2}} \\
& & (\frac{g_{\hat{A}}}{1+g_{\hat{A}}})^{k_{\hat{A}}/2} [\frac{g_{\hat{A}}}{1+g_{\hat{A}}}y'M_{X_1}y+\frac{1}{(1+g_{\hat{A}})}[\tilde{y}'\tilde{y}-\tilde{y}'X_2[X_2'M_{X_1}X_2]^{-1}X_2\tilde{y}]]^{-\frac{T-K}{2}}, \\
& = & (\pi)^{\frac{-(T-K)}{2}} \Gamma(\frac{T-K}{2})|X_1'X_1|^{-\frac{1}{2}} (\frac{g_{\hat{A}}}{1+g_{\hat{A}}})^{k_{\hat{A}}/2} [\frac{g_{\hat{A}}}{1+g_{\hat{A}}}s_{X_1}+\frac{1}{(1+g_{\hat{A}})}s_{X_1,X_2}]^{-\frac{T-K}{2}} ,
%& = & (\pi)^{\frac{-(T-K)}{2}} |X_1'X_1|^{\frac{1}{2}} (\frac{g_{\hat{A}}}{1+g_{\hat{A}}})^{\frac{k_{\hat{A}}}{2}} \Gamma(\frac{T-K}{2}) (y'M_{X_1}y-(1+g_{\hat{A}})^{-1} y'X_2 \Sigma_X^{-1} X_2'y)^{-\frac{T-K}{2}}\\
%& \propto & (\frac{g_{\hat{A}}}{1+g_{\hat{A}}})^{\frac{k_{\hat{A}}}{2}} (y'M_{X_1}y-(1+g_{\hat{A}})^{-1} y'X_2 (X_2'X_2)^{-1} X_2'y)^{-\frac{T-K}{2}}
\end{eqnarray*}
where the penultimate equality comes from the Frisch-Waugh theorem.% (or from Hayashi page 73 (e)).

\subsection{Posterior distribution \label{App:post}}
\begin{eqnarray*}
f(\bbeta_1,\Delta \bbeta, \sigma^2|\by,\btau) & \propto & (2\pi)^{\frac{-(T+k_{\hat{A}})}{2}}(\sigma^{2})^{-\frac{(T+2+k_{\hat{A}})}{2}}|g_{\hat{A}} (X_2)' M_{X_1} X_2)|^{1/2}\\
& & \exp \frac{-1}{2\sigma^2} \left(\underbrace{(y-X_1 \bbeta_1 - X_2\Delta \bbeta)'(y-X_1 \bbeta_1 - X_2\Delta \bbeta) + \Delta \bbeta'g_{\hat{A}} (X_2)' M_{X_1} X_2 \Delta \bbeta}_{\text{Exp}}\right)
\end{eqnarray*}
Focusing on the expression of the exponential, we have
\begin{eqnarray*}
\text{Exp} & = & (y- X_2\Delta \bbeta)'(y- X_2\Delta \bbeta) + \Delta \bbeta'g_{\hat{A}}\Sigma_X\Delta \bbeta + (\bbeta_1-\bar{\bbeta_1})'\Omega^{-1}(\bbeta_1-\bar{\bbeta_1}) - \bar{\bbeta_1}'\Omega^{-1}\bar{\bbeta_1},\\
 & = & y'M_{X_1}y - \bar{\bmu}'\bar{\Sigma}^{-1}\bar{\bmu} + (\Delta \bbeta-\bar{\bmu})'\bar{\Sigma}^{-1}(\Delta \bbeta-\bar{\bmu}) + (\bbeta_1-\bar{\bbeta_1})'\Omega^{-1}(\bbeta_1-\bar{\bbeta_1}),\\
& = & \frac{g_{\hat{A}}}{1+g_{\hat{A}}}s_{X_1}+\frac{1}{(1+g_{\hat{A}})}s_{X_1,X_2} + (\Delta \bbeta-\bar{\bmu})'\bar{\Sigma}^{-1}(\Delta \bbeta-\bar{\bmu}) + (\bbeta_1-\bar{\bbeta_1})'\Omega^{-1}(\bbeta_1-\bar{\bbeta_1}),\\
\end{eqnarray*}
where $\bar{\Sigma}^{-1} = (1+g_{\hat{A}})X_2'M_{X_1}X_2 = (1+g_{\hat{A}})\Sigma_X$ and $\bar{\bmu} = \bar{\Sigma} X_2'M_{X_1}y$, $\bar{\bmu}'\bar{\Sigma}^{-1}\bar{\bmu} = (1+g_{\hat{A}})^{-1}y'M_{X_1}X_2 [X_2'M_{X_1}X_2]^{-1}X_2'M_{X_1}y$ and $\Omega^{-1}=X_1'X_1$, $\bar{\bbeta_1}=(X_1'X_1)^{-1}X_1' (y- X_2\Delta \bbeta)$. The posterior distribution can be decomposed as 
\begin{eqnarray*}
f(\bbeta_1,\Delta \bbeta, \sigma^2|\by,\btau) & = & f(\sigma^2|\by,\btau) f(\Delta \bbeta|\by,\btau,\sigma^2) f(\bbeta_1|\by,\btau,\sigma^2,\Delta \bbeta)\\
 & \propto & (\sigma^{2})^{-\frac{(T+2-K)}{2}} \exp \frac{-1}{\sigma^2}\{\frac{\frac{g_{\hat{A}}}{1+g_{\hat{A}}}s_{X_1}+\frac{1}{(1+g_{\hat{A}})}s_{X_1,X_2}}{2}\}\\
& & (\sigma^{2})^{-\frac{(k_{\hat{A}})}{2}}|g_{\hat{A}} (X_2)' M_{X_1} X_2)|^{1/2} \exp \frac{-1}{2\sigma^2}\{(\Delta \bbeta-\bar{\bmu})'\bar{\Sigma}^{-1}(\Delta \bbeta-\bar{\bmu})\}\\
& & (\sigma^{2})^{-\frac{(K)}{2}}\exp \frac{-1}{2\sigma^2}\{(\bbeta_1-\bar{\bbeta_1})'\Omega^{-1}(\bbeta_1-\bar{\bbeta_1})\}.\\
\end{eqnarray*}
It gives the following posterior distribution
\begin{eqnarray*}
\sigma^2|\by,\btau & \sim & \mathcal{IG}(\frac{T-K}{2},\frac{\frac{g_{\hat{A}}}{1+g_{\hat{A}}}s_{X_1}+\frac{1}{(1+g_{\hat{A}})}s_{X_1,X_2}}{2}),\\
\Delta \bbeta|\by,\btau,\sigma^2 & \sim & \NORM((1+g_{\hat{A}})^{-1}[X_2'M_{X_1}X_2]^{-1}X_2'M_{X_1}y, \sigma^2 (1+g_{\hat{A}})^{-1}[X_2'M_{X_1}X_2]^{-1}),  \\
\bbeta_1|\by,\btau,\sigma^2,\Delta \bbeta & \sim & \mathcal N((X_1'X_1)^{-1}X_1' (y- X_2\Delta \bbeta), \sigma^2(X_1'X_1)^{-1}).
\end{eqnarray*}

\subsection{Predictive density \label{App:predictive}}
In Appendix \ref{App:post}, we derive the following posterior distributions:
\begin{eqnarray*}
\sigma^2|\by,\btau & \sim & \mathcal{IG}(\underbrace{\frac{T-K}{2}}_{a_{\sigma^2}},\underbrace{\frac{\frac{g_{\hat{A}}}{1+g_{\hat{A}}}s_{X_1}+\frac{1}{(1+g_{\hat{A}})}s_{X_1,X_2}}{2}}_{b_{\sigma^2}}),\\
\Delta \bbeta|\by,\btau,\sigma^2 & \sim & \NORM(\underbrace{(1+g_{\hat{A}})^{-1}[X_2'M_{X_1}X_2]^{-1}X_2'M_{X_1}y}_{\bmu_{\Delta \bbeta }}, \underbrace{\sigma^2 (1+g_{\hat{A}})^{-1}[X_2'M_{X_1}X_2]^{-1}}_{\bSigma_{\Delta \bbeta}}),  \\
\bbeta_1|\by,\btau,\sigma^2,\Delta \bbeta & \sim & \mathcal N(\underbrace{(X_1'X_1)^{-1}X_1' (y- X_2\Delta \bbeta)}_{\bmu_{\bbeta}}, \underbrace{\sigma^2(X_1'X_1)^{-1}}_{\bSigma_{\bbeta}}).
\end{eqnarray*}
Given these results, we can derive the joint posterior distribution of the variable $\bpsi  = \begin{pmatrix} \bbeta_1 \\ \Delta \bbeta \end{pmatrix}$. In particular, a standard algebraic calculus leads to
\begin{align}
\begin{split}
\bpsi|\by,\btau,\sigma^2 & \sim \NORM\left(\begin{pmatrix} \hat{\bbeta}_1 - \bB \mu_{\Delta \bbeta } \\ \bmu_{\Delta \bbeta }\end{pmatrix}, \begin{pmatrix}\bSigma_{\bbeta}^{-1} & \bSigma^{-1}_{\beta}\bB \\ \bB'\bSigma^{-1}_{\beta}& [\bSigma_{\Delta \bbeta }^{-1}+\bB'\bSigma_{\bbeta}^{-1}\bB]\end{pmatrix}^{-1}\right), \\
 & \sim \NORM\left( \bmu_{\bpsi},\bSigma_{\bpsi} \right)
\end{split}
\end{align}
with $\hat{\bbeta}_1 = (X_1'X_1)^{-1}X_1'y$ and $\bB= (X_1'X_1)^{-1}X_2$. Consequently, the predictive density is given by
\begin{align}
\begin{split}
y_{T+1}|\by,\btau,\sigma^2 & \sim \NORM\left(\bx_{T+1}'\bmu_{\bpsi},\sigma^2(\bx_{T+1}'\bSigma_{\psi}\bx_{T+1} + 1)\right).
\end{split}
\end{align}
Since $\sigma^2|\by,\btau$ follows an inverse gamma distribution, the predictive distribution of $y_{T+1}|\by$ is a student distribution. Its density is given by
\begin{align}
\begin{split} \label{eq:student}
f(y_{T+1}|\by,\btau) & = \frac{b_{\sigma^2}^{a_{\sigma^2}}}{\Gamma(a_{\sigma^2})}(2\pi (\bx_{T+1}'\bSigma_{\bpsi}\bx_{T+1} + 1))^{-\frac{1}{2}} \\
& \quad \int (\sigma^2)^{-(a_{\sigma^2}+1+0.5)} \exp \left(-\frac{1}{\sigma^2}[\frac{(y_{T+1}-\bx_{T+1}'\bmu_{\bpsi})^2 (\bx_{T+1}'\bSigma_{\bpsi}\bx_{T+1} + 1)^{-1} + 2b_{\sigma^2}}{2}] \right)d\sigma^2, \\
 & = \frac{b_{\sigma^2}^{a_{\sigma^2}}}{\Gamma(a_{\sigma^2})}(2\pi (\bx_{T+1}'\bSigma_{\bpsi}\bx_{T+1} + 1))^{-\frac{1}{2}} \Gamma(a_{\sigma^2}+0.5) \\
 & \quad \left(\frac{(y_{T+1}-\bx_{T+1}'\bmu_{\bpsi})^2(\bx_{T+1}'\bSigma_{\bpsi}\bx_{T+1} + 1)^{-1} + 2b_{\sigma^2}}{2}\right)^{-(a_{\sigma^2}+0.5)},\\
\end{split}
\end{align}
The final expression in Equation \eqref{eq:student} is equivalent to a student density with expectation $\bx_{T+1}'\bmu_{\bpsi}$, scale parameter $\frac{b_{\sigma^2}}{a_{\sigma^2}}(\bx_{T+1}'\bSigma_{\psi}\bx_{T+1} + 1)$ and degree of freedom equal to $2a_{\sigma^2}$.
\section{Consistency of the criterion} \label{app:consistency}
To prove the theorem, we focus on the ratio of the criterion for two different models $s=(a_s,\lambda_s)$ and $j=(a_j,\lambda_j)$ where $s$ is considered as the true model. To simplify the notation, we denote by $\bX_z$ the explanatory variable included by model z (i.e., $\bX_z = \bX_{\btau}^{\hat{A}_z}$) for $z=s,j$ and $g_{\hat{A}} = g = \frac{1}{w(T)}$ and write the marginal likelihood as $f(\by|a_z,\lambda_z)$ instead of $f(\by|a_z,\lambda_z,\btau)$. We need to show that $\frac{f(\by|a_j,\lambda_j)}{f(\by|a_s,\lambda_s)} \rightarrow_p 0 $ for any $j\neq s$. In particular, we have
\begin{equation} \label{eq:BF}
\begin{split}
\frac{f(\by|a_j,\lambda_j)}{f(\by|a_s,\lambda_s)} & = \underbrace{\frac{(\frac{g}{1+g})^{k_{\hat{A}_j}/2}}{(\frac{g}{1+g})^{k_{\hat{A}_s}/2}}}_{C_{js}} \underbrace{[\frac{\frac{g}{1+g}s_{\tilde{\bX}_{\tau_0}}+\frac{1}{(1+g)}s_{\tilde{\bX}_{\tau_0},\bX_j}}{\frac{g}{1+g}s_{\tilde{\bX}_{\tau_0}}+\frac{1}{(1+g)}s_{\tilde{\bX}_{\tau_0},\bX_s}}]^{-\frac{T-K}{2}}}_{D_{js}}.
\end{split}
\end{equation}
Focusing on the first term, it is easy to show that
\begin{equation*} 
\begin{split}
C_{js} & = \frac{(1+g)^{k_{\hat{A}_s}/2}}{(1+g)^{k_{\hat{A}_j}/2}} g^{\frac{k_{\hat{A}_j} - k_{\hat{A}_s}}{2}} \\
 & = \frac{(1+w(T)^{-1})^{k_{\hat{A}_s}/2}}{(1+w(T)^{-1})^{k_{\hat{A}_j}/2}} w(T)^{\frac{k_{\hat{A}_s}- k_{\hat{A}_j}}{2}} \\
 & = \mathcal{O}(w(T)^{\frac{k_{\hat{A}_s}- k_{\hat{A}_j}}{2}}).
\end{split}
\end{equation*}
When $T \ra \infty$, we have 
\begin{equation*} 
\begin{split}
C_{js} & = 0 \text{ when } k_{\hat{A}_s} < k_{\hat{A}_j}, \\
 & = 1 \text{ if } k_{\hat{A}_s} = k_{\hat{A}_j}, \\
 &  \ra + \infty \text{ when } k_{\hat{A}_s} > k_{\hat{A}_j}.
\end{split}
\end{equation*}
We now discuss three possible cases. 
\begin{enumerate}
	\item $k_{\hat{A}_s} < k_{\hat{A}_j}$ and the model $j$ does not nest the model $s$. In such case, the term $C_{js} \ra 0$. The second term also tends to zero since we have 
\begin{equation*} 
\begin{split}
D_{js} & = [\frac{\frac{g}{1+g}s_{\tilde{\bX}_{\tau_0}}+\frac{1}{(1+g)}s_{\tilde{\bX}_{\tau_0},\bX_s}}{\frac{g}{1+g}s_{\tilde{\bX}_{\tau_0}}+\frac{1}{(1+g)}s_{\tilde{\bX}_{\tau_0},\bX_j}}]^{\frac{T-K}{2}}\\
& = [\frac{g s_{\tilde{\bX}_{\tau_0}}+ s_{\tilde{\bX}_{\tau_0},\bX_s}}{g s_{\tilde{\bX}_{\tau_0}}+ s_{\tilde{\bX}_{\tau_0},\bX_j}}]^{\frac{T-K}{2}}
\end{split}
\end{equation*}	
Using he fact that $M_j$ does not nest $M_s$ and the Frisch-Waugh theorem (see also Lemma A.1 in \cite{fernandez2001benchmark}), we have that $\lim_{T \ra \infty} \frac{s_{\tilde{\bX}_{\tau_0},\bX_j}}{T} = \sigma^2 + b_j$ with $b_j>0$. Combining with the fact that $g\ra 0$, we end up with a limit of $D_{js}$ given by 
\begin{equation*} 
\begin{split}
\lim_{T \ra \infty} D_{js} & = [\frac{\sigma^2}{\sigma^2  + b_j}]^{\frac{T-K}{2}} \ra 0.
\end{split}
\end{equation*}
\item The model $j$ does not nest the true model but $K_j < K_s$. In such case, the term $C_{js} \ra + \infty$. However, we can show that $\lim_{T \ra \infty} C_{js} w(T)^{-\frac{(K_s-K_j)}{2} + \frac{K_s-K_j}{T-K}} \ra 1$. Indeed, we have that 
\begin{equation*} 
\begin{split}
\lim_{T \ra \infty} C_{js} w(T)^{-\frac{(K_s-K_j)}{2} + \frac{K_s-K_j}{T-K}} & = \frac{(1+w(T)^{-1})^{k_{\hat{A}_s}/2}}{(1+w(T)^{-1})^{k_{\hat{A}_j}/2}} w(T)^{\frac{K_s-K_j}{T-K}}.
\end{split}
\end{equation*}
Let us define $q_T = w(T)^{\frac{K_s-K_j}{T-K}}$. We can compute the limit as follows $\lim_{T \ra \infty} w(T)^{\frac{K_s-K_j}{T-K}} = \lim_{T \ra \infty} \exp \ln q_T$. The limit of $\ln q_T$ is given by 
\begin{equation*} 
\begin{split}
\lim_{T \ra \infty} q_T & =\lim_{T \ra \infty} \frac{K_s-K_j}{T-K} \ln w(T), \\
 & = \lim_{T \ra \infty} \frac{w'(T)}{w(T)} ~~( = 0~~\text{ by assumption}).
\end{split}
\end{equation*}
We conclude that $\lim_{T \ra \infty} w(T)^{\frac{K_s-K_j}{T-K}} = 1$. Now, we need to show that $D_{js}w(T)^{\frac{(K_s-K_j)}{2} - \frac{K_s-K_j}{T-K}} \ra 0$. In fact, we have 
		\begin{eqnarray*}
D_{js}w(T)^{\frac{\frac{(K_s-K_j)}{2}(T-K-2)}{T-K}}  & = & \lim_{T \ra \infty} (\underbrace{\frac{\sigma^2}{\sigma^2 + b_j}}_{a<1})^{\frac{T-K}{2}}  w(T)^{\frac{(K_s-K_j)}{2}}\\
 &  =& \lim_{T \ra \infty}  \frac{w(T)^{\frac{(K_s-K_j)}{2}}}{a^{-\frac{T-K}{2}}},
\end{eqnarray*}
By applying $\left\lceil \frac{(K_s-K_j)}{2}\right\rceil$ times the Hospital's rule, we find that $a^{\frac{T-K}{2}}$ dominates and so $D_{js}w(T)^{\frac{\frac{(K_s-K_j)}{2}(T-K-2)}{T-K}} \ra 0$. 
\item We now consider the last case in which the model $j$ nests the true model $s$. Consequently, we have $K_s < K_j$ and the term $C_{js} \ra 0$. Regarding the other term, we can express it as 
\begin{equation*} 
\begin{split}
D_{js} & = [\frac{g s_{\tilde{\bX}_{\tau_0}}+ s_{\tilde{\bX}_{\tau_0},\bX_s}}{g s_{\tilde{\bX}_{\tau_0}}+ s_{\tilde{\bX}_{\tau_0},\bX_j}}]^{\frac{T-K}{2}},\\
 & = \underbrace{[\frac{s_{\tilde{\bX}_{\tau_0},\bX_s}}{s_{\tilde{\bX}_{\tau_0},\bX_j}}]^{\frac{T-K}{2}}}_{Q_1} \underbrace{[\frac{A_s+ w(T)}{A_j+ w(T)}]^{\frac{T-K}{2}}}_{Q2},
\end{split}
\end{equation*}
where $A_i = \frac{s_{\tilde{\bX}_{\tau_0}}}{s_{\tilde{\bX}_{\tau_0},\bX_i}}$ for $i=j,s$. It is clear that the first term $Q_1$ has a limiting distribution related to the likelihood ratio test. In fact, we have that 
\begin{eqnarray*}
\frac{T-K}{2} \ln \frac{s_{\tilde{\bX}_{\tau_0},\bX_s}}{s_{\tilde{\bX}_{\tau_0},\bX_j}} & = & \underbrace{\frac{T-K}{2T}}_{\ra \frac{1}{2}} \underbrace{T\ln \frac{s_{\tilde{\bX}_{\tau_0},\bX_s}}{s_{\tilde{\bX}_{\tau_0},\bX_j}}}_{\ra_d \chi^2(\Delta_{js})},\\
& \ra_d & \text{Gamma}(\frac{\Delta_{js}}{2},1),
\end{eqnarray*}
in which $\Delta_{js} = |K_s-K_j|$. Since $Y \sim \text{Gamma}(\frac{\Delta_{js}}{2},1)$ is $\mathcal{O}_p(1)$, we have that $C_{js} \exp Y \ra_p 0$. \\%\arnaud{What bothers me is that $E(\exp Y) = \infty$ but this is the same argument in Steel et al.}. \\ 
Focusing on the second term $Q_2$, using assumption (iii), we have that 
\begin{equation*} 
\begin{split}
\frac{T-K}{2} \ln [\frac{A_s+ w(T)}{A_j+ w(T)}] & = \frac{T-K}{2}\ln [1 + \frac{A_s - A_j}{A_j+ w(T)}], \\
 & = \mathcal{O}_p(\frac{T}{w(T)}).\\
 & \ra_p  [0,\infty).
\end{split}
\end{equation*}
Since $C_{js} \ra 0$, we conclude that $C_{js} Q_1 Q_2 \ra_p 0$.
\end{enumerate}

\subsection{Convergence to the BIC \label{App:BIC}}
The BIC of a linear regression model with $K$ parameters is given by 
\begin{align} 
\begin{split}
BIC(K) &  = -\frac{T}{2} \ln(\frac{s_{\bX}}{T})	-\frac{K}{2} \ln T, \\ 
       & = \underbrace{-\frac{T}{2} \ln(s_{\bX})	-\frac{K}{2} \ln T}_{BIC^*(K)} + \frac{T}{2}\ln T,
\end{split}
\end{align}
in which $s_{\bX}$ denotes the sum of squared residuals given the $T \times K$ dimensional exogenous variables $\bX$ evaluated at the OLS estimates. In this appendix, we show that the logarithm of the marginal likelihood and the BIC$^*(\alpha k_{\hat{A}})$ converges in probability to 0 when $g_{\hat{A}} = \frac{1}{T^{\alpha}} $ with $\alpha>1$. In particular, the marginal likelihood is given by 
\begin{equation}
\begin{split}
f(\by|\ba,\lambda,\btau) & = (\frac{g_{\hat{A}}}{1+g_{\hat{A}}})^{k_{\hat{A}}/2} [\frac{g_{\hat{A}}}{1+g_{\hat{A}}}s_{\tilde{\bX}_{\tau_0}}+\frac{1}{(1+g_{\hat{A}})}s_{\tilde{\bX}_{\tau_0},\tilde{\bX}_{\btau}^{\hat{A}}}]^{-\frac{T-K}{2}}.
\end{split}
\end{equation}
We have the following results:
\begin{align} 
\begin{split}
f(\by|\ba,\lambda,\btau) & = T^{-\frac{\alpha k_{\hat{A}}}{2}} [\frac{1}{T^{\alpha}}s_{\tilde{\bX}_{\tau_0}}+\frac{T^{\alpha}-1}{T^{\alpha}}s_{\tilde{\bX}_{\tau_0},\tilde{\bX}_{\btau}^{\hat{A}}}]^{-\frac{T-K}{2}}, \\
 & = T^{-\frac{\alpha k_{\hat{A}}}{2}} [s_{\tilde{\bX}_{\tau_0},\tilde{\bX}_{\btau}^{\hat{A}}}]^{-\frac{T-K}{2}} [\frac{T^{\alpha}-1}{T^{\alpha}}]^{-\frac{T-K}{2}} [\frac{1}{T^{\alpha}-1} \frac{s_{\tilde{\bX}_{\tau_0}}}{s_{\tilde{\bX}_{\tau_0},\tilde{\bX}_{\btau}^{\hat{A}}}}+ 1]^{-\frac{T-K}{2}}, \\
 & = T^{-\frac{\alpha k_{\hat{A}}}{2}} [s_{\tilde{\bX}_{\tau_0},\tilde{\bX}_{\btau}^{\hat{A}}}]^{-\frac{T-K}{2}}  \underbrace{[1-\frac{1}{T^{\alpha}}]^{-\frac{T-K}{2}}}_{C_1} \underbrace{[\frac{1}{T^{\alpha}-1} \frac{s_{\tilde{\bX}_{\tau_0}}}{s_{\tilde{\bX}_{\tau_0},\tilde{\bX}_{\btau}^{\hat{A}}}}+ 1]^{-\frac{T-K}{2}}}_{C_2}, \\
\ln f(\by|\ba,\lambda,\btau) & = -\frac{T - K}{T}\frac{T}{2}\ln s_{\tilde{\bX}_{\tau_0},\tilde{\bX}_{\btau}^{\hat{A}}} - \frac{\alpha k_{\hat{A}}}{2}\ln T + \ln C_1 + \ln C_2
\end{split}
\end{align}
We now show that the two quantities, i.e. $C_1$ and $C_2$, tends to 1 when $T\ra +\infty$:
\begin{align} 
\begin{split}
C_1 & = \exp \left(-\frac{T-K}{2} \ln [1-\frac{1}{T^{\alpha}}] \right), \\
    & \ra_p 1 \text{ since } \alpha>1, \\
C_2 & = \exp \left(-\frac{T-K}{2} \ln	[\frac{1}{T^{\alpha}-1} \frac{s_{\tilde{\bX}_{\tau_0}}}{s_{\tilde{\bX}_{\tau_0},\tilde{\bX}_{\btau}^{\hat{A}}}}+ 1]	\right), \\
    & \ra_p 1 \text{ since } \alpha>1 \text{ and } \frac{s_{\tilde{\bX}_{\tau_0}}}{s_{\tilde{\bX}_{\tau_0},\tilde{\bX}_{\btau}^{\hat{A}}}}=\mathcal{O}_p(1),
    \end{split}
\end{align}

It follows that 
\begin{align} 
\begin{split} \ln f(\by|\ba,\lambda,\btau) -  \left(-\frac{T - K}{T}\frac{T}{2} \ln s_{\tilde{\bX}_{\tau_0},\tilde{\bX}_{\btau}^{\hat{A}}}	-\frac{\alpha k_{\hat{A}}}{2} \ln T\right) &\overset{p}{\rightarrow} 0\\
\ln f(\by|\ba,\lambda,\btau) -  \left(-\frac{T}{2} \ln s_{\tilde{\bX}_{\tau_0},\tilde{\bX}_{\btau}^{\hat{A}}}	-\frac{\alpha k_{\hat{A}}}{2} \ln T\right) &\overset{p}{\rightarrow} 0 \\
\dfrac{f(\by|\ba,\lambda,\btau)}{\exp\left(-\frac{T}{2} \ln s_{\tilde{\bX}_{\tau_0},\tilde{\bX}_{\btau}^{\hat{A}}}	-\frac{\alpha k_{\hat{A}}}{2} \ln T\right)} &\overset{p}{\rightarrow} 1
\end{split}
\end{align}

\noindent As a result, the models selected using the marginal likelihood are equivalent (asymptotically) to those of the BIC with $\alpha k_{\hat{A}}$ parameters. In addition, the posterior probabilities computed using the marginal likelihood converge to the posterior probabilities that would have been computed using the BIC with $\alpha k_{\hat{A}}$ parameters (since the term $\frac{T}{2}\ln T$ cancels out).
\clearpage 

\section{Time-varying parameter model\label{App:TVP}}
We also consider a standard time-varying parameter process (see \cite{Primiceri02}). The model specification is given by 
\begin{eqnarray}
y_t & = & \bx_t' \bbeta_0 + \bx_t' \text{diag}(\bomega)\bbeta_t + \sigma_t \epsilon_t,\label{eq:Bitto1}\\
\bbeta_t|\bbeta_{t-1} & \sim & \NORM(\bbeta_{t-1},I_K), \\
\ln \sigma_t^2 & = &  \ln \sigma_{t-1}^2 + \eta_t, \text{for } t>0, \\
\ln \sigma_0^2 & \sim & N(0,1), \label{eq:Bitto2}
\end{eqnarray}
where $\bomega=(\omega_1,...,\omega_K)'$, $\eta_t \sim \NORM(0,q)$ with $q\sim \mathcal{IG}(\underline{q}_1=3,\underline{q}_2=20)$, $(\bbeta_0',\bomega')' \sim \NORM(0,I_{2K})$ and $K$ stands for the number of explanatory variables. We also define $\ln \sigma_{1:T} = (\ln \sigma_{1}^2, \ldots, \ln \sigma_{T}^2)'$ and $y_{1:T}$, $\bbeta_{0:T}$ analogously. In order to take into account the autocorrelation structure, we use the same lag orders as exposed in Table \ref{JBF::order}. The other explanatory variables consist in an intercept and the seven factors.\\
The model parameters can be estimated with a standard Markov-chain Monte Carlo \citep[e.g.,][]{bitto2018achieving}. In particular,  the model parameters are estimated using an MCMC algorithm which consists of the following steps:
\begin{itemize}
	\item Sampling $\ln \sigma_{1:T}$ using the offset mixture approach of \cite{kim1998stochastic}. In particular, given $\bbeta_{0:T}$ and $\bomega$, we compute $y_t^* = \ln(\nu_t^2 + c)$, for all $t=1,\ldots,T$ in which $\nu_t = y_t- \bx_t (\bbeta_0 + \text{diag}(\omega_1,...,\omega_K)\bbeta_t)$ and $c=0.0001$. The model boils down to a standard TVP model with time-varying intercept since we have
	\begin{eqnarray}
y_t^* & = & \ln \sigma_t^2  + \ln \epsilon_t^2,\\
\ln \sigma_t^2 & = &  \ln \sigma_{t-1}^2 + \eta_t.
\end{eqnarray}
Approximating the distribution $\ln \epsilon_t^2$ with a 8-component mixture of normal distributions, we sample the time-varying variance from a high-dimensional multivariate normal distribution using the sampler 'all without a loop' (AWOL) as suggested in \cite{mccausland2011simulation} \citep[see also][for a simple exposition of the algorithm]{kastner2014ancillarity}.
\item Sampling $\ln \sigma_0^2|\ln \sigma_1^2,q \sim N((q^{-1}+1)^{-1}\frac{\ln \sigma_1^2}{q},(q^{-1}+1)^{-1})$.
\item Sampling $q|\ln \sigma_{0:T} \sim IG(\underline{q}_1+\frac{T}{2},\underline{q}_2 + \sum_{t=1}^{T}\frac{(\ln \sigma_t^2-\ln \sigma_{t-1}^2)^2}{2}$).
\item Sampling $\bbeta_{1:T}|y_{1:T},\bbeta_0,\bomega,\ln \sigma_{1:T}$ using a Kalman filter. Note that this step could also be carried out with the AWOL approach.
\item Sampling $(\bbeta_0',\bomega')'|y_{1:T},\bbeta_{1:T},\ln \sigma_{1:T}$ from a multivariate normal distribution. In fact, conditioning to $\bbeta_{1:T}$ and $\ln \sigma_{1:T}$, the model boils down to a standard regression model since we have
\begin{eqnarray}
y_t & = & \bx_t' \bbeta_0 + (\bx_t'\text{diag}(\bbeta_t)) \bomega + \sigma_t \epsilon_t,\\
    & = & \bx_t' \bbeta_0 + \mathbf{z}_t'\bomega + \sigma_t \epsilon_t,
\end{eqnarray}
in which $\mathbf{z}_t=(\bx_t'\text{diag}(\bbeta_t))'$. 
\end{itemize}

\subsection{Time-varying parameters computed with the FIA returns \label{App:TVP1}}
%%%% Updated  may 25th, 2020
\renewcommand{\baselinestretch}{1}
\begin{figure}[h!]
	\begin{center}
		\subfloat[SELO - PMKT]{\includegraphics[width=6.5cm,height=4.5cm]{./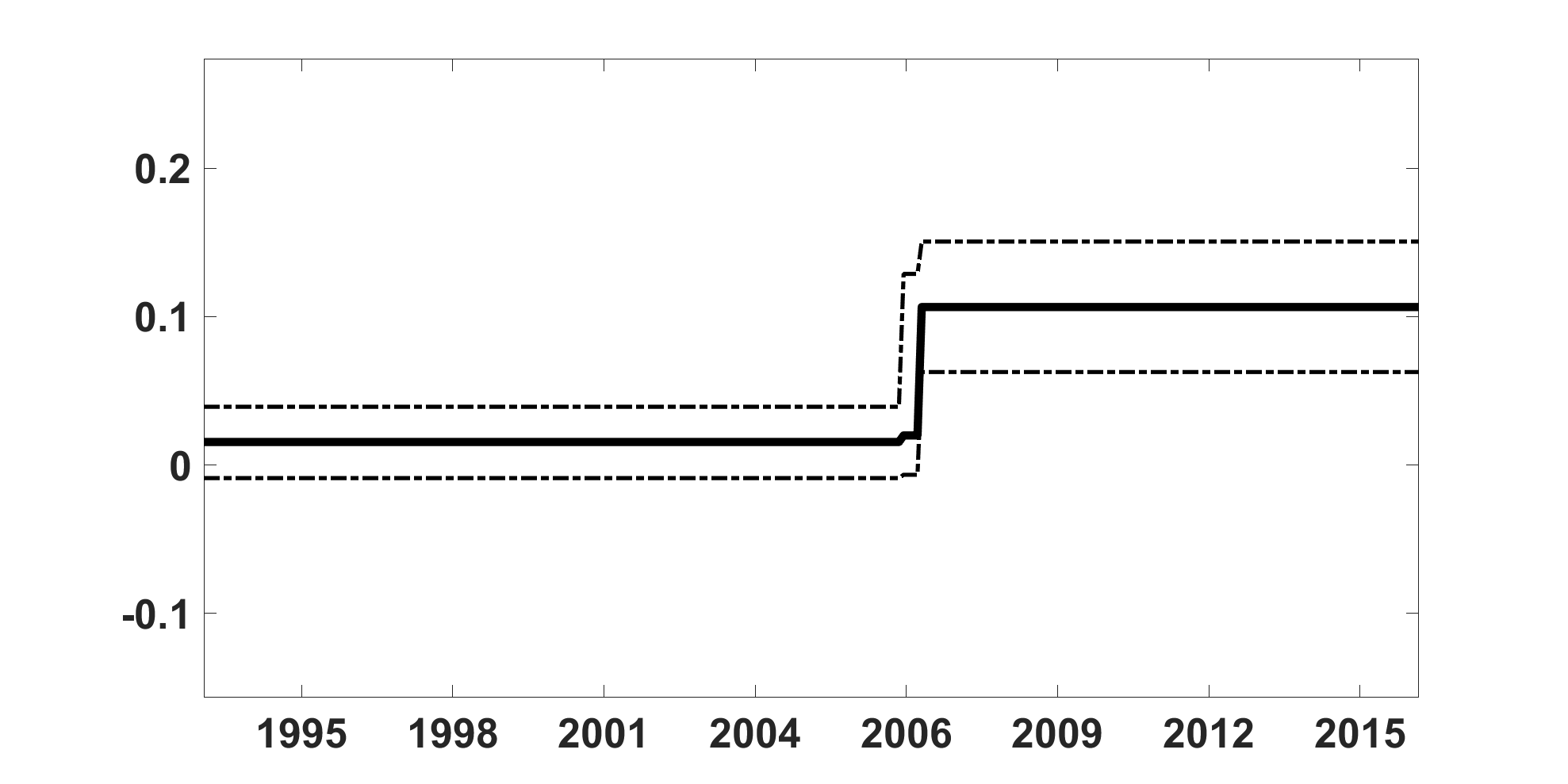}}
		\subfloat[TVP - PMKT]{\includegraphics[width=6.5cm,height=4.5cm]{./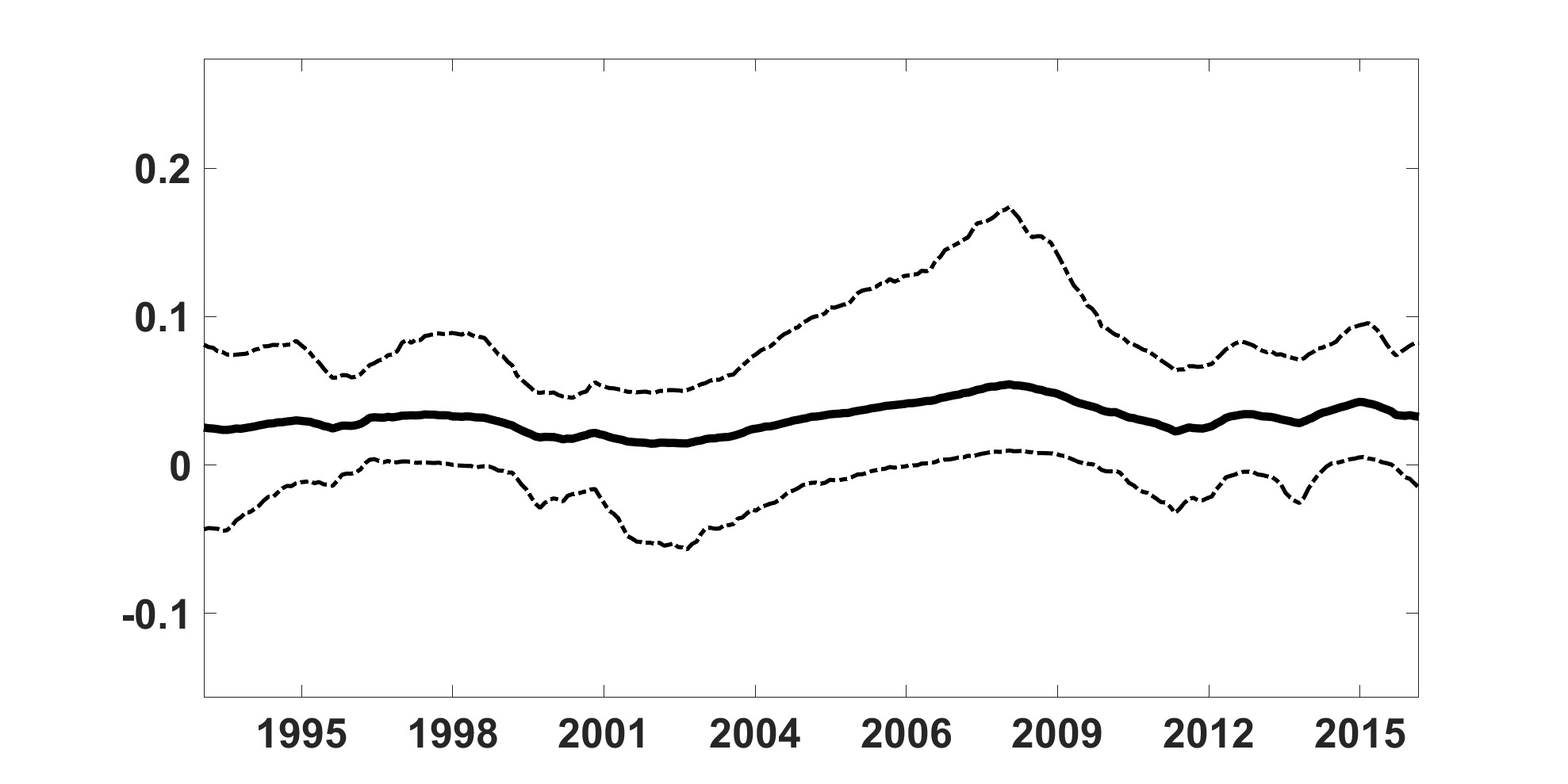}}\\
		\subfloat[SELO -SMB]{\includegraphics[width=6.5cm,height=4.5cm]{./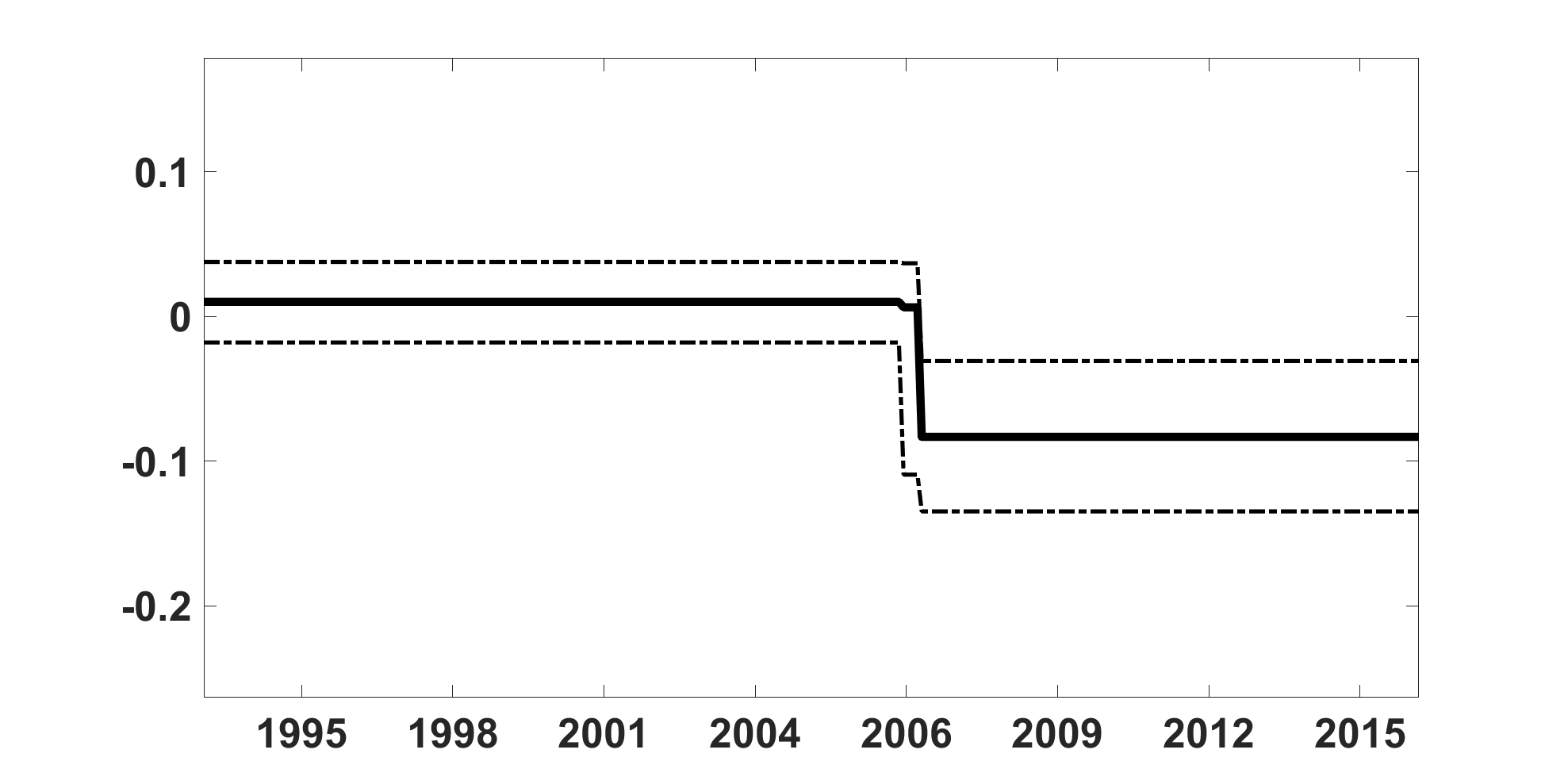}}
		\subfloat[TVP - SMB]{\includegraphics[width=6.5cm,height=4.5cm]{./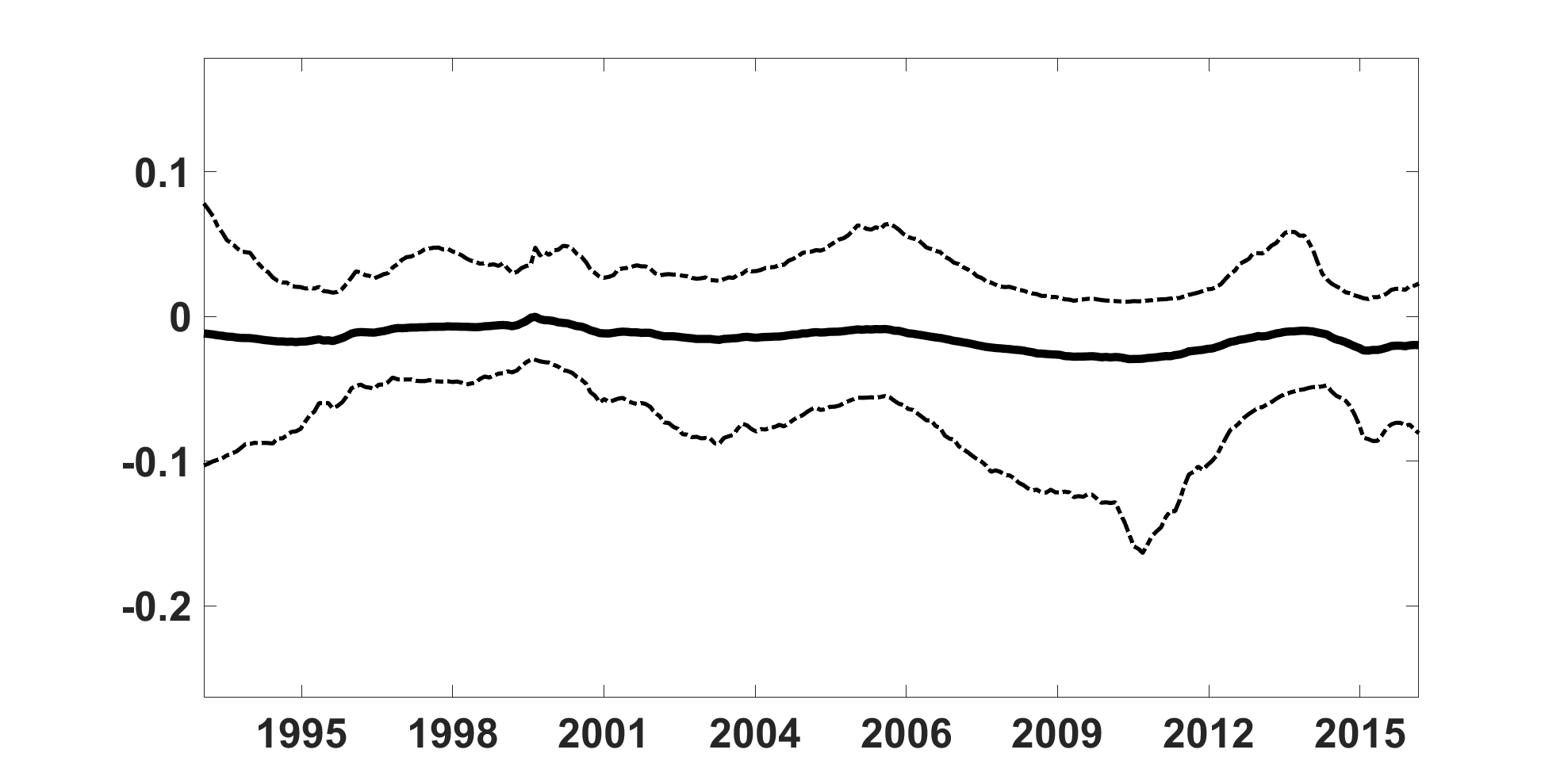}}\\
		\subfloat[SELO - TERM]{\includegraphics[width=6.5cm,height=4.5cm]{./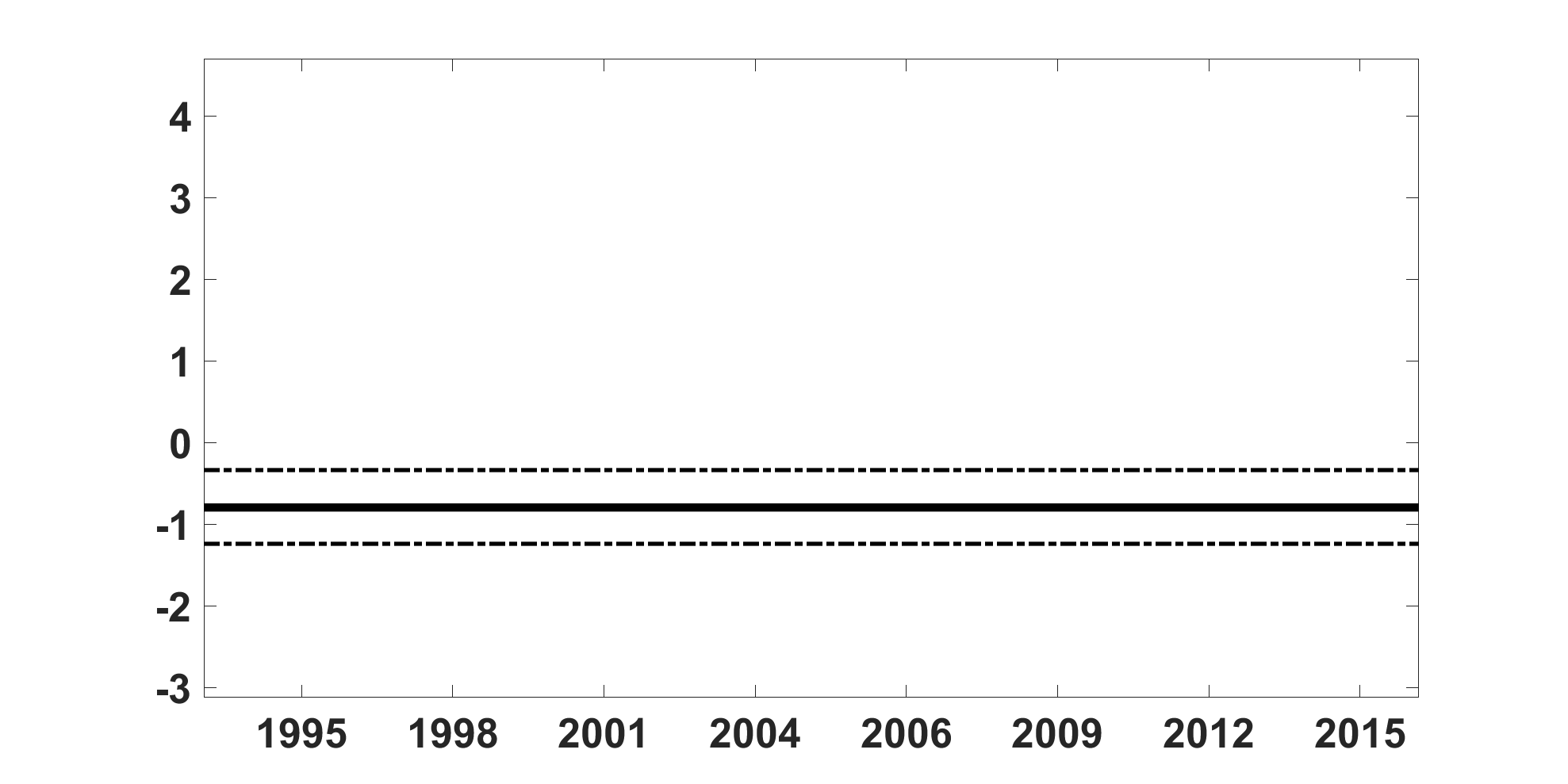}}
		\subfloat[TVP - TERM]{\includegraphics[width=6.5cm,height=4.5cm]{./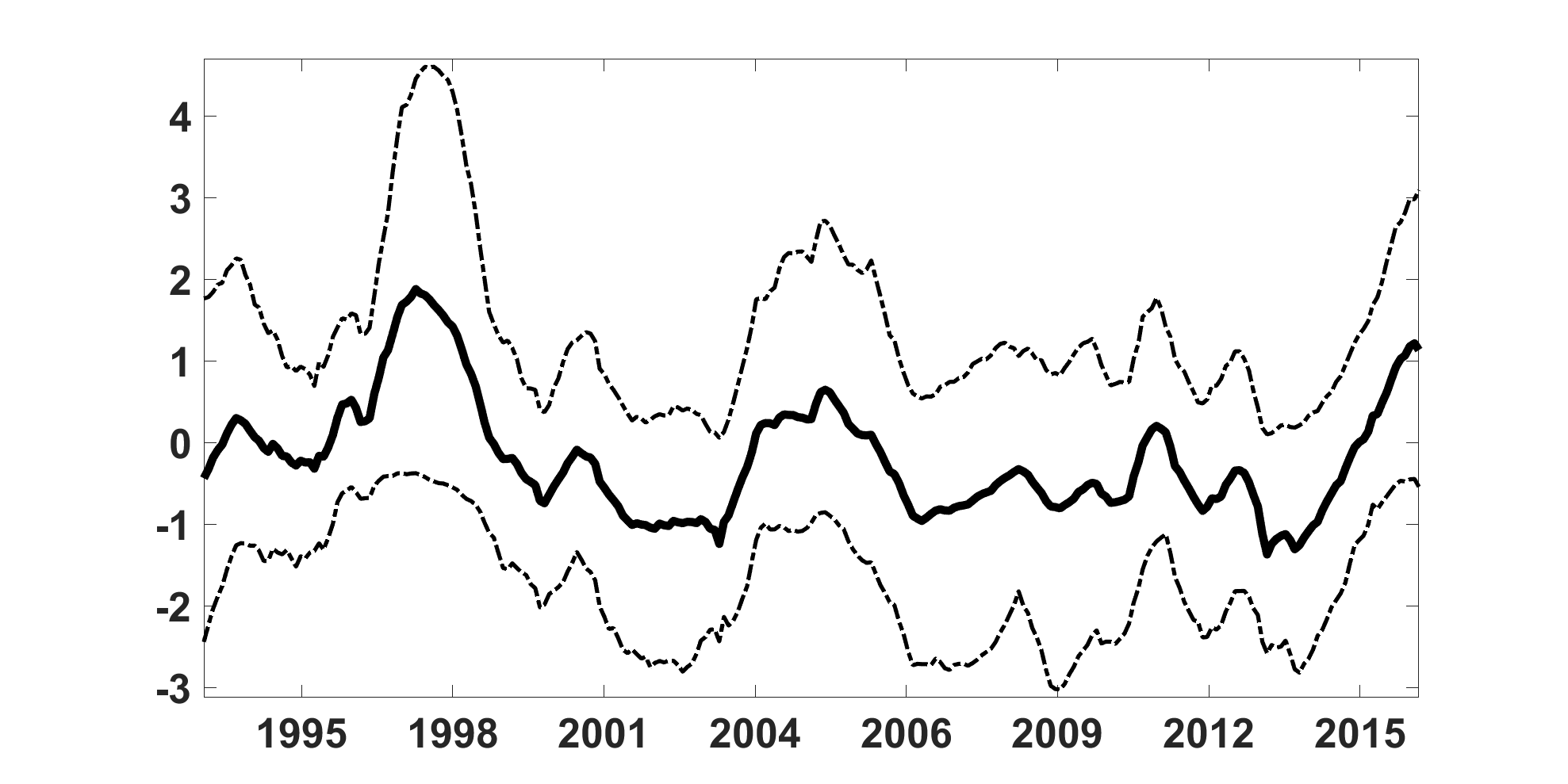}}\\
			\subfloat[SELO - DEF]{\includegraphics[width=6.5cm,height=4.5cm]{./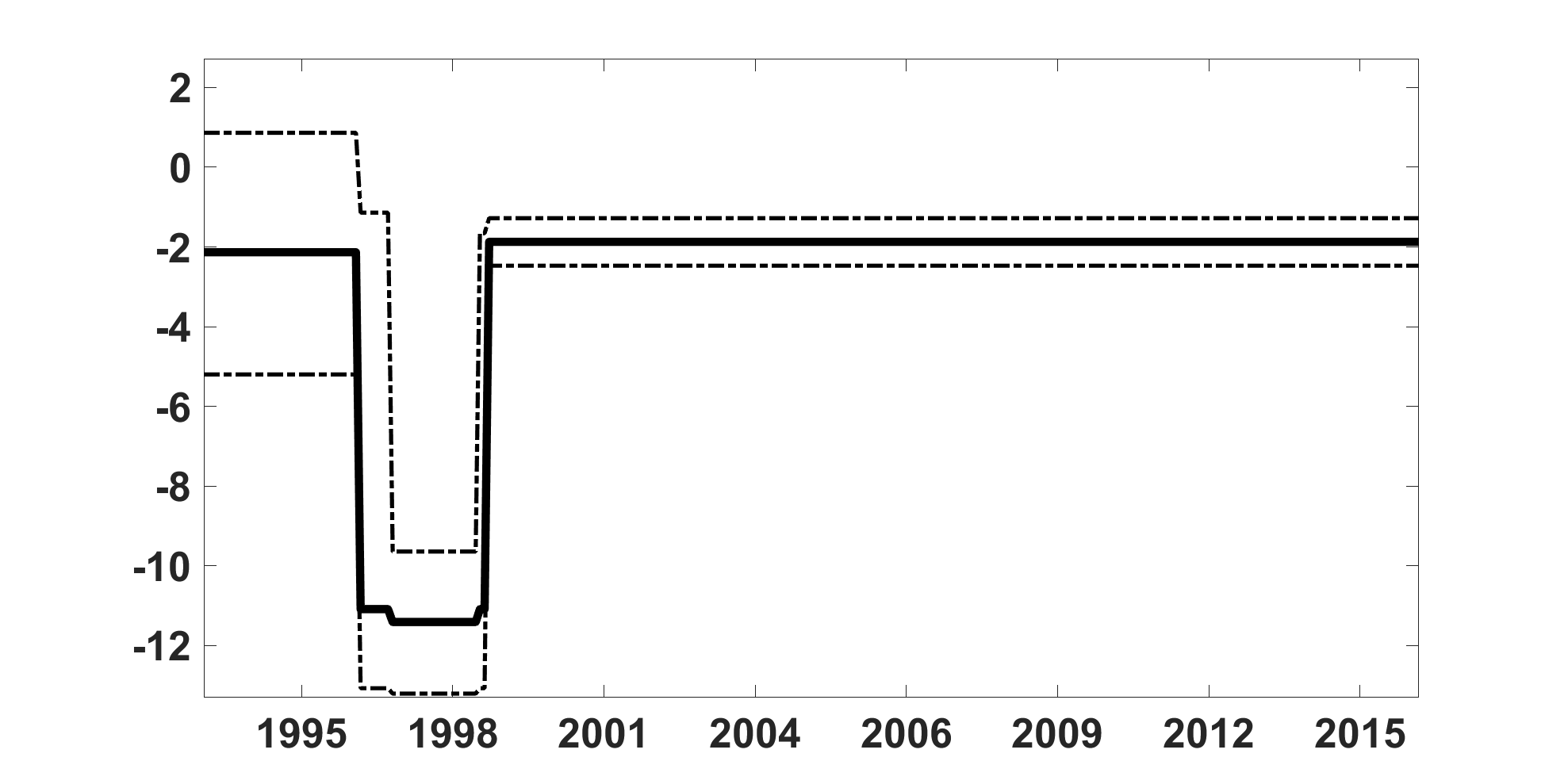}}
	\subfloat[TVP - DEF]{\includegraphics[width=6.5cm,height=4.5cm]{./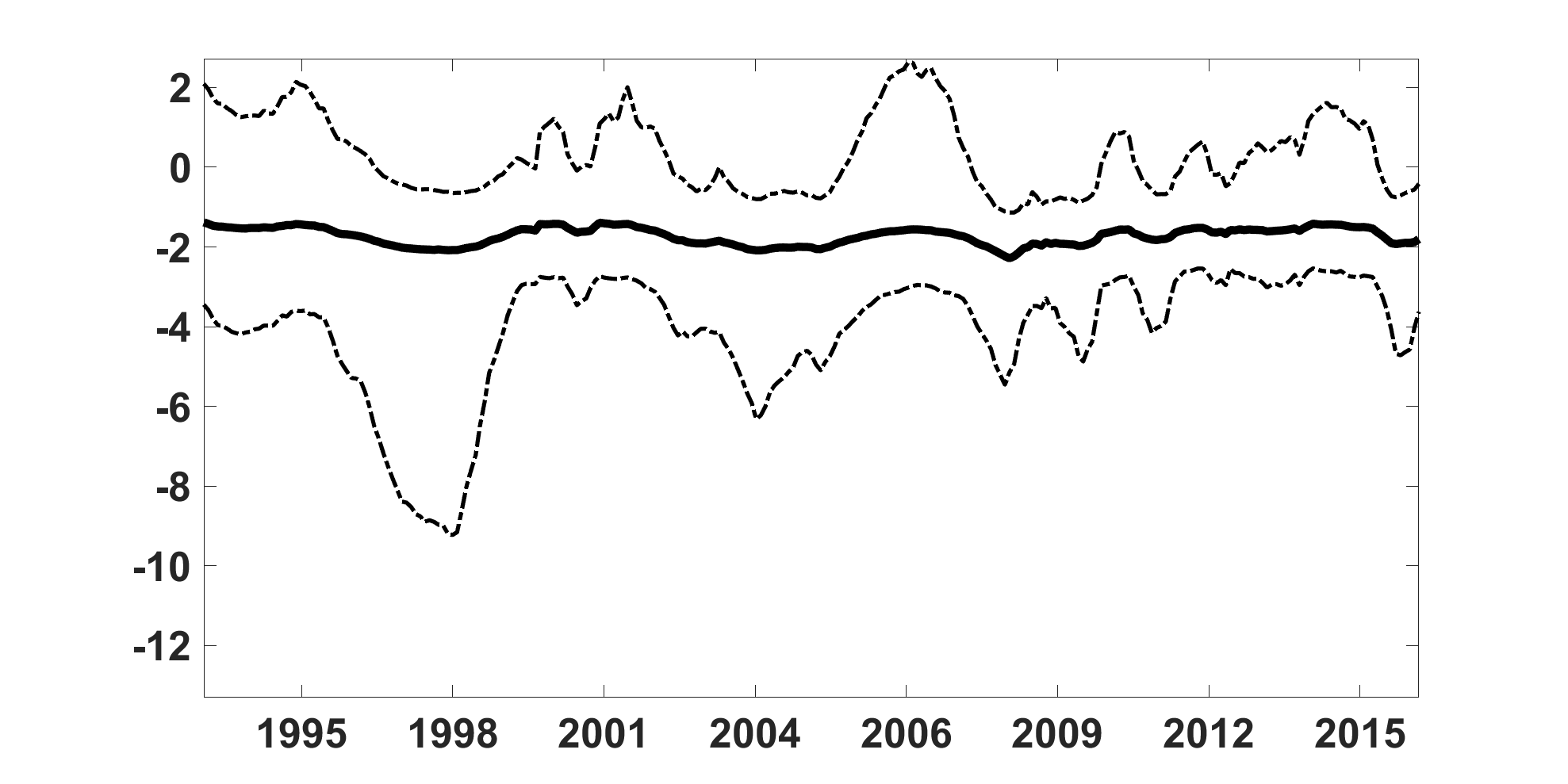}}\\
		\caption{\justifying \textbf{FIA returns - Selective segmentation (SELO) model and Time-varying parameter (TVP) model.} Posterior medians (black) and the 90\% credible intervals (dotted black lines) of the model parameters over time. For the SELO method, we take the break uncertainty into account using the MCMC algorithm presented in Section \ref{sec:breakuncertainty}. \label{fig:strat10}}
	\end{center}
\end{figure}
\renewcommand{\baselinestretch}{1.5}

%%%% Updated  may 25th, 2020
\renewcommand{\baselinestretch}{1}
\begin{figure}[h!]
	\begin{center}
		\subfloat[SELO - PTFSBD]{\includegraphics[width=6.5cm,height=4.5cm]{./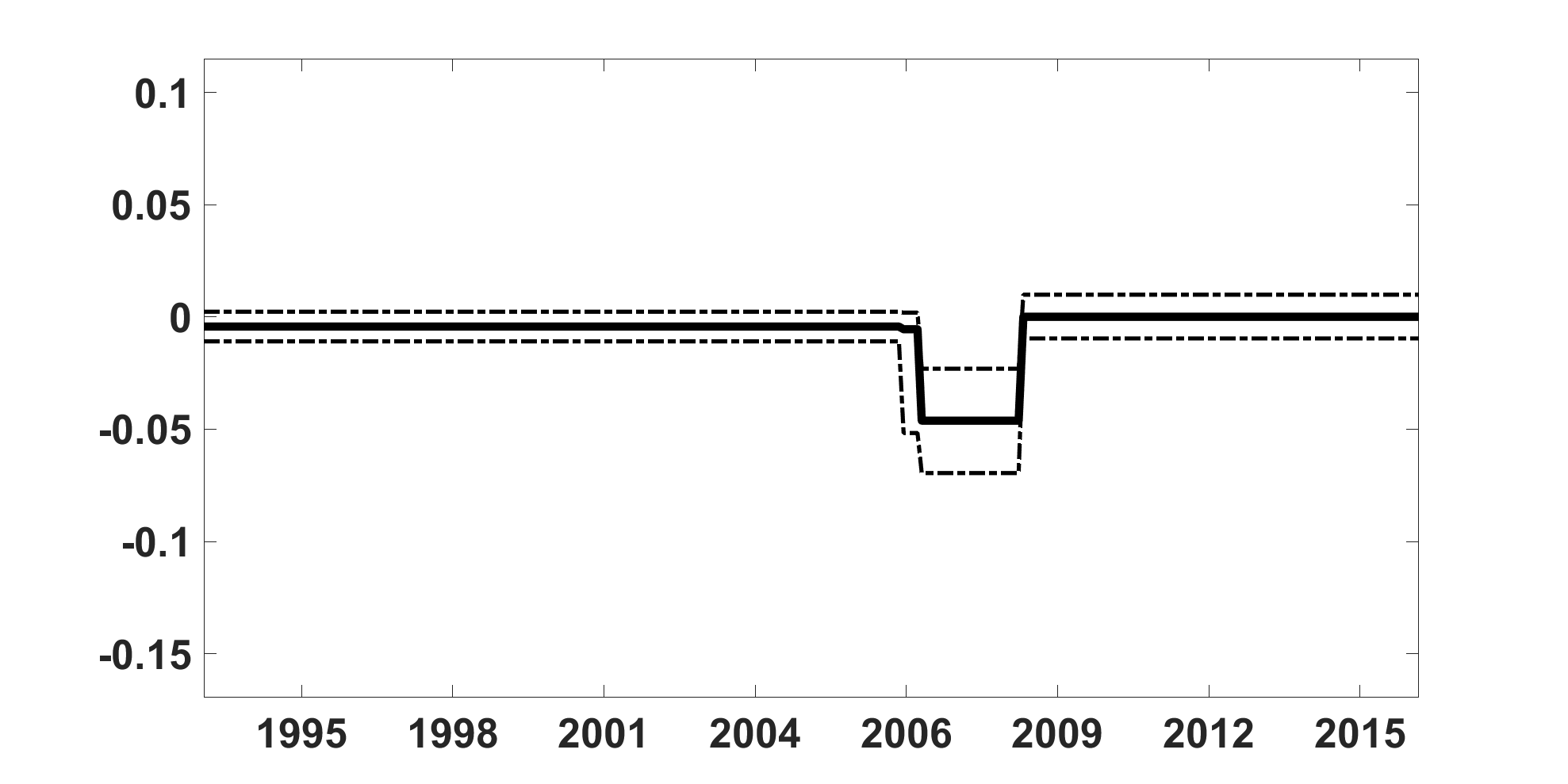}}
		\subfloat[TVP - PTFSBD]{\includegraphics[width=6.5cm,height=4.5cm]{./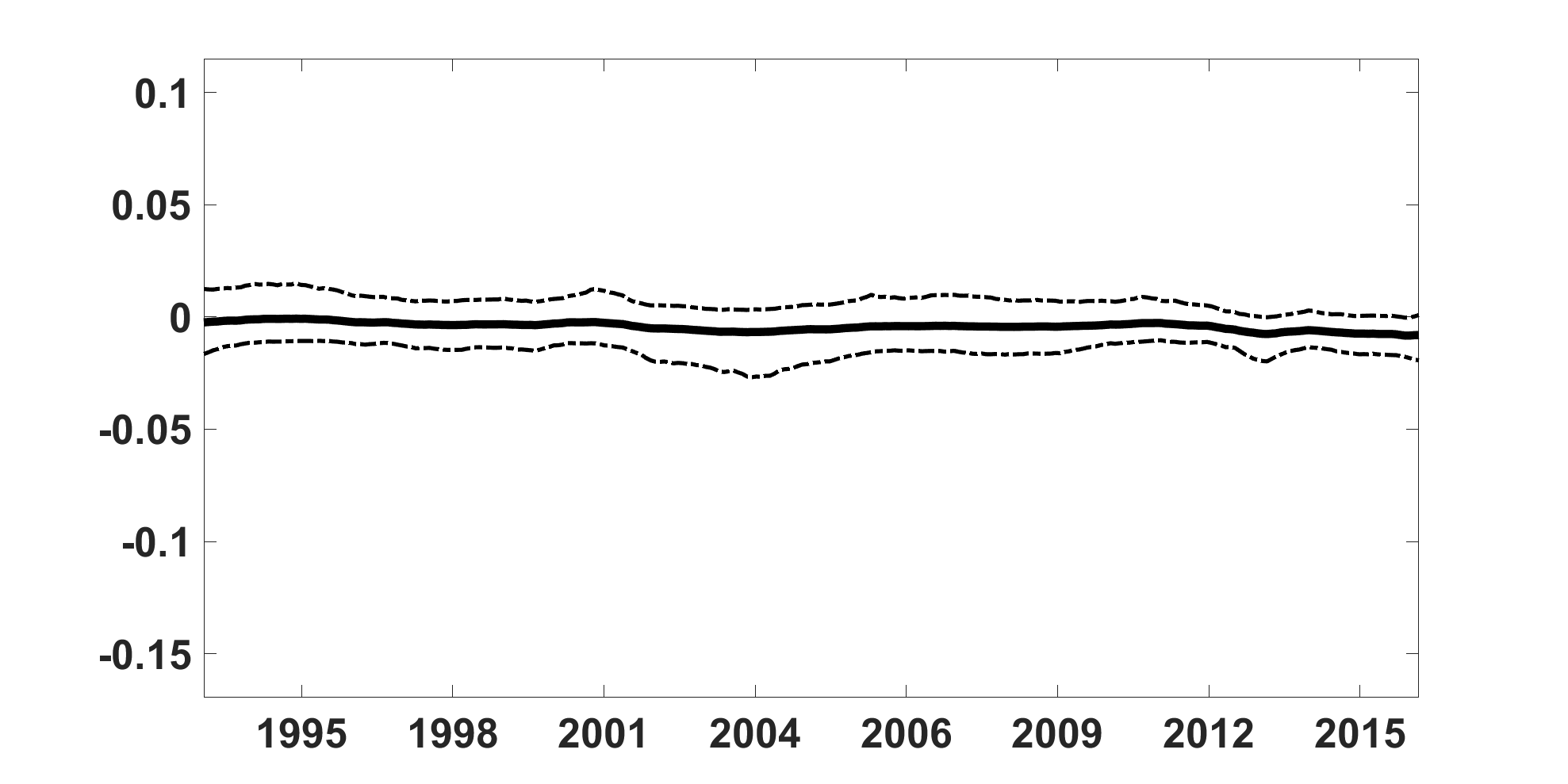}}\\
		\subfloat[SELO - PTFSFX]{\includegraphics[width=6.5cm,height=4.5cm]{./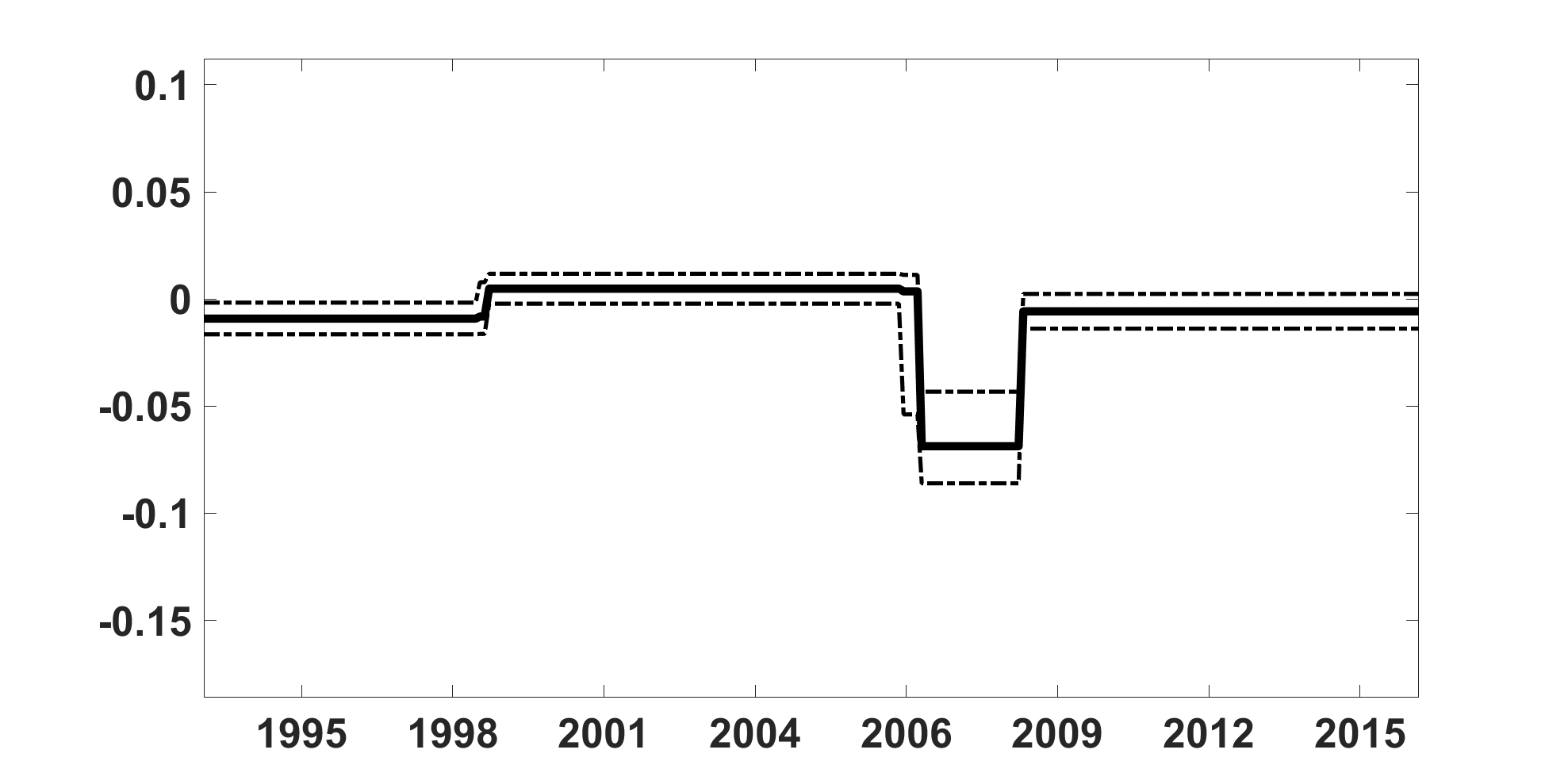}}
		\subfloat[TVP - PTFSFX]{\includegraphics[width=6.5cm,height=4.5cm]{./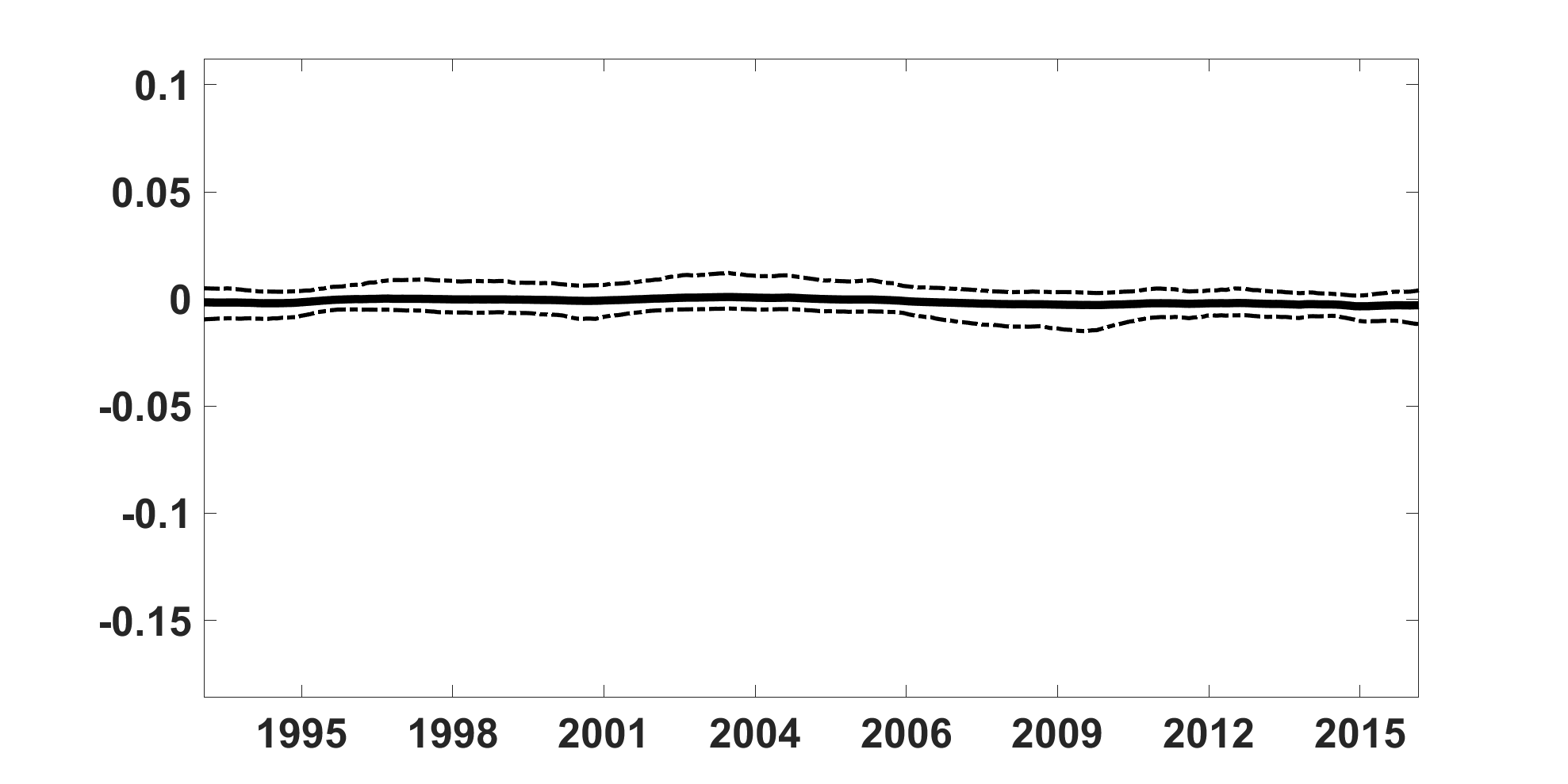}}\\
		\subfloat[SELO - PTFSCOM]{\includegraphics[width=6.5cm,height=4.5cm]{./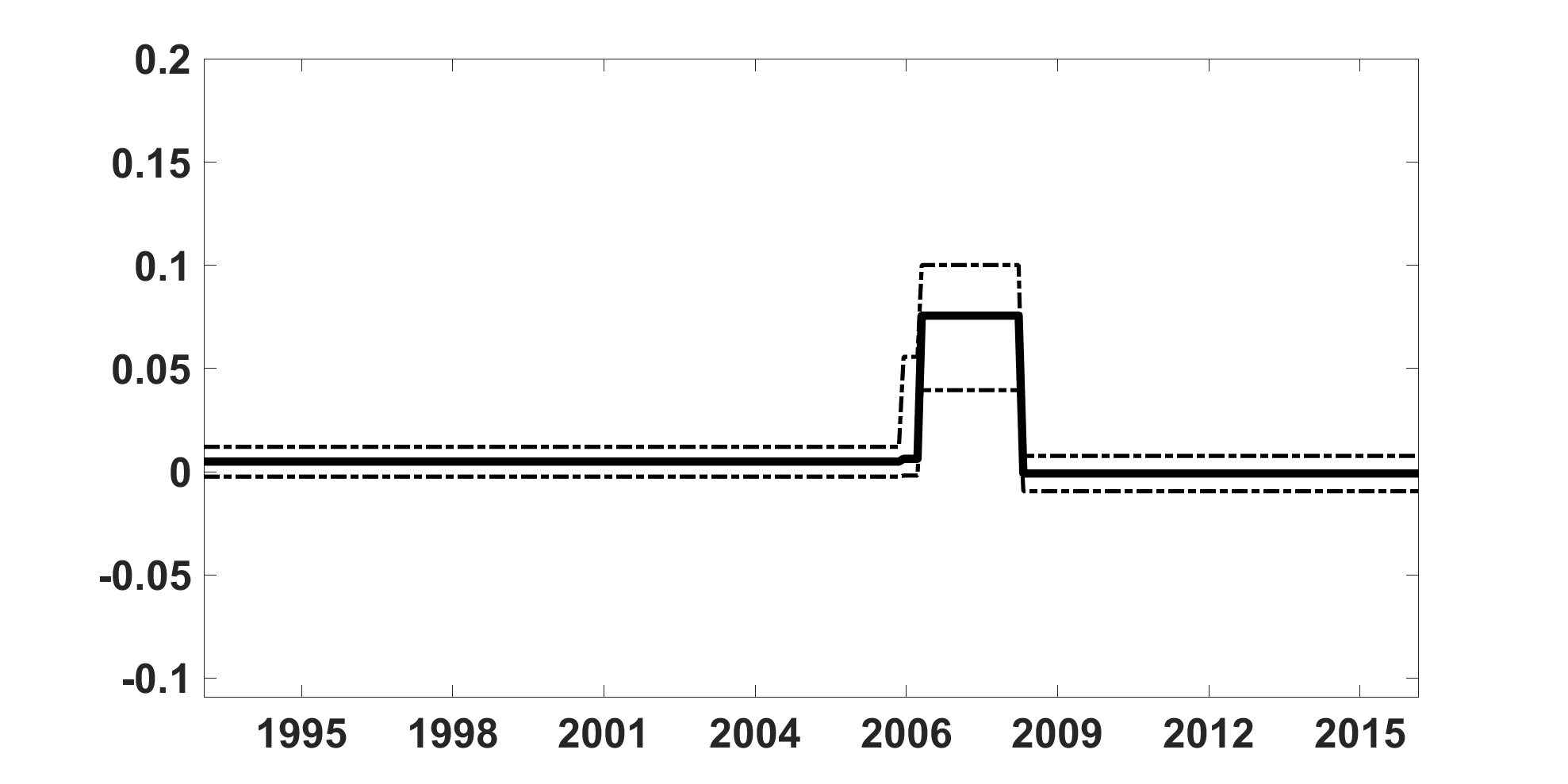}}
		\subfloat[TVP - PTFSCOM]{\includegraphics[width=6.5cm,height=4.5cm]{./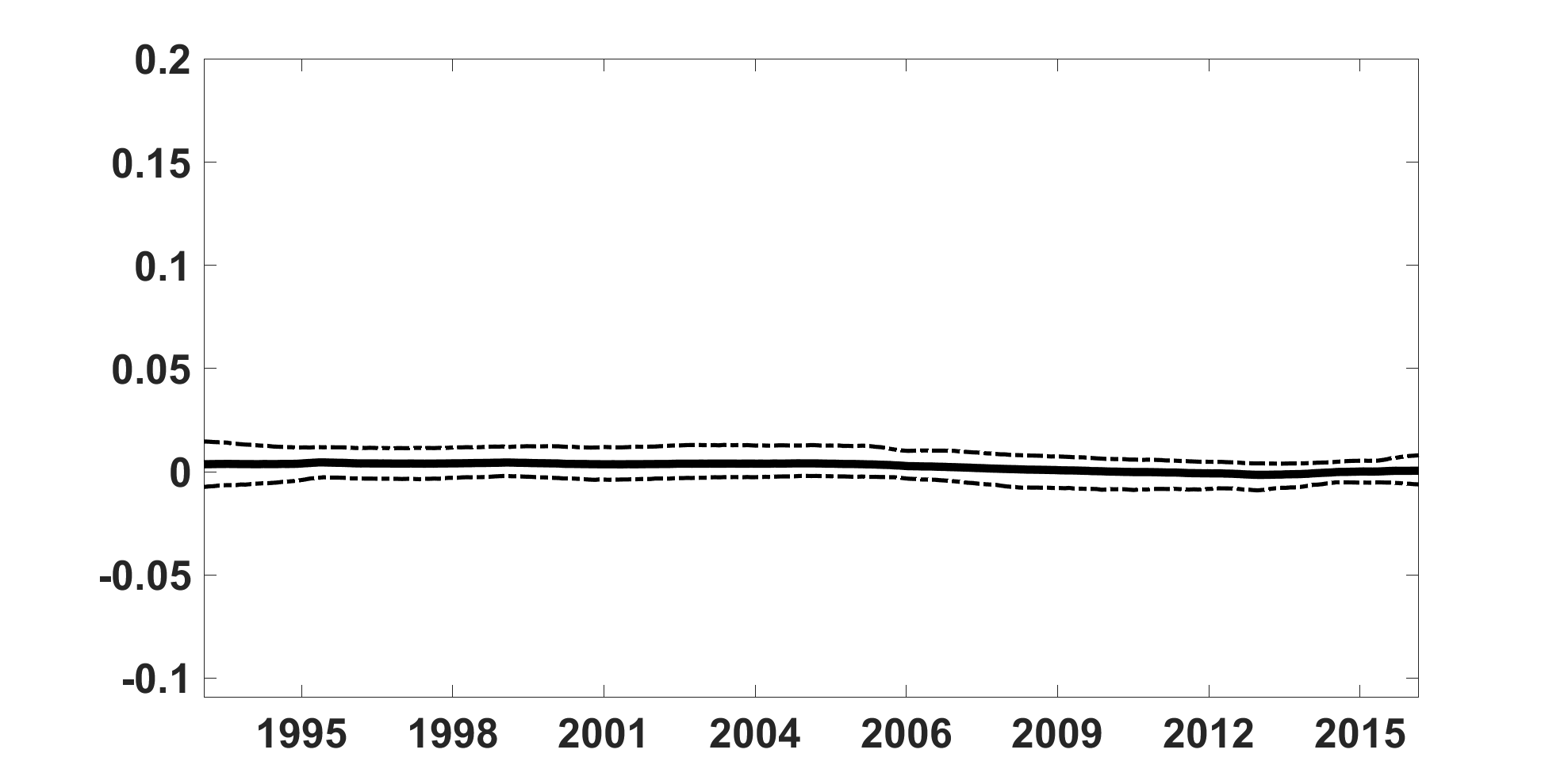}}\\
		\subfloat[SELO - UMD]{\includegraphics[width=6.5cm,height=4.5cm]{./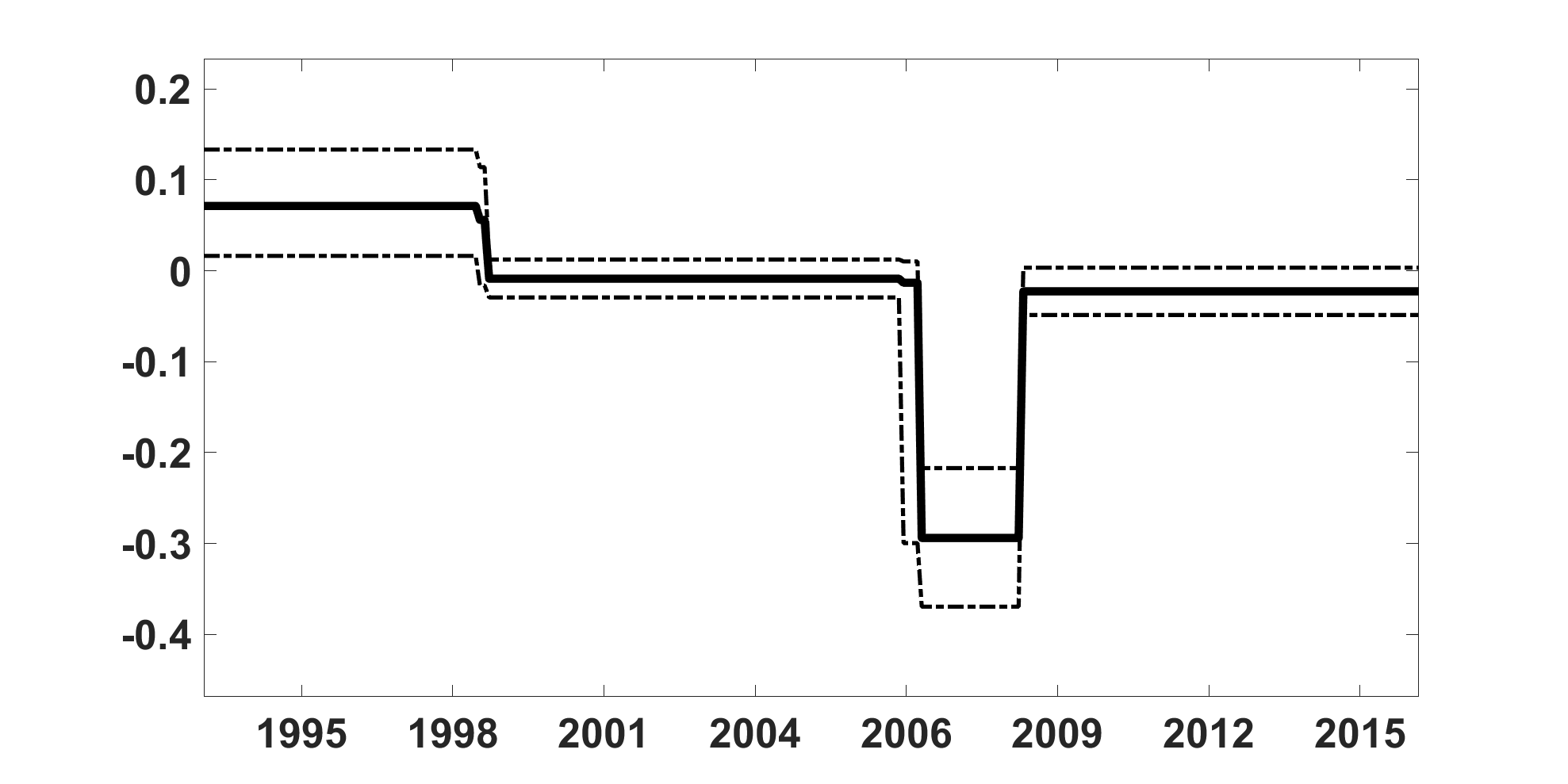}}
	\subfloat[TVP - UMD]{\includegraphics[width=6.5cm,height=4.5cm]{./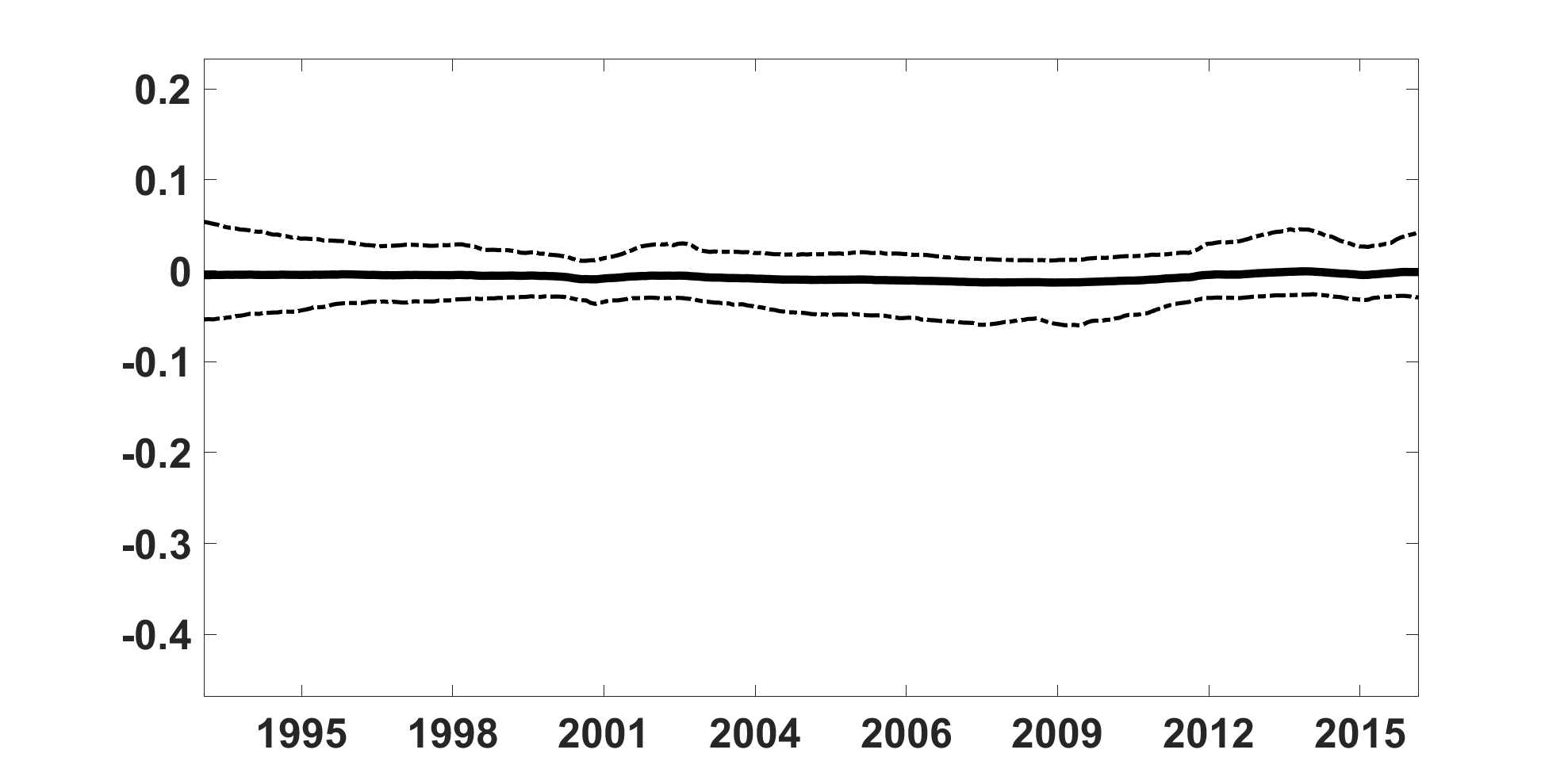}}\\
		\caption{\justifying \textbf{FIA returns - Selective segmentation (SELO) model and Time-varying parameter (TVP) model.} Posterior medians (black) and the 90\% credible intervals (dotted black lines) of the model parameters over time. For the SELO method, we take the break uncertainty into account using the MCMC algorithm presented in Section \ref{sec:breakuncertainty}. \label{fig:strat10_2}}
	\end{center}
\end{figure}
\renewcommand{\baselinestretch}{1.5}

%%%% Updated  may 25th, 2020
\renewcommand{\baselinestretch}{1}
\begin{figure}[h!]
	\begin{center}
		\subfloat[SELO - PTFSIR]{\includegraphics[width=6.5cm,height=4.5cm]{./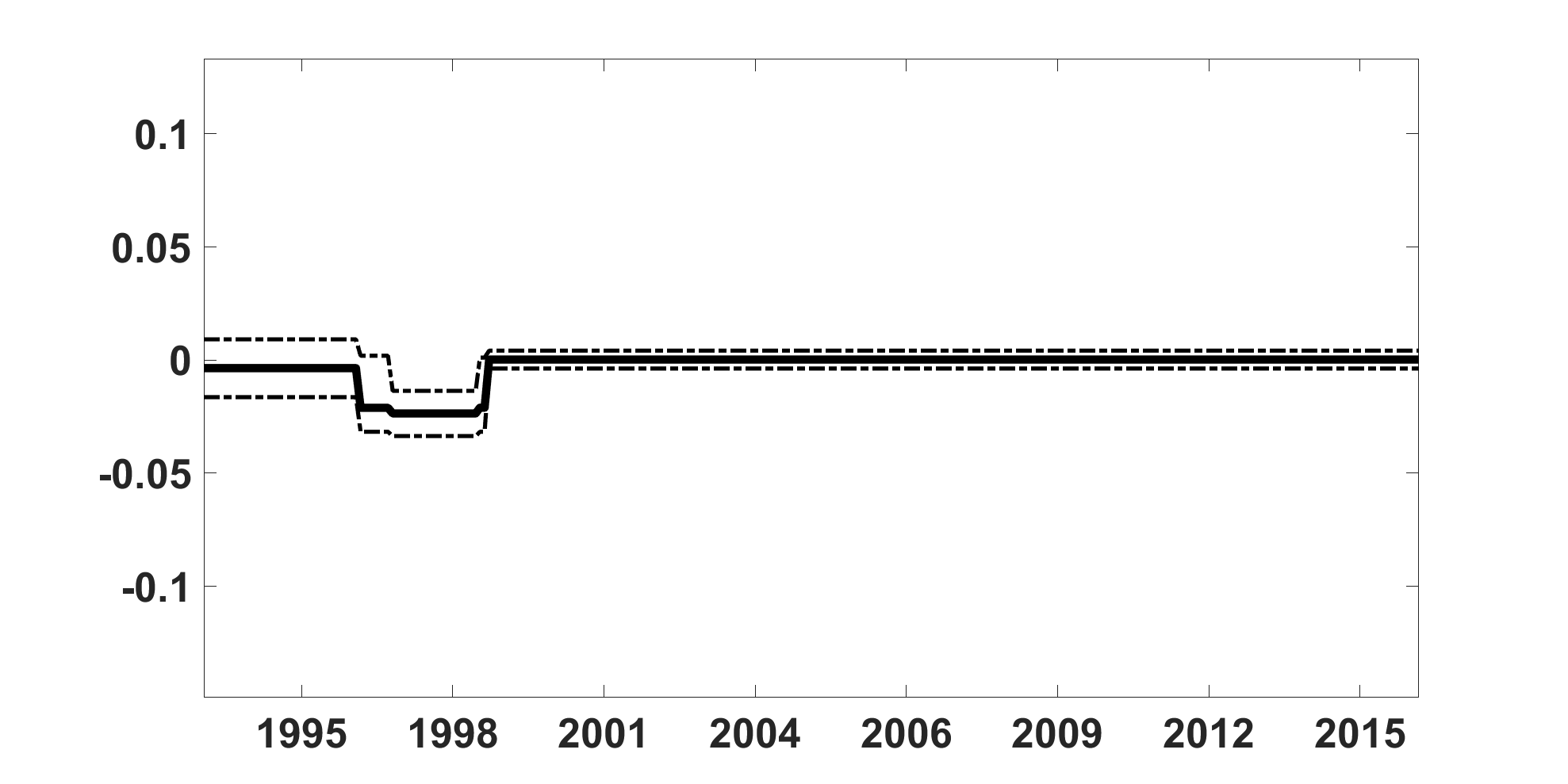}}
		\subfloat[TVP - PTFSIR]{\includegraphics[width=6.5cm,height=4.5cm]{./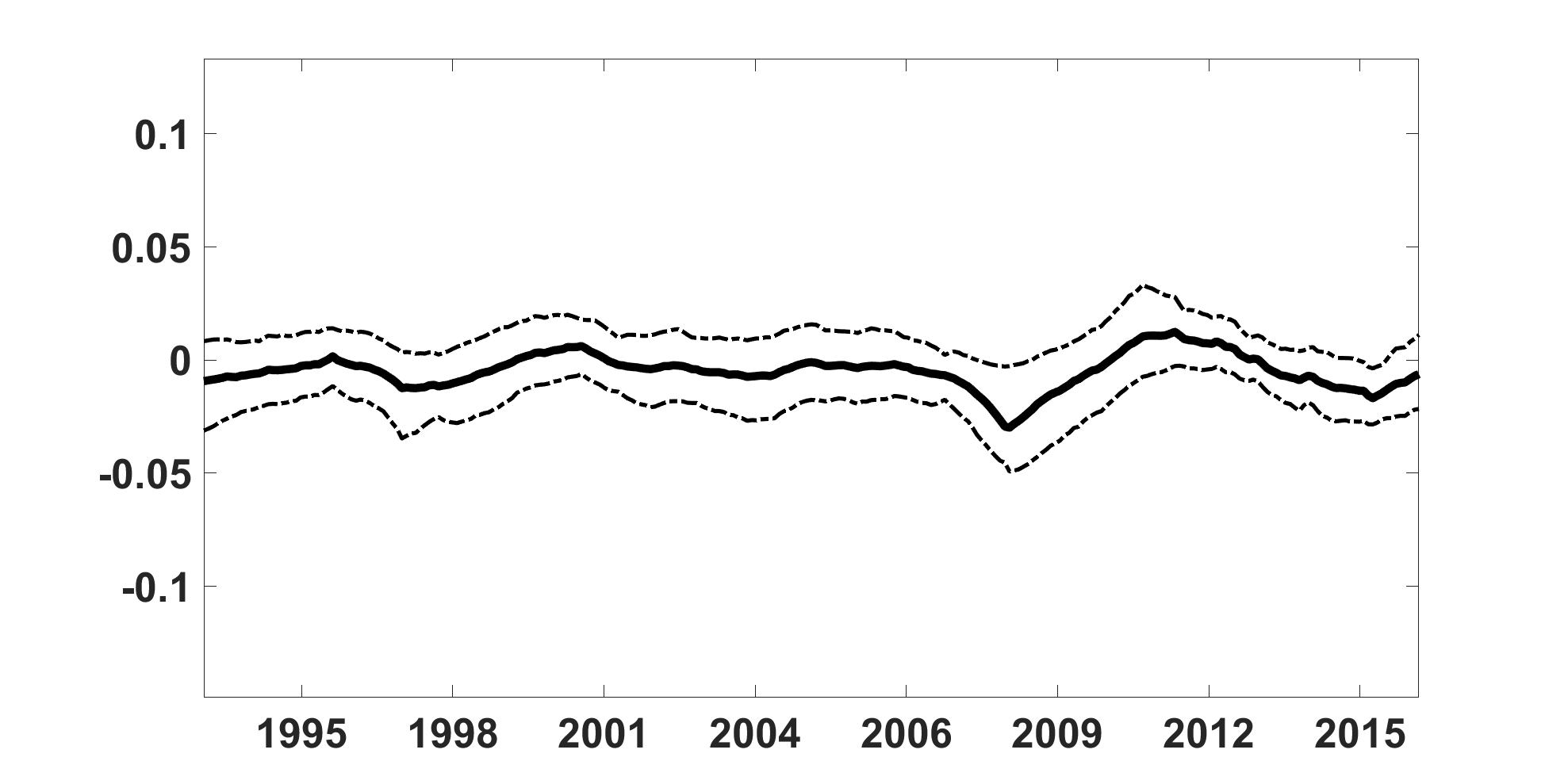}}\\
		\subfloat[SELO - PTFSSTK]{\includegraphics[width=6.5cm,height=4.5cm]{./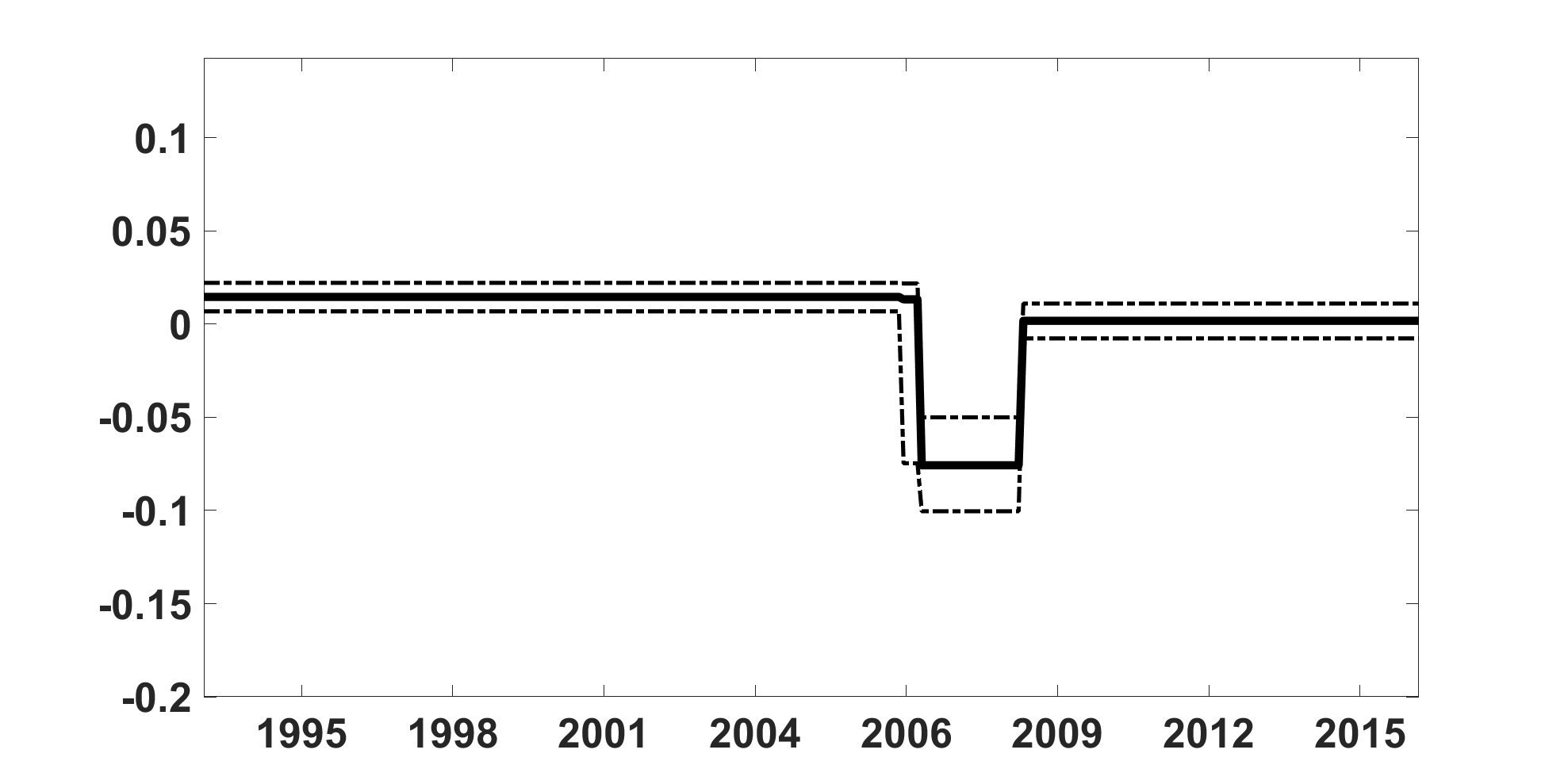}}
		\subfloat[TVP - PTFSSTK]{\includegraphics[width=6.5cm,height=4.5cm]{./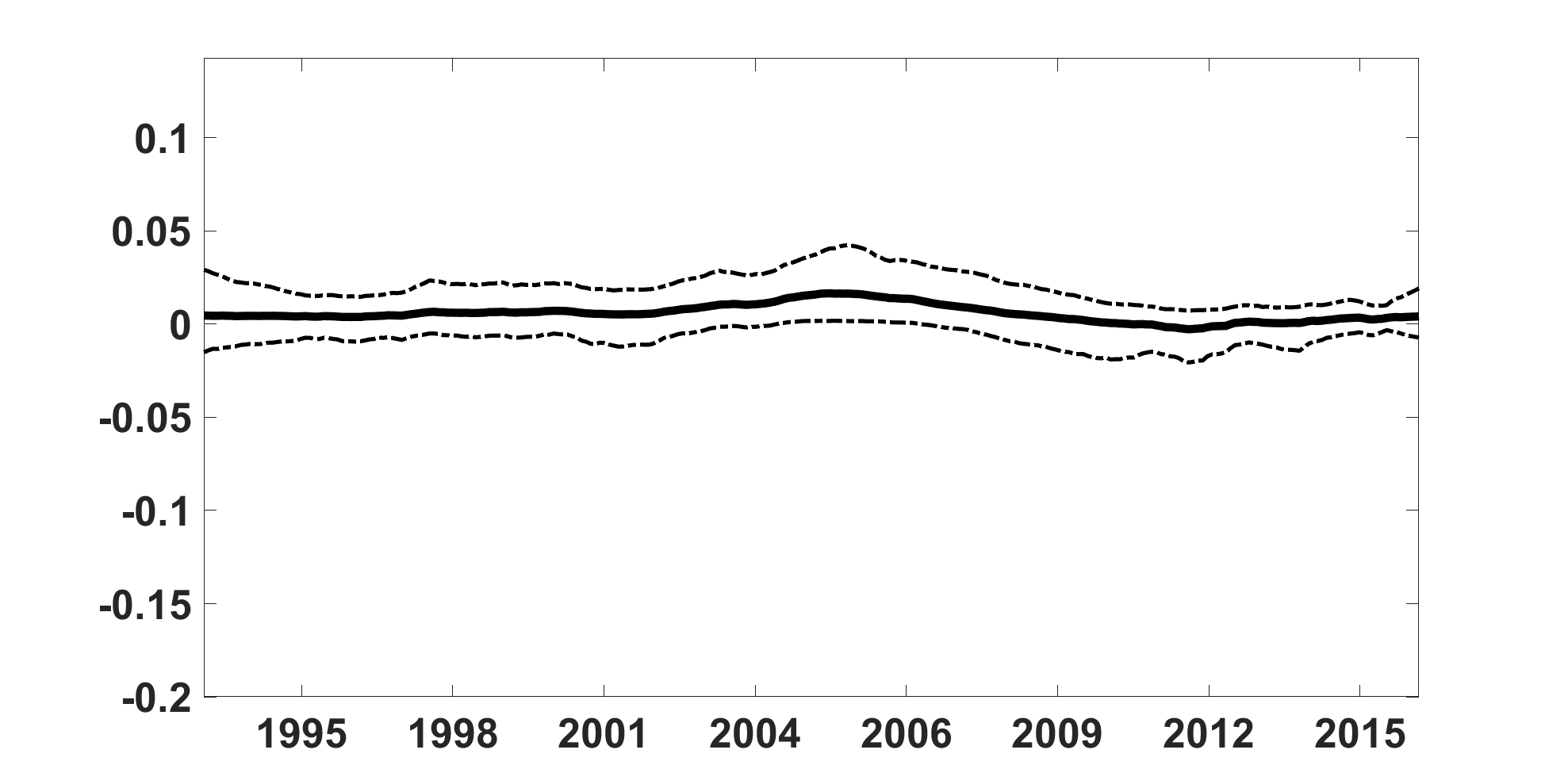}}\\
		\subfloat[SELO - CPI]{\includegraphics[width=6.5cm,height=4.5cm]{./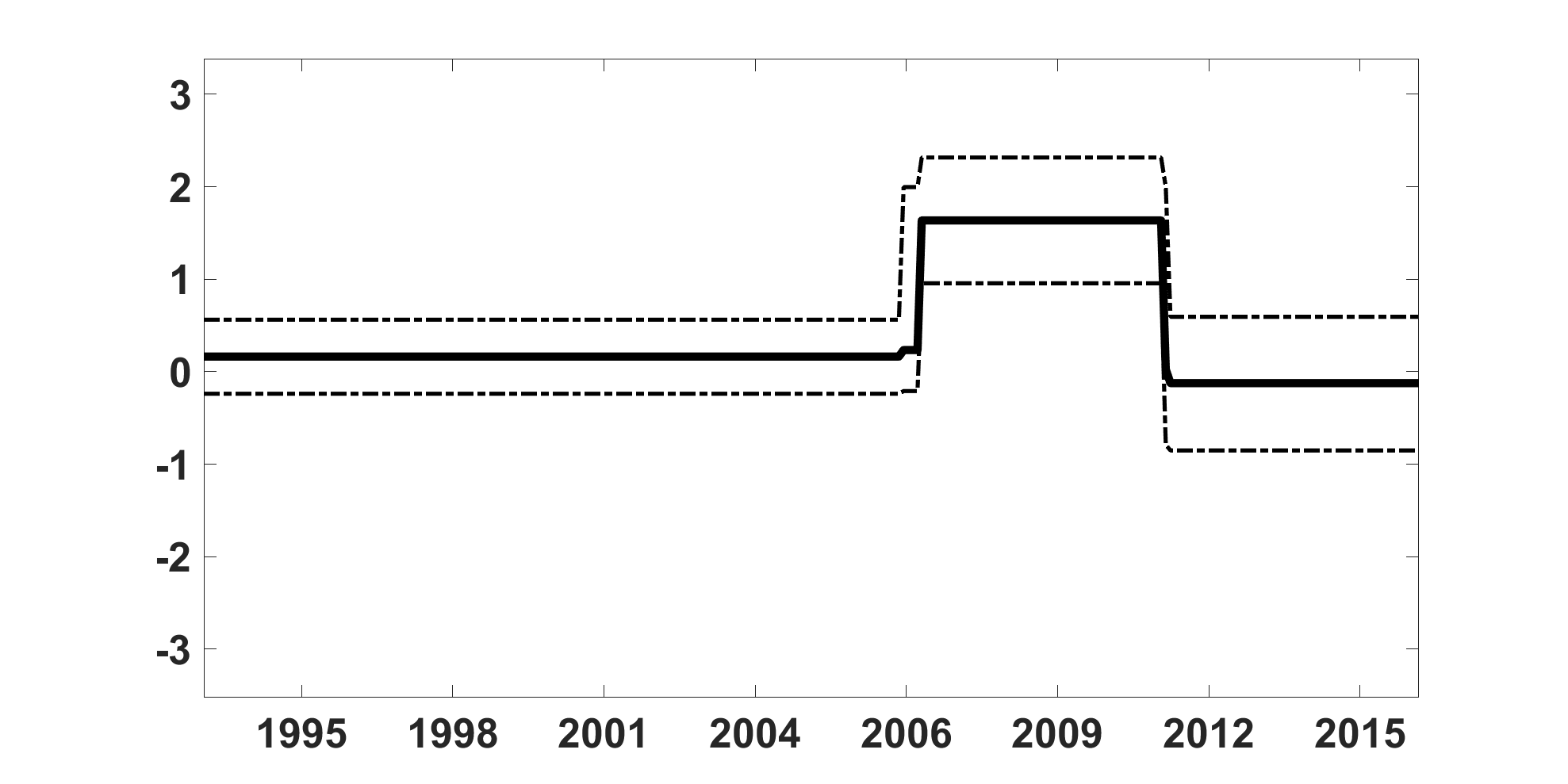}}
		\subfloat[TVP - CPI]{\includegraphics[width=6.5cm,height=4.5cm]{./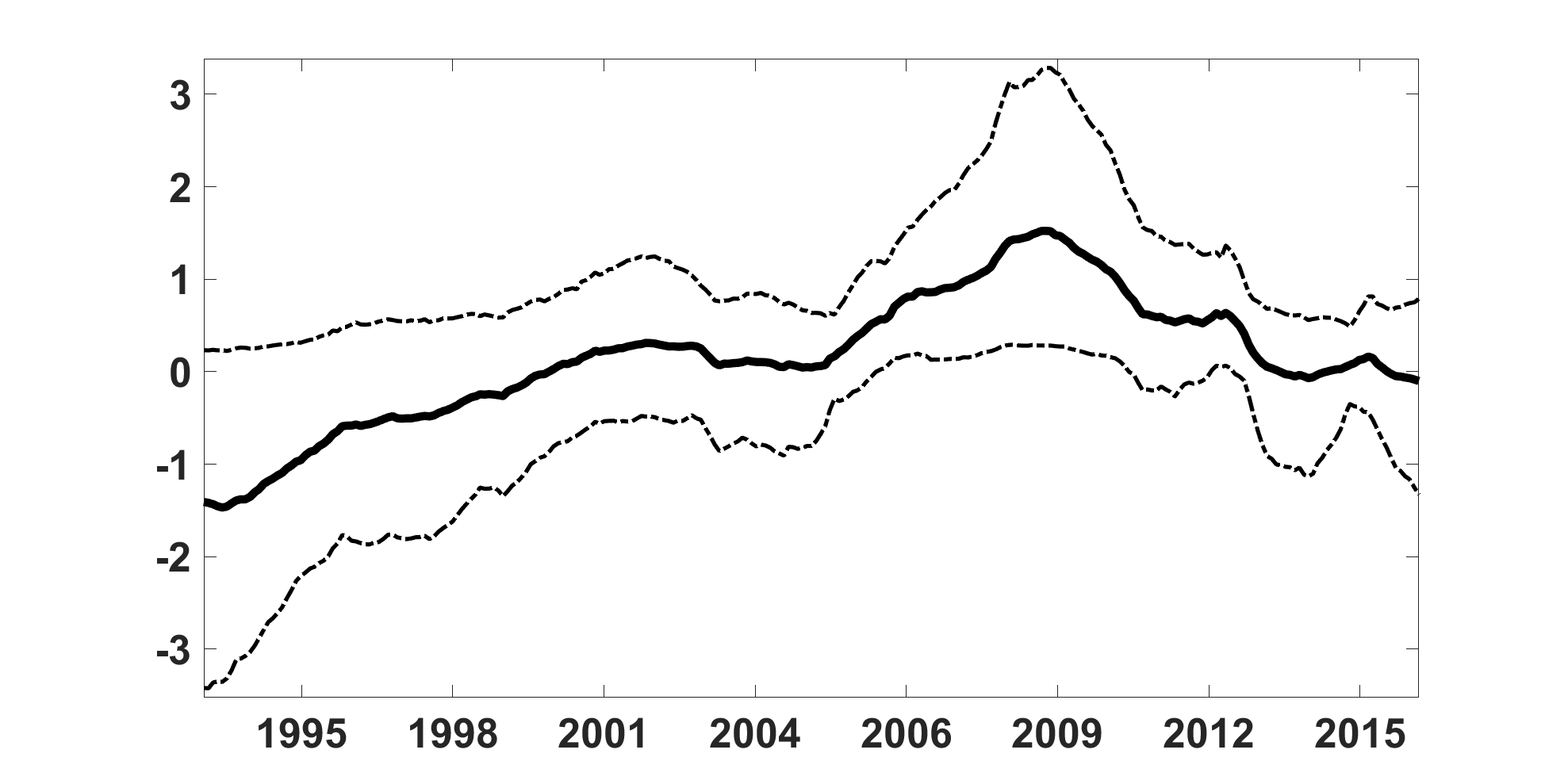}}\\
		\subfloat[SELO - NAREIT]{\includegraphics[width=6.5cm,height=4.5cm]{./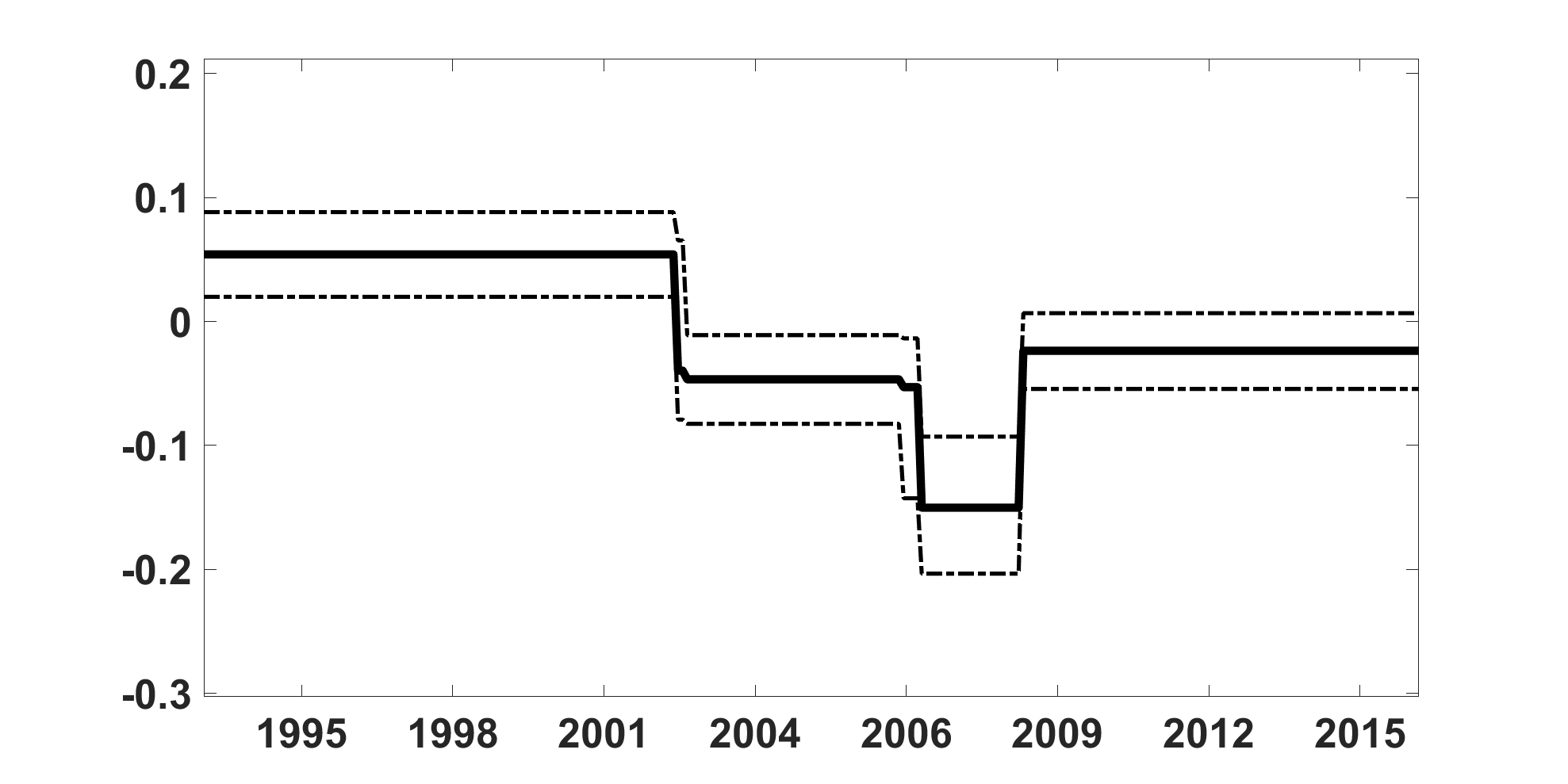}}
		\subfloat[TVP - NAREIT]{\includegraphics[width=6.5cm,height=4.5cm]{./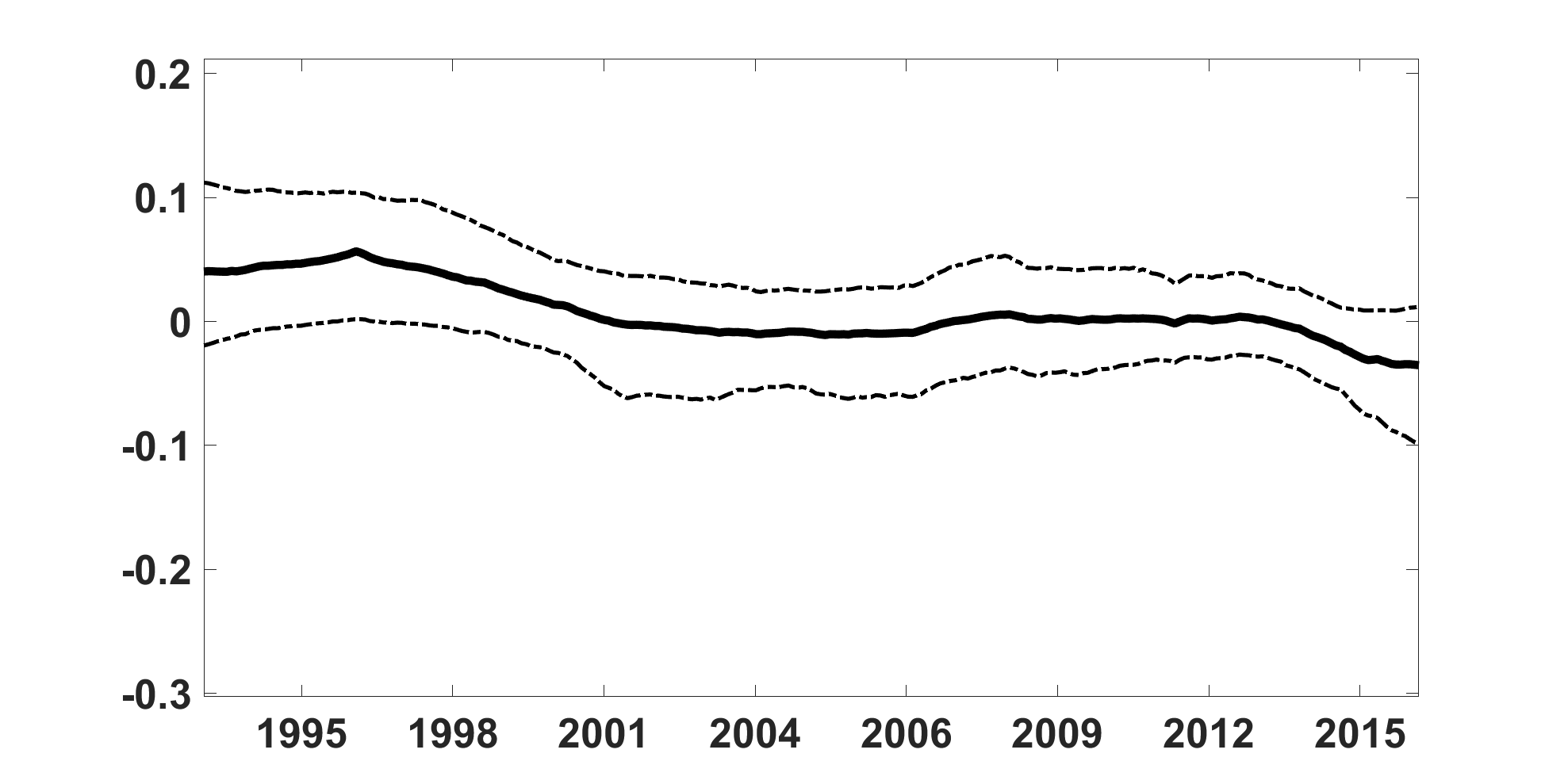}}\\
		\caption{\justifying \textbf{FIA returns - Selective segmentation (SELO) model and Time-varying parameter (TVP) model.} Posterior medians (black) and the 90\% credible intervals (dotted black lines) of the model parameters over time. For the SELO method, we take the break uncertainty into account using the MCMC algorithm presented in Section \ref{sec:breakuncertainty}. \label{fig:strat10_3}}
	\end{center}
\end{figure}
\renewcommand{\baselinestretch}{1.5}

\clearpage
\section{Bayesian alternatives to the selective segmentation method \label{App:Bayesian}}
The DGP J exhibits 100 explanatory variables and one CP in ten parameters. To capture which model parameters are experiencing a break, we are aware of four Bayesian alternatives that are \cite{GiordaniKohn2008}, \cite{Eo2012}, \cite{huber2019should} and \cite{dufays2019relevant}. We now discuss why these alternatives do not work when 100 explanatory variables are involved. 

\subsection*{\cite{GiordaniKohn2008}}
\vspace{-0.3cm}
The model of \cite{GiordaniKohn2008} stands for a particular case of the general mixture state space model developed in \cite{gerlach2000efficient}. They explain how to do  inference for a Gaussian state space model with a latent variable $K_t$ that determines the state of the model parameters. 
%In fact, they are interested in modelling mixture of Gaussian distributions via this latent variable. 
Modelling breaks in the mean parameters, their approach allows  estimating the following state space model:
\begin{eqnarray}
y_t & = & \beta_{t,1} + \beta_{t,2} x_{t,2} + \ldots + \beta_{t,N} x_{t,N} + \sigma \eta_t,  \label{eq:obs}\\
\beta_{t,i} & = & \beta_{t-1,i} + \gamma_{i,K_{t,i}} \nu_{t,i}, ~\text{for } i=1,\ldots,N, \text{ and } t>1, \label{eq:TVP}
\end{eqnarray}
in which $N$ is the number of explanatory variables, $\eta_t  \sim N(0,1)$, $K_t=\{K_{t,1},\ldots,K_{t,N}\}$ and $\nu_t = (\nu_{t,1},\ldots,\nu_{t,N})' \sim N(0,I_{N})$ (with $I_N$, the identity matrix of dimension $N$). To capture breakpoints, \cite{GiordaniKohn2008} suggest to set the states of the latent variable $K_{t,i}$ to $\{0,1\}$ such that we have $\gamma_{i,0}=0$ (i.e., no break when $K_{t,i}=0$) and $\gamma_{i,1}\in \Re^{+}$ (i.e., break in the $i$th parameter when $K_{t,i}=1$). The model parameters are given by $\btheta=\{\gamma_{1,1},\ldots,\gamma_{N,1},\beta_{1,1},\ldots,\beta_{1,N},\sigma\}$.

To efficiently estimate the model, \cite{GiordaniKohn2008} relies on the algorithm of \cite{gerlach2000efficient}. The main contribution of \cite{gerlach2000efficient} is to marginalize out the mean parameters $\beta_{1:T,1:N}$ and to provide an analytical formula for the latent variable distribution $f(K_t|y_{1:T},K_{\neq t},\btheta)$ from which $K_t$ is sampled in the MCMC algorithm. To normalize the posterior distribution $f(K_t|y_{1:T},K_{\neq t},\btheta)$, it requires to sum over all the possible values of $K_t$. Because $K_t=\{K_{t,1},\ldots,K_{t,N}\}$ and $K_{t,i}=\{0,1\}~\forall i\in[1,N]$, the number of possible values for $K_t$ amounts to $2^{N}$. Consequently, it increases geometrically with the number of explanatory variables. This is why \cite{chan2012time} on page 9 argue that the number of explanatory variables should be small (i.e., at least below fourteen) otherwise some structure on the break dynamic should be accounted for. With 100 regressors involved in DGP J, it is infeasible to compute the latent variable distribution $f(K_t|y_{1:T},K_{\neq t},\btheta)$ because its normalization requires to sum over $2^{100}$ values.

\subsection*{\cite{Eo2012}}
\vspace{-0.3cm}
\cite{Eo2012} relies on the approach of \cite{Chib98} for finding which parameters are experiencing a break. The method consists of estimating all the possible models given several number of breakpoints. Then, the best specification is selected by maximizing the marginal likelihood that is computed, for each model, using the method of \cite{chib95}. Considering DGP J and its 100 exogenous variables, the number of models to estimate reaches $\sum_{i=0}^{\bar{m}} 2^{100 i}$ in which $\bar{m}$ is the maximum number of breaks that can experience a parameter. In our context, the approach is computationally infeasible even when the upper bound of the number of break is equal to 1.

\subsection*{\cite{huber2019should}}
\vspace{-0.3cm}
\cite{huber2019should} propose a threshold approach to approximate the MCMC inference of mixture state space models. It generalizes the method of \cite{GiordaniKohn2008} because it is not limited by the number of explanatory variables. As illustrated in Appendix D of \cite{DufaysVAR20}, the approximation makes the MCMC inference depending on the starting value and the estimated breakpoints are unstable from one estimation to another. Consequently, in-sample results and forecasting exercises will also depend on starting values. Because the question on how to choose the starting values is not addressed in the paper,  the method does not provide reproducible results. However, it could be useful for exploring the space in order to find a promising starting value to be used in the MCMC algorithm of \cite{GiordaniKohn2008}. 

\subsection*{\cite{dufays2019relevant}}
\vspace{-0.3cm}
\cite{dufays2019relevant} rely on the standard CP model (see, e.g., \cite{Chib98}, \cite{PPT06} or \cite{maheu2014new}) to capture which parameters are time-varying when a break is detected. They specify the model parameters in first-difference with respect to the previous regime. By doing so, shrinkage priors can be used to infer which parameters are time-varying. The two main contributions of the paper are  i) the introduction of a shrinkage prior that is a 2-component mixture of Uniform distributions (hereafter, 2MU) and ii) a method that operates for models exhibiting the path dependence issue such as ARMA and GARCH processes.\\
The new shrinkage prior mimics the standard information criteria such as the AIC and the BIC because one hyper-parameter of the 2MU distribution acts like a penalty on the log-likelihood. Consequently, the 2MU prior can be seen as a Bayesian alternative to the popular $L_0$ penalty functions used in classical statistics. However, the 2MU prior is not suited for high-dimensional regressions because it is not continuous. To mitigate this problem, \cite{dufays2019relevant} propose a sequential Monte Carlo algorithm, which is known to explore multi-modal distributions more efficiently than MCMC algorithms based on a single chain \citep[see, e.g.,][]{herbst2014sequential}. Unfortunately, this algorithm is computationally intensive. While it takes around 10 minutes on a 6-CORE i5-8400 (2.8 Ghz) for estimating a series from DGPs A to F of our paper, it runs for 2.5 hours on the same computer for estimating one series similar to DGP J but with only 20 explanatory variables. Consequently, it is computationally infeasible to estimate the model of \cite{dufays2019relevant} on 100 series from DGP J as we do with the selective segmentation approach. As a comparison, the selective segmentation method requires ??? minutes on the same computer for detecting which parameters are time-varying in one simulated series from DGP J.

\end{document}